\let\hat\widehat
\let\tilde\widetilde
\let\bar\overline
\def\Var{\text{Var}} 
\def\given{\,|\,}
\def\W{\mathcal{W}}
\def\eqdef{\equiv}
\def\indicator{\mathbbm{1}}
\def\ones{\mathbbm{1}}
\def\sfrac#1#2{{#1}/{#2}}
\def\tr{\mathop{\rm tr}}
\def\text#1{\mbox{\rm #1}}
\newcommand{\argmin}{\mathop{\rm argmin}}
\newcommand{\supp}{{\rm supp}}
\theoremstyle{plain}
\newtheorem{theorem}{Theorem}[section]
\newtheorem{proposition}[theorem]{Proposition}
\newtheorem{lemma}[theorem]{Lemma}
\theoremstyle{remark}
\theoremstyle{definition}
\def\P{{\mathbb P}}
\def\E{{\mathbb E}}
\def\supp{\mathop{\text{supp}\kern.2ex}}
\def\argmin{\mathop{\text{arg\,min}\kern.2ex}}
\def\sfrac#1#2{{#1}/{#2}}
\newcommand{\R}{\mathbb{R}}
\def\shape#1{
  \lower5pt\hbox{
  \hskip-7pt
  \tikzset{circ/.style={circle, draw, fill=black, scale=.2}}
  \begin{tikzpicture}[semithick,scale=.3]
  \node (l1) at (0,.866) [circ]{};
  \node (l2) at (1,.866) [circ]{};
  \node (l3) at (0.5,0) [circ]{};
  #1
  \end{tikzpicture}
  \hskip-8pt}
}
\def\edgeshape{
  \raise2pt\hbox{
  \hskip-8pt
  \tikzset{circ/.style={circle, draw, fill=black, scale=.2}}
  \begin{tikzpicture}[semithick,scale=.3]
  \node (l1) at (0,.866) [circ]{};
  \node (l2) at (1,.866) [circ]{};
  \draw[-] (l1) to node [auto] {} (l2);
  \end{tikzpicture}
  \hskip-4pt}
}
\def\veeshape{
  \lower5pt\hbox{
  \hskip-7pt
  \tikzset{circ/.style={circle, draw, fill=black, scale=.2}}
  \begin{tikzpicture}[semithick,scale=.3]
  \node (l1) at (0,.866) [circ]{};
  \node (l2) at (1,.866) [circ]{};
  \node (l3) at (0.5,0) [circ]{};
  \draw[-,color=white] (l1) to node [auto] {} (l2);
  \draw[-] (l1) to node [auto] {} (l3);
  \draw[-] (l2) to node [auto] {} (l3);
  \end{tikzpicture}
  \hskip-8pt}
}
\def\triangleshape{
  \lower5pt\hbox{
  \hskip-7pt
  \tikzset{circ/.style={circle, draw, fill=black, scale=.2}}
  \begin{tikzpicture}[semithick,scale=.3]
  \node (l1) at (0,.866) [circ]{};
  \node (l2) at (1,.866) [circ]{};
  \node (l3) at (0.5,0) [circ]{};
  \draw[-] (l1) to node [auto] {} (l3);
  \draw[-] (l2) to node [auto] {} (l3);
  \draw[-] (l1) to node [auto] {} (l2);
  \end{tikzpicture}
  \hskip-8pt}
}
\numberwithin{equation}{section}
\begin{document}

\begin{frontmatter}
\title{\Large Testing for Global Network Structure \vskip-5pt Using Small Subgraph Statistics}
\runauthor{Gao and Lafferty}

\begin{aug}
\vskip10pt
\author{\fnms{Chao}
  \snm{Gao${}^{1}$}\ead[label=e1]{chaogao@galton.uchicago.edu}}
 \and 
\author{\fnms{John}
  \snm{Lafferty${}^{2}$}\ead[label=e4]{john.lafferty@yale.edu}}
\vskip10pt
\end{aug}

\begin{abstract}
We study the problem of testing for community structure in networks
using relations between the observed frequencies of small subgraphs.
We propose a simple test for the existence of communities based only
on the frequencies of three-node subgraphs.  The test statistic is
shown to be asymptotically normal under a null assumption of no
community structure, and to have power approaching one under a
composite alternative hypothesis of a degree-corrected stochastic
block model.  We also derive a version of the test that applies to
multivariate Gaussian data. Our approach achieves near-optimal
detection rates for the presence of community structure, in regimes
where the signal-to-noise is too weak to explicitly estimate the
communities themselves, using existing computationally efficient
algorithms. We demonstrate how the method can be effective for
detecting structure in social networks, citation networks for
scientific articles, and correlations of stock returns between
companies on the S\&P 500.
\end{abstract}
 
\vskip20pt 
\end{frontmatter}

\maketitle

\vskip10pt

\footnotetext[1]{Department of Statistics, University of Chicago;
  email: chaogao@galton.uchicago.edu}
\footnotetext[2]{Department of Statistics and Data Science, Yale
  University; email: john.lafferty@yale.edu}

\def\ER{Erd\H{o}s-R\'{e}nyi}
\def\EZ{Erd\H{o}s-Zuckerberg}

\section{Introduction}

The statistical properties of graphs and networks have been
intensively studied in recent years, resulting in a rich and detailed
body of knowledge on stochastic graph models. Examples include graphons
and the stochastic block model \citep{holland83,lovasz12}, preferential
attachment models \citep{price76,barabasi99}, and
other generative network models \citep{bollobas01}. This work often seeks to model the
network structures observed in ``naturally occurring'' settings, such
as social media.  The focus has been on developing simple models that
can be rigorously studied, while still capturing some of the phenomena
observed in actual data. Related work has developed procedures to find
structure in networks, for example using spectral algorithms for finding
communities \citep{rohe2011spectral,jin2015,arias-castro2014}.
Another line of research has studied 
estimation and detection of signals on graphs where
the structure of the signal is exploited to develop efficient
procedures \citep{dfs16,arias-castro2011}.

In this work our focus is on understanding how global structural
properties of networks might be inferred from purely local
properties. In the absence of a probability model to generate the
graph, this is a classical mathematical topic. For example, convex
polyhedra and planar graphs satisfy the invariant $\chi = 2$, where
the Euler-Poincar\'e characteristic $\chi = V - E + F$ is defined in
terms of the number of vertices, edges and faces of the polyhedron or
associated planar graph.  More general relations between local
structure and global invariants lie at the heart of combinatorics and
algebraic topology; topological data analysis is the study of such
relations under a data sampling model.  In this paper we study how the
presence of communities in a network is related to relations between
the densities of small subgraphs, such as edges, vees, and
triangles.

Our investigation was in part inspired by the work of
\cite{ugander2013subgraph}, who present striking data on the empirical
distributions of 3-node and 4-node subgraphs of Facebook friend
networks, comparing them to the distributions that would be obtained
under an \ER{} model.  In particular, it is noted that
the small subgraph frequencies of the Facebook subnetworks can be
close to the corresponding probabilities under an \ER{}
model, even though the subnetworks are expected to exhibit community
structure. However, the subgraph frequencies are not arbitrary. In
fact, the global graph structure places purely combinatorial
restrictions on the subgraph probabilities, sometimes called
homomorphism constraints \citep{razborov2008}.  The interplay between
the structural properties and homomorphism constraints is discussed by
\cite{ugander2013subgraph}, who pose the broad research question
``What properties of social graphs are `social' properties and what
properties are `graph' properties?''  Their work develops two
complementary methods to shed light on this question. First, they
propose a generative model that extends the \ER{} model
and better matches the empirical data. Second, they develop methods to
bound the homomorphism constraints that determine the feasible space
of subgraph probabilities.

\begin{figure*}[!htbp]
\begin{center}
\begin{tabular}{cc}
\includegraphics[width=.45\textwidth]{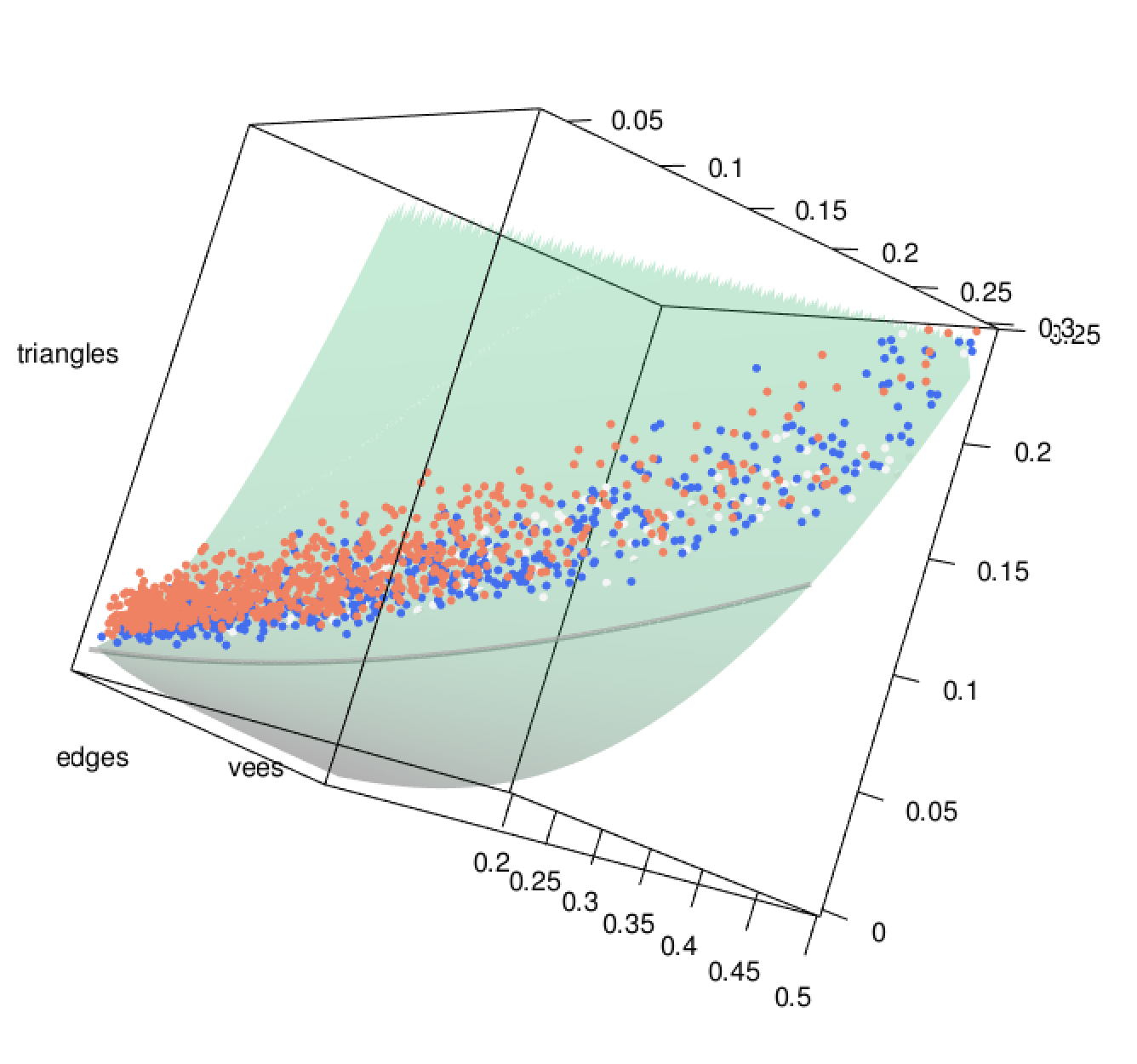} &
\includegraphics[width=.45\textwidth]{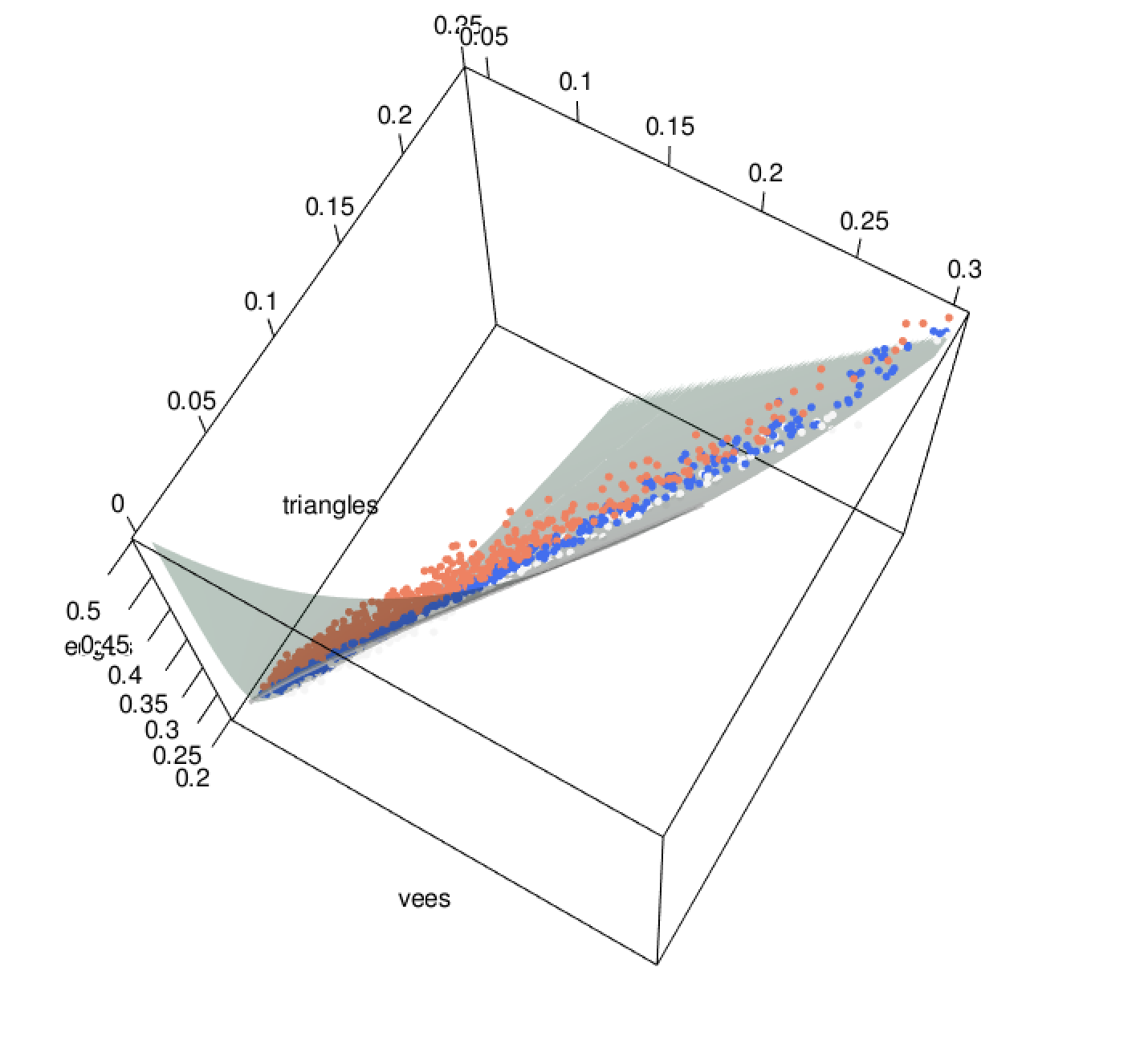} \\
\includegraphics[width=.35\textwidth]{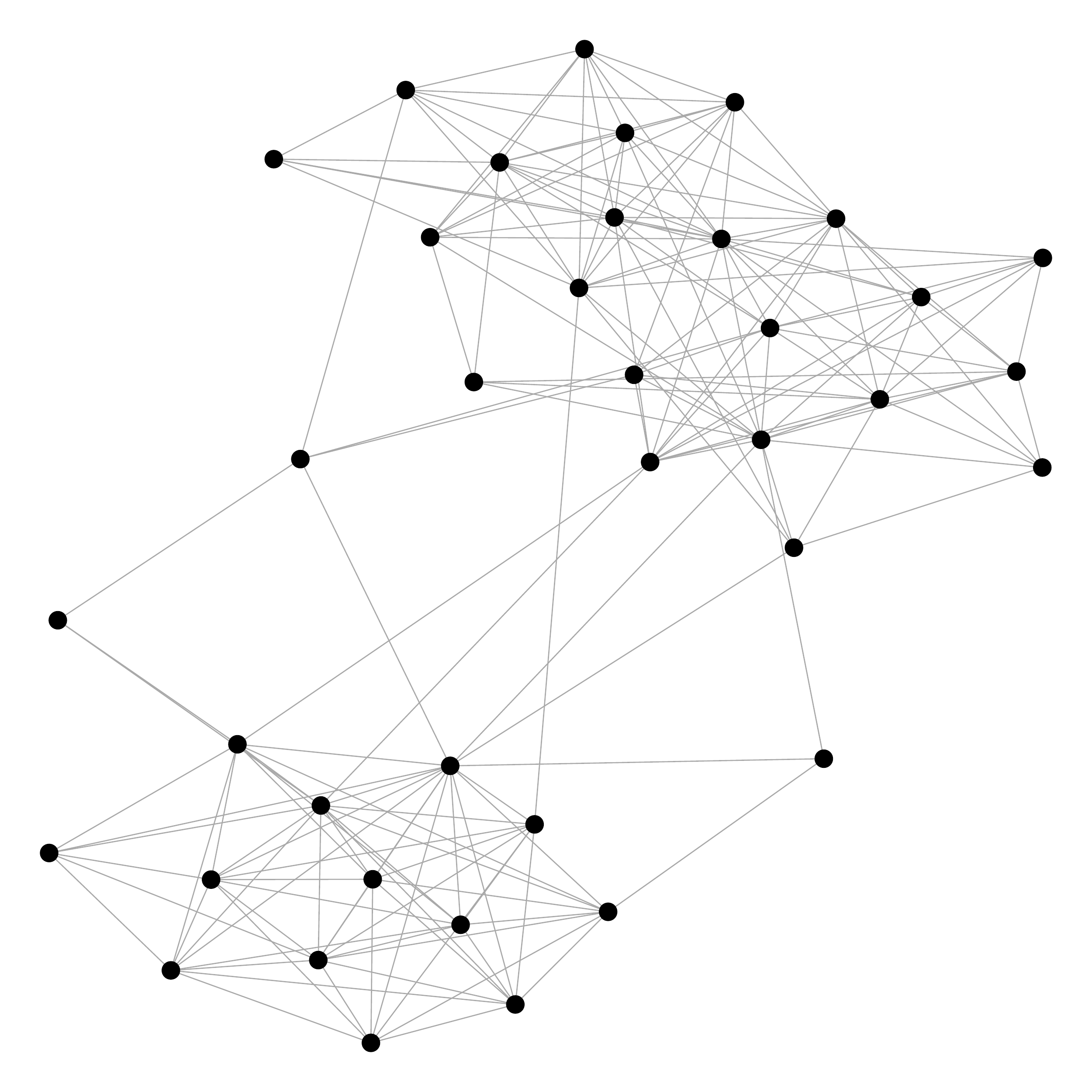} &
\includegraphics[width=.35\textwidth]{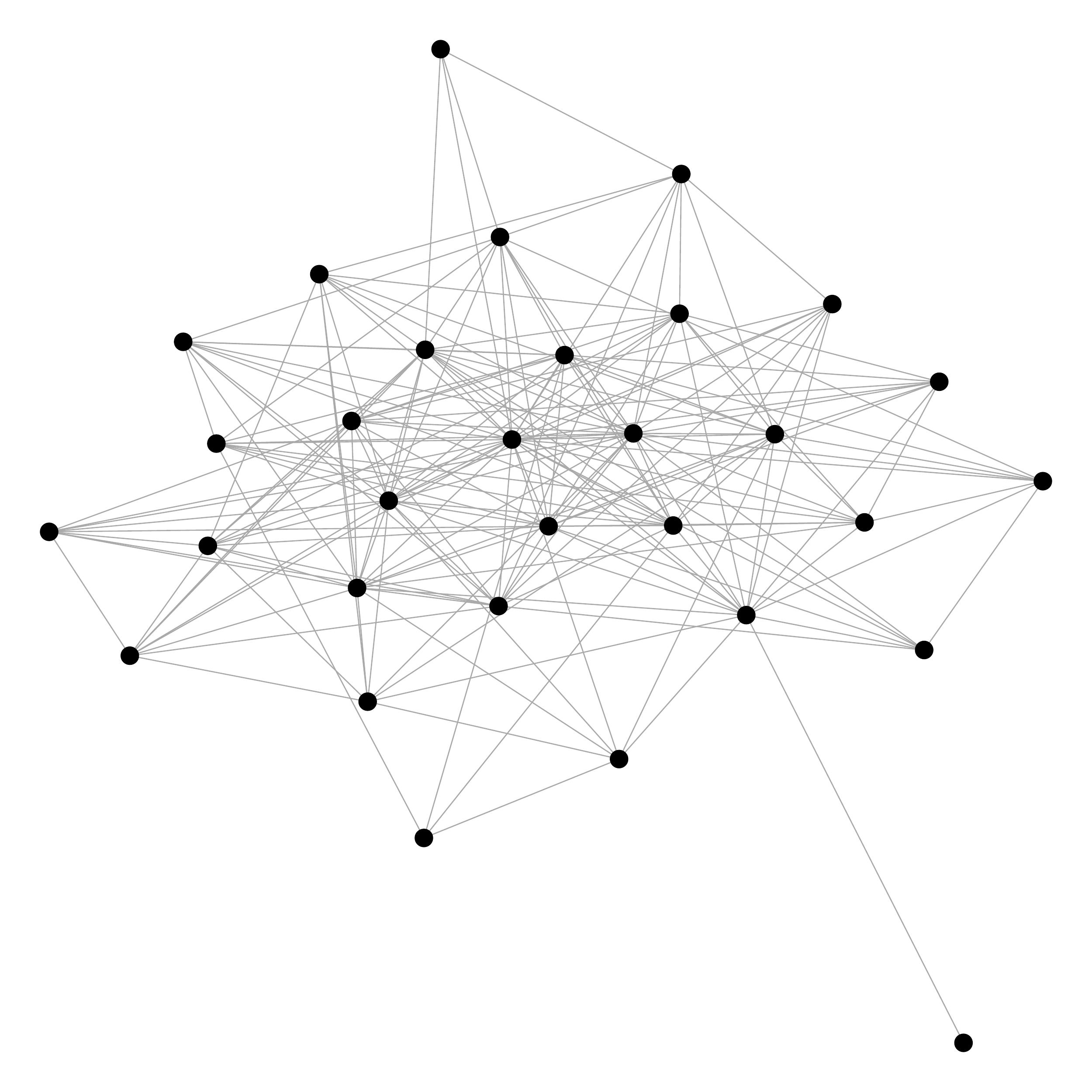}
\end{tabular}
\end{center}
\caption{Top row: Two different views of 3-dimensional plots of the
  triangle, vee, and edge densities of a collection of Facebook
  subnetworks. Each point represents a single subnetwork, the
  induced graph of the friends of a single Facebook user (ego), in the
  fb100 data. Each subnetwork in this collection has between 20 and 40
  nodes. The coordinates $(\hat E, \hat V, \hat T)$ of a given point
  are the relative frequencies of edges, vees, and triangles in the
  network.  The points are colored according to the p-value of the EZ
  score test: red, smaller than $10^{-5}$; blue, in the range
  $(10^{-5}, 10^{-2})$; white, larger than $10^{-2}$. The shaded
  surface indicates the subfamily of degree-corrected stochastic block
  models for which $\chi_{ev} = T - (V/E)^3 = 0$, and no community
  structure is present. The curve on the surface, visible in the left
  plot, corresponds to the subfamily of \ER{} graphs where $(E,V,T) =
  (p, p^2, p^3)$. Qualitatively, we see that even the points that lie
  relatively close to the \EZ{} surface $\chi_{ez}=0$, are far from the
  \ER{} curve, which may be attributed to degree heterogeneity. Bottom
  row: Typical Facebook subnetworks with small and large p-values. The
  left graph, with an EZ score of 8.25 and p-value of $10^{-16}$, has
  clear community structure. The right graph, with an EZ score of
  0.498 and p-value of 0.62, exhibits no community structure.}
\label{fig:surface}
\end{figure*}

In the present paper we take a statistical approach to
distinguishing graphs with community structure from unstructured random graphs using only small
subgraph frequencies, framing the problem in terms of
statistical testing.  The starting point for our analysis is 
the degree-corrected
stochastic block model, a simple generative model for random networks
that captures two salient properties that are observed empirically in
social networks and other data---community structure and degree
heterogeneity. While the precise specification of the model is deferred to
the following section, it is characterized by a few key parameters, including 
the number of communities $k\geq 1$, the within-community connectivity
probability $a$, and the between-community connectivity probability $b$.
For random networks, we define
$$ \chi_{ez}= T - \left(\frac{V}{E}\right)^3,$$
where now $E$, $V$, and $T$ are the 
expected densities of edges (\edgeshape), vees (\veeshape),
and triangles (\triangleshape) in the graph.
A simple calculation, which we present in the following section, shows that 
under the degree-corrected stochastic block model,
\begin{equation}
\chi_{ez} = (k-1)\left(\frac{a-b}{k}\right)^3.
\label{ezq}
\end{equation}
We thus see that $\chi_{ez}=0$ if and only if the network has no
communities, under the assumed model, meaning that $a=b$ or $k=1$. In particular, an \ER{} random
graph with edge probability $p$ satisfies
$$
\chi_{ez} = p^3 - \left(\frac{p^2}{p}\right)^3 = 0.
$$ Since $\chi_{ez}=0$ distinguishes unstructured random networks from
those with community structure, within the large class of
degree-corrected stochastic block models, we refer to $\chi_{ez}$ as
the \textit{\EZ{} characteristic}, or the \textit{EZ characteristic},
for short.

Starting from the simple relation in \eqref{ezq}, this paper explores
the mathematical, statistical, and empirical properties of the \EZ{}
characteristic as a test for global community structure. We find that
this simple functional has remarkable properties, both theoretically
and empirically. We develop a testing framework for the null
hypothesis corresponding to $\chi_{ez}=0$, and analyze its power and
scaling behavior.  Empirically, we find that the test is effective on
the types of Facebook subnetworks studied by
\cite{ugander2013subgraph}. In particular, graphs with small p-values
under this test exhibit clear community structure, while unstructured
subnetworks typically lie close to the cubic surface defined by the
invariant $\chi_{ez}=0$, with correspondingly large p-values; see Figure~1.
However, even the unstructured subnetworks are relatively far
from the curve in this surface traced out by the \ER\ subfamily, and a
closely related test based on an \ER{} null hypothesis is ineffective.

In related work, \cite{mossel2012stochastic} prove 
a Poisson limit law for counts of cycles in stochastic block models,
and \cite{maugis2017} consider the use of small
subgraph counts to test whether a collection of networks is drawn from
a known graphon null model. \cite{bubeck2014} study tests based on
signed triangles for distinguishing Erd\H{o}s-R\'{e}nyi graphs from
random geometric graphs in the dense
regime. \cite{banerjee2016contiguity} and \cite{banerjee2017} study
tests based on signed circles and establish a relation to the
likelihood ratio statistic for stochastic block models.
\cite{ambroise:2012} and \cite{allman:2011}
study moment estimators for the parameters
of stochastic block models; these estimators
were used recently by \cite{kloumann:2017} in
the context of seed set expansion and node
ranking in personalized search.

In the following section we provide further detail on the relation
\eqref{ezq}, which is the key equation in our testing approach, and
then develop a central limit theorem for this characteristic 
based on empirical estimates of the densities of edges,
vees and triangles. The power and scaling behavior of the resulting
test is analyzed in Section~\ref{sec:ezpower}. 
In Section~\ref{sec:sbm} we comment on a related test when the null
model is chosen to be an Erd\H{o}s-R\'{e}nyi model rather than a
configuration model, the essential difference being
degree heterogeneity. The discussion in this section sheds light
on the empirical findings shown in Figure~\ref{fig:surface},
where the Facebook subnetworks are relatively far from the
\ER{} subfamily, but close to the surface defined by $\chi_{ez}=0$.
  Section~\ref{sec:experiments} gives illustrations of
our testing framework on Facebook social networks, citations from
statistics journal articles, and stock returns of companies on the
S\&P 500.  In each of these settings, we find that the \EZ{} test
gives interesting and interpretable results, and is effective at
identifying community structure using only the local information 
available in two and three node subgraph statistics. Finally, in the supplementary material, we give some extensions of the \EZ{} characteristic and present the proofs of the results of the paper.
Section~\ref{sec:nbhds} introduces a rigorous framework of tests for neighborhood graphs.
Section~\ref{sec:gauss} considers correlation structures for
multivariate Gaussian data, and derives an analogous test for community
structure under this model.

\section{The Erd\H{o}s-Zuckerberg Test}
\label{sec:eztest}

\def\mip#1#2{\left\langle #1,\, #2\right\rangle}

One of the most popular network models
of community structure is the degree-corrected stochastic block model
(DCBM) \citep{dasgupta2004spectral,karrer2011stochastic}. Under
this model, a random adjacency matrix $A$ is generated according
to $A_{ij} \given \theta_{ij}\sim\text{Bernoulli}(\theta_{ij})$ independently
for each edge $(i,j)$, where the mean parameters $\theta_{ij}$ have 
a blockwise low rank structure that models degree heterogeneity and
community structure. For community structure, 
latent variables $Z_i \sim \text{Uniform}([k])$ are generated independently for
each node $i$, where the integer $k\geq 1$ is the number of
communities. For degree heterogeneity, variables $W_i\sim \W$ are
generated independently for each node from a distribution $\W$; the value $W_i$ can be thought of as a measure of the
``sociability'' of node $i$. Conditional on $Z$ and $W$, the mean parameters
$\theta_{ij}$ are then given by
\begin{equation}
\theta_{ij} \given W,Z =\begin{cases}
W_i W_j a, & Z_i=Z_j, \\
W_i W_j b, & Z_i\neq Z_j,
\end{cases}\label{eq:def-dcbm}
\end{equation}
where $a$ is the within-community connectivity probability, and $b$ is
the between-community connectivity probability. 

Thus, the distribution of $\{A_{ij}\}_{1\leq i<j\leq n}$ is fully
determined by the parameters $a$, $b$, $k$ and the distribution
$\mathcal{W}$. 
%
The parameterization \eqref{eq:def-dcbm} is not identifiable, since
the model is invariant to multiplying $a$ and $b$ by some arbitrary number $t>0$,
and dividing each $W_i$ by $\sqrt{t}$.
Thus, without loss of generality, we introduce the constraint
\begin{equation}
\E(W^2) =1,\label{eq:2nd-w}
\end{equation} 
for the distribution $\mathcal{W}$, so that the parameters $a$ and $b$ are uniquely determined.

The problem of community detection in the setting of the DCBM has been well studied in the
literature
\citep{lei2015consistency,gulikers2015impossibility,zhao2012consistency,jin2015,chen2015convexified}.
\cite{gao2016community} derive the minimax rate of the problem with respect to the Hamming
loss. All of this work assumes there exists a clustering structure in the model and the
number of clusters $k$ is given. In the current paper, we shift the focus to testing for
community structure, without estimating $k$ or the clusters themselves.  Under the DCBM,
the lack of such structure is equivalent to $k=1$ or $a=b$.

The following result is central to our testing procedure
and analysis.
\begin{proposition}
Define the population
edge, vee, and triangle probabilitities by
\begin{eqnarray}
\label{eq:E} E &=& \P(A_{12}=1),\\
\label{eq:V} V &=& \P(A_{12}A_{13}=1),\\
\label{eq:T} T &=& \P(A_{12}A_{13}A_{23}=1).
\end{eqnarray}
Then under the degree-corrected stochastic block model \eqref{eq:def-dcbm}
and assuming the identifiability condition
\eqref{eq:2nd-w},  we have
\begin{eqnarray}
\label{eq:E+} E &=&  (\E W)^2\left(\frac{1}{k}a+\frac{k-1}{k}b\right),\\
\label{eq:V+} V &=& (\E W)^2\left(\frac{1}{k}a+\frac{k-1}{k}b\right)^2, \\
\label{eq:T+} T &=& \frac{1}{k^2}a^3+\frac{3(k-1)}{k^2}ab^2+\frac{(k-1)(k-2)}{k^2}b^3.
\end{eqnarray}
From these relations, it follows that
\begin{equation}
\chi_{ez} \eqdef T-\left(\frac{V}{E}\right)^3=\frac{(k-1)(a-b)^3}{k^3}.\label{eq:EZ}
\end{equation}
\label{prop:EZ}
\end{proposition}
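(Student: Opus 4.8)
The plan is to exploit the conditional independence structure of the DCBM. Conditional on the latent variables $(W,Z)$, the edge indicators $A_{ij}$ are independent with $A_{ij}\sim\text{Bernoulli}(\theta_{ij})$, so for any fixed distinct vertices the probability that a prescribed collection of edges is present equals the expectation over $(W,Z)$ of the corresponding product of mean parameters. Thus
\[
E=\E[\theta_{12}],\qquad V=\E[\theta_{12}\theta_{13}],\qquad T=\E[\theta_{12}\theta_{13}\theta_{23}],
\]
and the whole argument reduces to evaluating these three expectations and then combining them.

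For $E$, substitute \eqref{eq:def-dcbm} to get $\theta_{12}=W_1W_2\big(a\,\indc{Z_1=Z_2}+b\,\indc{Z_1\neq Z_2}\big)$; since $W_1,W_2,Z_1,Z_2$ are mutually independent and $Z_i\sim\text{Uniform}([k])$ gives $\P(Z_1=Z_2)=1/k$, factoring the expectation yields $E=(\E W)^2\big(\tfrac1k a+\tfrac{k-1}{k}b\big)$, which is \eqref{eq:E+}. The point that drives the vee and triangle formulas is that a vertex shared by several edges contributes a higher power of its $W$: in $\theta_{12}\theta_{13}$ the factor $W_1$ appears squared, and in $\theta_{12}\theta_{13}\theta_{23}$ each of $W_1,W_2,W_3$ appears squared. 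The identifiability normalization \eqref{eq:2nd-w}, $\E(W^2)=1$, then makes these second moments vanish, leaving $\E[W_1^2]\E[W_2]\E[W_3]=(\E W)^2$ for the vee and $\E[W_1^2]\E[W_2^2]\E[W_3^2]=1$ for the triangle. For $V$, conditioning on $Z_1$ makes $Z_2$ and $Z_3$ independent, each contributing a factor $\tfrac1k a+\tfrac{k-1}{k}b$, which gives \eqref{eq:V+}.

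The triangle term \eqref{eq:T+} is the one that needs genuine bookkeeping. Here I would split according to the set partition of $\{1,2,3\}$ induced by $(Z_1,Z_2,Z_3)$: all three in the same community, which has probability $1/k^2$ and contributes $a^3$; exactly one coincident pair, of which there are three sub-cases, each of probability $(k-1)/k^2$ and contributing $ab^2$ (the coincident pair's edge weighted $a$, the other two weighted $b$); and all three in distinct communities, which has probability $(k-1)(k-2)/k^2$ (automatically zero for $k\le 2$) and contributes $b^3$. Summing gives \eqref{eq:T+}; as a sanity check the three probabilities add to $[1+3(k-1)+(k-1)(k-2)]/k^2=1$.

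Finally, for \eqref{eq:EZ} the ratio simplifies at once: $V/E=\tfrac1k a+\tfrac{k-1}{k}b=\big(a+(k-1)b\big)/k$, so $(V/E)^3=\big(a+(k-1)b\big)^3/k^3$. Writing $T$ over the common denominator $k^3$ and subtracting, the only remaining work is to expand $\big(a+(k-1)b\big)^3$ and collect: the coefficients of $a^3,\,a^2b,\,ab^2,\,b^3$ become $(k-1),\,-3(k-1),\,3(k-1),\,-(k-1)$ respectively, where the last uses the identity $k(k-2)-(k-1)^2=-1$, so the numerator is $(k-1)(a-b)^3$ and $\chi_{ez}=(k-1)(a-b)^3/k^3$. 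I expect no conceptual obstacle anywhere; the only places that require care are the case enumeration for $T$ and this final collection of terms, in particular the $b^3$ coefficient, where the cubic cancellation must be tracked carefully.
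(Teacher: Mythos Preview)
Your proposal is correct and follows essentially the same approach as the paper: condition on $(W,Z)$, use independence to factor the $W$-moments (with the constraint $\E W^2=1$ eliminating the squared factors), enumerate the label configurations of $(Z_1,Z_2,Z_3)$ for $T$, and then perform the algebraic simplification. The paper's proof is simply a terser version of exactly what you wrote, giving only the computation of $E$ in detail and leaving $V$, $T$, and the final identity to ``derived similarly'' and ``after some algebra.''
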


\begin{proof}
The identities follow from direct calculation. For example, because
$\{W_i\}$ and $\{Z_i\}$ are sampled independently, we have that
\begin{align*}
E = \E(A_{12}) 
&= \E\bigl(\E(A_{12}\given Z_1, Z_2, W_1, W_2)\bigr) \\
&= \E\left(W_1 W_2 \bigl(a \indicator(Z_1=Z_2) + b \indicator(Z_1 \neq Z_2\bigr)\right)\\
&= \E\left(W_1 W_2 \Bigl(a\frac{1}{k} + b \frac{k-1}{k}\Bigr)\right)\\
&= (\E W)^2 \Bigl(a\frac{1}{k} + b \frac{k-1}{k}\Bigr).
\end{align*}
Equations \eqref{eq:V+} and \eqref{eq:T+} are derived similarly, using the constraint
$\E(W^2)=1$. The relation \eqref{prop:EZ} is then seen to hold after some algebra.
\end{proof}

The relation \eqref{eq:EZ} implies that $k=1$ or $a=b$ if and only if
$\chi_{ez}=0$, which characterizes whether or not the network has community structure.
When $\chi_{ez} = 0$, the model is reduced to
$\theta_{ij} = a\/W_iW_j$, which is also recognized as the configuration model
\citep{van2016random},
and closely related to the Chung-Lu model of random graphs with
expected degrees \citep{chunglu2002}.
If $\chi_{ez} > 0$, then the network has an assortative clustering structure;
such a network will induce more triangles compared with the configuration model.
Conversely, the network will have disassortative clustering structure 
if $\chi_{ez} < 0$, in which case there will be fewer triangles. 
We see both types of structure in our empirical studies, described below.

\subsection{The EZ test}

We now develop a statistical test for the null hypothesis $H_0: \chi_{ez}=0$.
The empirical versions of relations \eqref{eq:E}--\eqref{eq:T} are
\begin{eqnarray*}
\hat{E} &=& \frac{1}{{n\choose 2}}\sum_{1\leq i<j\leq n}A_{ij}, \\
\hat{V} &=& \frac{1}{{n\choose 3}}\sum_{1\leq i<j<l\leq n}\frac{A_{ij}A_{il}+A_{ij}A_{jl}+A_{il}A_{jl}}{3}, \\
\hat{T} &=& \frac{1}{{n\choose 3}}\sum_{1\leq i<j<l\leq n}A_{ij}A_{il}A_{jl}.
\end{eqnarray*}
Therefore, we can reject the null hypothesis once the magnitude of the
plug-in test statistic 
\begin{equation}\hat\chi_{ez} \eqdef
  \hat{T}-\left(\sfrac{\hat{V}}{\hat{E}}\right)^3
\end{equation}
passes a threshold. The following result gives the asymptotic distribution of
$\hat\chi_{ez}$, and allows us to set the threshold and significance
level of the test.

\begin{theorem}\label{thm:main}
Assume $\E W^4=O(1)$ and $n^{-1}\ll a\asymp b\ll n^{-2/3}$. Suppose
\begin{equation}
\delta=\lim_n\frac{(k-1)(a-b)^3}{\sqrt{6}}\left(\frac{n}{k(a+(k-1)b)}\right)^{3/2}\in [0,\infty).\label{eq:shift}
\end{equation}
Then the following three convergence results hold:
\begin{align}
\frac{\sqrt{{n\choose
      3}}\;\hat \chi_{ez}}{\sqrt{\hat{T}}}
& \leadsto N(\delta,1)\\
\frac{\sqrt{{n\choose
      3}}\; \hat\chi_{ez} }{\sqrt{\left(\sfrac{\hat{V}}{\hat{E}}\right)^3}}
&\leadsto N(\delta,1)\\
{\textstyle 2{\sqrt{{n\choose
      3}}}\left(\sqrt{\hat{T}}-\left(\sfrac{\hat{V}}{\hat{E}}\right)^{3/2}\right)}&
\leadsto N(\delta,1).\label{eq:vst}
\end{align}
\end{theorem}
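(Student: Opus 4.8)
The plan is to establish a central limit theorem for the triangle statistic, $\sqrt{\binom{n}{3}}\,(\hat T-T)/\sqrt{T}\leadsto N(0,1)$, and then deduce all three displays from it by a bias computation, the delta method, and Slutsky's theorem. Write $\rho$ for the common order of $a\asymp b$, so that Proposition~\ref{prop:EZ} gives $E\asymp\rho$, $V\asymp\rho^{2}$, $T\asymp\rho^{3}$ and $V/E=(a+(k-1)b)/k$. A preliminary remark is that the assumption $\delta\in[0,\infty)$ together with $a\gg n^{-1}$ forces $|a-b|=o(a)$ (for $k\ge 2$; for $k=1$ one has $\chi_{ez}\equiv 0$, $\delta=0$, and the argument only simplifies), and hence $(V/E)^{3}/T\to 1$, $\chi_{ez}/T\to 0$, and $T\sim\big((a+(k-1)b)/k\big)^{3}$.

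First I would do the second-moment bookkeeping. Conditioning on $(W,Z)$ turns $\binom{n}{2}\hat E$, $\binom{n}{3}\hat V$, and $\binom{n}{3}\hat T$ into sums of products of conditionally independent (within edge-disjoint families) Bernoulli variables, and a Hoeffding-type decomposition, using $\E W^{4}=O(1)$, shows that $\hat E$ and $\hat V$ have relative fluctuations of order $O(n^{-1/2})$, so that $(\hat V/\hat E)^{3}=(V/E)^{3}+O_{p}(n^{-1/2}\rho^{3})$, while a second-order Taylor expansion of $(x_{1}/x_{2})^{3}$ gives the bias $\E(\hat V/\hat E)^{3}-(V/E)^{3}=O(n^{-1}\rho^{3})$. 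For $\hat T$ one finds $\Var(\hat T)=\binom{n}{3}^{-1}T\,(1+o(1))$: its three variance contributions are the ``isolated-triangle'' Bernoulli term $\binom{n}{3}^{-1}T\asymp n^{-3}\rho^{3}$, the ``edge-sharing pairs of triangles'' term $\asymp n^{-2}\rho^{5}$, and the $(W,Z)$-driven $U$-statistic term $\asymp n^{-1}\rho^{6}$, and the first dominates because $n\rho^{2}\to 0$ (i.e.\ $a\ll n^{-2/3}$) kills the second and $(n^{2/3}\rho)^{3}\to 0$ kills the third. Since $n^{-1/2}\rho^{3}$ and $n^{-1}\rho^{3}$ are both $o\big(n^{-3/2}\rho^{3/2}\big)=o\big(\sqrt{\Var(\hat T)}\big)$ (because $n\rho^{3/2}=(n^{2/3}\rho)^{3/2}\to 0$ and $n^{1/2}\rho^{3/2}=(n^{1/3}\rho)^{3/2}\to 0$), it follows that $\hat\chi_{ez}-\chi_{ez}=(\hat T-T)+o_{p}\big(\sqrt{\Var(\hat T)}\big)$ and $\Var(\hat\chi_{ez})=\binom{n}{3}^{-1}T\,(1+o(1))$.

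Next, the CLT for $\hat T$. Conditionally on $(W,Z)$, $\binom{n}{3}\hat T=\sum_{|S|=3}A_{S}$ with $A_{S}=\prod_{e\subset S}A_{e}$ a product of independent Bernoullis, and the dependency graph of $\{A_{S}\}$ has maximum degree $O(n)$, two triples being dependent only when they share an edge. A dependency-graph normal approximation bound (Stein's method) then shows that, in probability, the conditional law of $\binom{n}{3}\hat T$ standardized by its own conditional mean and variance converges to $N(0,1)$: the conditional variance $\asymp n^{3}\rho^{3}\to\infty$ outweighs the relevant third-moment/dependency error terms precisely in the window $n^{3}\rho^{3}\to\infty$, $n\rho^{2}\to 0$. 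It then remains to replace the random conditional moments by deterministic ones --- again using the window: the conditional variance concentrates, $\Var\big(\binom{n}{3}\hat T\mid W,Z\big)=\binom{n}{3}T\,(1+o_{p}(1))$, and the centered $U$-statistic $\E\big[\binom{n}{3}\hat T\mid W,Z\big]-\binom{n}{3}T$ has standard deviation $O\big(\binom{n}{3}\rho^{3}n^{-1/2}\big)=o\big(\sqrt{\binom{n}{3}T}\big)$ since $(n^{2/3}\rho)^{3/2}\to 0$. A conditional Slutsky argument then delivers $\sqrt{\binom{n}{3}}\,(\hat T-T)/\sqrt{T}\leadsto N(0,1)$.

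Finally I would assemble the three displays. Plugging the formulas \eqref{eq:E+}--\eqref{eq:EZ} into $\sqrt{\binom{n}{3}}\,\chi_{ez}/\sqrt{T}$ and using $T\sim\big((a+(k-1)b)/k\big)^{3}$ and $\binom{n}{3}\sim n^{3}/6$ gives $\sqrt{\binom{n}{3}}\,\chi_{ez}/\sqrt{T}\to\delta$; combined with the previous two paragraphs this yields $\sqrt{\binom{n}{3}}\,\hat\chi_{ez}/\sqrt{T}\leadsto N(\delta,1)$, and since $\hat T/T\to 1$ and $(\hat V/\hat E)^{3}/T\to 1$ in probability (laws of large numbers for $\hat E,\hat V,\hat T$ together with $(V/E)^{3}/T\to 1$), Slutsky's theorem gives the first two displays. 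For the variance-stabilized form \eqref{eq:vst}, Taylor-expand $\sqrt{\hat T}=\sqrt{T}+(\hat T-T)/(2\sqrt{T})+O_{p}\big((\hat T-T)^{2}T^{-3/2}\big)$ and $(\hat V/\hat E)^{3/2}=(V/E)^{3/2}+O_{p}(n^{-1/2}\rho^{3/2})$, and note $\sqrt{T}-(V/E)^{3/2}=\chi_{ez}/(2\sqrt{T})\,(1+o(1))$ since $\chi_{ez}/T\to 0$; multiplying by $2\sqrt{\binom{n}{3}}$, the deterministic part tends to $\delta$, the linear part equals $\sqrt{\binom{n}{3}}\,(\hat T-T)/\sqrt{T}\leadsto N(0,1)$, and the two remainders are $O_{p}\big((n\rho)^{-3/2}\big)$ and $O_{p}\big(n\rho^{3/2}\big)$, both $o_{p}(1)$. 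The main obstacle is the step of the previous paragraph: showing that of the three sources of randomness in $\hat T$ only the isolated-triangle Bernoulli noise survives in the limit --- which is exactly what the two-sided window $n^{-1}\ll a\asymp b\ll n^{-2/3}$ is calibrated to ensure --- and that this dominant term is asymptotically Gaussian with the clean variance $\binom{n}{3}^{-1}T$ in spite of the $O(n)$-dense dependency structure of the triangle indicators; by contrast, the averaging over $(W,Z)$ and the passage from $\hat T$ to $\hat\chi_{ez}$ are routine once the orders of magnitude are in hand.
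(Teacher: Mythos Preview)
Your overall architecture matches the paper's: decompose $\hat\chi_{ez}-\chi_{ez}$, show via second-moment bounds on $\hat E,\hat V$ that $(\hat V/\hat E)^{3}-(V/E)^{3}=O_{p}(n^{-1/2}\rho^{3})=o_{p}(n^{-3/2}\rho^{3/2})$ so that $\hat T-T$ is the only surviving fluctuation, establish a CLT for $\hat T-T$ with variance $\binom{n}{3}^{-1}T$, and then read off the three displays by Slutsky and a square-root expansion. Your identification of the three variance scales $n^{-3}\rho^{3}$, $n^{-2}\rho^{5}$, $n^{-1}\rho^{6}$ and of the window $n^{-1}\ll\rho\ll n^{-2/3}$ as exactly what makes the first dominate is the same as the paper's.

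The one substantive difference is the CLT step. The paper conditions on $(W,Z)$, expands $A_{ij}A_{jl}A_{il}-\theta_{ij}\theta_{jl}\theta_{il}$ into the fully centered product $G_{ijl}=(A_{ij}-\theta_{ij})(A_{jl}-\theta_{jl})(A_{il}-\theta_{il})$ plus lower-order pieces, and then proves the CLT for $\sum G_{ijl}$ via a \emph{martingale} CLT with the vertex filtration $\sigma(\{A_{ij}\}_{i<j\le m})$. You instead propose a dependency-graph Stein bound on the triangle indicators. Be careful here: the off-the-shelf bounds (Rinott, Baldi--Rinott) give an error of order $DB/\sigma$ with $D\asymp n$, $B\asymp 1$, $\sigma\asymp n^{3/2}\rho^{3/2}$, i.e.\ $n^{-1/2}\rho^{-3/2}$, and since $n\rho^{3}\to 0$ throughout the regime $\rho\ll n^{-1/3}$ this diverges. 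So the naive dependency-graph route does not close; you would need either the refined Stein argument for subgraph counts (Barbour--Karo\'nski--Ruci\'nski, decomposable random variables) or a fourth-moment/de Jong argument on the homogeneous multilinear form $\sum G_{ijl}$. Any of these works, but the sentence ``a dependency-graph normal approximation bound then shows\ldots'' as written hides a real gap. The paper's martingale route sidesteps this by exploiting the vertex-by-vertex filtration, at the cost of invoking a martingale CLT and a separate computation that the conditional variance concentrates.
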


Theorem \ref{thm:main} shows that the asymptotic distribution of the
testing statistic is Gaussian. We can either normalize
$\sqrt{{n\choose 3}}\, \hat\chi_{ez}$ by
$\sqrt{\hat{T}}$ or by
$\sqrt{\left(\sfrac{\hat{V}}{\hat{E}}\right)^3}$. However, 
it may be possible that $\hat{T}=0$ or $\hat{V}=0$. Thus,
we prefer the normalization by
$\frac{1}{2}\Bigl(\sqrt{\hat{T}}+ \left(\sfrac{\hat{V}}{\hat{E}}\right)^{3/2}\Bigr)$,
which results in \eqref{eq:vst}. This square-root
normalization can be seen as a form of variance-stabilizing transformation
\citep{anscombe1948transformation}.

The assumption $n^{-1}\ll a\asymp b\ll n^{-2/3}$ controls the sparsity
of the graph. It covers the most interesting nontrivial range studied
in the  community detection literature, which is from $n^{-1}$ to $n^{-1}\log n$. Below the order of $n^{-1}$, the graph is so sparse that consistent community detection is not possible \citep{mossel2012stochastic,mossel2013proof}. Above the order of $n^{-1}\log n$, the graph carries sufficient information and strong consistency of community detection can be proved \citep{bickel09,abbe2014exact}.

\subsection{Power of the EZ test}
\label{sec:ezpower}

The mean of the asymptotic distribution is given in
\eqref{eq:shift}. When $\chi_{ez}=0$, we get $\delta=0$, and the asymptotic distribution is $N(0,1)$. Therefore, the p-value of the test can be calculated from the standard Gaussian quantile function.
When $k\rightarrow\infty$, the order of \eqref{eq:shift} is
$$\delta\asymp \left(\frac{n(a-b)^2}{k^{4/3}(a+b)}\right)^{3/2}.$$
This leads to the following result on the power of the test.

\begin{theorem}\label{thm:power}
Assume $\E W^4=O(1)$ and $n^{-1}\ll a\asymp b\ll n^{-2/3}$. Suppose
\begin{equation}
\frac{n(a-b)^2}{k^{4/3}(a+b)}\longrightarrow\infty.\label{eq:condition}
\end{equation}
Then, for any constant $t\asymp 1$, we have
$$\P \left(\bigl|{\textstyle 2{\sqrt{{n\choose 3}}}\left(\sqrt{\hat{T}}-\left(\sfrac{\hat{V}}{\hat{E}}\right)^{3/2}\right)}\bigr|>t\right)\rightarrow 1.$$
\end{theorem}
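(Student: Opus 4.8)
The plan is to reuse the central-limit decomposition behind Theorem~\ref{thm:main}, which cannot be invoked verbatim because \eqref{eq:condition} forces the mean shift in \eqref{eq:shift} to diverge in absolute value, placing us outside the hypothesis $\delta\in[0,\infty)$. Let $S_n\defeq 2\sqrt{{n\choose 3}}\bigl(\sqrt{\hat{T}}-(\sfrac{\hat{V}}{\hat{E}})^{3/2}\bigr)$ be the statistic appearing in Theorem~\ref{thm:power}, let $m_n\defeq 2\sqrt{{n\choose 3}}\bigl(\sqrt{T}-(\sfrac{V}{E})^{3/2}\bigr)$ be its deterministic counterpart (well defined since $V/E>0$), and put $\xi_n\defeq S_n-m_n$. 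It then suffices to show (i) $|m_n|\to\infty$ and (ii) $\xi_n=O_P(1)$: indeed $|S_n|\ge|m_n|-|\xi_n|\to\infty$ in probability, so $\P(|S_n|>t)\to 1$ for every fixed $t\asymp 1$.

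For (i), rationalizing gives $m_n=\frac{2\sqrt{{n\choose 3}}\,\chi_{ez}}{\sqrt{T}+(V/E)^{3/2}}$. By \eqref{eq:E+}--\eqref{eq:V+}, $V/E=q\defeq\frac{a+(k-1)b}{k}$, so $(V/E)^3=q^3$; and by \eqref{eq:EZ}, $T=q^3+\chi_{ez}$ with $\chi_{ez}=(k-1)(a-b)^3/k^3$ (note $k\ge 2$, since $k=1$ forces $\chi_{ez}=0$). Since $a\asymp b$ we get $q\asymp a+b$ and $|\chi_{ez}|\lesssim(a+b)^3\asymp q^3$, hence $\sqrt{T}\lesssim q^{3/2}$ when $\chi_{ez}>0$ and $\sqrt{T}<q^{3/2}$ when $\chi_{ez}<0$; in either case $\sqrt{T}+(V/E)^{3/2}\asymp q^{3/2}$. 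Combining this with $\sqrt{{n\choose 3}}\asymp n^{3/2}$, $k-1\asymp k$ and $a+(k-1)b\asymp k(a+b)$,
\begin{align*}
|m_n| &\asymp \frac{n^{3/2}\,|\chi_{ez}|}{q^{3/2}}
= (k-1)|a-b|^3\left(\frac{n}{k(a+(k-1)b)}\right)^{3/2}\\
&\asymp \left(\frac{n(a-b)^2}{k^{4/3}(a+b)}\right)^{3/2},
\end{align*}
which diverges by \eqref{eq:condition}; up to the constant $\sqrt{6}$ this is just $|m_n|\asymp|\delta|$, with $\delta$ as in \eqref{eq:shift}.

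For (ii), I would appeal not to the statement of Theorem~\ref{thm:main} but to its proof, which en route to $S_n\leadsto N(\delta,1)$ establishes that $\xi_n=S_n-m_n\leadsto N(0,1)$; this is precisely the fluctuation (mean-zero) part of the argument, which linearizes $x\mapsto\sqrt{x}$ at $T$ and $x\mapsto x^{3/2}$ at $V/E$, reduces $\hat{T},\hat{V},\hat{E}$ to asymptotically linear (Hájek-type) statistics, and applies a central limit theorem, with the linearization and higher-order remainders shown to be $o_P\bigl(1/\sqrt{{n\choose 3}}\bigr)$. The key point is that this step uses only $\E W^4=O(1)$ and $n^{-1}\ll a\asymp b\ll n^{-2/3}$ --- for instance the linearization error of $\sqrt{\hat{T}}$ is $O_P\bigl((\hat{T}-T)^2/T^{3/2}\bigr)$, which is $o_P\bigl(1/\sqrt{{n\choose 3}}\bigr)$ precisely because $na\to\infty$ while $a\ll n^{-2/3}$ --- whereas the finiteness of $\delta$ enters the proof of Theorem~\ref{thm:main} only through the separate, purely algebraic computation $m_n\to\delta$. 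The main obstacle is thus to confirm this $\delta$-independence of the fluctuation analysis, since that is the sole respect in which the present regime lies outside what Theorem~\ref{thm:main} already handles; everything else in (ii) is inherited unchanged from that proof.
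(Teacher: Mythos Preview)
Your proposal is correct and follows essentially the same route as the paper: both split the statistic into a diverging ``signal'' piece driven by $\chi_{ez}=T-(V/E)^3$ and an $O_P(1)$ ``fluctuation'' piece, with the latter controlled by the same $\delta$-independent ingredients (Lemmas~\ref{lem:EV-order} and~\ref{lem:T-order} and the expansion \eqref{eq:expdecomp}) that underlie Theorem~\ref{thm:main}. The only cosmetic difference is that the paper rationalizes $S_n$ and keeps the random denominator $\sqrt{\hat T}+(\hat V/\hat E)^{3/2}$ in both pieces, whereas you subtract the fully deterministic $m_n$; since $\hat T/T=1+o_P(1)$ and $(\hat V/\hat E)^3/T=1+o_P(1)$ hold without any assumption on $\delta$, the two decompositions are equivalent.
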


This result characterizes the power of the proposed test under the
condition \eqref{eq:condition}. Conditions of a similar form are 
common in the community detection literature. For example, in the
setting of the DCBM, \cite{gao2016community} require
$\frac{n(a-b)^2}{k^{5}(a+b)}\rightarrow\infty$ for minimax optimal
community detection. The scaling in \eqref{eq:condition} is the same
except for a much weaker dependence on $k$, indicating that 
the problem of testing for network structure may be statistically easier than
network clustering.

When $k=O(1)$, the condition \eqref{eq:condition} reduces to
$\frac{n(a-b)^2}{a+b}\rightarrow\infty$. The optimality of this
condition has been studied in the setting of the stochastic block model, which is a special
setting of the DCBM. For example, when $k=2$, 
\cite{mossel2012stochastic} show that distinguishing between an
Erd\H{o}s-R\'{e}nyi model and a stochastic block model is impossible when
$\frac{n(a-b)^2}{2(a+b)}<1$. On the other hand,
\cite{mossel2012stochastic,banerjee2016contiguity,banerjee2017} show
that when $\frac{n(a-b)^2}{2(a+b)}>1$, there exists a consistent test
to distinguish Erd\H{o}s-R\'{e}nyi model and a stochastic block model.
For a growing  number of communities $k$,
the impossibility result was extended by
\cite{banks2016information}, showing that 
an Erd\H{o}s-R\'{e}nyi model is indistinguishable from a stochastic
block model if $\frac{n(a-b)^2}{k\log k(a+b)}$ is bounded by some constant. Here, we
simplify the expression by assuming that $a\asymp b$.

In this paper, we study the more general setting of the
DCBM. Therefore, established lower bounds for the stochastic block
model also apply here.  The Erd\H{o}s-Zuckerberg test requires the
condition $\frac{n(a-b)^2}{k^{4/3}(a+b)}\rightarrow\infty$, which is
nearly optimal compared to these lower bound results.

\subsection{Computation of the test statistic for sparse networks}
\label{sec:comp}

Sparse matrix multiplication can be used to efficiently 
compute the test statistic $\hat\chi_{ez}$. If $A$ is the binary adjacency
matrix of the graph, then $(A^l)_{ij}$ is the number of paths of
length $l$ from $i$ to $j$. It follows that
\begin{align}
\label{eq:comp1}
\hat E &= \frac{1}{2{n\choose 2}} \mip{\ones}{A}\\
\label{eq:comp2}
\hat V &= \frac{1}{6{n\choose 3}} \left(\mip{\ones}{A^2} - \tr(A^2)\right) \\
\label{eq:comp3}
\hat T &= \frac{1}{6{n\choose 3}} \tr(A^3)
\end{align}
where $\tr(\cdot)$ denotes the matrix trace, $\mip{A}{B} = \tr(A^T B)$
is the matrix inner product, and the symbol $\ones$ denotes a matrix of all ones.
These relations were used to efficiently calculate the test statistic in the experiments
presented in Section~\ref{sec:experiments}.

\subsection{An EZ test for stochastic block models}
\label{sec:sbm}

When $\mathcal{W}$ becomes a delta measure on $1$, the DCBM reduces to
the SBM. A simplified Erd\H{o}s-Zuckerberg characteristic holds in
this setting, which only requires the estimation of the edge and the
triangle densities.
\begin{proposition}
When $\E W=\E W^2=1$, we have
$$T-E^3=\frac{(k-1)(a-b)^3}{k^3}.$$
\end{proposition}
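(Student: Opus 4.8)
The plan is to obtain this as an immediate specialization of Proposition~\ref{prop:EZ}. The key observation is that under the extra hypothesis $\E W = \E W^2 = 1$ the prefactor $(\E W)^2$ appearing in \eqref{eq:E+} and \eqref{eq:V+} is equal to $1$, so that
\[
E = \frac{a}{k} + \frac{k-1}{k}\,b, \qquad V = \left(\frac{a}{k} + \frac{k-1}{k}\,b\right)^{2} = E^{2}.
\]
Hence $V/E = E$ and therefore $(V/E)^{3} = E^{3}$, which reduces the Erd\H{o}s--Zuckerberg characteristic to $\chi_{ez} = T - E^{3}$. Substituting into \eqref{eq:EZ} then gives $T - E^{3} = \chi_{ez} = (k-1)(a-b)^{3}/k^{3}$, as claimed. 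Note that $\E W^{2} = 1$ is exactly the normalization \eqref{eq:2nd-w} under which Proposition~\ref{prop:EZ} was proved, so no new computation of $E$, $V$, $T$ is needed; the only additional ingredient is the elementary identity $V = E^{2}$, which says that in the (non-degree-corrected) SBM the vee density is the square of the edge density.

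First I would record that the hypotheses of Proposition~\ref{prop:EZ} are met, so \eqref{eq:E+}--\eqref{eq:T+} hold; then I would observe $(\E W)^{2}=1$ and deduce $V=E^{2}$ from \eqref{eq:E+} and \eqref{eq:V+}; finally I would invoke \eqref{eq:EZ}. For a self-contained check one can instead bypass $V$ entirely: with $\E W = \E W^{2} = 1$, \eqref{eq:E+} and \eqref{eq:T+} give $E = \tfrac{1}{k}\bigl(a + (k-1)b\bigr)$ and $T = \tfrac{1}{k^{2}}a^{3} + \tfrac{3(k-1)}{k^{2}}ab^{2} + \tfrac{(k-1)(k-2)}{k^{2}}b^{3}$; expanding $E^{3} = \tfrac{1}{k^{3}}\bigl(a+(k-1)b\bigr)^{3}$ by the binomial theorem and collecting terms, the coefficients of $a^{3}, a^{2}b, ab^{2}, b^{3}$ in $k^{3}(T - E^{3})$ come out to $k-1$, $-3(k-1)$, $3(k-1)$, $-(k-1)$ respectively, so that $k^{3}(T - E^{3}) = (k-1)(a-b)^{3}$.

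There is no genuine obstacle here: the statement is a direct corollary of the already-established Proposition~\ref{prop:EZ}, and the only thing to verify is the one-line identity $V = E^{2}$ when $(\E W)^{2} = 1$. I would present the short route (via $V = E^{2}$ and \eqref{eq:EZ}) as the proof, and optionally remark that the binomial-expansion computation furnishes an independent sanity check.
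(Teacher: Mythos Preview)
Your proof is correct and follows exactly the same approach as the paper: observe from \eqref{eq:E+} and \eqref{eq:V+} that $V=E^2$ when $\E W=1$, so that $(V/E)^3=E^3$, and then invoke \eqref{eq:EZ}. The optional binomial-expansion sanity check is fine but unnecessary.
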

This result is easily derived from Proposition \ref{prop:EZ} via the
relation $V=E^2$ when $\E W=1$ by \eqref{eq:E+}
and \eqref{eq:V+}. Analogous results to Theorem \ref{thm:main} and
Theorem \ref{thm:power} also hold for the plug-in statistic
$\hat{T}-\hat{E}^3$ under the SBM. In particular, $2\sqrt{{n\choose
3}}\left(\sqrt{\hat{T}}-\sqrt{\hat{E}^3}\right)\leadsto N(\delta,1)$,
where $\delta$ shares the same definition in \eqref{eq:shift}. Moreover,
the power of the corresponding test goes to one under the alternative
hypothesis of a stochastic block model with the same signal-to-noise ratio condition
\eqref{eq:condition}. See \cite{gao2017testing} for further detail.

While the form of this test is similar, there is a significant
difference between the Erd\H{o}s-Zuckerberg
characterizations for the SBM and the DCBM. Consider a DCBM with
$k=1$---in other words, a configuration model. 
By Proposition \ref{prop:EZ},
$T-\left(\sfrac{V}{E}\right)^3=0$. However, a simple calculation using
the expressions in \eqref{eq:E+}--\eqref{eq:T+} gives
$$T-E^3=a^3(1-(1-\Var(W))^3).$$ Thus, as long as $\Var(W)>0$, the
statistic satisfies $T-E^3>0$.  

This calculation shows that while the
configuration model is the benchmark of triangle frequency used in our
EZ test, this model will have 
more triangles compared with the benchmark of an Erd\H{o}s-R\'{e}nyi model. 
This phenomenon is apparent in the plots of Figure~1,
where the surface indicates the subfamily of degree-corrected stochastic block
models for which $\chi_{ev} = T - (V/E)^3 = 0$, and no community
structure is present. The curve on the surface, visible in the upper left
plot, corresponds to the subfamily of \ER{} graphs where $(E,V,T) =
(p, p^2, p^3)$. Each point represents a Facebook subnetwork; the points that lie
relatively close to the \EZ{} surface $\chi_{ez}=0$ are still far from the
\ER{} curve. This is attributable to degree heterogeneity in the
Facebook networks, which is captured by the configuration model.

\section{Examples}
\label{sec:experiments}

In this section we describe experiments with the proposed testing
framework on three types of data: social networks,
citations from journal articles, and stock returns of companies on the S\&P 500.
In each setting, we demonstrate the performance of the test
qualitatively, by showing examples of the networks that have large
and small p-values. For each of the three data sets, we find that
the \EZ{} test gives interesting and intepretable results, and is
effective at identifying community structure. 

\subsection{Facebook friend networks}
\label{sec:friends}

The current work was motivated by the empirical findings of
\cite{ugander2013subgraph}, which compared the distributions of 3-node
and 4-node subgraphs of Facebook friend networks to those obtained
under an \ER{} baseline model.  In this section we apply our testing
method to Facebook subnetworks similar to those used in this previous
work.

\begin{figure}[!ht]
\begin{center}
\begin{tabular}{cccc}
\includegraphics[width=.17\textwidth]{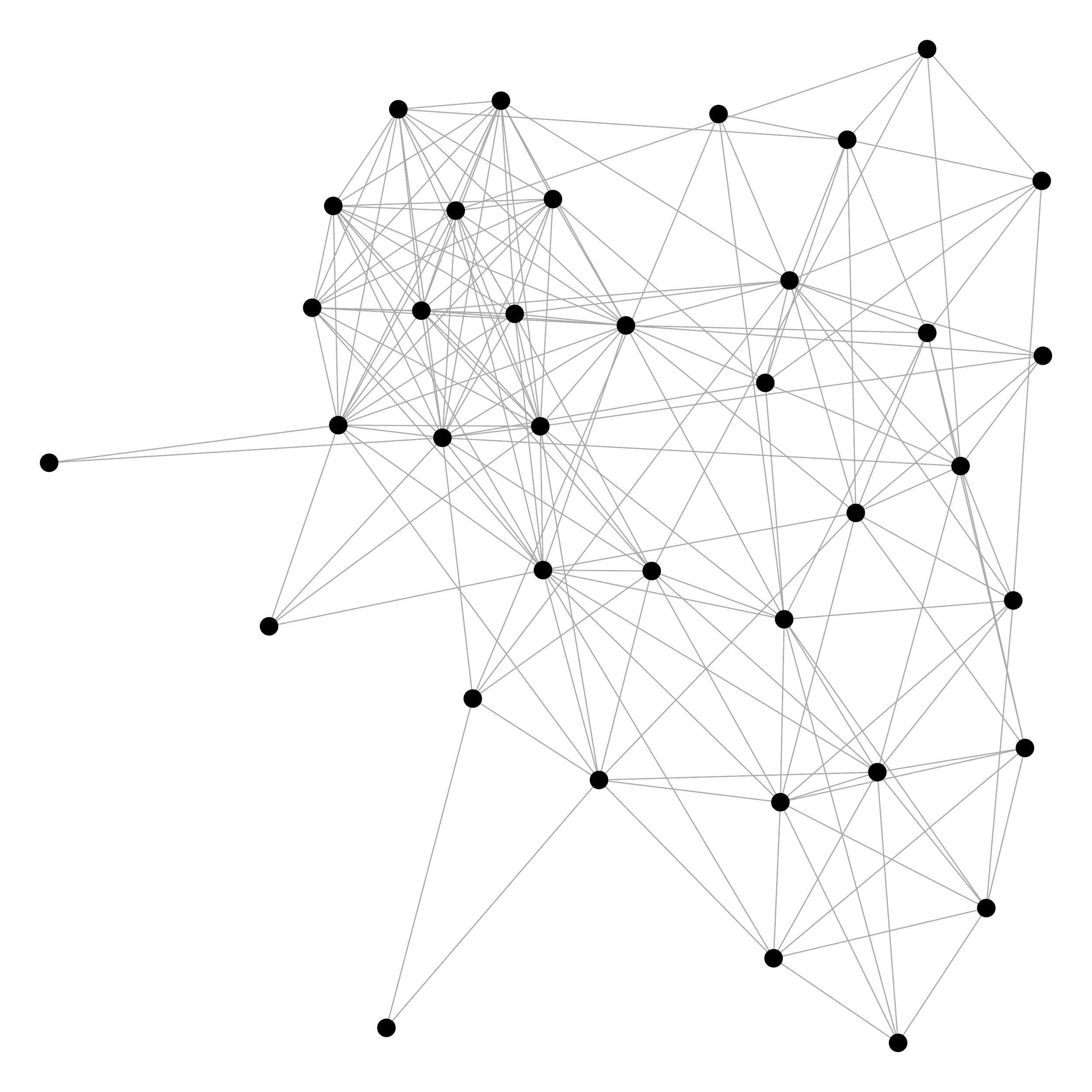} &
\includegraphics[width=.17\textwidth]{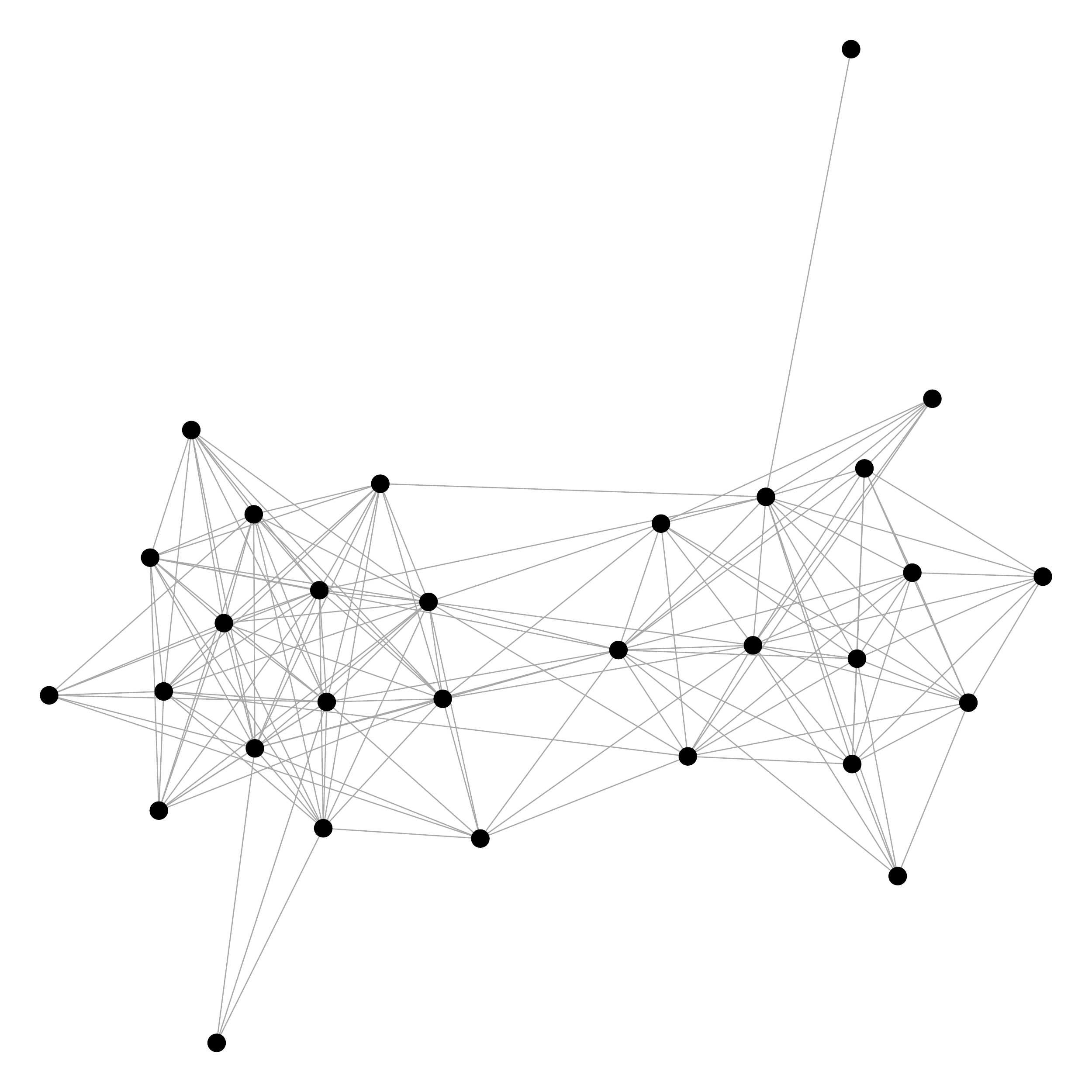} &
\includegraphics[width=.17\textwidth]{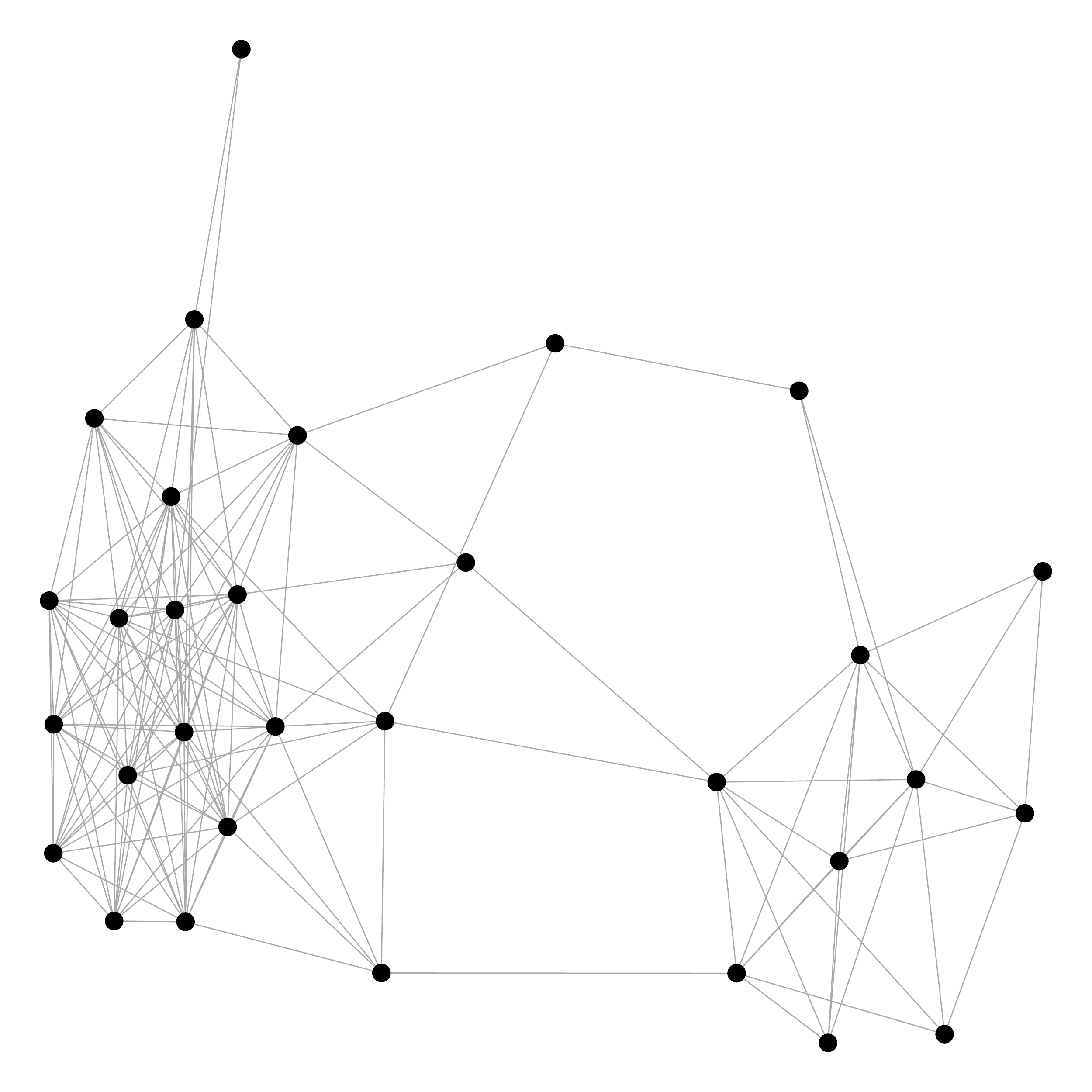} &
\includegraphics[width=.17\textwidth]{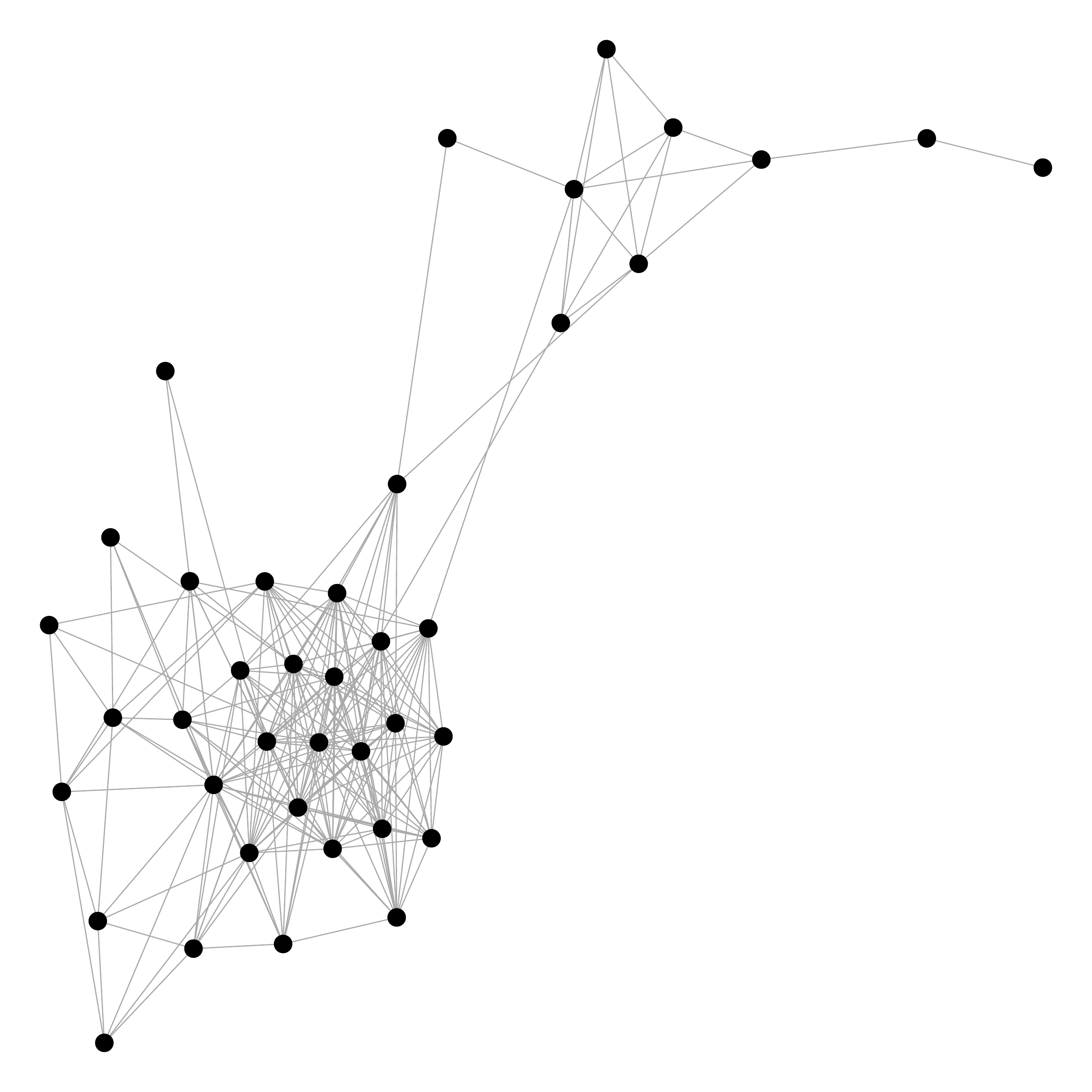} \\
\includegraphics[width=.17\textwidth]{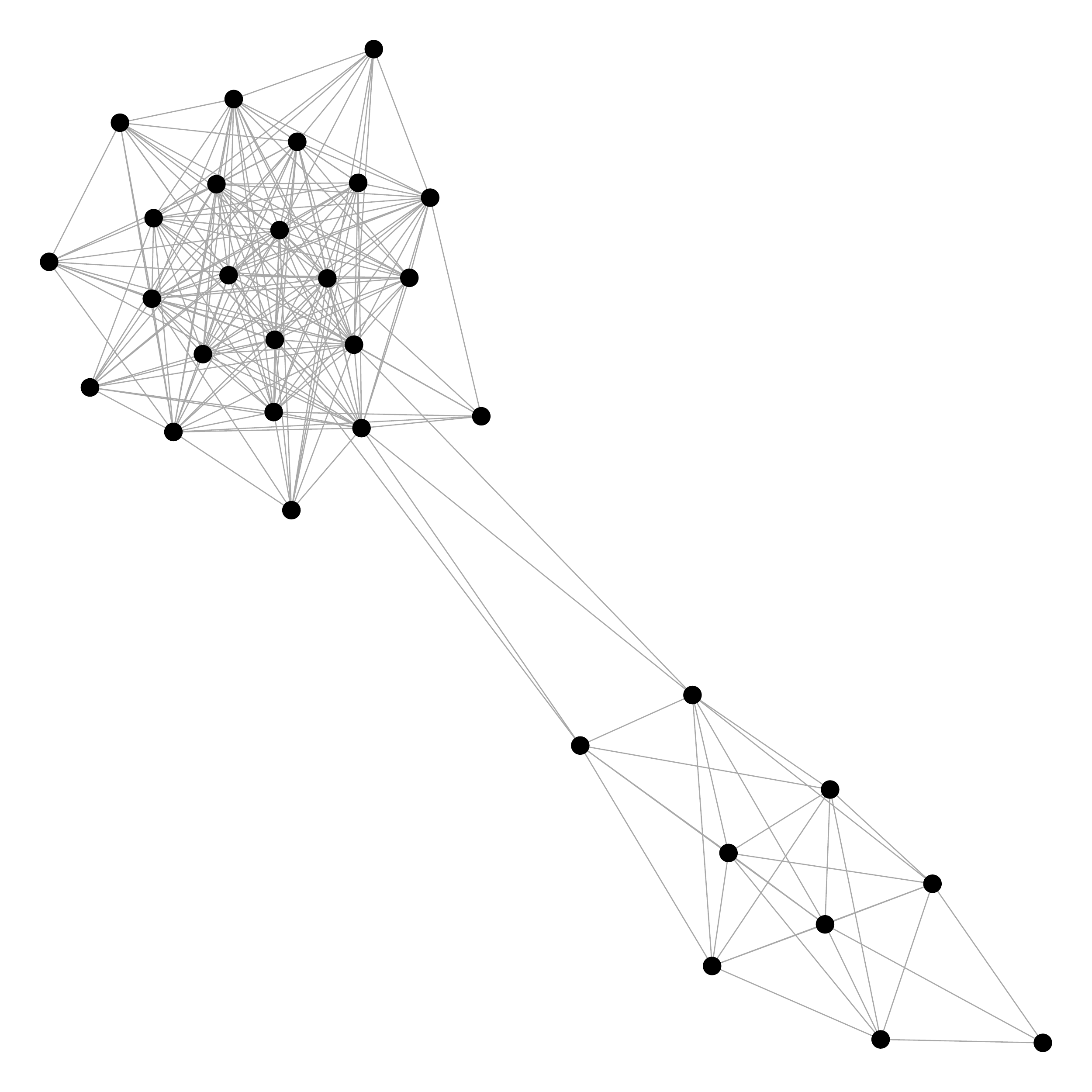} &
\includegraphics[width=.17\textwidth]{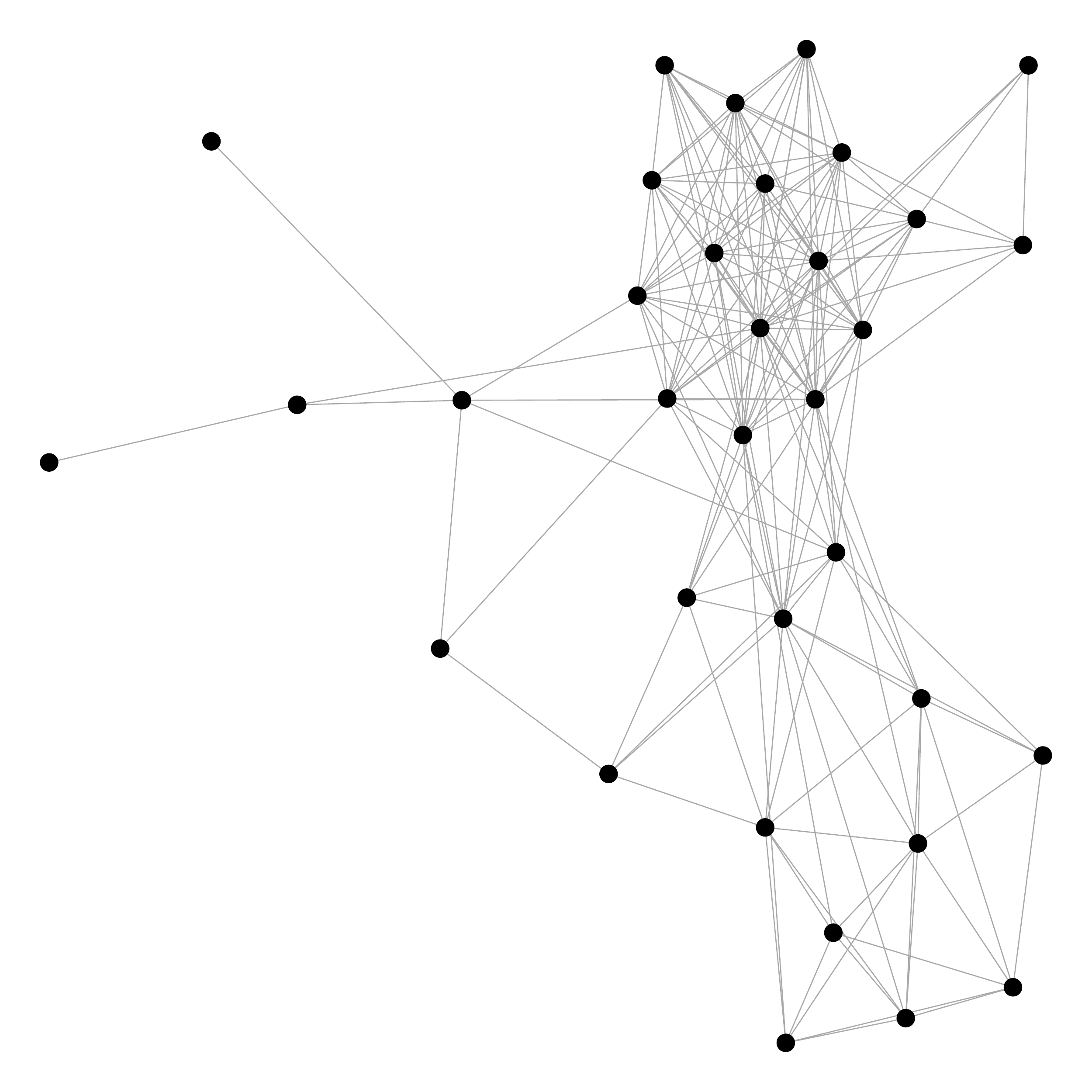} &
\includegraphics[width=.17\textwidth]{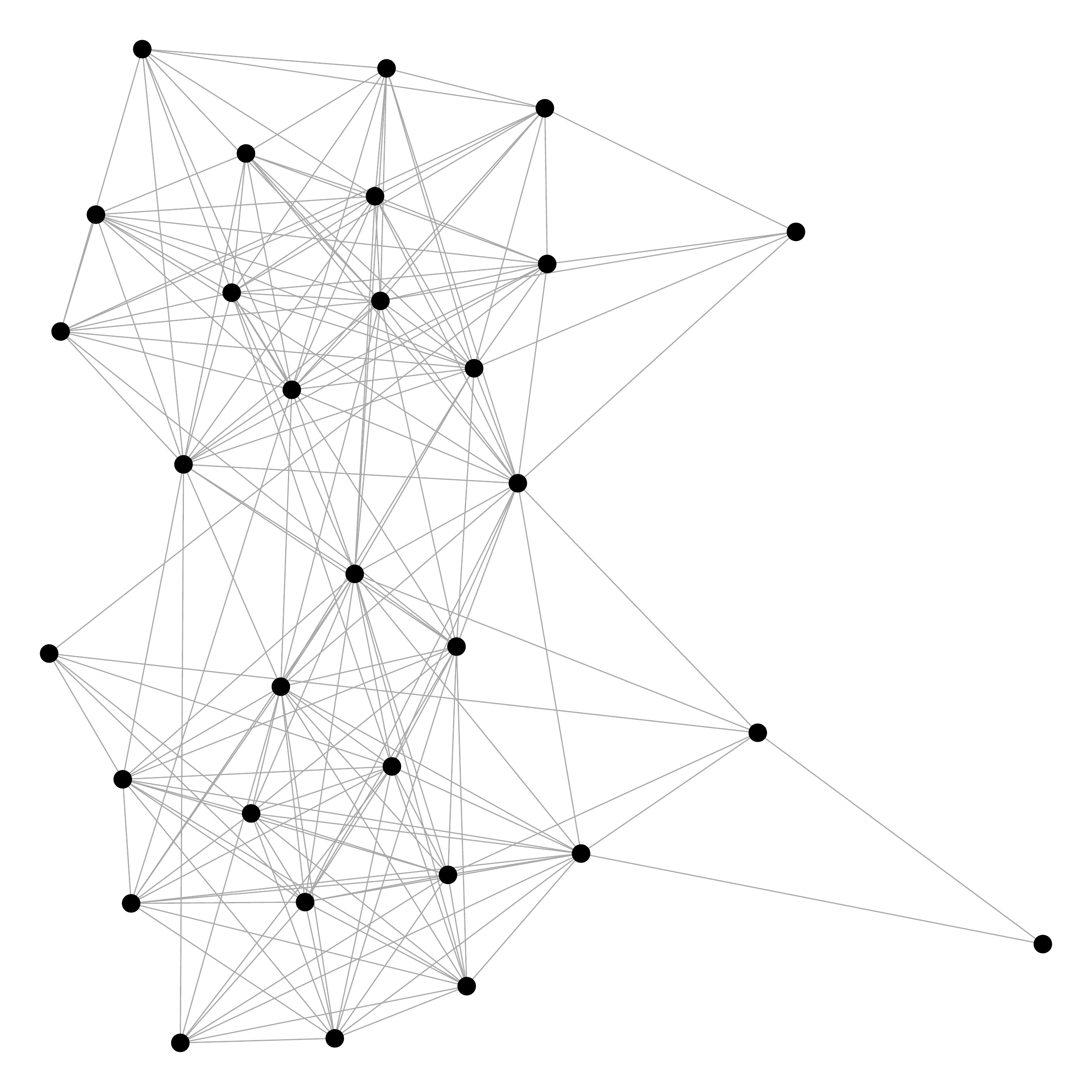} &
\includegraphics[width=.17\textwidth]{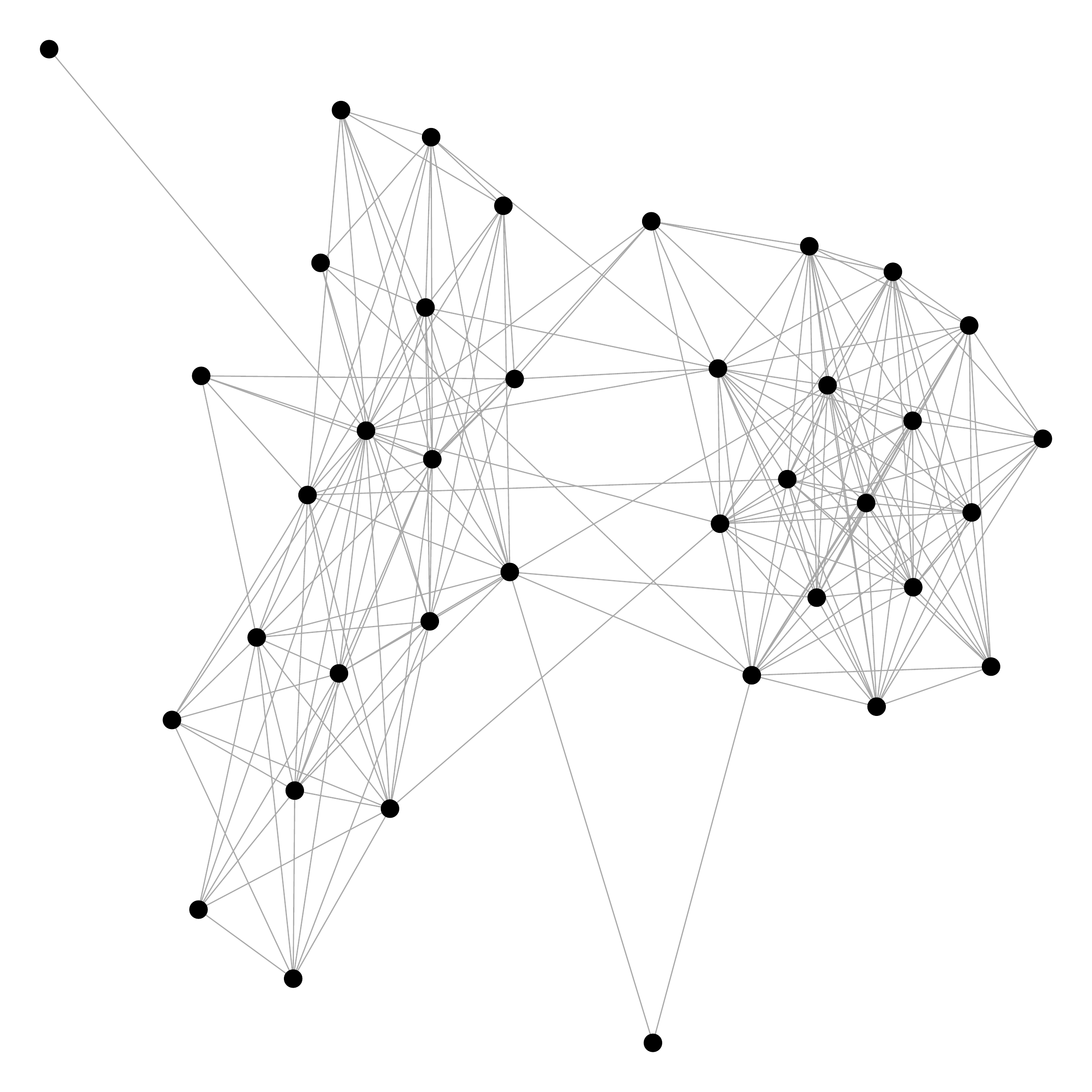} \\
\includegraphics[width=.17\textwidth]{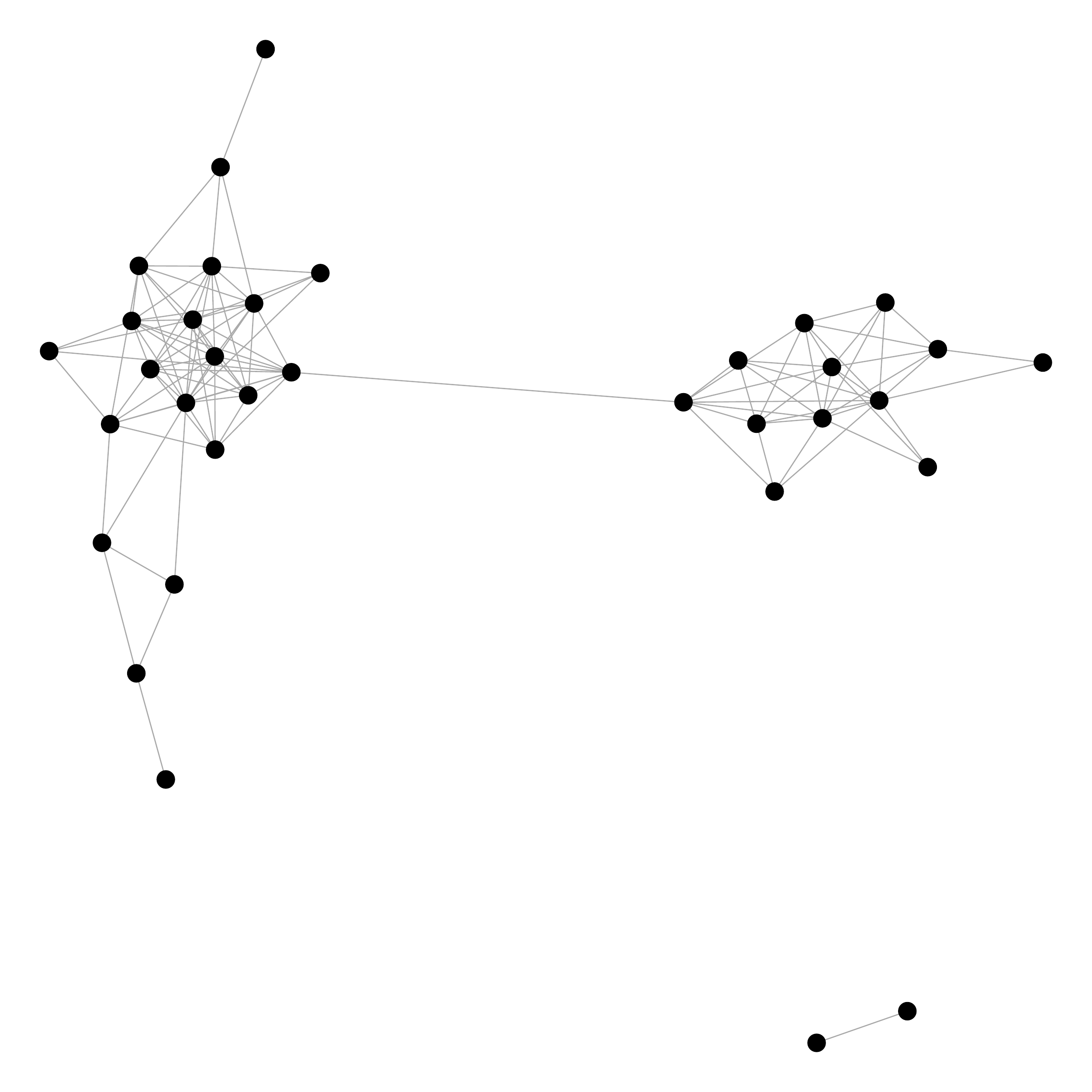} &
\includegraphics[width=.17\textwidth]{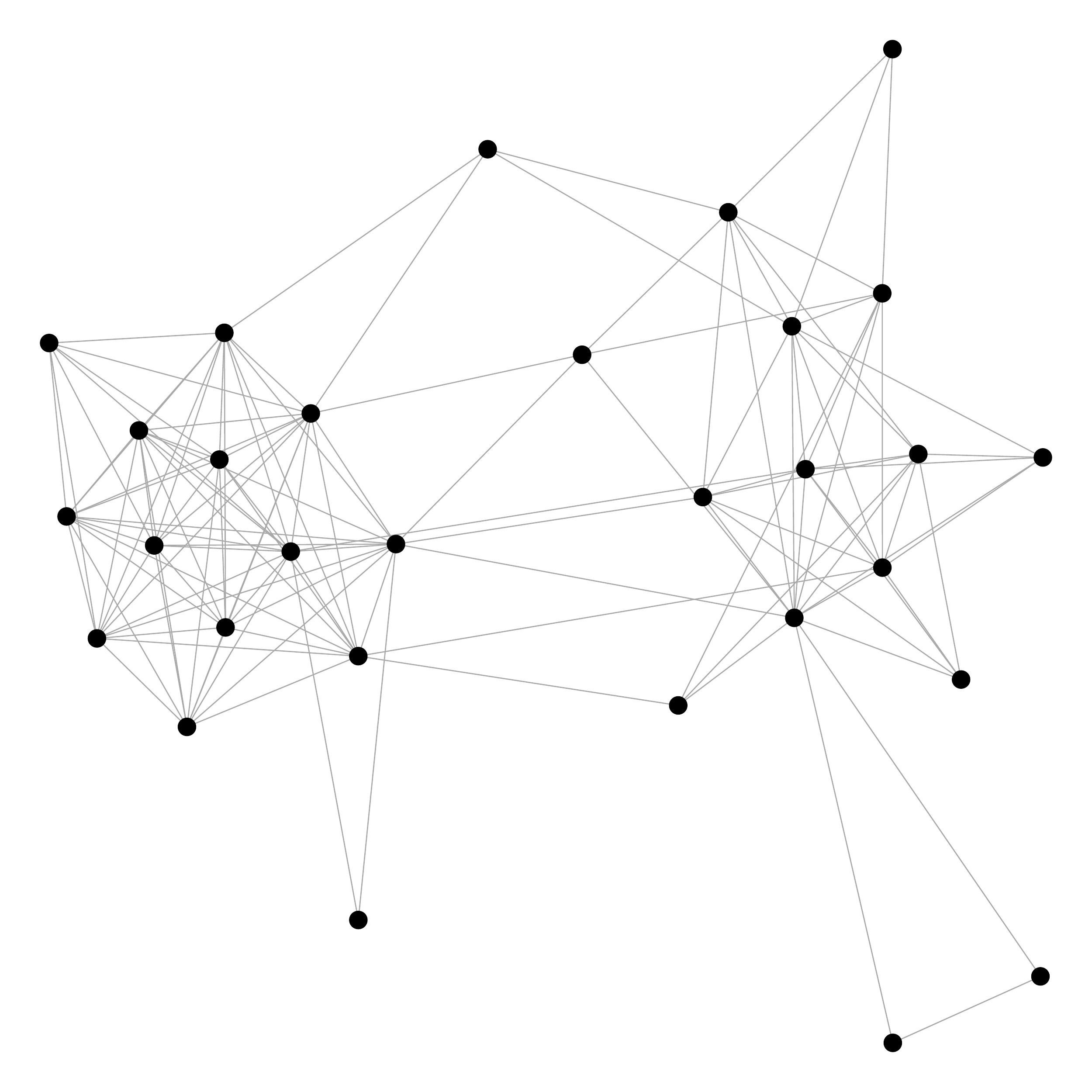} &
\includegraphics[width=.17\textwidth]{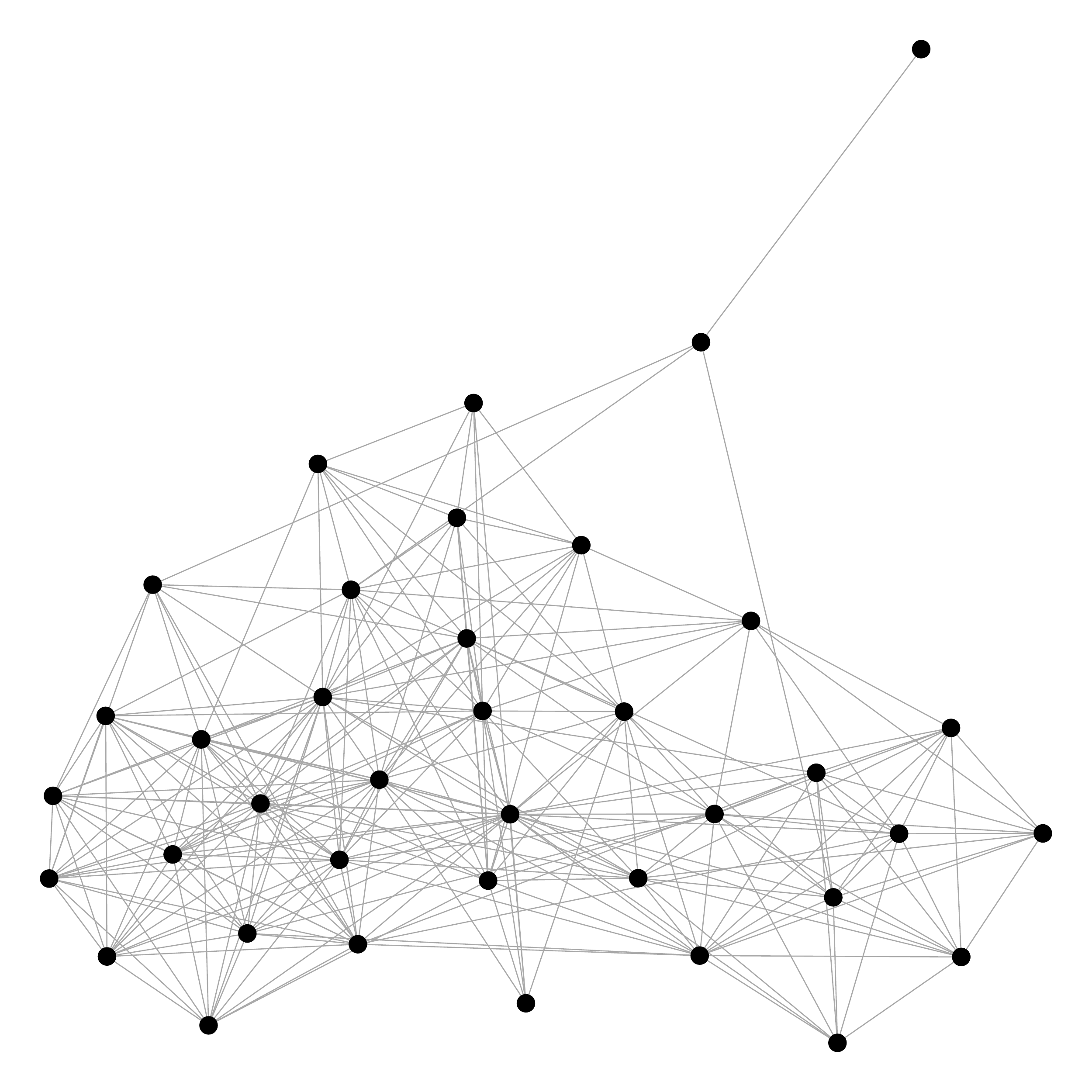} &
\includegraphics[width=.17\textwidth]{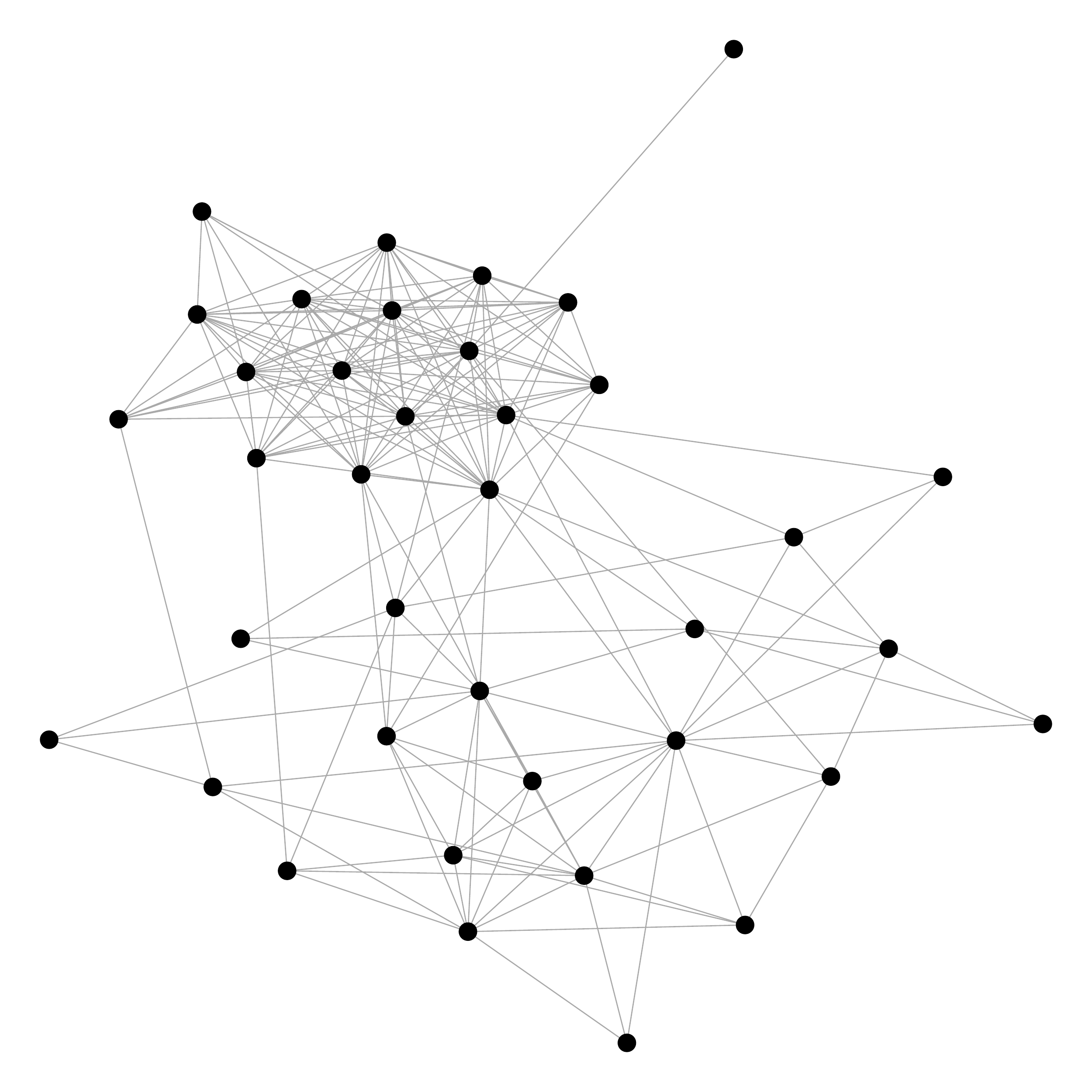} \\
\end{tabular}
\vskip20pt
\centerline{\hbox to 5in{\hrulefill}}
\vskip20pt
\begin{tabular}{cccc}
\includegraphics[width=.17\textwidth]{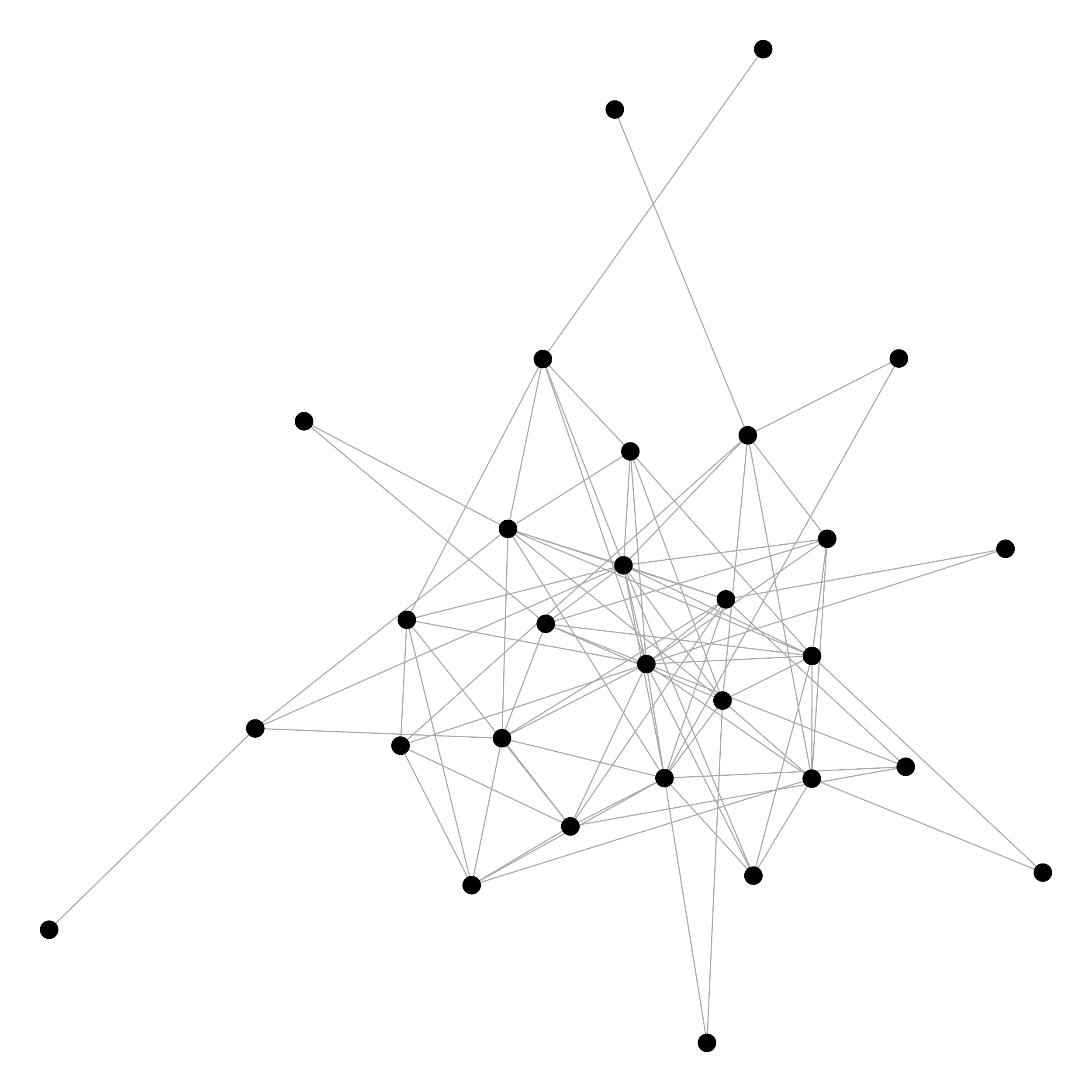} &
\includegraphics[width=.17\textwidth]{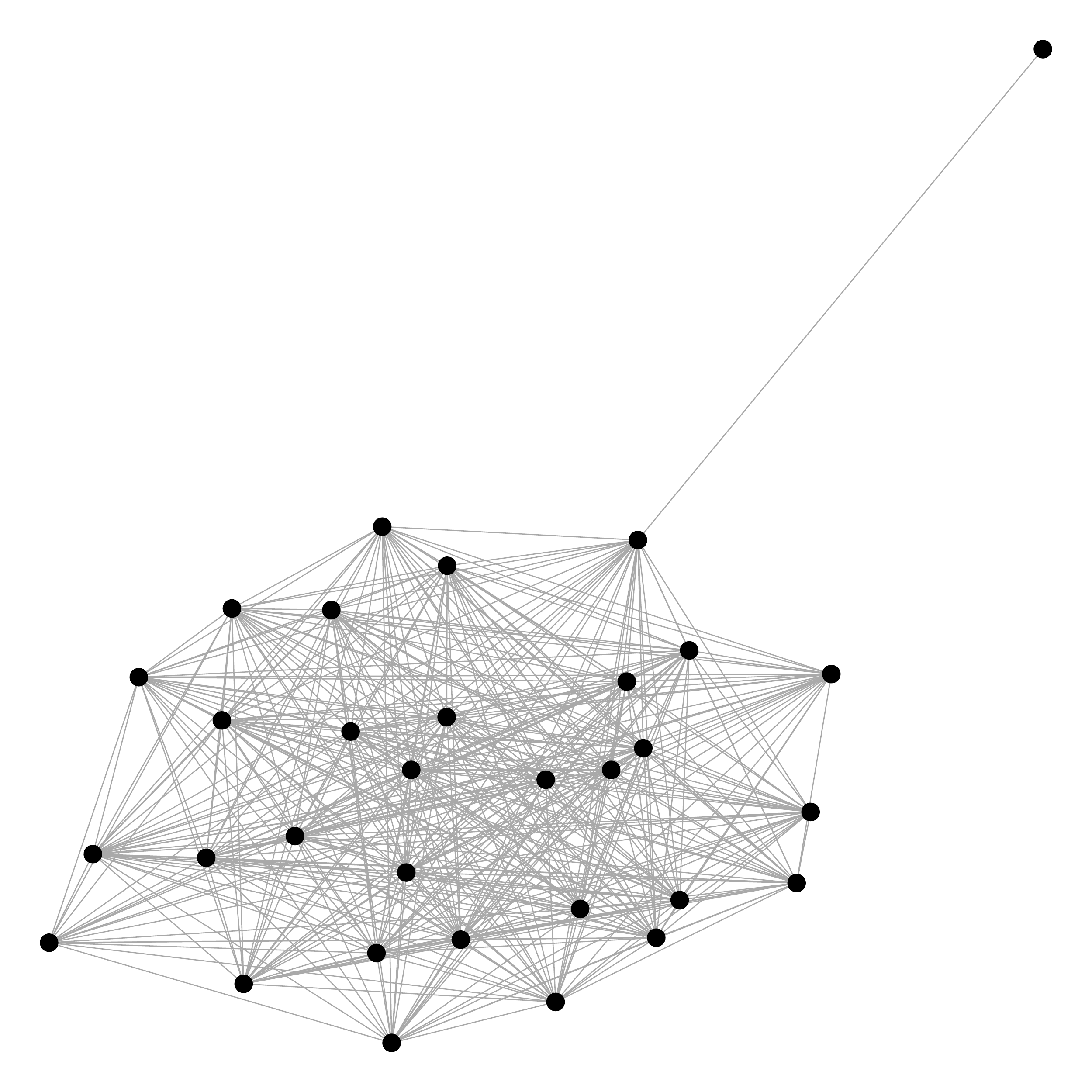} &
\includegraphics[width=.17\textwidth]{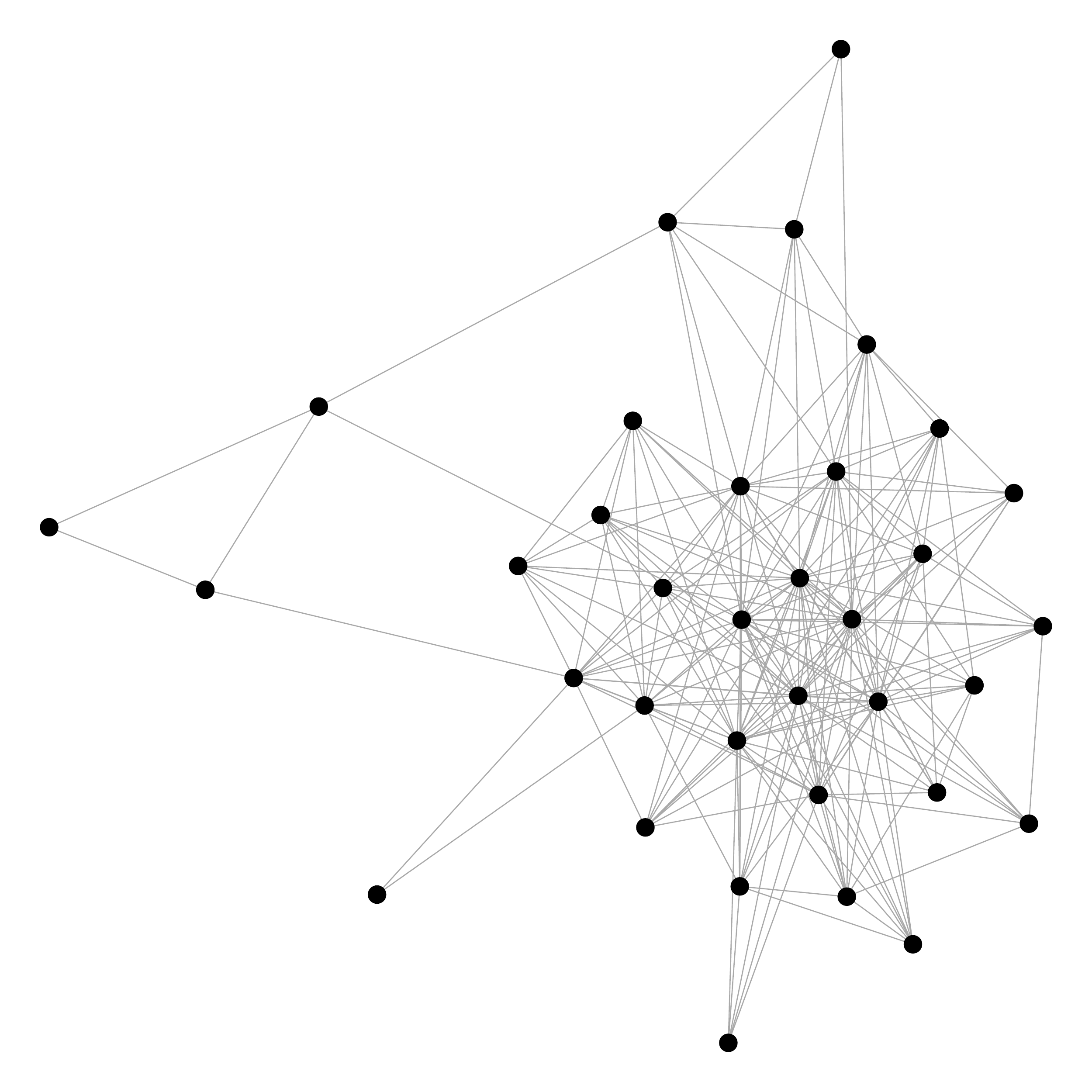} &
\includegraphics[width=.17\textwidth]{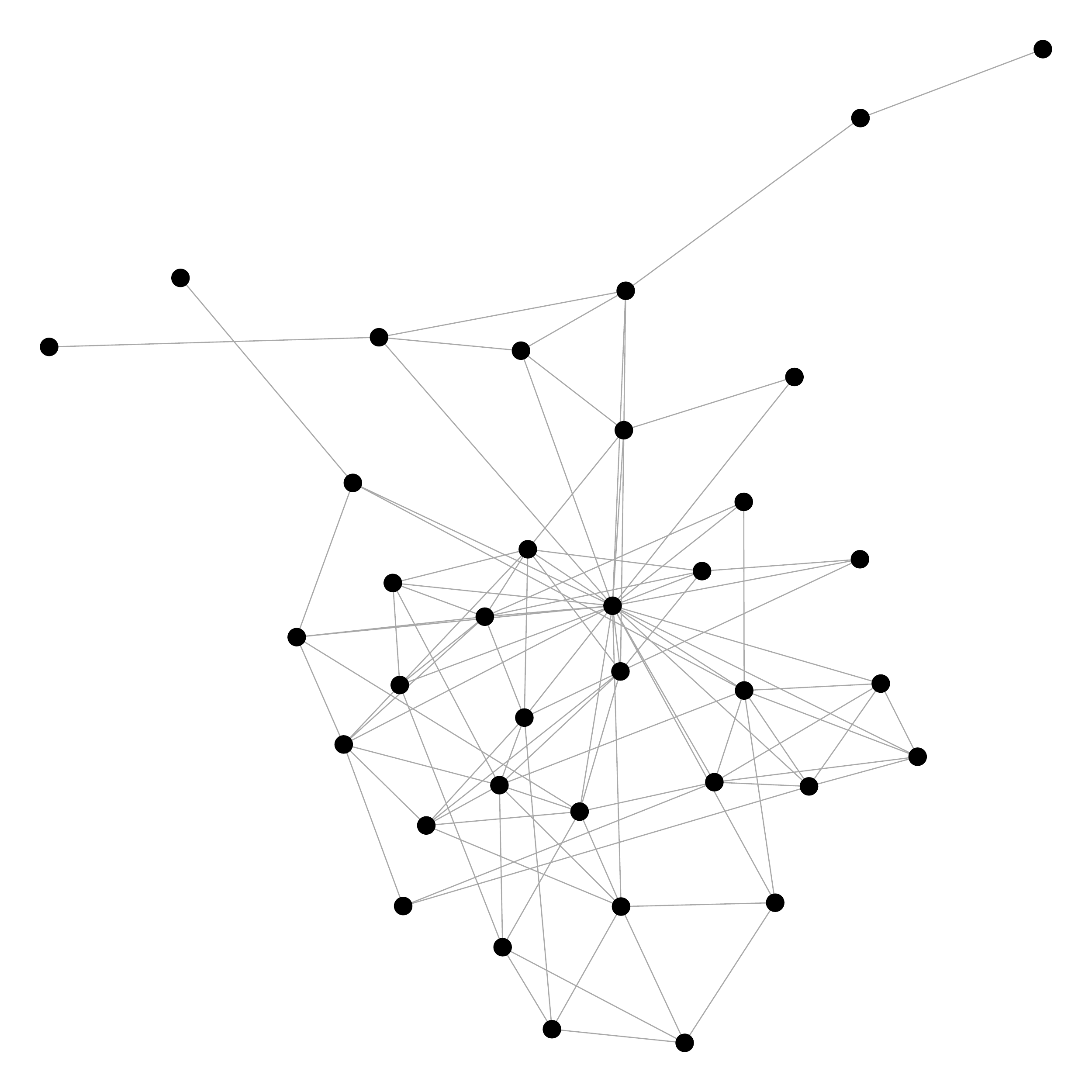} \\
\includegraphics[width=.17\textwidth]{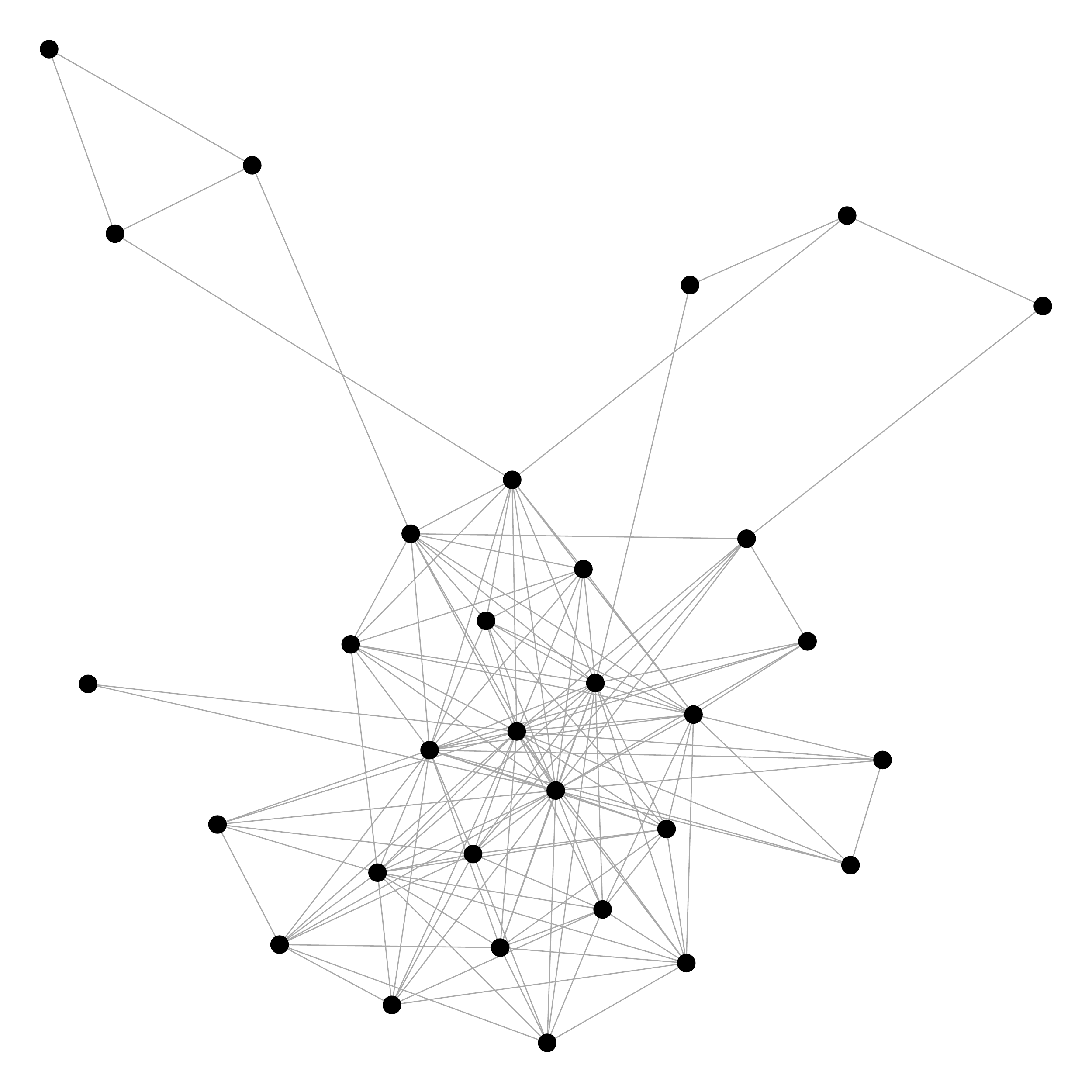} &
\includegraphics[width=.17\textwidth]{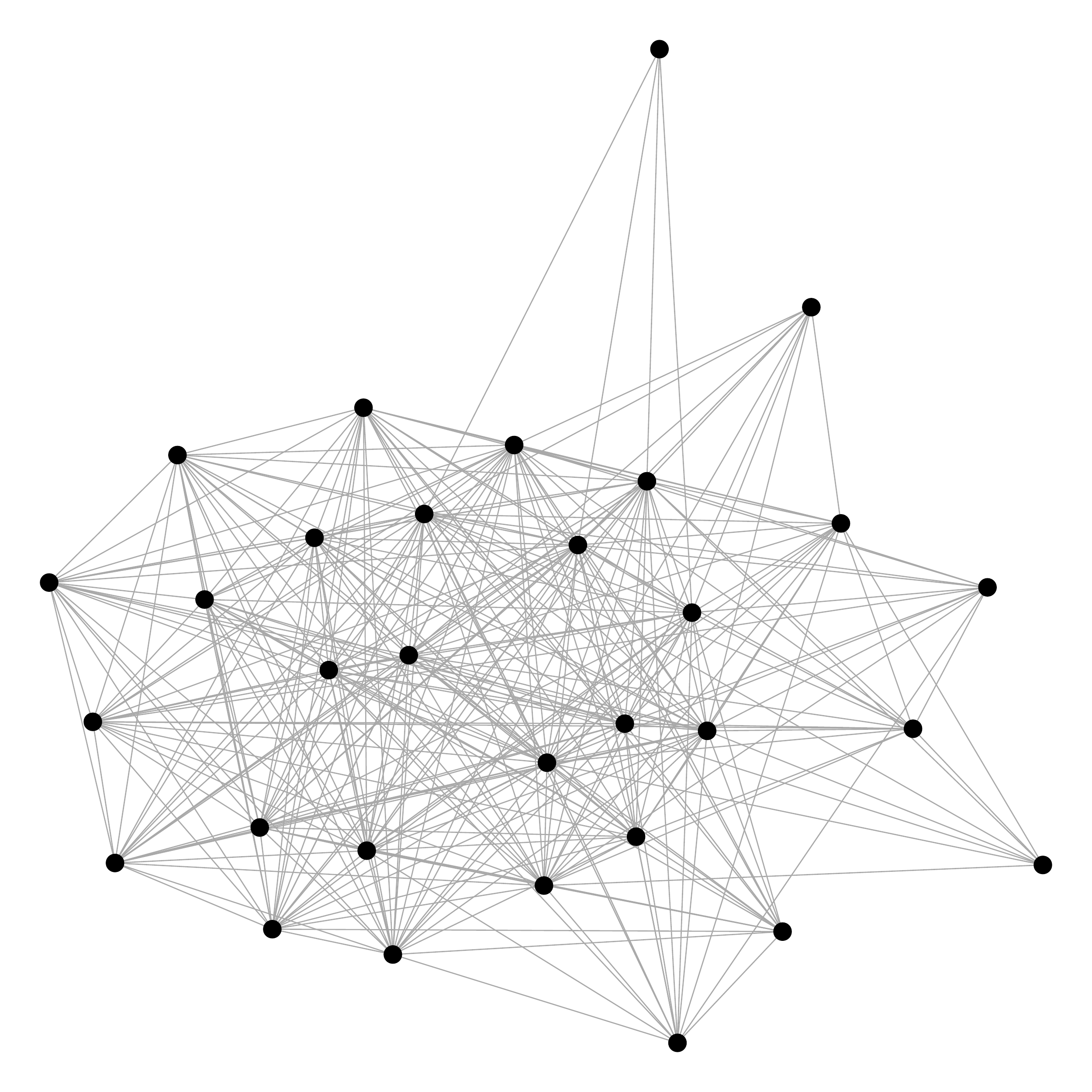} &
\includegraphics[width=.17\textwidth]{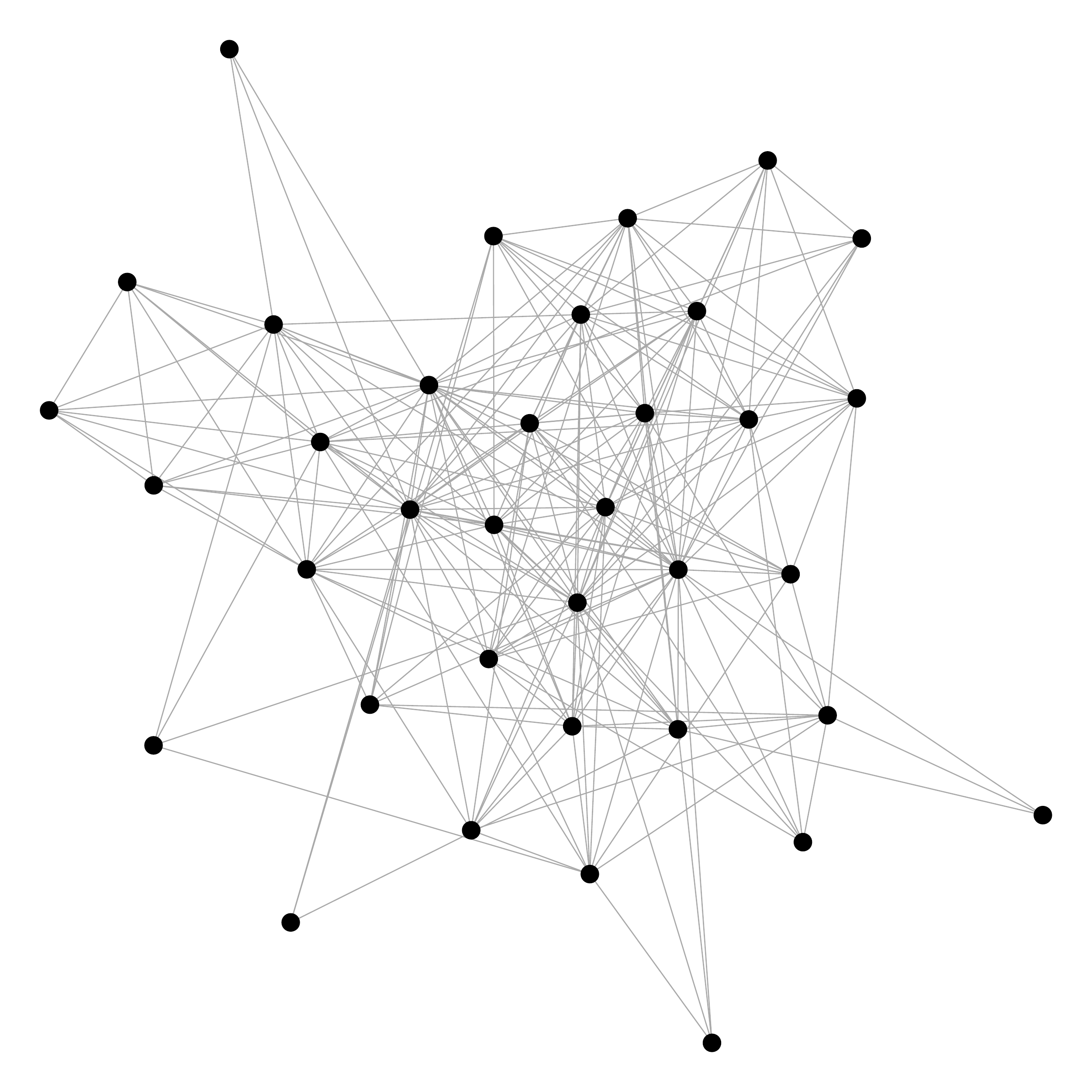} &
\includegraphics[width=.17\textwidth]{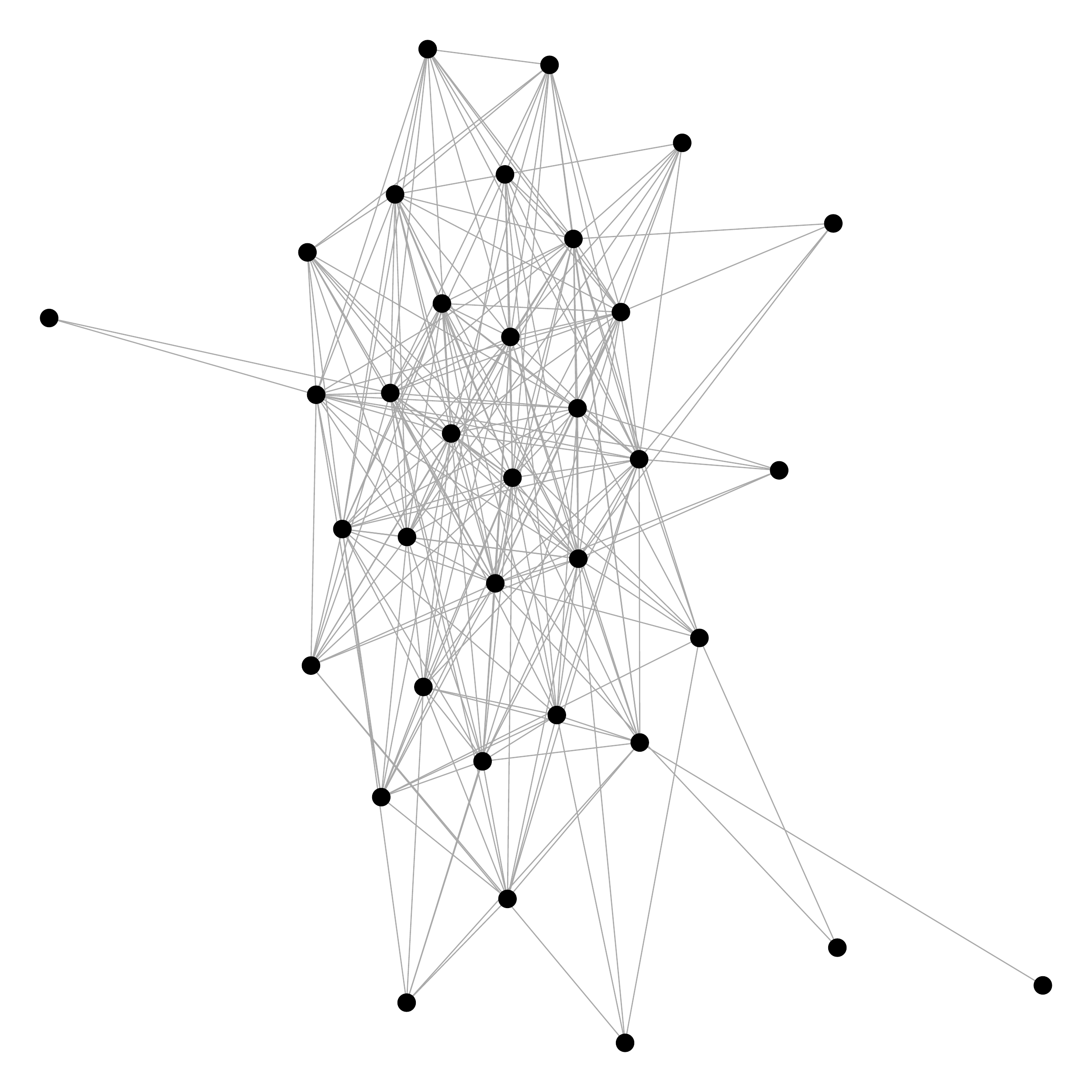} \\
\includegraphics[width=.17\textwidth]{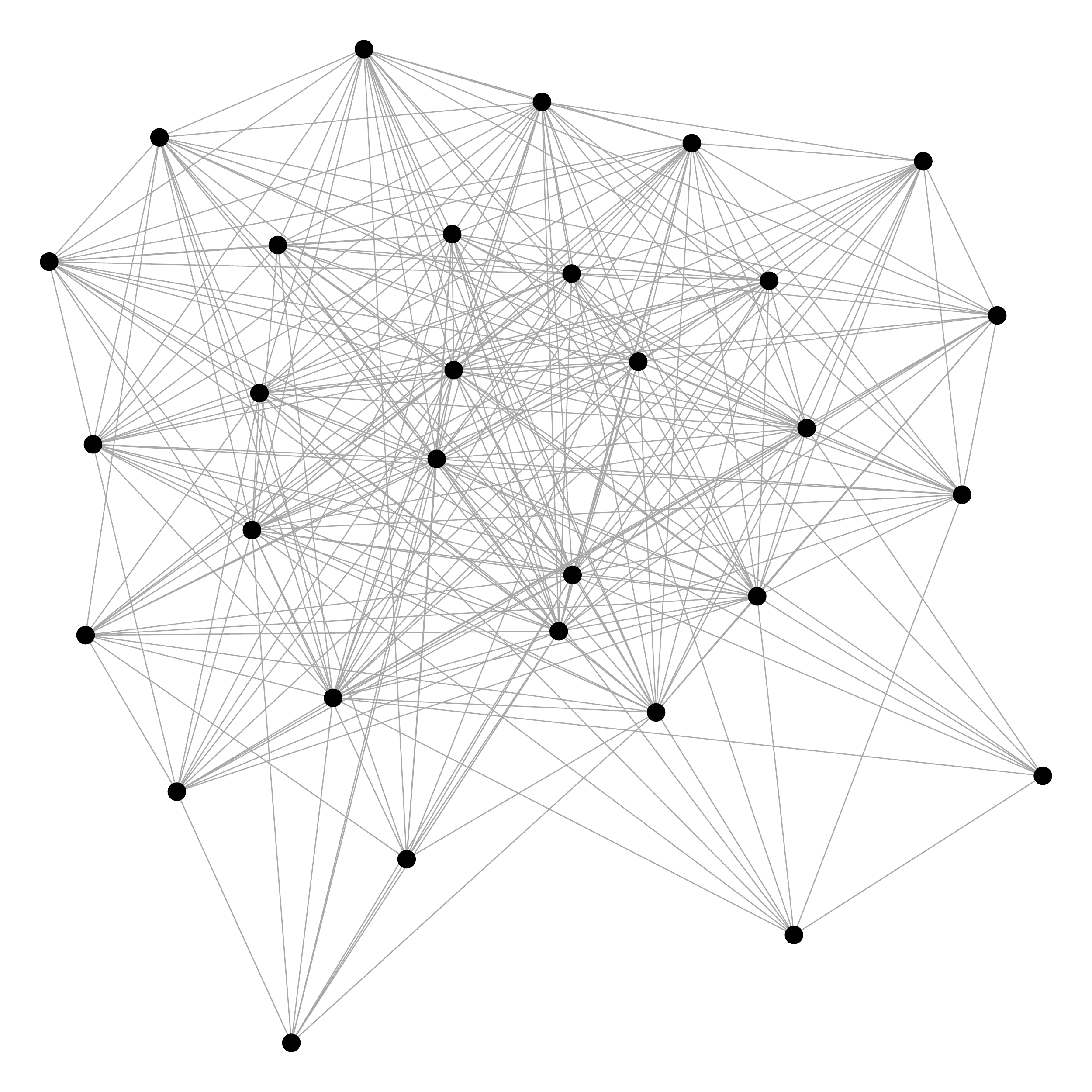} &
\includegraphics[width=.17\textwidth]{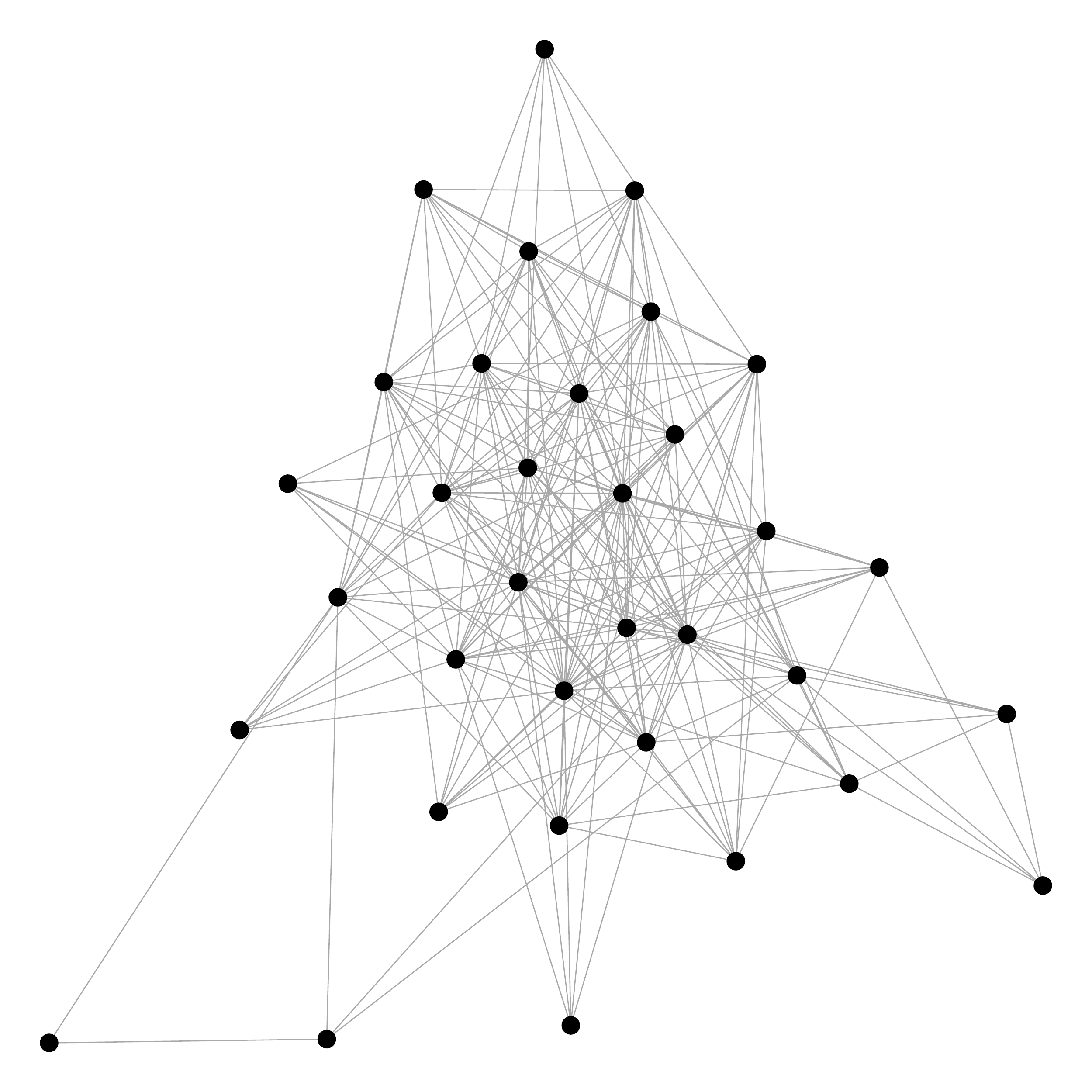} &
\includegraphics[width=.17\textwidth]{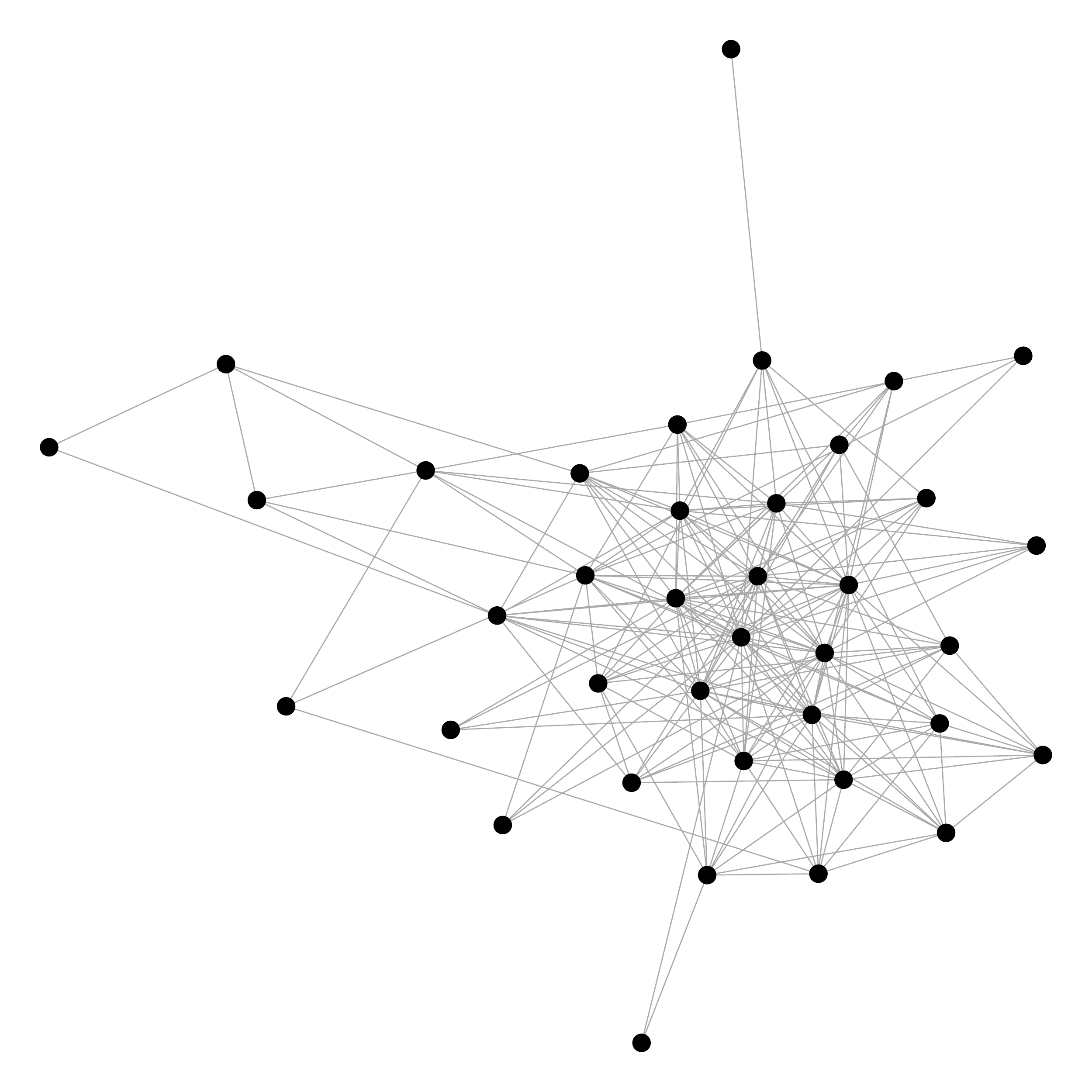} &
\includegraphics[width=.17\textwidth]{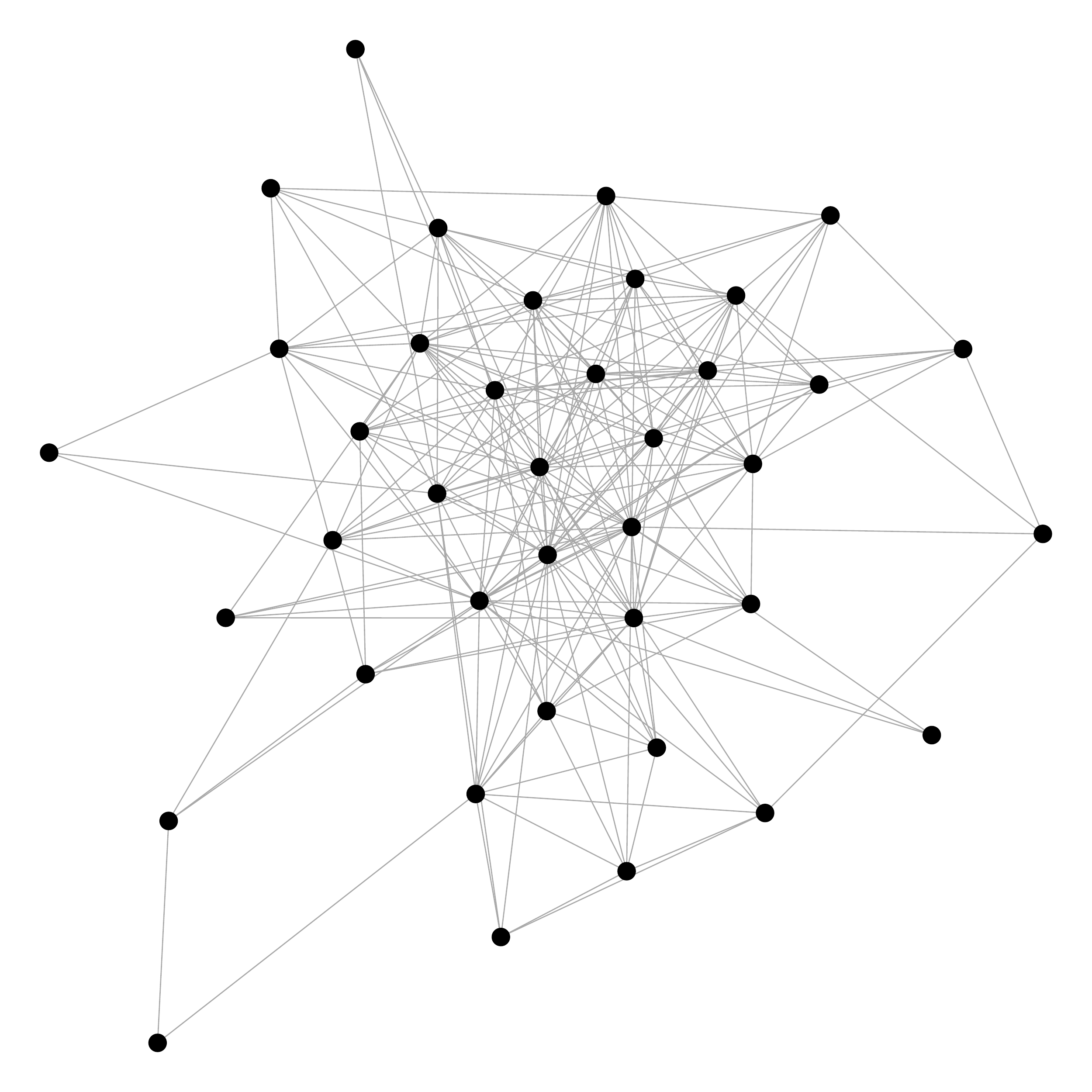} \\
\end{tabular}
\caption{Facebook neighborhood graphs, each with between 30 and 40 nodes,
  extracted from the Carnegie Mellon University portion of the
  Facebook 100 dataset. Top: 12 randomly selected graphs with  p-values smaller than
  $10^{-12}$, under the EZ test. Bottom: 12 graphs
  with p-values larger than 0.1. The results for other
  universities are similar. Community structure is readily apparent in
  the top graphs, and lacking in the bottom graphs.}
\end{center}
\label{fig:fbgraphs}
\end{figure}

The data we use are from the ``Facebook 100'' dataset, comprised
of Facebook friend networks from 100 U.S.~universities, collected
in 2005.
In addition to the friend relations
user attributes such as dorm, gender, graduation year, and
academic major are included in the data; however, we do not use
these attributes in our analysis.

The data are divided into separate networks for each of the 100
universities. For a given university, we form the induced graph of a
given user $e$ by forming an adjacency matrix $A^{(e)} =
\bigl(A^{(e)}_{ij}\bigr)$ with respect to the friends of $e$, with
$A^{(e)}_{ij} = 1$ if $i$ is a friend of $j$ (or vice-versa), and
$A^{(e)}_{ij} = 0$ otherwise. A discussion of the inferential
properties of selecting neighborhood graphs this way is given 
in the supplementary material (see Section~\ref{sec:nbhds}).

We display sample results for the Carnegie Mellon University
subnetwork; the results for other universities are qualitatively very
similar. Restricting to subgraphs having between
30 and 40 nodes results in 556 graphs for the CMU subnetwork. The
p-values were computed according to the \EZ{} test implied by equation
\eqref{eq:vst}.  A histogram of these 556 p-values is displayed in
Figure~\ref{fig:pvalues}, where it is seen that most of the p-values
are very small, indicating significant structure.
Figure~\ref{fig:fbgraphs} shows 12 randomly selected graphs 
having large ($> 0.1$) and small ($< 10^{-12}$) p-values under the 
test.  Community structure is readily apparent in the graphs with
small p-values. Structure is absent in the graphs with large
p-values, while they clearly have degree heterogeneity, 
as modeled by the configuration model.

\begin{figure}
\begin{center}
\begin{tabular}{cc}
\includegraphics[width=.42\textwidth]{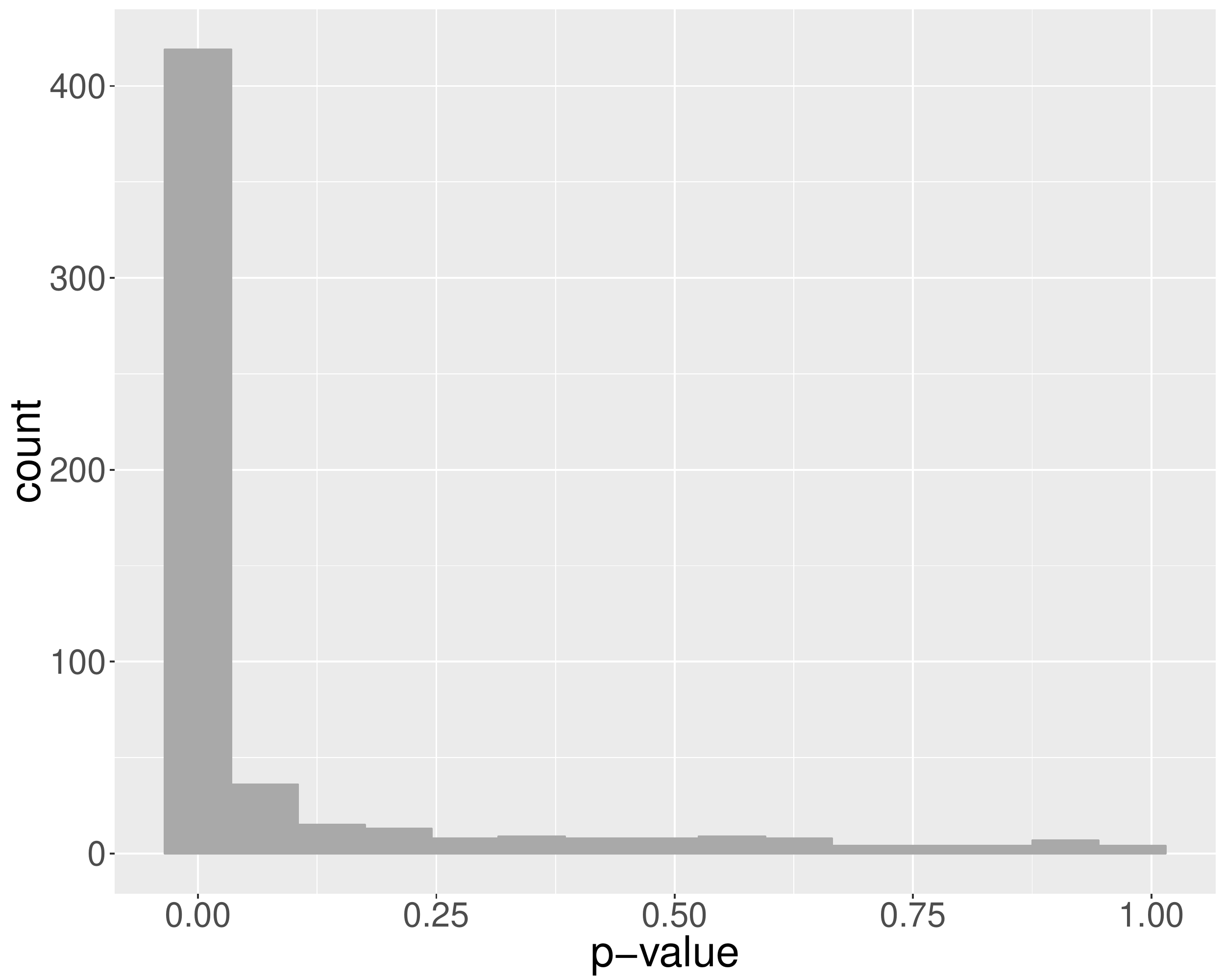}&
\hskip.3in 
\includegraphics[width=.42\textwidth]{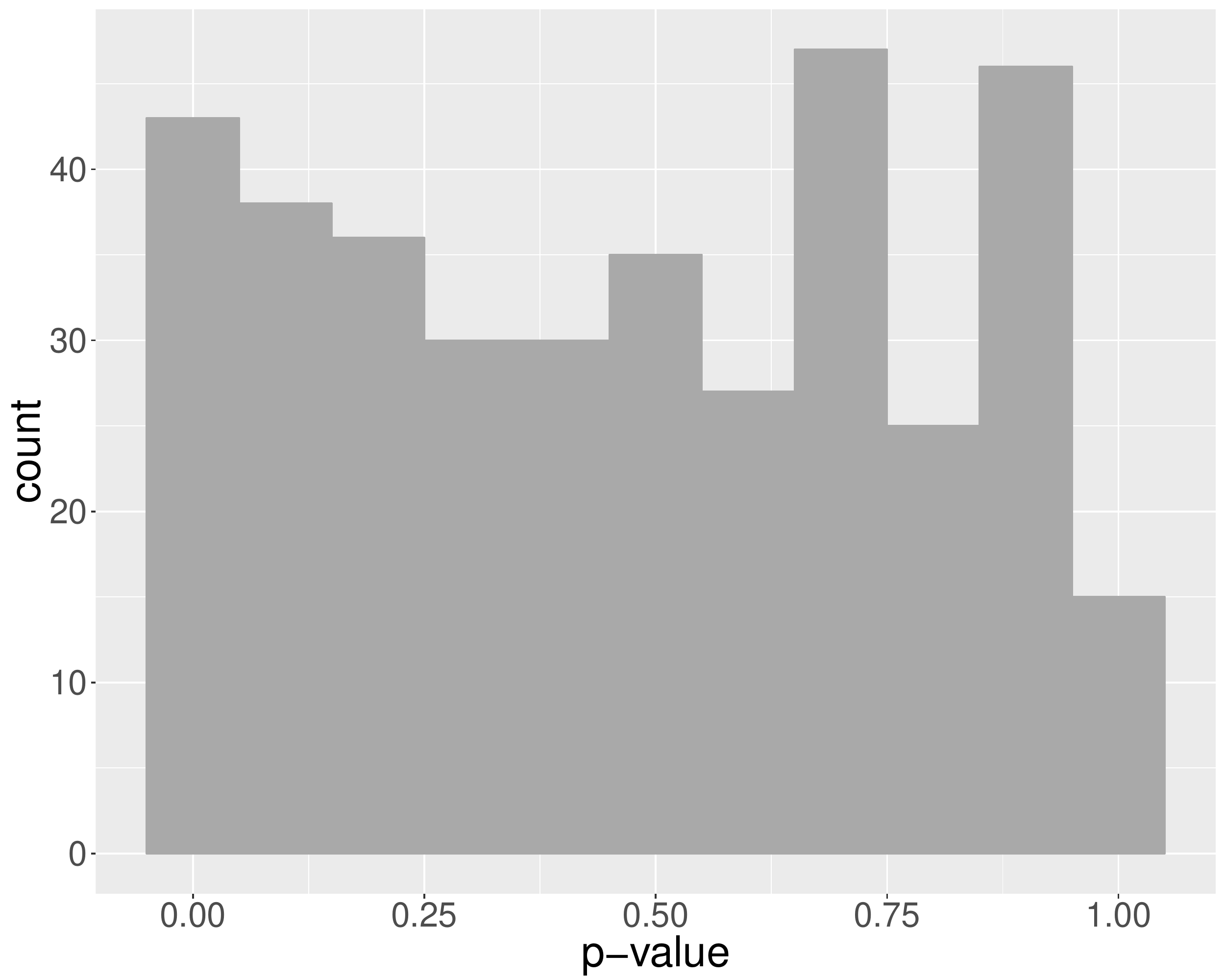}
\end{tabular}
\end{center}
\caption{Histograms of p-values of the Facebook graphs and journal 
  citation graphs. Left: histogram 
of the 556 neighborhood graphs in the Carnegie Mellon subnetwork that
have between 30 and 40 nodes. Most of the graphs have significant
community structure according to the EZ test. Right:
histogram of the p-values for subnetworks of the 
citation data. In the case, the p-values are typically large, indicating
a lack of community structure.}
\label{fig:pvalues}
\end{figure}

\subsection{Citation networks from statistics journals}

The data used to illustrate the proposed test in this section are 
associated with citations from several
statistics journals, including
the Annals of Statistics, Biometrika, the Journal of the American
Statistical Association, and the Journal of the Royal Statistical
Society, Series B. The citations are from papers published between 2003 and
2012 \citep{ji2016}.

We work with the ``giant component'' of the citation network from this
dataset, where each node in the network corresponds to one of 2,654
authors.  A directed edge from author $i$ to $j$
indicates that author $i$ has cited one or more papers by author $j$.
We extract subnetworks $A^{(a)} = (A^{(a)}_{ij})$ for a given author
$a$. This graph is over the authors cited by $a$, with
$A^{(a)}_{ij}=1$ if $i$ cites $j$ (or vice-versa), and
$A^{(a)}_{ij}=0$ otherwise.

\begin{figure}[!ht]
\begin{center}
\vskip-20pt
\begin{tabular}{c}
\includegraphics[width=.58\textwidth]{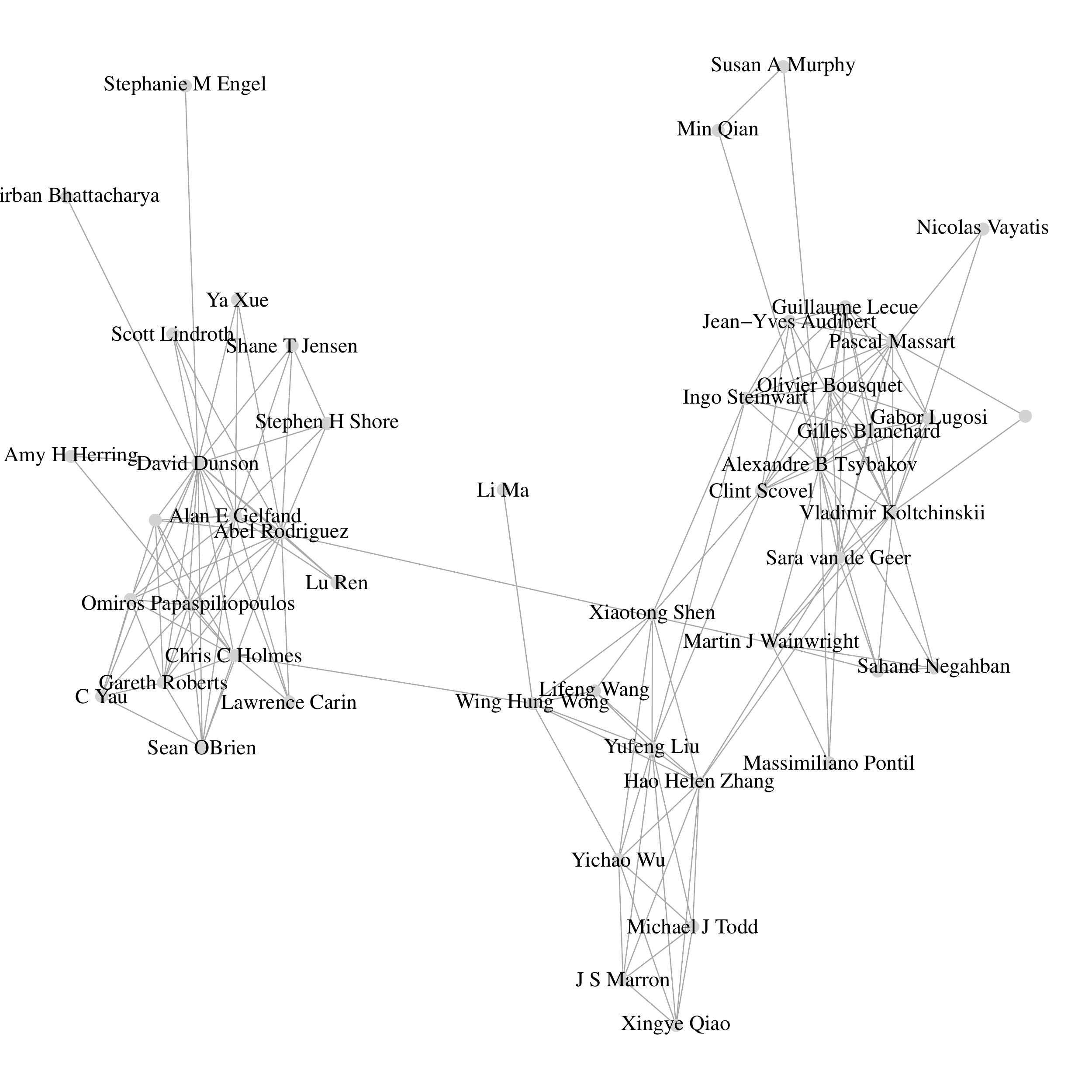}\\[-25pt]
\includegraphics[width=.58\textwidth]{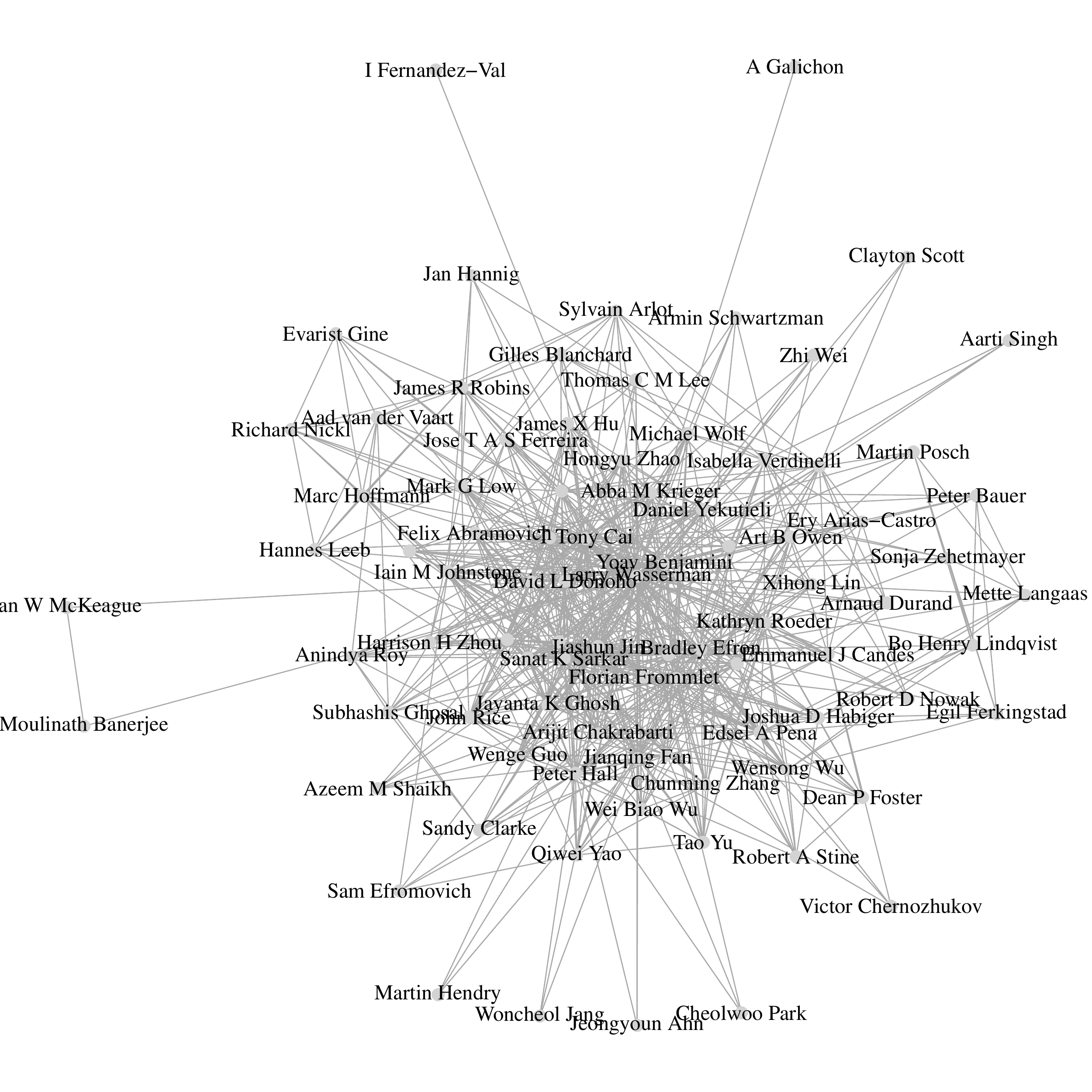}
\end{tabular}
\end{center}
\caption{Two networks from the statistics citation data. Top: induced
  network for the statistician Michael~I.~Jordan (Berkeley), with a
  small p-value of $9\times 10^{-5}$ and an EZ score of
  $3.91$. Clustering structure is evident in the subnetwork, with
  three groups that might be labeled ``nonparametric Bayes,''
  ``statistical learning theory'' and ``North Carolina statistics.''
  Bottom: network for Christopher Genovese (Carnegie Mellon), with a
  p-value of $2\times 10^{-12}$. The EZ score for this network 
  has a negative value of $-7.03$, indicating a model with $a < b$,
  and a smaller number of triangles compared with a configuration model.  The
  nodes on the periphery are highly connected to the inner cluster,
  but not highly connected among themselves.  (Some of the node labels
  have been removed for better readability.)}
\label{fig:citation-graphs}
\end{figure}

For these data, 387 authors have induced graphs of size 15 or larger.
A histogram of the p-values for these graphs is shown in the right
plot of Figure~\ref{fig:pvalues}. This shows that the p-values
are much closer to uniform, indicating a general lack of community structure,
compared with the Facebook networks. Two sample graphs that have small
p-values are shown in Figure~\ref{fig:citation-graphs}.
The top graph shows the induced graph for Berkeley statistician
Michael~I.~Jordan (Berkeley). This graph has 
a p-value of $9\times 10^{-5}$ and an EZ score of
$3.91$. Three communities are apparent in the subnetwork; 
knowledge of the authors in these groups leads one
to interpret them as ``nonparametric Bayes,''
``statistical learning theory'' and ``North Carolina statistics.''
A similar discovery is also made by \cite{jin2017estimating} with a different analysis under a mixed-membership model.
The bottom network is interesting because it has 
a large negative EZ score of $-7.03$. This is the network associated with
Christopher Genovese, from Carnegie Mellon University. The dense
core of the network includes researchers from CMU, and
other statisticians who work in similar areas. The 
nodes on the periphery are highly connected to the inner
cluster, but very weakly connected among themselves.
Thus, researchers in this outer group tend not to cite each other, but
cite (or are cited by) researchers in the inner group.

\subsection{Correlations among stocks on the S\&P 500}

%

In this third illustration, we treat correlations among the returns of
stocks on the S\&P 500.  The data are derived from prices posted on
the Yahoo Finance site, finance.yahoo.com.  The daily closing prices
were obtained for 452 stocks that consistently were in the S\&P 500
index between January 1, 2003 through January 1, 2011.  We restrict to
the subset of the data between January 1, 2003 to January 1, 2008,
before the onset of the 2008--2009 financial crisis. After this event,
stocks tended to become much more tightly correlated as investors
became more cautious.  We consider the variables $X_{t,j} =
\log(S_{t,j} /S_{t−1,j})$ where $S_{t,j}$ denotes the closing price of
stock $j$ on day $t$. The 452 stocks are categorized into 10 Global
Industry Classification Standard (GICS) sectors, including Consumer
Discretionary (70 stocks), Consumer Staples (35 stocks), Energy (37
stocks), Financials (74 stocks), Health Care (46 stocks), Industrials
(59 stocks), Information Technology (64 stocks), Materials (29
stocks), Telecommunications Services (6 stocks), and Utilities (32
stocks).

Stocks within a sector are generally strongly correlated--stocks
within an industry tend to move together. For a given company,
we form a subnetwork that is analogous to the Facebook subnetworks
considered in Section~\ref{sec:friends} (see also Section~\ref{sec:nbhds}).
Specifically, we create an edge $A_{ij}=1$ between companies $i$ 
and $j$ if their returns are strongly correlated, and
each is strongly correlated with the given company. We
take ``strongly correlated'' to mean
a Spearman rank correlation in the 95th percentile,  
which is a correlation above 0.45.

Example networks are shown in Figure~\ref{fig:sp500graphs}; three have
apparent community structure, one shows no apparent community
structure. For example, the upper right figure shows the correlation
graph for Consol Energy, which is based outside of Pittsburgh, PA, and
has interests in coal and natural gas production; its customers
include electric utilities and steel mills.\footnote{Wikipedia,
  https://en.wikipedia.org/wiki/Consol\_Energy} Two communities are
seen in the graph, with companies from the Energy sector (including
oil companies and other energy companies) and the Materials sector
(including the US Steel Corporation and Alcoa Inc.).

\begin{figure}[!ht]
\begin{center}
\begin{tabular}{cc}
{\footnotesize SCANA Corp, $p=1.7\times 10^{-7}$} &
{\footnotesize CONSOL Energy, $p=0.0043$} \\
\includegraphics[width=.50\textwidth]{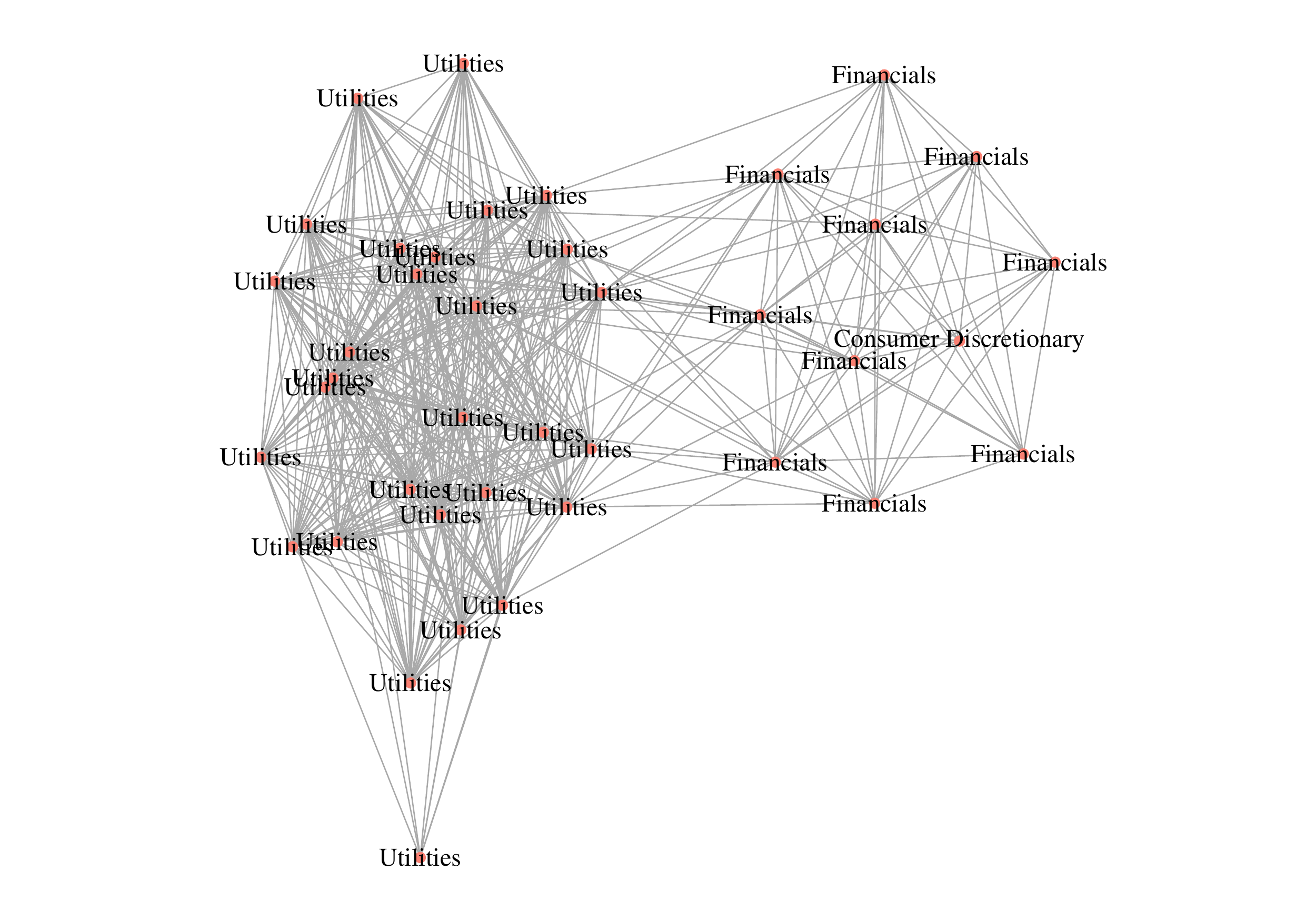}&
\includegraphics[width=.50\textwidth]{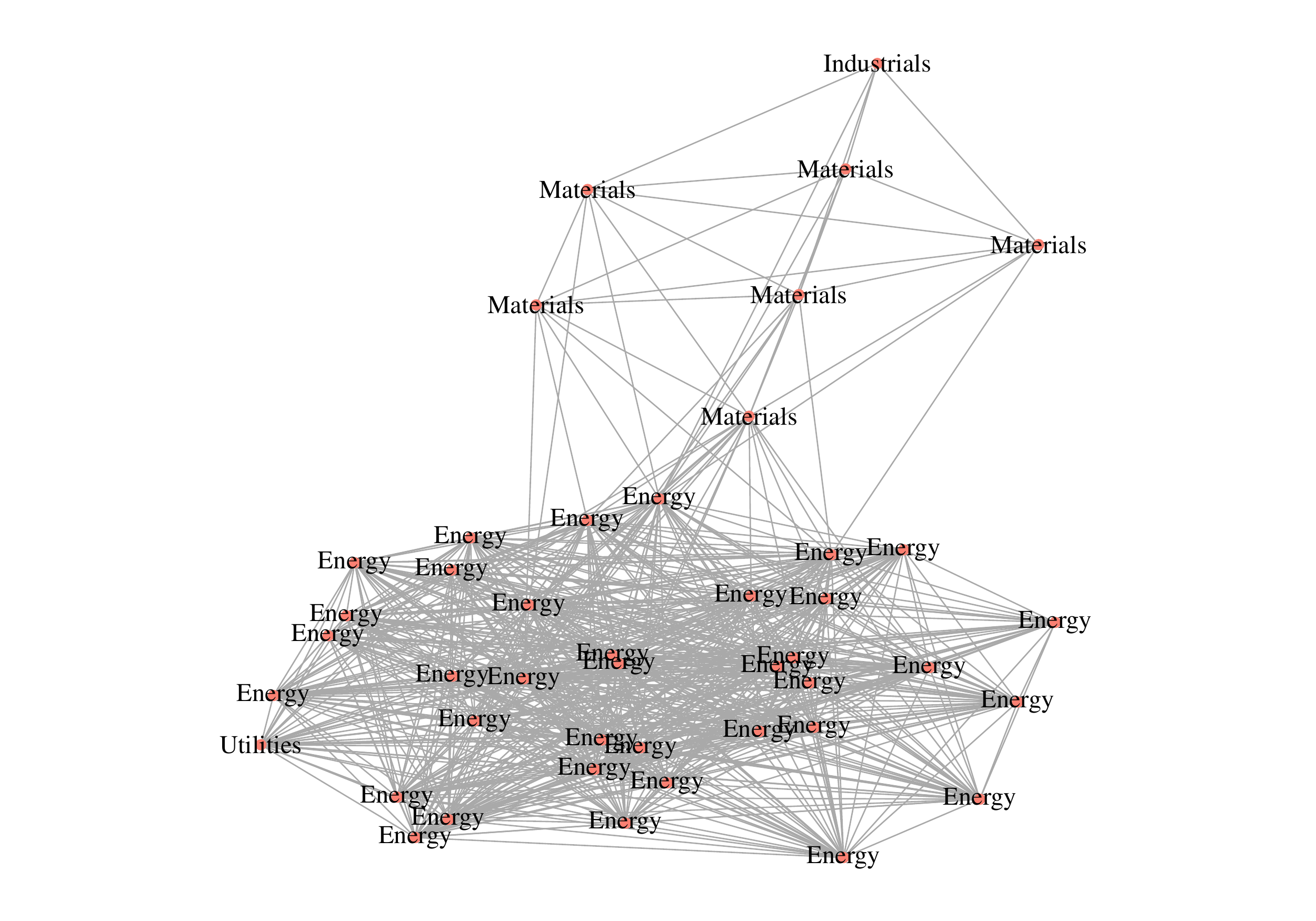}\\[20pt]
{\footnotesize Agilent Technologies, $p=0.0095$} &
{\footnotesize Family Dollar Stores, $p=0.97$}\\
\includegraphics[width=.50\textwidth]{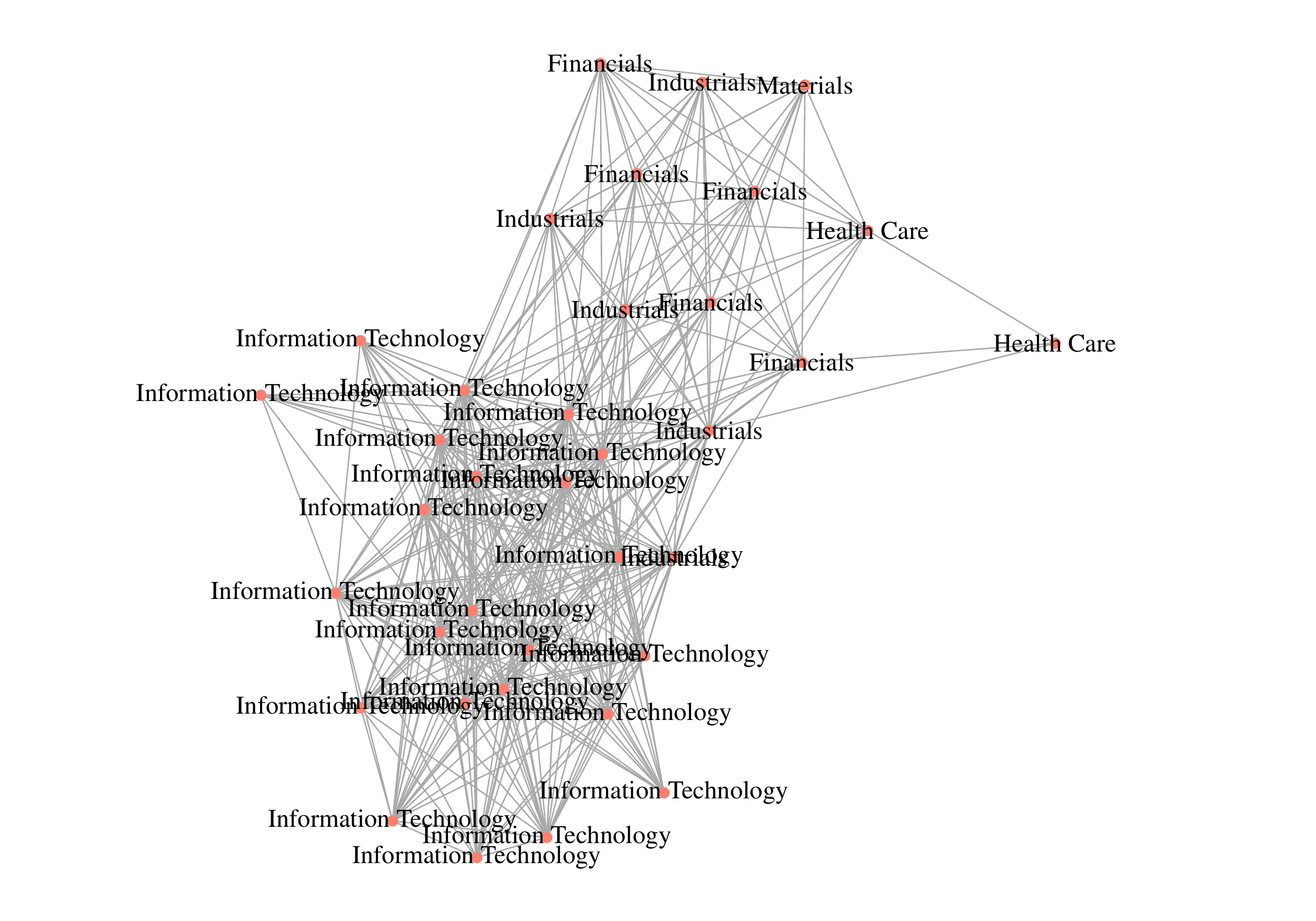} &
\includegraphics[width=.50\textwidth]{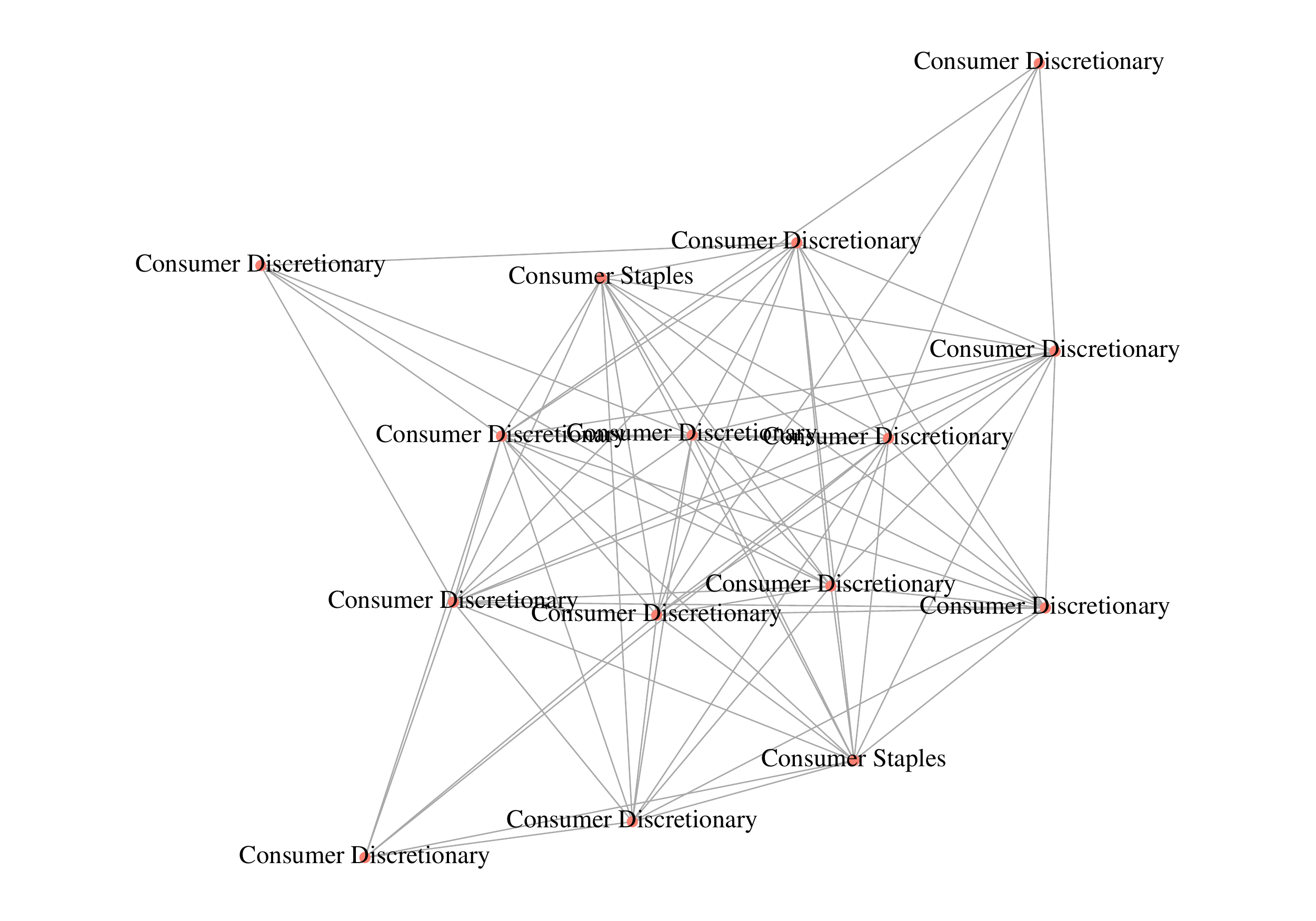} 
\end{tabular}
\caption{Example correlation graphs for companies on the S\&P 500.
For a given company we form a subnetwork among all companies having
high correlation (larger than $0.45$) with that company. We then 
compute the EZ score and p-value for this correlation graph. This figure shows four representative
subnetworks, three with relatively small p-values and cluster
structure, and one with a large p-value and no apparent cluster
structure. The node labels are the GICS industry of the company
corresponding to that node.
}
\end{center}
\label{fig:sp500graphs}
\end{figure}

\section{Summary}

Our results show how global structural characteristics of networks can
be inferred from local subgraph frequencies, without requiring the
global community structure to be explicitly estimated.  We develop
a testing framework based on the simple invariant $\chi_{ez} = T - (V/E)^3 = 0$
satisfied by all configuration models that lack community
structure. Our theory indicates that the signal-to-noise ratio
required for the hypothesis test to find community structure, when it
is present, is weaker than what is required by existing procedures
that rely on explicitly estimating the communities. Experiments with
social network data, scientific citations and equity returns show
that the test can be very effective. Our findings shed light on the
question of how global graph properties are reflected in local
subgraph statistics.  Lower bounds for detecting community structure,
as well as the development of more powerful tests for particular
settings such as Gaussian data and time series, are promising
directions for further study.

\section*{Acknowledgements}

The authors thank Rina Barber and Tracy Ke for many helpful comments
on this work, including the suggestion of the configuration null
model. We also thank Fengnan Gao for suggesting the martingale central
limit theorem in \cite{hall2014martingale}, and Scarlett Li for help
with the simulations. The research of JL is supported in part by NSF grant
DMS-1513594 and ONR grant N00014-12-1-0762. The research of CG is supported in part by NSF grant DMS-1712957.

\setlength{\bibsep}{3pt}
\bibliographystyle{apalike}
\def\refname{{\normalsize References}}
\begin{small}
\bibliography{ref}
\end{small}

\appendix
\section{An Application to Neighborhood Graphs}
\label{sec:nbhds}

The Facebook friend networks we study in the paper are all neighborhood graphs. In this section, we 
introduce a rigorous probabilistic setting for studying neighborhood graphs and derive an analogous EZ characterization.

Consider an adjacency matrix $\{A_{ij}\}_{0\leq i<j\leq n}$ of $n+1$ nodes. For the $0$th node, its neighborhood nodes are $\mathcal{M}=\left\{i\in[n]:A_{0i}=1\right\}$. Then, the neighborhood graph is the induced subgraph of $\mathcal{M}$.

Suppose there is a clustering structure in the network. We are interested in testing whether the $0$th node belongs a single cluster or multiple clusters. We propose the following natural setting for this problem. First, assume $\{A_{ij}\}_{1\leq i<j\leq n}$ are sampled from a DCBM with parameters $a,b,k$ and distribution $\mathcal{W}$ for the degree latent variable with $\mathbb{E}W^2=1$. Recall that $Z_i$ is a uniform random variable that takes value in $[k]$, and it stands for the label of the cluster that the $i$th node belongs to. Then, there is a subset $\mathcal{R}\subset[k]$ with cardinality $|\mathcal{R}|=r$, such that $A_{0i}\sim\text{Bernoulli}(p)$ if $Z_i\in\mathcal{R}$ and $A_{0i}=0$ otherwise independently for each $i\in[n]$. Because of the symmetry of the model, we can assume $\mathcal{R}=[r]$ without loss of generality. The model setting implicitly requires $r\leq k$.

The subgraph frequencies of edge, V-shape and triangle in the neighborhood graph are
\begin{eqnarray}
\label{eq:E|} E &=& \mathbb{P}(A_{12}=1|1,2\in\mathcal{M}),\\
\label{eq:V|} V &=& \mathbb{P}(A_{12}A_{13}=1|1,2,3\in\mathcal{M}),\\
\label{eq:T|} T &=& \mathbb{P}(A_{12}A_{13}A_{23}=1|1,2,3\in\mathcal{M}).
\end{eqnarray}
With direct calculation, the Erd\H{o}s-Zuckerberg characterization also holds for the above definitions.
\begin{proposition}
Under the setting described above,
\begin{eqnarray}
\label{eq:E+|} E &=&  (\mathbb{E}W)^2\left(\frac{1}{r}a+\frac{r-1}{r}b\right),\\
\label{eq:V+|} V &=& (\mathbb{E}W)^2\left(\frac{1}{r}a+\frac{r-1}{r}b\right)^2, \\
\label{eq:T+|} T &=& \frac{1}{r^2}a^3+\frac{3(r-1)}{r^2}ab^2+\frac{(r-1)(r-2)}{r^2}b^3.
\end{eqnarray}
As a consequence,
$$T-\left(\frac{V}{E}\right)^3=\frac{(r-1)(a-b)^3}{r^3}.$$
\end{proposition}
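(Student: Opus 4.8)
The plan is to reduce everything to the computation already carried out in Proposition~\ref{prop:EZ} by first identifying the conditional law of the latent variables $(Z_i, W_i)$ given the membership events $\{i \in \mathcal{M}\}$. The crucial observation is that $A_{0i}$ is, by construction, a deterministic function of $Z_i$ together with an independent $\text{Bernoulli}(p)$ coin, and in particular does not involve $W_i$ (the edge to node $0$ is formed with a fixed probability $p$, not $W_0 W_i p$). Hence $\mathbb{P}(i \in \mathcal{M}) = \mathbb{P}(A_{0i} = 1) = p\,\mathbb{P}(Z_i \in [r]) = pr/k$, and Bayes' rule gives
$$\mathbb{P}(Z_i = z \mid i \in \mathcal{M}) = \frac{1}{r}\,\indicator(z \in [r]),$$
so that conditionally on $i \in \mathcal{M}$ the label $Z_i$ is uniform on $[r]$, while $W_i$ retains its law $\mathcal{W}$ and stays independent of $Z_i$. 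Moreover, since the coins producing $A_{01}, A_{02}, A_{03}$ and the within-triple adjacencies $A_{12}, A_{13}, A_{23}$ are all conditionally independent given $(Z, W)$, conditioning on the joint event $\{1, 2, 3 \in \mathcal{M}\}$ leaves $(Z_1, W_1), (Z_2, W_2), (Z_3, W_3)$ i.i.d.\ with $Z_j \sim \text{Uniform}([r])$ and $W_j \sim \mathcal{W}$, and preserves the conditional law \eqref{eq:def-dcbm} of the adjacencies among $\{1,2,3\}$ given these latents.

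With this description in hand, I would compute $E$, $V$, $T$ exactly as in the proof of Proposition~\ref{prop:EZ}: condition on the latents, replace each $A_{ij}$ by its conditional mean $W_i W_j\bigl(a\,\indicator(Z_i = Z_j) + b\,\indicator(Z_i \neq Z_j)\bigr)$, pull the $W$'s out using independence and $\mathbb{E}(W^2) = 1$, and average over the labels, now uniform on $[r]$. The edge case gives $E = (\mathbb{E}W)^2\bigl(\tfrac{1}{r}a + \tfrac{r-1}{r}b\bigr)$; the vee case gives its square since the conditional mean of $a\,\indicator(Z_1 = Z_2) + b\,\indicator(Z_1 \neq Z_2)$ given $Z_1$ is constant in $Z_1$; and the triangle case gives \eqref{eq:T+|} by splitting the three i.i.d.\ labels into the events ``all equal'', ``exactly one pair equal'', and ``all distinct'', which carry probabilities $\tfrac{1}{r^2}$, $\tfrac{3(r-1)}{r^2}$, $\tfrac{(r-1)(r-2)}{r^2}$ and contribute $a^3$, $ab^2$, $b^3$ respectively. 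This establishes \eqref{eq:E+|}--\eqref{eq:T+|}.

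The final identity then follows from \eqref{eq:E+|}--\eqref{eq:T+|} by precisely the algebra used for \eqref{eq:EZ}, with $k$ replaced by $r$: since $V/E = \tfrac{1}{r}\bigl(a + (r-1)b\bigr)$, we have $r^3\bigl(T - (V/E)^3\bigr) = r\bigl(a^3 + 3(r-1)ab^2 + (r-1)(r-2)b^3\bigr) - \bigl(a + (r-1)b\bigr)^3$, and expanding the cube and collecting powers of $a$ and $b$ reduces this to $(r-1)\bigl(a^3 - 3a^2 b + 3ab^2 - b^3\bigr) = (r-1)(a-b)^3$.

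The one step I would treat with care is the first: checking that conditioning on the membership events (i) makes $Z_i$ uniform on $[r]$, (ii) leaves $W_i$ and its independence from $Z_i$ intact, and (iii) does not disturb the conditional independence of the within-triple edges given $(Z, W)$. All three hold for the structural reason noted above --- $A_{0i}$ depends on the latents only through $Z_i$, with a fixed probability, and all edge variables are conditionally independent given $(Z, W)$ --- after which the remaining computations are identical to those in Proposition~\ref{prop:EZ}.
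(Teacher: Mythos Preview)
Your proposal is correct and matches the paper's approach: the paper offers no detailed proof, simply stating that ``with direct calculation'' the characterization carries over, and your argument is precisely that calculation --- you observe that conditioning on $\{i\in\mathcal{M}\}$ makes $Z_i$ uniform on $[r]$ while leaving $W_i$ and the conditional edge law untouched, so the computation of Proposition~\ref{prop:EZ} goes through verbatim with $k$ replaced by $r$. The careful verification you flag (that $A_{0i}$ depends on the latents only through $Z_i$ and that the within-triple edges remain conditionally independent of the membership events given $(Z,W)$) is exactly the content behind the paper's ``direct calculation''.
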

We can use this result to test whether the $0$th node belongs a single cluster or not, which is equivalent to testing whether $(r-1)(a-b)=0$ or not.
The forms given by (\ref{eq:E|})-(\ref{eq:T|}) suggests empirical subgraph frequencies using only the neighborhood graph. For example, since
$$E=\frac{\mathbb{P}(A_{12}=1, A_{01}=1, A_{02}=1)}{\mathbb{P}(A_{01}=1, A_{02}=1)}=\frac{\mathbb{E}A_{12}A_{01}A_{02}}{\mathbb{E}A_{01}A_{02}},$$
a natural estimator for $E$ is
\begin{equation}
\hat{E}=\frac{\frac{1}{{n\choose 2}}\sum_{1\leq i<j\leq n}A_{ij}A_{0i}A_{0j}}{\frac{1}{{n\choose 2}}\sum_{1\leq i<j\leq n}A_{0i}A_{0j}}=\frac{1}{{m\choose 2}}\sum_{1\leq i<j\leq n}A_{ij}A_{0i}A_{0j},\label{eq:def-E-NG}
\end{equation}
where $m=|\mathcal{M}|$ so that $\sum_{1\leq i<j\leq n}A_{0i}A_{0j}={m\choose 2}$. Similarly, estimators for $V$ and $T$ are
\begin{eqnarray}
\label{eq:def-V-NG}\hat{V} &=& \frac{1}{3{m\choose 3}}\sum_{1\leq i<j<k\leq n}(A_{ij}A_{ik}+A_{ij}A_{jk}+A_{ik}A_{jk})A_{0i}A_{0j}A_{0k}, \\
\label{eq:def-T-NG}\hat{T} &=& \frac{1}{{m\choose 3}}\sum_{1\leq i<j<k\leq n}A_{ij}A_{jk}A_{ik}A_{0i}A_{0j}A_{0k}.
\end{eqnarray}
The $\hat{E},\hat{V},\hat{T}$ are subgraph frequencies for the neighborhood graph.

\begin{theorem}\label{thm:NG}
Assume $\mathbb{E}W^4=O(1)$, $\frac{nrpa}{k}\rightarrow\infty$, $p=o(1)$ and $a^6\asymp b^6=o\left(\left(\frac{k}{nr}\right)^3\wedge p^2\left(\frac{k}{nr}\right)^4\right)$. Suppose
$$\delta=\lim_n\frac{(r-1)(a-b)^3}{\sqrt{6}}\left(\frac{\mathbb{E}m}{r(a+(r-1)b)}\right)^{3/2}\in[0,\infty).$$
Then,
$$2\sqrt{{m\choose 3}}\left(\sqrt{\hat{T}}-\left(\frac{\hat{V}}{\hat{E}}\right)^{3/2}\right)\leadsto N(\delta,1).$$
\end{theorem}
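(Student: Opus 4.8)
The plan is to \emph{condition on the connections of the $0$th node} and thereby reduce Theorem~\ref{thm:NG} to the already-established Theorem~\ref{thm:main}, specifically its variance-stabilized form \eqref{eq:vst}, applied to a degree-corrected stochastic block model living on the random vertex set $\mathcal M$. The key observation is that, from \eqref{eq:def-E-NG}--\eqref{eq:def-T-NG}, the statistics $\hat E$, $\hat V$, $\hat T$ depend on the data only through $m=|\mathcal M|$ and the induced subgraph $(A_{ij})_{i<j\in\mathcal M}$; in fact each of $\hat E,\hat V,\hat T$ is exactly the edge, vee, and triangle density of that induced subgraph, so conditionally on $\mathcal M$ the statistic $2\sqrt{\binom m3}\bigl(\sqrt{\hat T}-(\hat V/\hat E)^{3/2}\bigr)$ is precisely the EZ statistic of \eqref{eq:vst} for the induced graph.

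First I would show that, conditionally on the set $\mathcal M$ (equivalently, on $(A_{0i})_{i\in[n]}$), the induced subgraph on $\mathcal M$ has the law of a DCBM on $m$ vertices with within/between probabilities $a,b$, number of communities $r$, and weight distribution $\mathcal W$. Three points make this work: (i) $A_{0i}$ is independent of $W_i$ and depends on $Z_i$ only through $\indicator(Z_i\le r)$, so the conditional law of $W_i$ given $i\in\mathcal M$ is still $\mathcal W$; (ii) the conditional law of $Z_i$ given $i\in\mathcal M$ is $\uniform([r])$; (iii) the product structure over $i$ is preserved, so these hold jointly and independently across $i\in\mathcal M$, and the within-$\mathcal M$ edges are generated from $W,Z$ exactly as in \eqref{eq:def-dcbm}. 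Thus the conditional model is a bona fide DCBM with $r$ communities, and Proposition~\ref{prop:EZ} with $k$ replaced by $r$ recovers the characterization $T-(V/E)^3=(r-1)(a-b)^3/r^3$ quoted in the statement.

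Next I would verify the hypotheses of Theorem~\ref{thm:main} for this conditional model, whose ``sample size'' is the random $m\sim\mathrm{Binomial}(n,rp/k)$, so $\E m=nrp/k$. The assumptions $nrpa/k\to\infty$, $p=o(1)$, and $a^6\asymp b^6=o\bigl((\tfrac{k}{nr})^3\wedge p^2(\tfrac{k}{nr})^4\bigr)$ translate, after substituting $\E m$, into $\E m\to\infty$, $a\gg(\E m)^{-1}$, and $a\ll(\E m)^{-2/3}$ with room to spare (the extra $p$-dependence is what guarantees this slack holds \emph{uniformly} over the concentration window of $m$). A Chernoff bound gives $m/\E m\to 1$ in probability, so with probability tending to one $m$ lies in a fixed constant-factor window of $\E m$ on which $m^{-1}\ll a\asymp b\ll m^{-2/3}$ holds and the conditional shift $\delta_m=\tfrac{(r-1)(a-b)^3}{\sqrt 6}\bigl(\tfrac{m}{r(a+(r-1)b)}\bigr)^{3/2}=(m/\E m)^{3/2}\,\delta_{\E m}$, where $\delta_{\E m}$ is the same expression with $m$ replaced by $\E m$, satisfies $\delta_m\to\delta$ in probability since $\delta_{\E m}\to\delta$. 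Applying Theorem~\ref{thm:main} conditionally gives $2\sqrt{\binom m3}\bigl(\sqrt{\hat T}-(\hat V/\hat E)^{3/2}\bigr)\mid\mathcal M\leadsto N(\delta_m,1)$, and de-conditioning by dominated convergence applied to the conditional characteristic function---using $\delta_m\to\delta$ and that the complement of the good event has vanishing probability---yields the unconditional limit $N(\delta,1)$.

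The main obstacle is precisely this de-conditioning step: Theorem~\ref{thm:main} is stated asymptotically for a fixed sequence of DCBMs, whereas here the conditional model carries a \emph{random} vertex count $m$ and a \emph{random} shift $\delta_m$, so what is really needed is a uniform (Berry--Esseen-type) version---the conditional law of the statistic is within $\epsilon_n\to 0$, in Kolmogorov distance, of $N(\delta_m,1)$ uniformly over all $m$ in the good window. I expect this to be already available from the proof of Theorem~\ref{thm:main}, which proceeds through a martingale central limit theorem with a quantitative rate; failing that, the cleanest fully rigorous route is to rerun that martingale argument directly for the neighborhood-graph statistic, using the filtration that first reveals $(A_{0i})_{i\in[n]}$ and then the within-$\mathcal M$ edges one vertex at a time, and checking the Lindeberg and asymptotic-variance conditions with the aid of the concentration of $m$.
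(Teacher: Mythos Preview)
Your conditioning approach is genuinely different from the paper's. The paper does \emph{not} reduce to Theorem~\ref{thm:main} by conditioning on $\mathcal M$; instead it reruns the whole analysis unconditionally via two new lemmas (Lemmas~\ref{lem:EV-NG} and~\ref{lem:T-NG}). Concretely, it introduces auxiliary statistics $\tilde E,\tilde V,\tilde T$ with the \emph{deterministic} normalizers $\binom n2(rp/k)^2$, $\binom n3(rp/k)^3$ in place of the random $\binom m2,\binom m3$, bounds $\hat T-\tilde T$ etc.\ using concentration of $m$, and then decomposes each $\tilde{\,\cdot\,}-$population quantity through a nested chain of conditional expectations $\mathbb E(\cdot\mid A,W,Z)$, $\mathbb E(\cdot\mid W,Z)$, $\mathbb E(\cdot\mid Z)$, with each layer controlled by a separate technical proposition. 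The CLT is proved for the dominating centred-triangle term $\sum_{i<j<l}(\tilde A_{ij}-\tilde\theta_{ij})(\tilde A_{jl}-\tilde\theta_{jl})(\tilde A_{il}-\tilde\theta_{il})$ by the same martingale argument as in Lemma~\ref{lem:T-order}, conditionally on $(W,Z)$. The final replacement of $\binom n3(r/k)^3$ by $\binom m3$ is a one-line Slutsky step.

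Your route is conceptually cleaner: the observation that the induced subgraph on $\mathcal M$ is, conditionally on $(A_{0i})_i$, exactly an $r$-community DCBM on $m$ vertices with the same $a,b,\mathcal W$ is correct and elegant, and it makes the appearance of $r$ in place of $k$ and of $\mathbb Em$ in place of $n$ transparent. You also correctly translate the sparsity hypotheses into $m^{-1}\ll a\ll m^{-2/3}$ on the high-probability window for $m$. The cost is exactly the obstacle you flag: Theorem~\ref{thm:main} is stated for a deterministic sequence, so to de-condition you need either a quantitative (uniform-in-$m$) version of that CLT or to redo the martingale argument in the enlarged filtration. The paper's approach sidesteps this entirely by never conditioning on $\mathcal M$, at the price of a longer list of variance bounds. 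If you can extract a Berry--Esseen rate from the martingale CLT used in Lemma~\ref{lem:T-order}, your proof would be shorter and more illuminating; otherwise, your fallback of rerunning the martingale directly is essentially what the paper does in Lemma~\ref{lem:T-NG}.
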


Theorem \ref{thm:NG} can be viewed as an extension of Theorem \ref{thm:main}. The results of Theorem \ref{thm:main} are recovered when $r=k$ and $p=1$. In the setting of neighborhood graphs, the roles of $k$ and $n$ in Theorem \ref{thm:main} are replaced by $r$ and $m$ (or $\mathbb{E}m$) in Theorem \ref{thm:NG}. Similar to Theorem \ref{thm:power}, we also present a result for the asymptotic power of the test.
\begin{theorem}\label{thm:NG-power}
Assume $\mathbb{E}W^4=O(1)$, $\frac{nrpa}{k}\rightarrow\infty$, $p=o(1)$ and $a^6\asymp b^6=o\left(\left(\frac{k}{nr}\right)^3\wedge p^2\left(\frac{k}{nr}\right)^4\right)$. Suppose
\begin{equation}
\frac{(\mathbb{E}m)(a-b)^2}{r^{4/3}(a+b)}\rightarrow\infty.\label{eq:condition-m}
\end{equation}
Then, for any constant $t\asymp 1$, we have
$$\mathbb{P}\left(\left|2{\sqrt{{m\choose 3}}\left(\sqrt{\hat{T}}-\left(\frac{\hat{V}}{\hat{E}}\right)^{3/2}\right)}\right|>t\right)\rightarrow 1.$$
\end{theorem}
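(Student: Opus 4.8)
The plan is to reduce the power statement to the central limit theorem already established in Theorem~\ref{thm:NG}, exactly as Theorem~\ref{thm:power} is reduced to Theorem~\ref{thm:main}. The key observation is that Theorem~\ref{thm:NG} gives $2\sqrt{\binom{m}{3}}\bigl(\sqrt{\hat T}-(\hat V/\hat E)^{3/2}\bigr)\leadsto N(\delta,1)$ \emph{provided} the limit $\delta$ in its hypothesis exists and is finite. When the signal-to-noise quantity $(\mathbb{E}m)(a-b)^2/(r^{4/3}(a+b))$ diverges, this $\delta$ is formally $+\infty$, so the first step is to handle this by a truncation/subsequence argument: along any subsequence one can rescale so that the noise-level normalization stays $\Theta(1)$ while the mean blows up, and then show the statistic, after the same scaling, is lower-bounded in probability by an arbitrarily large constant. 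Concretely, I would write the test statistic as (asymptotic mean) $+$ (a term that is $O_P(1)$ by the fluctuation analysis underlying Theorem~\ref{thm:NG}), and argue the mean term dominates.

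The second step is the bookkeeping that identifies the ``asymptotic mean'' of $2\sqrt{\binom{m}{3}}\bigl(\sqrt{\hat T}-(\hat V/\hat E)^{3/2}\bigr)$ with the quantity appearing in $\delta$. By Proposition~\ref{prop:EZ}'s neighborhood-graph analogue, the population version satisfies $\sqrt{T}-(V/E)^{3/2}=\sqrt{T}\bigl(1-\sqrt{1-\chi_{ez}/T}\bigr)\approx \tfrac{1}{2}\chi_{ez}/\sqrt{T}$ since $\chi_{ez}=(r-1)(a-b)^3/r^3$ is of smaller order than $T\asymp (a+(r-1)b)^3/r^2$ under the sparsity assumptions; multiplying by $2\sqrt{\binom{m}{3}}$ and using $\binom{m}{3}\asymp (\mathbb{E}m)^3$ recovers the order
$$
\Bigl(\frac{(\mathbb{E}m)(a-b)^2}{r^{4/3}(a+b)}\Bigr)^{3/2},
$$
matching the magnitude of $\delta$ when $r\to\infty$ (and a comparable expression for bounded $r$). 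Thus condition~\eqref{eq:condition-m} is exactly what forces this mean term to diverge. One then needs the concentration of $\hat m = m$ around $\mathbb{E}m$ (a binomial concentration, since $m=\sum_i A_{0i}$ with $\mathbb{E}m \asymp nr p \cdot \mathbb{E}W^2/k \to\infty$ by the assumption $nrpa/k\to\infty$, noting $a\asymp b$) and the concentration of $\hat T, \hat V, \hat E$ around their population values, both of which are consequences of the moment computations developed in the proof of Theorem~\ref{thm:NG}.

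The third step combines these: for any fixed $t\asymp 1$, split into the event where the centered fluctuation is $O_P(1)$—which by Slutsky and the CLT machinery of Theorem~\ref{thm:NG} has probability tending to $1$ on bounded sets—and on that event the statistic exceeds $t$ in absolute value because the mean term exceeds, say, $t+1$ eventually. Hence the probability tends to one. I expect the \textbf{main obstacle} to be the subsequence/truncation argument in the first step: Theorem~\ref{thm:NG} is stated only under a finite-$\delta$ regime, so some care is needed to extract from its proof the \emph{uniform} fluctuation bound—i.e.\ that $2\sqrt{\binom{m}{3}}\bigl(\sqrt{\hat T}-(\hat V/\hat E)^{3/2}\bigr)$ minus its (diverging) deterministic mean remains tight—rather than invoking the stated convergence as a black box. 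Once that tightness is in hand, the rest is the routine algebra of expanding $\sqrt{T}-(V/E)^{3/2}$ and tracking orders under $n^{-1}\ll a\asymp b\ll n^{-2/3}$ together with $p=o(1)$ and the variance condition $a^6\asymp b^6 = o\bigl((k/nr)^3\wedge p^2(k/nr)^4\bigr)$.
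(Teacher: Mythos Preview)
Your proposal is correct and follows essentially the same approach as the paper: decompose the test statistic into a diverging mean term plus an $O_P(1)$ fluctuation term, exactly mirroring how Theorem~\ref{thm:power} is derived from the ingredients of Theorem~\ref{thm:main}. The ``main obstacle'' you flag is not actually an obstacle---the paper's Lemmas~\ref{lem:EV-NG} and~\ref{lem:T-NG} provide the variance bounds on $\hat E-E$, $\hat V-V$, $\hat T-T$ directly, with no finite-$\delta$ hypothesis, so the tightness of the fluctuation is immediate from those lemmas and no subsequence or truncation argument is needed.
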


\section{Testing for Structure in Gaussian Correlations}
\label{sec:gauss}

In multivariate analysis, it is important to identify community structures in variables.
In this section, we show how an analogous EZ test can be derived to find community structure 
in multivariate data $X=(X_1,...,X_p)^T\in\R^p$. 
Specifically, we consider the multivariate Gaussian model $X \given \theta\sim
N(0,\Sigma)$, where $\Sigma$ is a $p\times p$ covariance matrix with
diagonal entries $1$, and off-diagonal entries 
$\Sigma_{jl}=\theta_{jl}$ where $\theta$ follows the DCBM. Thus, we model
the correlation matrix as
$\text{Corr}(X_j,X_l \given \theta_{jl})=\theta_{jl}$ for $j\neq l$, 
where 
\begin{equation}
\theta_{jl} \given W, Z = W_j W_l (a \indicator(Z_j=Z_l) + b \indicator(Z_j
\neq Z_l),
\end{equation}
as in \eqref{eq:def-dcbm}.
A similar model for variable clustering was considered in \cite{bunea2015minimax} without the
latent variables $\{W_j\}$.
In this setting, under the null hypothesis of no community structure
($a=b$ or $k=1$), the covariance $\Sigma$ can be decomposed into
the sum of a diagonal matrix and a rank-one matrix. This is the
spiked covariance model commonly adopted in the PCA literature
\citep{tipping1999probabilistic,johnstone2009consistency}.

Defining
\begin{align}
E &= \E\bigl(\E(X_1 X_2 \given \theta)\bigr), \\[5pt]
V &= \E\bigl(\E(X_1 X_2 \given \theta)\, \E(X_2 X_3 \given \theta)\bigr), \\[5pt]
T &= \E\bigl(\E(X_1 X_2 \given \theta)\, \E(X_2 X_3 \given \theta)\, \E(X_1 X_3 \given \theta)\bigr),
\end{align}
we see that the same relations \eqref{eq:E+}--\eqref{eq:EZ} 
stated in Proposition~\ref{prop:EZ} hold in this Gaussian setting.
To estimate $E,V$ and $T$ from data,
we exploit Wick's formula \citep{isserlis:18,wick:50} in the form
$\E\left(\prod_{j=1}^{2h-1} X_j\right)  = 0$ and 
\begin{align}
\E\left(\prod_{j=1}^{2h} X_j\right) & = \sum \E \left(\prod \E(X_j X_l\given \theta)\right),
\end{align}
where the sum is over all ways of partitioning the components $X_1,\ldots, X_{2h}$ into disjoint pairs,
and the product is over those $h$ pairs. In particular, we have
\begin{align}
\nonumber
\E\left( X_1 X_2^2 X_3\right) &= 2\E\bigl(\E(X_1 X_2 \given \theta)\, \E(X_2 X_3 \given \theta)\bigr) 
 + \E\left( \E(X_1 X_3\given \theta)\, \E(X_2^2 \given \theta)\right) \\
&= 2V + E,
\end{align}
since we assume that $\E(X_j^2\given \theta)=1$ and $\E(W^2)=1$. Similarly, we have that
\begin{align}
\E\left(X_1^2 X_2^2 X_3^2\right) &= \E\left( \E(X_1^2\given \theta)^3\right) + 
6\E\left( \E(X_1 X_2 \given \theta)^2\E(X_3^2\given \theta)\right) + 8T\\
&= 1 + 6 \E\left( \E(X_1 X_2 \given \theta)^2 \right) + 8T\\
&= 1 + 3\left(\E(X_1^2 X_2^2) - 1\right) + 8T\\
&= 3\E\left(X_1^2 X_2^2\right) + 8T - 2.
\end{align}
Therefore, given an i.i.d. sample $\{X_i\}_{i=1}^n$ of size $n$, unbiased estimates of $E$, $V$, and $T$ are given by
$\hat E = \frac{1}{n}\sum_{i=1}^n \hat E_i$, 
$\hat V = \frac{1}{n}\sum_{i=1}^n \hat V_i$,
and $\hat T = \frac{1}{n}\sum_{i=1}^n \hat T_i$,
where 
\begin{align}
\hat E_i &= \frac{1}{\binom{p}{2}} \sum_{j<l} X_{ij} X_{il}, \\[5pt]
\hat V_i &= \frac{1}{6 \binom{p}{3}} \sum_{j<l<m} 
   \left(X_{ij}^2 X_{il} X_{im} + X_{ij} X_{il}^2 X_{im} + X_{ij} X_{il} X_{im}^2\right)
- \frac{1}{2} \hat E_i,\\
\hat T_i &= \frac{1}{8\binom{p}{3}} \sum_{j<l<m}  X_{ij}^2 X_{il}^2 X_{im}^2 
- \frac{3}{8 \binom{p}{2}} \sum_{j<l} X_{ij}^2 X_{il}^2 + \frac{1}{4}.
\end{align}
Let $A$ be a block-diagonal matrix with the $i$th block 
\begin{equation*}
A^{(i)}_{jl} = 
\begin{cases}
X_{ij} X_{il} & \text{if $j \neq l$}, \\
0 & \text{otherwise.}
\end{cases}
\end{equation*}
Then, as in \eqref{eq:comp1}--\eqref{eq:comp3}, the quantities $\hat E$, $\hat V$, and $\hat
T$ can be computed using matrix operations as 
\begin{align}
\hat E &= \frac{1}{2 n\binom{p}{2}} \mip{\ones}{A},\\
\hat V &= \frac{1}{12 n \binom{p}{3}} \left(\mip{\ones}{A^2} -
\tr(A^2)\right) - \frac{1}{2} \hat E, \\
\hat T &= \frac{1}{48 n\binom{p}{3}} \tr(A^3)
- \frac{3}{16 n\binom{p}{2}} \|A\|^2 + \frac{1}{4},
\end{align}
where $\|A\| = \sqrt{\mip{A}{A}}$ is the
Frobenius norm.

As before, we reject the null hypothesis once the magnitude of the testing
statistic $\hat\chi_{ez} = \hat{T}-\left(\sfrac{\hat{V}}{\hat{E}}\right)^3$ passes a threshold.
For the network models discussed in Section~\ref{sec:eztest}, the
square root transformation
automatically normalizes the testing statistic, as shown in Theorem
\ref{thm:main}. Here, we need a different normalization for the
Gaussian covariance model. Under some mild conditions, we have the
decomposition
$$\hat\chi_{ez} - \chi_{ez} = \frac{1}{n}\sum_{i=1}^n\Delta_i+r_n,$$
where $r_n$ is a negligible term, and 
$$\Delta_i=\hat T_i-\E (\hat T_i\given \theta)-3\frac{V^2}{E^3}(\hat V_i-\E
(\hat V_i\given \theta))+3\frac{V^3}{E^4}(\hat E_i-\E (\hat E_i \given \theta)).$$
This suggests a natural estimator of the variance given by
$$\hat{\sigma}^2=\frac{1}{n-1}\sum_{i=1}^n(Q_i-\bar{Q}_n)^2,$$
where 
$$Q_i=T_i-3\frac{\hat{V}^2}{\hat{E}^3}\hat V_i+3\frac{\hat{V}^3}{\hat{E}^4}\hat E_i,$$
and $\bar{Q}_n=\frac{1}{n}\sum_{i=1}^nQ_i$.

\begin{theorem}\label{thm:CLT-cov}
Assume $\E W^4=O(1)$, $p^{-1}\vee n^{-1/2}\ll |a|\asymp |b|$, and
$a^4=o(1\wedge (p/n))$, and suppose that
$$\delta=\lim_n\frac{\sqrt{n}(k-1)(a-b)^3}{k^3\sqrt{\frac{9}{32}\left(\frac{1}{k}a^2+\frac{k-1}{k}b^2\right)}}\in[0,\infty).$$
Then as $n\longrightarrow\infty$, we have
\begin{equation}
\frac{\sqrt{n}\left(\hat{T}-\left(\frac{\hat{V}}{\hat{E}}\right)^3\right)}{\sqrt{\hat{\sigma}^2}}\leadsto N(\delta,1).
\label{eq:gezscore}
\end{equation}
\end{theorem}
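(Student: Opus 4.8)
The plan is to establish \eqref{eq:gezscore} by a two-step strategy: first prove a central limit theorem for the numerator $\sqrt{n}(\hat\chi_{ez} - \chi_{ez})$ after recognizing it as an average of i.i.d.\ mean-zero summands (up to a negligible remainder), and then show that $\hat\sigma^2$ is a consistent estimator of the limiting variance. Because the sample $\{X_i\}_{i=1}^n$ is i.i.d., the per-observation statistics $\hat E_i, \hat V_i, \hat T_i$ are i.i.d., so $(\hat E, \hat V, \hat T)$ is a plain empirical mean; the main work is propagating this through the smooth but nonlinear map $(\hat E,\hat V,\hat T)\mapsto \hat T - (\hat V/\hat E)^3$. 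First I would verify the claimed linearization
$$\hat\chi_{ez} - \chi_{ez} = \frac{1}{n}\sum_{i=1}^n \Delta_i + r_n,$$
with $\Delta_i$ the first-order delta-method expansion around $(E,V,T)$, using the gradient of $f(e,v,t)=t-(v/e)^3$, namely $\partial_t f = 1$, $\partial_v f = -3v^2/e^3$, $\partial_e f = 3v^3/e^4$; evaluating at the population values $V=E^2$ (from \eqref{eq:V+}) simplifies these coefficients. Controlling $r_n$ requires showing $\hat E \to E$, $\hat V\to V$, $\hat T\to T$ in probability with rates fast enough that the second-order Taylor remainder is $o_P(n^{-1/2})$; this is where the conditions $p^{-1}\vee n^{-1/2}\ll |a|$ and $a^4 = o(1\wedge(p/n))$ enter—they guarantee the signal $E\asymp a$ is large relative to the estimation error of each of $\hat E,\hat V,\hat T$.

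The heart of the argument is computing $\mathrm{Var}(\Delta_i)$ and showing it matches $\hat\sigma^2$'s target. Here I would decompose $\Delta_i$ via the law of total variance: $\Delta_i = (\hat T_i - \E(\hat T_i\mid\theta)) - 3\tfrac{V^2}{E^3}(\hat V_i - \E(\hat V_i\mid\theta)) + 3\tfrac{V^3}{E^4}(\hat E_i - \E(\hat E_i\mid\theta))$ is the "within-$\theta$" fluctuation; the "between-$\theta$" part vanishes because under the relevant scaling the conditional means $\E(\hat T_i\mid\theta)$ etc.\ concentrate (and because $\chi_{ez}$ is deterministic given the model parameters). Conditional on $\theta$, each of $\hat E_i,\hat V_i,\hat T_i$ is a $U$-statistic-type average over pairs/triples of Gaussian coordinates, so the dominant contribution to the conditional variance comes from the Hájek projection—the leading term involving $\sum_j X_{ij}^2$ and cross terms. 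A careful bookkeeping of which monomials in the Wick expansion of $\hat T_i$, $\hat V_i$, $\hat E_i$ survive after subtracting conditional means, combined with $\E(X_{ij}^2\mid\theta)=1$ and $\E(W^2)=1$, should collapse the variance to $\tfrac{9}{32}(\tfrac1k a^2 + \tfrac{k-1}{k}b^2)$, matching the denominator in $\delta$. The factor $9/32$ presumably tracks back to the $3$'s in the gradient ($3^2=9$) and the $1/8$ and $1/2$ normalizations built into $\hat T_i$ and $\hat V_i$; the quadratic form $\tfrac1k a^2 + \tfrac{k-1}{k}b^2 = \E(\E(X_1X_2\mid\theta)^2)$ arises as the natural second moment of the conditional correlation.

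The main obstacle, I expect, is the variance computation: correctly identifying the asymptotically dominant terms in the conditional variances of $\hat T_i$ and $\hat V_i$ and their covariance with $\hat E_i$, since these are degree-$6$ and degree-$4$ Gaussian polynomials averaged over $\binom{p}{3}$ or $\binom{p}{2}$ tuples, and many Wick pairings contribute at lower order and must be shown negligible under $a^4 = o(p/n)$. Once the limiting variance $\sigma^2$ is pinned down, the CLT itself follows from the Lindeberg–Feller theorem applied to $n^{-1/2}\sum_i \Delta_i$ (finite fourth moments of $\Delta_i$ follow from $\E W^4 = O(1)$ and Gaussianity of $X$), the drift $\delta$ emerges from the deterministic term $\sqrt{n}\,\chi_{ez}/\sigma$ using \eqref{eq:EZ}, and consistency $\hat\sigma^2/\sigma^2 \to 1$ follows from a law of large numbers for $Q_i$ together with the already-established consistency of the plug-in estimates $\hat E,\hat V$ appearing in $Q_i$. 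Slutsky's theorem then assembles \eqref{eq:gezscore}.
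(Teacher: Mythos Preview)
Your overall architecture---delta-method linearization of $\hat\chi_{ez}$, CLT for the i.i.d.\ sum $n^{-1}\sum_i\Delta_i$, consistency of $\hat\sigma^2$, then Slutsky---matches the paper's. The substantive tactical difference is this: you plan to compute $\Var(\Delta_i)$ in full, tracking the variances and covariances of all three pieces $\hat T_i,\hat V_i,\hat E_i$ through the gradient, whereas the paper shows first (its Lemmas for the covariance model) that the $\hat V$ and $\hat E$ contributions in the expansion are of strictly smaller order than $\hat T-T$. Concretely, $(\hat V-V)^2=O_P(a^2/n+a^4/p)$ and $(\hat E-E)^2=O_P(a^2/n+a^2/p)$, and after multiplying by the gradient weights $(V/E)^2/E\asymp a$ and $(V/E)^2 V/E^2\asymp a^2$ these terms are $o_P(a/\sqrt{n})$ under the assumption $a^4=o(p/n)$; meanwhile $(\hat T-T)^2\asymp_P a^2/n$. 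So the dominating term of $\hat\chi_{ez}-\chi_{ez}$ is simply $\hat T-T$, and the limiting variance is $\E\Var(\hat T_i\mid\theta)$ alone.

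This means your guess about the origin of the constant $9/32$ is off: it does \emph{not} come from the squared gradient factor $3^2$, but purely from the Wick-formula computation of $\E T_i^2$. The paper expands $\E T_i^2$ term by term (using $\E X_1^2\cdots X_m^2 = 1 + \binom{m}{2}\cdot 2\E\theta_{12}^2 + o(a^2)$) and obtains $\E T_i^2 = (1+o(1))\tfrac{9}{32}\E\theta_{12}^2$, with $\E\theta_{12}^2 = \tfrac{1}{k}a^2+\tfrac{k-1}{k}b^2$. The same simplification streamlines the $\hat\sigma^2$ step: rather than a law of large numbers for the full $Q_i$, the paper observes $\hat\sigma^2 = (1+o_P(1))\tfrac{1}{n-1}\sum_i(T_i-\bar T)^2$ and then uses $\E T_i^4=O(a^2)$ to get consistency. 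Your route would still arrive at the right answer, since the extra cross-covariances you would compute are all lower order, but recognizing upfront that only $\hat T$ matters saves the ``careful bookkeeping'' you anticipate as the main obstacle.
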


Theorem \ref{thm:CLT-cov} characterizes the asymptotic behavior of the
testing statistic. Note that for a growing number of communities $k$,
the magnitude of the mean $|\delta|$ is of order
$\sfrac{\sqrt{n}|a-b|^3}{k^2a}$, which is the natural signal-to-noise
ratio of the problem. When this quantity tends to infinity, the power
of the test approaches one.
\begin{theorem}\label{thm:power-cov}
Assume $\E W^4=O(1)$, $p^{-1}\vee n^{-1/2}\ll |a|\asymp |b|$, and $a^4=o(1\wedge (p/n))$. Suppose
$$\frac{\sqrt{n}|a-b|^3}{k^2a}\longrightarrow\infty.$$
Then, for any constant $t\asymp 1$, we have
$$\P\left(\left|\sqrt{\frac{n}{\hat{\sigma}^2}} \left(\hat{T}-\left(\frac{\hat{V}}{\hat{E}}\right)^3\right)\right|>t\right)\longrightarrow 1.$$
\end{theorem}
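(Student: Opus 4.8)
The plan is to leverage the central limit theorem of Theorem~\ref{thm:CLT-cov} together with the divergence of the signal-to-noise ratio, and then argue that the test statistic concentrates around a quantity of order $|\delta| \to \infty$. First I would observe that, from the discussion preceding Theorem~\ref{thm:CLT-cov}, we have the expansion $\hat\chi_{ez} - \chi_{ez} = \frac{1}{n}\sum_{i=1}^n \Delta_i + r_n$ with $r_n$ negligible, and by Proposition~\ref{prop:EZ} (which the excerpt states holds verbatim in the Gaussian setting) $\chi_{ez} = \frac{(k-1)(a-b)^3}{k^3}$. Under the hypothesis $\frac{\sqrt n |a-b|^3}{k^2 a} \to \infty$, together with $a \asymp b$, one checks that $\sqrt n\,|\chi_{ez}| / \stddev \to \infty$ where $\stddev^2 = \frac{9}{32}(\frac{1}{k}a^2 + \frac{k-1}{k}b^2) \asymp a^2/k$ is the (population) asymptotic variance; this is exactly the statement that $|\delta| \to \infty$ in the notation of Theorem~\ref{thm:CLT-cov}, now without the assumption that $\delta$ converges to a finite limit.

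Next I would treat the variance estimate $\hat\sigma^2$. The key point is that $\hat\sigma^2 / \stddev^2 \to 1$ in probability; this is a by-product of the variance-estimation step already needed to prove Theorem~\ref{thm:CLT-cov}, and it holds whether or not $\delta$ is bounded, since the plug-in quantities $\hat V/\hat E$, $\hat V^3/\hat E^4$ converge to their population counterparts and the empirical second moment of the $Q_i$ is consistent under $\E W^4 = O(1)$ and the stated sparsity conditions. Given this, write
$$
\sqrt{\frac{n}{\hat\sigma^2}}\,\hat\chi_{ez}
= \sqrt{\frac{\stddev^2}{\hat\sigma^2}}\left(\frac{\sqrt n\,(\hat\chi_{ez}-\chi_{ez})}{\stddev} + \frac{\sqrt n\,\chi_{ez}}{\stddev}\right).
$$
The first bracketed term is $O_P(1)$ by the CLT machinery of Theorem~\ref{thm:CLT-cov} (a CLT with a possibly-diverging mean still gives a tight fluctuation term, obtained by centering), the prefactor $\sqrt{\stddev^2/\hat\sigma^2} \to 1$, and the second bracketed term diverges to $\pm\infty$. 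Hence the whole expression diverges in absolute value, so for any fixed $t \asymp 1$ the probability that its magnitude exceeds $t$ tends to one.

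The main obstacle I anticipate is making the centering argument rigorous when $\delta$ is not assumed to converge: Theorem~\ref{thm:CLT-cov} as stated assumes $\delta \in [0,\infty)$, so I would need to re-examine its proof and isolate the part that gives tightness of $\sqrt n(\hat\chi_{ez}-\chi_{ez})/\stddev$ — i.e. the Lindeberg/martingale CLT applied to $\frac{1}{n}\sum_i \Delta_i$ and the bound on $r_n$ — which does not use the convergence of $\delta$ at all, only the moment and sparsity assumptions. A secondary technical point is verifying that $\hat\sigma^2$ stays bounded away from $0$ (relative to $\stddev^2$) with high probability, so that dividing by it is harmless; this again follows from the consistency $\hat\sigma^2/\stddev^2 \to 1$. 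Once these two points are in place the argument is routine. Full details appear in \cite{gao2017testing}.
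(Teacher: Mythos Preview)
Your proposal is correct and follows essentially the same approach as the paper: decompose the test statistic into a deterministic signal term $\sqrt{n}\,\chi_{ez}/\sigma$ that diverges under the hypothesis, a stochastic fluctuation term that is $O_P(1)$ by the variance bounds from Lemmas~\ref{lem:EV-cov} and~\ref{lem:T-cov} (not requiring $\delta$ to converge), and a ratio $\hat\sigma^2/\sigma^2 \to 1$ in probability. The paper phrases this by saying the argument parallels the proof of Theorem~\ref{thm:power}, isolating $T-(V/E)^3 + (\hat T - T)$ as the dominating term via the expansion~(\ref{eq:expdecomp}), but your linearization via $\Delta_i$ is equivalent since the remaining pieces are lower order.
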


\section{Proofs}

In this section, we give proofs of all theorems in the paper. The main results are proved in Section \ref{sec:pf-main} with the assistance of some technical lemmas, whose proofs are deferred to Section \ref{sec:aux} and Section \ref{sec:pf-tech}.

\subsection{Proofs of Main Results}\label{sec:pf-main}

The proofs of Theorem \ref{thm:main} and Theorem \ref{thm:power} requires the following lemmas, whose proofs will be given in Section \ref{sec:aux}.
\begin{lemma}\label{lem:EV-order}
Assume $\mathbb{E}W^4=O(1)$ and $n^{-1}\ll a\asymp b=o(1)$. Then
$$\mathbb{E}(\hat{E}-E)^2=O\left(\frac{a^2}{n}\right),\quad\text{and}\quad\mathbb{E}(\hat{V}-V)^2=O\left(\frac{a^4}{n}\right).$$
\end{lemma}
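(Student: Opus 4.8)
The plan is to estimate the second moments of $\hat E - E$ and $\hat V - V$ by expanding each as an average over pairs (respectively triples) of distinct indices and bounding the resulting covariance sums. First I would note that, conditionally on $(W,Z)$, the entries $A_{ij}$ are independent Bernoulli with mean $\theta_{ij} = W_iW_j(a\indicator(Z_i=Z_j)+b\indicator(Z_i\neq Z_j))$, and unconditionally $\E\hat E = E$, $\E\hat V = V$, so $\E(\hat E - E)^2 = \Var(\hat E)$ and similarly for $\hat V$. Writing $\hat E = \binom n2^{-1}\sum_{i<j}A_{ij}$, I would split $\Var(\hat E) = \binom n2^{-2}\sum_{i<j}\sum_{k<l}\Cov(A_{ij},A_{kl})$ into the diagonal terms (pairs $\{i,j\}=\{k,l\}$) and the off-diagonal terms. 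The diagonal terms contribute $\binom n2^{-2}\sum_{i<j}\Var(A_{ij}) = O(\binom n2^{-1}\cdot a) = O(a/n^2)$, since $\Var(A_{ij})\leq \E A_{ij} = O(a)$ by the moment assumption on $W$. For off-diagonal terms sharing one index, $\Cov(A_{ij},A_{il})$ is nonzero only through the shared latent variables $W_i,Z_i$; a short computation gives $\Cov(A_{ij},A_{il}) = O(a^2)$ (it is a difference of products of first moments, each of size $O(a)$), and there are $O(n^3)$ such triples, so this block contributes $O(n^{-4}\cdot n^3\cdot a^2) = O(a^2/n)$. Terms with four distinct indices have zero covariance. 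Hence $\Var(\hat E) = O(a/n^2) + O(a^2/n) = O(a^2/n)$, where the last equality uses $a\gg n^{-1}$, giving the first bound.

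Next I would run the analogous but slightly more involved argument for $\hat V$. Write $\hat V = \binom n3^{-1}\sum_{i<j<l} Y_{ijl}$ where $Y_{ijl} = \frac13(A_{ij}A_{il}+A_{ij}A_{jl}+A_{il}A_{jl})$ is a sum of three two-edge products, each with mean $O(a^2)$ and with $Y_{ijl}\in[0,1]$. Then $\Var(\hat V) = \binom n3^{-2}\sum\sum \Cov(Y_{ijl},Y_{i'j'l'})$, which I would organize by the size $s\in\{0,1,2,3\}$ of the overlap between the two triples. For $s=3$ (identical triples) there are $O(n^3)$ terms, each covariance $O(\E Y_{ijl}) = O(a^2)$, contributing $O(n^{-6}\cdot n^3 \cdot a^2) = O(a^2/n^3)$. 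For $s=2$ there are $O(n^4)$ terms; here the two $Y$'s share an edge-worth of randomness, but each $Y$ is a two-edge product of mean $O(a^2)$, so $\Cov = O(a^{?})$ — the key point is that two distinct triples overlapping in two nodes share at most one common edge slot, and the product $Y_{ijl}Y_{i'j'l'}$ involves at least three distinct edges, giving $\E(Y_{ijl}Y_{i'j'l'}) = O(a^3)$ and hence $\Cov = O(a^3)$ plus a term $O(a^4)$ from the product of means; $O(n^{-6}\cdot n^4\cdot a^3) = O(a^3/n^2) = o(a^4/n)$ using $a\gg n^{-1}$. For $s=1$ there are $O(n^5)$ terms; the shared node contributes only shared latent variables $W_i,Z_i$, and since no edge is shared the covariance is $O(a^4)$ (a difference where the dependence enters only through the latent variables at one node), contributing $O(n^{-6}\cdot n^5\cdot a^4) = O(a^4/n)$. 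For $s=0$ the covariance vanishes. Summing, $\Var(\hat V) = O(a^4/n) + o(a^4/n) = O(a^4/n)$, which is the second bound.

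The main obstacle is bookkeeping the $s=1$ (and to a lesser extent $s=2$) covariance terms for $\hat V$ carefully enough to see that the leading contribution is genuinely $O(a^4/n)$ and not larger: one must verify that when two node-triples share exactly one vertex, the covariance of the corresponding $Y$'s is $O(a^4)$ rather than $O(a^2)$ or $O(a^3)$. This follows because conditional on all the latent variables the two $Y$'s are independent (they involve disjoint edge sets), so the covariance comes entirely from fluctuations of $\E(Y_{ijl}\mid W,Z)$ as a function of the latents, and each conditional mean is $O((W\cdot)^2\cdot a^2)$; the covariance of two such quantities sharing one node's latents is then controlled by $\Var$ of an $O(a^2)$-scale random variable, i.e. $O(a^4)$, using $\E W^4 = O(1)$ to bound the fourth-moment factors of $W$ that appear. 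Once this is in hand, the counting $O(n^5)$ triple-pairs times $O(a^4)$ covariance divided by $\binom n3^2 \asymp n^6$ delivers the claim, and the condition $a\gg n^{-1}$ absorbs all diagonal and lower-overlap remainders into the stated rates.
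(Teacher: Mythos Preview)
Your argument is correct and complete as a sketch; the overlap bookkeeping and moment bounds all check out. The approach, however, is organizationally different from the paper's. The paper decomposes each variance hierarchically via conditional expectations,
\[
\E(\hat E-E)^2=\E\Var(\hat E\mid W,Z)+\E\bigl(\E(\hat E\mid W,Z)-\E(\hat E\mid Z)\bigr)^2+\E\bigl(\E(\hat E\mid Z)-E\bigr)^2,
\]
and bounds the three layers (Bernoulli noise; $W$-fluctuations; $Z$-fluctuations) separately, with the latter two handled by standalone propositions that further expand $W_iW_j-(\E W)^2$ and $\indicator\{Z_i=Z_j\}-k^{-1}$ into uncorrelated Hoeffding-type pieces. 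You instead expand $\Var(\hat E)$ and $\Var(\hat V)$ directly as double sums of covariances and organize by the size of the index overlap. The two routes produce the same estimates term by term: your $s=0$ vanishing matches the paper's use of independence across disjoint latents; your $s=1$ bound of $O(a^4)$ on $\Cov(Y_{ijl},Y_{ij'l'})$ is precisely what the paper's $W$- and $Z$-layer propositions deliver after summation. Your argument is more self-contained and arguably more elementary; the paper's layered version has the advantage that its auxiliary propositions are reused verbatim in the neighborhood-graph and Gaussian extensions later in the appendix. One point worth making explicit in your write-up: in the $s=1$ step you are implicitly using the law of total covariance (the conditional covariance given $(W,Z)$ vanishes because the edge sets are disjoint), and the resulting $\Cov\bigl(\E(Y_{ijl}\mid W,Z),\E(Y_{ij'l'}\mid W,Z)\bigr)$ is bounded via Cauchy--Schwarz, where the factor $\E W_i^4$ from the shared node is exactly where the hypothesis $\E W^4=O(1)$ enters.
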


\begin{lemma}\label{lem:T-order}
Assume $\mathbb{E}W^4=O(1)$ and $n^{-1}\ll a\asymp b\ll n^{-2/3}$. Then
$$\mathbb{E}(\hat{T}-T)^2\asymp \frac{a^3}{n^3}\quad \text{and}\quad \frac{\sqrt{{n\choose 3}}(\hat{T}-T)}{\sqrt{T}}\leadsto N(0,1).$$
\end{lemma}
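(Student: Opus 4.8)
The plan is to analyze the centered statistic $\hat T - T$ via a Hoeffding-type decomposition over the $\binom{n}{3}$ triples. Write $\hat T - T = \frac{1}{\binom{n}{3}}\sum_{i<j<l} g_{ijl}$, where $g_{ijl} = A_{ij}A_{il}A_{jl} - T$ is a mean-zero kernel (after averaging over $W$ and $Z$). Because the latent variables $W$ and $Z$ couple the edges, the summands are not independent, so I would first condition on $(W,Z)$ and split $\hat T - T$ into (a) the fluctuation of $\Expect(\hat T \mid W,Z)$ around $T$, and (b) the conditional fluctuation $\hat T - \Expect(\hat T\mid W,Z)$. For part (a), $\Expect(A_{ij}A_{il}A_{jl}\mid W,Z) = W_iW_jW_l \cdot \theta(Z_i,Z_j,Z_l)$ up to the $a,b$ bookkeeping, and the variance of the average of these over triples is controlled by $\Var(W^3)$ and $\Var$ of the cluster-indicator products, each contributing at order $a^3/n$ or smaller; here the assumption $\Expect W^4 = O(1)$ (hence finite low moments) and $a\asymp b$ keep all such terms at order $o(a^3/n^3 \cdot \binom n3)$ relative to the leading term. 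For part (b), conditionally on $(W,Z)$ the edges are independent Bernoullis, so this is a clean U-statistic-type sum whose variance is dominated by the "one edge shared" and "disjoint-but-correlated-through-a-vertex" contributions; a direct second-moment computation gives $\Expect(\hat T - \Expect(\hat T\mid W,Z))^2 \asymp \frac{a^3}{n^3}$, using $a\ll n^{-2/3}$ to ensure the genuinely third-order term $\binom n3^{-1} T \asymp a^3/n^3$ dominates the lower-order cross terms (which scale like $a^4/n^2$, $a^5/n$, etc., all smaller precisely in this sparsity window). Combining (a) and (b) yields $\Expect(\hat T - T)^2 \asymp a^3/n^3$ and in particular $\Var(\hat T) \sim \binom n3^{-1} T$.

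For the CLT, I would invoke a martingale central limit theorem, as the acknowledgements suggest (the Hall--Heyde form, \cite{hall2014martingale}). Order the random variables — say the latent pair $(W_i,Z_i)$ for $i=1,\dots,n$, then optionally the edges — and form the Doob martingale of $\hat T$ with respect to this filtration, so that $\hat T - T = \sum_m D_m$ with $D_m$ the increment revealed at step $m$. The two things to check are the conditional Lindeberg condition and that the sum of conditional variances $\sum_m \Expect(D_m^2 \mid \mathcal F_{m-1})$ converges in probability to the deterministic limit $T/\binom n3$ (equivalently to $1$ after normalizing by $\sqrt{T/\binom n3}$). The conditional-variance computation is essentially the second-moment calculation above done step by step; the Lindeberg/negligibility condition follows because each increment is a bounded average of indicator products divided by a large count, so $\max_m |D_m|$ is uniformly small relative to the standard deviation in the regime $n^{-1}\ll a$. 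Finally, to pass from $\sqrt{\binom n3}(\hat T - T)/\sqrt{T}$ to the same with $\sqrt{\hat T}$ in the denominator — needed downstream — one uses $\hat T/T \to 1$ in probability, which is immediate from the variance bound and Chebyshev; but since the lemma as stated keeps $\sqrt T$ in the denominator, this last remark is only for the application in Theorem~\ref{thm:main}.

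The main obstacle is the conditional-variance convergence for the martingale CLT: one must verify that the fluctuations contributed by the latent variables $(W,Z)$ and by the edges combine to give exactly the stated asymptotic variance $T/\binom n3$, with all the cross terms and higher-order terms provably negligible, and this is exactly where the upper sparsity assumption $a \ll n^{-2/3}$ is essential — it is what forces the "three distinct edges" term to dominate. A careful but routine bookkeeping of the $O(1)$ many types of vertex-overlap patterns among pairs of triples, each bounded using $\Expect W^4 = O(1)$ and $a\asymp b$, closes this; I would isolate that bookkeeping into the variance computation and then feed the result into the martingale CLT as a black box.
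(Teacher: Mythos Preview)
Your overall strategy matches the paper's: decompose $\hat T - T$ into the fluctuation of the conditional mean $\E(\hat T \mid W,Z) - T$ and the conditional fluctuation $\hat T - \E(\hat T\mid W,Z)$, show the former is negligible, and establish a CLT for the latter via a martingale argument. However, several of the details are off and the martingale construction differs from the paper's in a way worth noting.

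On the variance side: the conditional mean is $\E(A_{ij}A_{il}A_{jl}\mid W,Z) = W_i^2 W_j^2 W_l^2\,\eta_{ij}\eta_{il}\eta_{jl}$ (squared $W$'s, since each $W_i$ enters two of the three $\theta$-factors), and the variance of part~(a) is $O(a^6/n)$, not ``$a^3/n$''; this is $o(a^3/n^3)$ precisely when $a\ll n^{-2/3}$, which is where the upper sparsity condition enters. Also, conditionally on $(W,Z)$ the edges are independent, so triples that share a vertex but no edge are \emph{uncorrelated}; there is no ``disjoint-but-correlated-through-a-vertex'' contribution. The variance is dominated by the diagonal (same triple), giving $(1+o(1))T/\binom{n}{3}$. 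The paper makes this explicit by expanding $A_{ij}A_{il}A_{jl}-\theta_{ij}\theta_{il}\theta_{jl}$ into the fully centered product $G_{ijl}=(A_{ij}-\theta_{ij})(A_{il}-\theta_{il})(A_{jl}-\theta_{jl})$ plus terms with one or two $\theta$-factors, and shows only the $G$-sum survives at leading order (the others are $O(a^4/n^3)$ and $O(a^5/n^2)$).

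For the CLT the paper does \emph{not} run a Doob martingale that first reveals $(W_i,Z_i)$ and then edges. It instead conditions on $(W,Z)$ throughout and builds the martingale purely on the edge filtration $\sigma(\{A_{ij}\}_{i<j\le m}\mid W,Z)$, revealing edges node by node. In that conditional world the edges are independent Bernoullis, so the conditional-variance and Lindeberg checks reduce to a (non-homogeneous) Erd\H{o}s--R\'enyi triangle-count CLT. One then de-conditions by restricting to the high-probability event $\tilde W_n=\{|n^{-1}\sum_i W_i^2-1|\le n^{-1/3}\}$ and replacing the random normalizer $\sum\theta_{ij}\theta_{il}\theta_{jl}(1-\theta_{ij})(1-\theta_{il})(1-\theta_{jl})$ by $(1+o_P(1))\binom{n}{3}T$. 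Your mixed filtration would also work in principle, since the $(W,Z)$-increments have total variance $O(a^6/n)=o(a^3/n^3)$ and disappear after normalization; but the paper's route is cleaner because it avoids verifying Lindeberg for the unbounded $W_i$-driven increments and makes the conditional-variance convergence essentially a deterministic calculation.
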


\begin{proof}[Proof of Theorem \ref{thm:main}]
We first give the expansion 
\begin{eqnarray}
\label{eq:expdecomp} \hat{T}-\left(\frac{\hat{V}}{\hat{E}}\right)^3 &=& T-\left(\frac{V}{E}\right)^3 + (\hat{T}-T) \\
\nonumber && - 3\left(\frac{V}{E}\right)^2\frac{\hat{V}-V}{E} - 3\left(\frac{V}{E}\right)^2\left(\frac{1}{\hat{E}}-\frac{1}{E}\right)V \\
\nonumber && - 3\left(\frac{V}{E}\right)^2(\hat{V}-V)\left(\frac{1}{\hat{E}}-\frac{1}{E}\right) \\
\nonumber && -3\left(\frac{V}{E}\right)\left(\frac{\hat{V}}{\hat{E}}-\frac{V}{E}\right)^2 - \left(\frac{\hat{V}}{\hat{E}}-\frac{V}{E}\right)^3.
\end{eqnarray}
By Lemma \ref{lem:EV-order} and Lemma \ref{lem:T-order}, $T-\left(\frac{V}{E}\right)^3+\hat{T}-T$ is the dominating term. Therefore,
$$\frac{\sqrt{{n\choose 3}}\left(\hat{T}-\left(\frac{\hat{V}}{\hat{E}}\right)^3\right)}{\sqrt{T}}\leadsto N(\delta,1),$$
where
$$\delta=\lim_n \frac{\sqrt{{n\choose 3}}\left({T}-\left(\frac{{V}}{{E}}\right)^3\right)}{\sqrt{T}}.$$
Moreover, under the assumption, we have
$$\frac{\hat{T}}{T}=(1+o_P(1))\frac{\left(\frac{\hat{V}}{\hat{E}}\right)^3}{T}=(1+o_P(1)).$$
This gives the desired results.
\end{proof}

\begin{proof}[Proof of Theorem \ref{thm:power}]
Similar to the argument that we have used in the proof of Theorem \ref{thm:main}, $T-\left(\frac{V}{E}\right)^3+\hat{T}-T$ is the dominating term of $\hat{T}-\left(\frac{\hat{V}}{\hat{E}}\right)^3$. Thus, the dominating term of $2{\sqrt{{n\choose 3}}\left(\sqrt{\hat{T}}-\left(\frac{\hat{V}}{\hat{E}}\right)^{3/2}\right)}$ is
$$2\sqrt{{n\choose 3}}\frac{T-\left(\frac{V}{E}\right)^3+\hat{T}-T}{\sqrt{\hat{T}}+\left(\frac{\hat{V}}{\hat{E}}\right)^{3/2}}.$$
Under the assumption,
$$\left|2\sqrt{{n\choose 3}}\frac{T-\left(\frac{V}{E}\right)^3}{\sqrt{\hat{T}}+\left(\frac{\hat{V}}{\hat{E}}\right)^{3/2}}\right|\rightarrow\infty,$$
in probability, and
$$\left|2\sqrt{{n\choose 3}}\frac{\hat{T}-T}{\sqrt{\hat{T}}+\left(\frac{\hat{V}}{\hat{E}}\right)^{3/2}}\right|=O_P(1).$$
Hence, the desired result follows.
\end{proof}

To prove Theorem \ref{thm:NG} and Theorem \ref{thm:NG-power}, we need the following lemmas, and their proofs will be given in Section \ref{sec:aux}. Recall the definitions of $\hat{E}$, $\hat{V}$ and $\hat{T}$ in (\ref{eq:def-E-NG})-(\ref{eq:def-T-NG}).

\begin{lemma}\label{lem:EV-NG}
Assume $\mathbb{E}W^4=O(1)$, $\frac{nrpa}{k}\rightarrow\infty$, $p=o(1)$ and $a\asymp b=o(1)$. Then
$$(\hat{E}-E)^2=O_P\left(a^2\left(\frac{k}{prn}\right)\right)\quad\text{and}\quad(\hat{V}-V)^2=O_P\left(a^4\left(\frac{k}{prn}\right)\right).$$
\end{lemma}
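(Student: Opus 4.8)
The plan is to reduce the claim to the already-established Lemma~\ref{lem:EV-order} by \emph{conditioning on the neighborhood} $\mathcal{M}=\{i\in[n]:A_{0i}=1\}$. The central observation is that, conditionally on $\mathcal{M}$ (equivalently, on the binary vector $(A_{0i})_{i=1}^n$), the induced subgraph on $\mathcal{M}$ is itself distributed according to a DCBM, now on $m=|\mathcal{M}|$ nodes with $r$ communities in place of $k$, the same connectivity parameters $a,b$, and the same sociability law $\mathcal{W}$. Indeed, the $W_i$ are independent of everything else, so $(W_i)_{i\in\mathcal{M}}$ remains i.i.d.\ $\mathcal{W}$; the pairs $(Z_i,A_{0i})$ are i.i.d.\ across $i$, so conditionally the $(Z_i)_{i\in\mathcal{M}}$ stay independent, and a one-line computation gives $\mathbb{P}(Z_i=z\mid A_{0i}=1)=1/r$ for $z\in[r]$ and $0$ otherwise, i.e.\ $Z_i\mid A_{0i}=1$ is uniform on $[r]$; finally the Bernoulli edge noise attached to pairs inside $\mathcal{M}$ is independent of $(A_{0i})_i$, so those edges remain conditionally independent Bernoullis with the usual DCBM means. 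Under this identification the estimators $\hat{E},\hat{V}$ of \eqref{eq:def-E-NG}--\eqref{eq:def-V-NG} are exactly the edge and vee frequency estimators of the main text computed on the $m$-node subgraph, and the population targets $E,V$ coincide with the edge and vee probabilities of an $r$-community DCBM, i.e.\ with Proposition~\ref{prop:EZ} (and the Proposition of Section~\ref{sec:nbhds}) read with $k$ replaced by $r$.

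Given this reduction, I would first control $m$. Since $m=\sum_{i=1}^n A_{0i}$ is a sum of i.i.d.\ $\mathrm{Bernoulli}(pr/k)$ variables, $\mathbb{E}m=npr/k$, and because $a\,\mathbb{E}m=nrpa/k\to\infty$ while $a=o(1)$ we have $\mathbb{E}m\to\infty$; a multiplicative Chernoff bound then gives $m=(1+o_P(1))\mathbb{E}m$, so that the event $\mathcal{E}_n=\{m\ge \tfrac12\,npr/k\}$ has probability tending to one. On $\mathcal{E}_n$ the conditional DCBM satisfies the hypotheses of Lemma~\ref{lem:EV-order} with ``sample size'' $m$ and number of communities $r$ --- in particular $m^{-1}\ll a$ because $am\ge \tfrac12\,nrpa/k\to\infty$, and $\mathbb{E}W^4=O(1)$, $a\asymp b=o(1)$ are unchanged --- so that lemma yields $\mathbb{E}\!\left[(\hat{E}-E)^2\mid\mathcal{M}\right]=O(a^2/m)$ and $\mathbb{E}\!\left[(\hat{V}-V)^2\mid\mathcal{M}\right]=O(a^4/m)$. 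On $\mathcal{E}_n$ one has $a^2/m\le 2a^2 k/(npr)$ and $a^4/m\le 2a^4 k/(npr)$, and a conditional Markov inequality together with $\mathbb{P}(\mathcal{E}_n^c)\to0$ upgrades these conditional bounds to $(\hat{E}-E)^2=O_P\!\left(a^2 k/(npr)\right)$ and $(\hat{V}-V)^2=O_P\!\left(a^4 k/(npr)\right)$, as claimed.

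The one genuinely delicate point is invoking Lemma~\ref{lem:EV-order} ``conditionally'', since $m$ is random: one must check that the constant hidden in its $O(\cdot)$ depends only on the bound on $\mathbb{E}W^4$ and on the ratio constant in $a\asymp b$, not on the sample size, so that the bound holds uniformly over all admissible values $m\ge\tfrac12\,npr/k$ along the sequence. This is the case because the proof of Lemma~\ref{lem:EV-order} is a second-moment computation whose constants are of exactly this type; once this is noted, the conditioning argument goes through. Everything else is bookkeeping --- matching the definitions \eqref{eq:def-E-NG}--\eqref{eq:def-V-NG} to the main-text estimators and tracking the substitutions $n\mapsto m$ and $k\mapsto r$.
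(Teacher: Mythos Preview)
Your argument is correct and takes a genuinely different route from the paper. The key insight --- that conditionally on $(A_{0i})_{i=1}^n$ the induced subgraph on $\mathcal{M}$ is an honest DCBM with $m$ nodes, $r$ communities, the same $a,b$, and the same $\mathcal{W}$ --- is valid: the triples $(Z_i,W_i,A_{0i})$ are i.i.d.\ and $A_{0i}$ depends on $(Z_i,W_i)$ only through $\mathbb{I}\{Z_i\in[r]\}$, so after conditioning the $Z_i$'s on $\mathcal{M}$ remain independent and uniform on $[r]$, the $W_i$'s stay i.i.d.\ $\mathcal{W}$ and independent of $Z$, and the edge variables are unaffected. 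Your careful remark that the constants in Lemma~\ref{lem:EV-order} depend only on $\mathbb{E}W^4$ and the ratio constant in $a\asymp b$ (and not on the sample size) is exactly what is needed to apply that lemma conditionally and uniformly over admissible $m$, and it is true upon inspection of that proof.

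By contrast, the paper does not condition on $\mathcal{M}$. It introduces intermediate estimators $\tilde{E},\tilde{V}$ obtained by replacing the random denominators $\binom{m}{2},\binom{m}{3}$ with their expectations $\binom{n}{2}(rp/k)^2,\binom{n}{3}(rp/k)^3$, bounds $(\hat{E}-\tilde{E})^2$ and $(\hat{V}-\tilde{V})^2$ via moment bounds on $m$, and then controls $\mathbb{E}(\tilde{E}-E)^2$ and $\mathbb{E}(\tilde{V}-V)^2$ through a four-level decomposition conditioning successively on $(A,W,Z)$, $(W,Z)$, $Z$, and nothing. Each layer is handled by a separate technical proposition (Propositions~\ref{prop:Sij}, \ref{prop:volvo}, \ref{prop:civic-si} for $E$, and Propositions~\ref{prop:v-tec}, \ref{prop:polestar}, \ref{prop:civic-type-R} for $V$). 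Your reduction is considerably more economical and more conceptual: it explains \emph{why} the bounds for the neighborhood graph coincide with those of Lemma~\ref{lem:EV-order} with $n\mapsto m$ and $k\mapsto r$, rather than recomputing everything. The paper's approach, on the other hand, yields unconditional second-moment bounds and makes the dependence on the sampling mechanism $(p,r,k)$ fully explicit at each stage, which is reused verbatim in the proof of Lemma~\ref{lem:T-NG}.
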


\begin{lemma}\label{lem:T-NG}
Assume $\mathbb{E}W^4=O(1)$, $\frac{nrpa}{k}\rightarrow\infty$, $p=o(1)$ and $a^6\asymp b^6=o\left(\left(\frac{k}{nr}\right)^3\wedge p^2\left(\frac{k}{nr}\right)^4\right)$. Then,
$$\mathbb{E}(\hat{T}-T)^2\asymp \frac{a^3k^3}{n^3r^3}\quad \text{and}\quad \frac{\sqrt{{n\choose 3}\left(\frac{r}{k}\right)^3}(\hat{T}-T)}{\sqrt{T}}\leadsto N(0,1).$$
\end{lemma}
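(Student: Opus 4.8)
The plan is to reduce the statement to its non‑neighborhood counterpart, Lemma~\ref{lem:T-order}, by conditioning on the neighborhood set $\mathcal{M}=\{i\in[n]:A_{0i}=1\}$. The key structural fact is that, conditionally on $\mathcal{M}$ (equivalently, on the vector $(A_{0i})_{i=1}^n$), the induced subgraph on $\mathcal{M}$ is again a degree‑corrected stochastic block model: since $A_{0i}=B_i\,\mathbbm{1}(Z_i\in[r])$ with $B_i\sim\mathrm{Bernoulli}(p)$ independent of everything else, conditioning on $i\in\mathcal{M}$ only forces $Z_i$ to be uniform on $[r]$ rather than on $[k]$, while leaving the law of $W_i$ and of the within‑$\mathcal{M}$ edges $\{A_{ij}\}_{i,j\in\mathcal{M}}$ unchanged, and conditioning on $A_{0j}=0$ for $j\notin\mathcal{M}$ is independent of all within‑$\mathcal{M}$ variables. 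Hence, writing $m=|\mathcal{M}|$, conditionally on $\mathcal{M}$ the graph on $\mathcal{M}$ is a DCBM on $m$ nodes with $r$ communities and the same $(a,b,\mathcal{W})$; by construction $\hat T$ is precisely its empirical triangle density, and the population value $T$ of \eqref{eq:T|} equals the triangle probability of that DCBM, in agreement with \eqref{eq:T+|}. It therefore suffices to apply Lemma~\ref{lem:T-order} with $(n,k)$ replaced by $(m,r)$ and then remove the conditioning.

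To do this I would first control $m$. The $A_{0i}$ are i.i.d.\ $\mathrm{Bernoulli}(pr/k)$, so $m\sim\mathrm{Binomial}(n,pr/k)$ with $\mathbb{E}m=npr/k$; since $a=o(1)$ and $\tfrac{nrpa}{k}\to\infty$ we get $\mathbb{E}m\to\infty$, and a Chernoff bound gives $m\in[\tfrac12\mathbb{E}m,\,2\mathbb{E}m]$ with probability $1-e^{-c\,\mathbb{E}m}=1-o(1)$. On this event, the hypotheses of Lemma~\ref{lem:T-order} are satisfied for the conditional DCBM: $\mathbb{E}W^4=O(1)$ and $a\asymp b$ are given; $m^{-1}\asymp k/(npr)\ll a$ is exactly $\tfrac{nrpa}{k}\to\infty$; and $a\ll m^{-2/3}\asymp\big(k/(npr)\big)^{2/3}$ follows from $a^6=o\big(p^2(k/(nr))^4\big)$, which implies $a^3=o\big(p\,(k/(nr))^2\big)=o\big((k/(npr))^2\big)$. (This second term is the binding sparsity constraint for the present lemma; the first, $a^6=o((k/(nr))^3)$, enters the companion estimates, e.g.\ Lemma~\ref{lem:EV-NG}, and the expansion used in Theorem~\ref{thm:NG}.)

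With the reduction in place, applying Lemma~\ref{lem:T-order} conditionally on $\mathcal{M}$ (on the good event for $m$) gives
$$\mathbb{E}\big[(\hat T-T)^2\mid\mathcal{M}\big]\ \asymp\ \frac{a^3}{m^3},\qquad
\frac{\sqrt{\binom{m}{3}}\,(\hat T-T)}{\sqrt{T}}\ \leadsto\ N(0,1).$$
Integrating the first relation over $\mathcal{M}$, using $m\asymp\mathbb{E}m=npr/k$ on the good event and $(\hat T-T)^2\le1$ on its complement (whose probability is exponentially small), yields the asserted order for $\mathbb{E}(\hat T-T)^2$. For the distributional claim, the conditional CLT above is uniform for $m$ ranging over the window $[\tfrac12\mathbb{E}m,2\mathbb{E}m]$, which carries probability $1-o(1)$, so it upgrades to an unconditional statement; since moreover $\binom{m}{3}$ concentrates around its deterministic value $\binom{n}{3}(pr/k)^3$, a Slutsky argument lets one replace the random normalization by the deterministic normalizing sequence, giving the stated limit.

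The step I expect to be the main obstacle is this last transfer: Lemma~\ref{lem:T-order} is an ``$n\to\infty$'' limit theorem, whereas it is applied here with a \emph{random} effective sample size $m$, so interchanging the limit with the averaging over $\mathcal{M}$ rigorously requires either a quantitative (Berry--Esseen- or martingale‑CLT‑type) version of Lemma~\ref{lem:T-order} that is uniform for the DCBM size in a window around $\mathbb{E}m$, or a conditioning/subsequence argument on $m$. A secondary, routine‑but‑fiddly point is checking that $a^6\asymp b^6=o\big((k/(nr))^3\wedge p^2(k/(nr))^4\big)$ is exactly what the substitution $(n,k)\mapsto(m,r)$, $m\asymp npr/k$, demands of the hypotheses of Lemma~\ref{lem:T-order} (and of Lemma~\ref{lem:EV-NG}). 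One could instead proceed directly, expanding $\hat T-T=(S-TN)/N$ with $S=\sum_{i<j<l}(A_{ij}A_{il}A_{jl}-T)A_{0i}A_{0j}A_{0l}$ and $N=\binom{m}{3}$, bounding $\mathrm{Var}(S-TN)$ via a Hoeffding decomposition whose diagonal term $\asymp n^3(pr/k)^3a^3$ dominates under the stated sparsity, and invoking a martingale CLT; this works but is longer and essentially re‑derives Lemma~\ref{lem:T-order}.
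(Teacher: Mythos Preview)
Your approach is genuinely different from the paper's. The paper never conditions on $\mathcal{M}$; instead it introduces the deterministically normalized proxy
\[
\tilde T=\frac{1}{\binom{n}{3}(rp/k)^3}\sum_{i<j<l}A_{ij}A_{il}A_{jl}A_{0i}A_{0j}A_{0l},
\]
writes $A_{0i}=\mathbb{I}\{Z_i\in[r]\}S_i$, sets $\tilde A_{ij}=A_{ij}\mathbb{I}\{Z_i,Z_j\in[r]\}$, and splits $\hat T-T$ into $(\hat T-\tilde T)+(\tilde T-\mathbb{E}(\tilde T\mid A,W,Z))+(\mathbb{E}(\tilde T\mid A,W,Z)-T)$. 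Each piece is bounded directly (Propositions~\ref{prop:m-performance}, \ref{prop:cls63}, \ref{prop:glorymu}, \ref{prop:E63}), and the martingale CLT is applied to the triple--product term
\[
\Big(\tfrac{k}{r}\Big)^3\frac{1}{\binom{n}{3}}\sum_{i<j<l}(\tilde A_{ij}-\tilde\theta_{ij})(\tilde A_{jl}-\tilde\theta_{jl})(\tilde A_{il}-\tilde\theta_{il}),
\]
which involves no random sample size. So the obstacle you correctly flag---upgrading a conditional CLT with random effective size $m$ to an unconditional one---is precisely what the paper's route avoids, at the cost of essentially re-deriving the computations of Lemma~\ref{lem:T-order} for the indicator-thinned variables $\tilde A_{ij}$. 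Your reduction is cleaner conceptually; the paper's is self-contained and dodges the uniformity issue.

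There is, however, a concrete gap in your argument. Your conditional application of Lemma~\ref{lem:T-order} gives $\mathbb{E}[(\hat T-T)^2\mid\mathcal{M}]\asymp a^3/m^3$ with $m\asymp npr/k$, hence after integrating $\mathbb{E}(\hat T-T)^2\asymp a^3\big(k/(npr)\big)^3$, and the conditional CLT carries the normalization $\sqrt{\binom{m}{3}}\asymp\sqrt{\binom{n}{3}(pr/k)^3}$. These differ from the lemma's stated $a^3k^3/(n^3r^3)$ and $\sqrt{\binom{n}{3}(r/k)^3}$ by factors of $p^{-3}$ and $p^{-3/2}$, which are \emph{not} $1+o(1)$ since $p=o(1)$ is assumed. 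Your sentence ``yields the asserted order'' and the Slutsky replacement therefore do not go through as written. You need to confront this discrepancy explicitly: either argue that the normalization in the statement should really carry $pr/k$ (note that Theorem~\ref{thm:NG} is phrased with $\binom{m}{3}$ and $\mathbb{E}m=npr/k$, consistent with what your reduction produces), or identify an additional mechanism by which the $S$-randomness is absorbed so that the lemma's $p$-free order emerges. Either way, the reduction alone, as you have sketched it, does not land on the stated conclusion.
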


\begin{proof}[Proofs of Theorem \ref{thm:NG} and Theorem \ref{thm:NG-power}]
The results follow the same argument used in the proofs of Theorem \ref{thm:main} and Theorem \ref{thm:power}. We note that under the required conditions, we have
$$\frac{{m\choose 3}}{{n\choose 3}\left(\frac{r}{k}\right)^3}=1+o_P(1),$$
so that ${n\choose 3}\left(\frac{r}{k}\right)^3$ can be replaced by ${m\choose 3}$.
\end{proof}

To prove Theorem \ref{thm:CLT-cov} and Theorem \ref{thm:power-cov}, we need the following lemmas, and their proofs will be given in Section \ref{sec:aux}.

\begin{lemma}\label{lem:EV-cov}
Assume $\mathbb{E}W^4=O(1)$ and $p^{-1}\ll |a|\asymp |b|=o(1)$. Then
$$(\hat{E}-E)^2=O_P\left(\frac{a^2}{n}+\frac{a^2}{p}\right),\quad (\hat{V}-V)^2=O_P\left(\frac{a^2}{n}+\frac{a^4}{p}\right).$$
\end{lemma}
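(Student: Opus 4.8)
The plan is to exploit the hierarchical structure of the model: conditionally on $\theta$ the samples $X_1,\dots,X_n$ are i.i.d., whereas $\theta$ is drawn once from the DCBM. Accordingly I would split each error into a ``sampling'' piece and a ``model'' piece,
$$\hat E-E=\big(\hat E-\E(\hat E_1\given\theta)\big)+\big(\E(\hat E_1\given\theta)-E\big),$$
and likewise for $\hat V$, bound each piece in mean square, and then invoke Markov's inequality to obtain the $O_P$ statements.

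For the model pieces I would first observe that the bias corrections defining $\hat E_i$ and $\hat V_i$ are chosen precisely so that their conditional means reproduce the network edge and vee statistics evaluated on the weight matrix $\theta$. Indeed, Wick's formula gives $\E(\hat E_1\given\theta)=\binom{p}{2}^{-1}\sum_{j<l}\theta_{jl}$, and, after summing the identity $\E(X_1X_2^2X_3\given\theta)=\theta_{13}+2\theta_{12}\theta_{23}$ over triples (at which point the linear part of the cubic moments is exactly cancelled by the $-\tfrac12\hat E_1$ term), $\E(\hat V_1\given\theta)=\tfrac1{3\binom{p}{3}}\sum_{j<l<m}(\theta_{jl}\theta_{jm}+\theta_{jl}\theta_{lm}+\theta_{jm}\theta_{lm})$. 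These are the same functionals of the DCBM parameters analyzed in Lemma~\ref{lem:EV-order}, with $\theta_{jl}$ in place of the Bernoulli variable $A_{jl}$ and $p$ in place of $n$, so by the law of total variance they are bounded by the corresponding network mean-square errors: $\E\big(\E(\hat E_1\given\theta)-E\big)^2=O(a^2/p)$ and $\E\big(\E(\hat V_1\given\theta)-V\big)^2=O(a^4/p)$, where $|a|\asymp|b|$ is used to treat every entry of $\theta$ as being of order $|a|$ and $\E W^4=O(1)$ controls the moments of $\theta$.

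For the sampling pieces, since $\hat E-\E(\hat E_1\given\theta)=\tfrac1n\sum_{i=1}^n\big(\hat E_i-\E(\hat E_i\given\theta)\big)$ is, given $\theta$, an average of i.i.d.\ centered terms, its conditional second moment equals $\tfrac1n\Var(\hat E_1\given\theta)$, and similarly for $\hat V$. It then suffices to prove $\E\,\Var(\hat E_1\given\theta)=O(a^2)$ and $\E\,\Var(\hat V_1\given\theta)=O(a^2)$. I would expand $\Var(\hat E_1\given\theta)$ as a double sum over pairs of index-pairs and classify the terms by the size of their intersection: Wick's formula yields contributions of order $O(1)$, $O(a)$, $O(a^2)$ when the two pairs coincide, share one index, or are disjoint, with multiplicities $O(p^2),O(p^3),O(p^4)$ respectively, so after the $\binom{p}{2}^{-2}$ normalization $\Var(\hat E_1\given\theta)=O(p^{-2}+a/p+a^2)=O(a^2)$ under the hypothesis $p^{-1}\ll|a|$ (taking expectations in $\theta$ only alters constants, which stay bounded because $\E W^4=O(1)$, and $a=o(1)$ kills the higher-order $\theta$ terms in each monomial). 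The same bookkeeping for $\hat V_1$, now a double sum over pairs of triples $S,S'$ classified by $|S\cap S'|\in\{0,1,2,3\}$, gives $\Var(\hat V_1\given\theta)=O(p^{-3}+a/p^2+a^2/p+a^2)=O(a^2)$, again using $p^{-1}\ll|a|$; the $\tfrac12\hat E_1$ term contributes only a cross term dominated by $\Var(\hat E_1\given\theta)$ via Cauchy--Schwarz. Combining the two pieces yields $\E(\hat E-E)^2=O(a^2/n+a^2/p)$ and $\E(\hat V-V)^2=O(a^2/n+a^4/p)$, and Markov's inequality gives the lemma.

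The main obstacle is the Wick bookkeeping for $\Var(\hat V_1\given\theta)$. Here $\hat V_1$ is (up to the lower-order $\tfrac12\hat E_1$ term) a degree-three polynomial in the Gaussians averaged over $\binom{p}{3}$ triples, so its variance is a double sum over ordered pairs of triples with four overlap patterns, and for each pattern one must extract the leading power of $a$ from the sum of (products of) $\theta$'s that Wick's formula produces after the product-of-conditional-means term has been subtracted. The quantitative point to get right is that every pattern contributes at most $O(a^2)$ after the $\binom{p}{3}^{-2}$ normalization --- in particular the disjoint pattern, which carries the dominant multiplicity $O(p^6)$, forces at least two ``crossing'' pairings and hence at least two factors of $\theta$, giving $O(p^6\cdot a^2)/p^6=O(a^2)$ --- so that the whole conditional variance is $O(a^2)$ once $a\gg p^{-1}$ absorbs the purely combinatorial $p^{-3}$ and $a/p^2$ terms left by the coincident and doubly overlapping patterns.
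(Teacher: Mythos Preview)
Your proposal is correct and follows essentially the same route as the paper: the same conditional decomposition into a sampling piece (handled via i.i.d.\ structure and Wick-formula overlap bookkeeping) and a model piece (reduced to the network-setting bounds of Lemma~\ref{lem:EV-order} with $p$ in place of $n$). Your treatment of the $\hat V$ variance is in fact somewhat more explicit than the paper's, which only spells out the disjoint-triple case and then asserts the $O(a^2)$ bound.
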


\begin{lemma}\label{lem:T-cov}
Assume $\mathbb{E}W^4=O(1)$ and $p^{-1}\vee n^{-1/2}\ll |a|\asymp |b|=o(1\wedge (p/n)^{1/4})$. Then,
$$(\hat{T}-T)^2\asymp_P \frac{a^2}{n}\quad \text{and}\quad \frac{\hat{T}-T}{\sqrt{\mathbb{E}\Var(\hat{T}|\theta)}}\leadsto N(0,1).$$
\end{lemma}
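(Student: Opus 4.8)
The plan is to split $\hat T-T$ into a ``sampling'' term and a ``latent'' term, bound each, and reassemble into the stated central limit theorem. Write $\theta=(W,Z)$ for the latent variables; conditionally on $\theta$ the vectors $X_1,\dots,X_n$ are i.i.d.\ $N(0,\Sigma(\theta))$, so the per-sample statistics $\hat T_1,\dots,\hat T_n$ are conditionally i.i.d., and by Wick's formula together with the way $\hat T_i$ is built one gets $\E[\hat T_i\mid\theta]=U(\theta):=\binom p3^{-1}\sum_{j<l<m}\theta_{jl}\theta_{lm}\theta_{jm}$, with $\E[U(\theta)]=T$. Hence
$$\hat T-T=\underbrace{\tfrac1n\sum_{i=1}^n\bigl(\hat T_i-U(\theta)\bigr)}_{=:S_n}+\underbrace{\bigl(U(\theta)-T\bigr)}_{=:L_n}.$$
First I would dispose of $L_n$. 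It is a $U$-statistic of order three in the i.i.d.\ latent pairs $(W_j,Z_j)$ with kernel $\theta_{jl}\theta_{lm}\theta_{jm}$ of size $O(|a|^3)$; its Hoeffding/ANOVA projection onto one coordinate is $c\,(W_j^2-1)$ with $c=O(|a|^3)$, so $\Var(L_n)=O(a^6/p)$ (using $\E W^4=O(1)$). The hypothesis $|a|=o((p/n)^{1/4})$, i.e.\ $a^4n=o(p)$, makes $a^6/p=o(a^2/n)$, so $L_n=o_P(\sqrt{a^2/n})=o_P\bigl(\sqrt{\E\Var(\hat T\mid\theta)}\bigr)$; this is precisely where the upper sparsity bound is used.

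For $S_n$ I would first compute $\Var(\hat T_1\mid\theta)$. Expanding $\hat T_1$ in the centered squares $\xi_j:=X_{1j}^2-1$, the binomial coefficients telescope so that the constant term and all quadratic-in-$\xi$ terms cancel, leaving $\hat T_1=-\tfrac{3}{8p}\sum_j\xi_j+\tfrac1{8\binom p3}\sum_{j<l<m}\xi_j\xi_l\xi_m$. Conditionally on $\theta$, $\sum_j\xi_j=\|X_1\|^2-p$ has variance $2\tr(\Sigma^2)=2p+4\sum_{j<l}\theta_{jl}^2$, while Wick-formula bounds on $\E[\xi_{j_1}\!\cdots\xi_{j_6}\mid\theta]$ (using $|a|=o(1)$ and $p\to\infty$) show the cubic term and the linear--cubic cross term contribute only lower-order variance. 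Thus $\Var(\hat T_1\mid\theta)=\tfrac{9}{32}\bigl(\tfrac1p+\tfrac2{p^2}\sum_{j<l}\theta_{jl}^2\bigr)\bigl(1+o_P(1)\bigr)$, and since $\sum_{j<l}\theta_{jl}^2$ concentrates at $\binom p2\,\sigma_*^2$ with $\sigma_*^2=\E\theta_{12}^2=\tfrac1k a^2+\tfrac{k-1}{k}b^2\asymp a^2$, the conditional variance $\Var(\hat T\mid\theta)=\tfrac1n\Var(\hat T_1\mid\theta)$ concentrates around $\E\Var(\hat T\mid\theta)$, which is of the order claimed in the lemma under the assumed sparsity.

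To assemble, note that conditionally on $\theta$, $S_n$ is a normalized average of $n$ i.i.d.\ mean-zero degree-six polynomials of a Gaussian vector, which have finite moments of every order (using $\E W^4=O(1)$), so Lyapunov's CLT applies with Lyapunov ratio tending to $0$ in $\theta$-probability; hence $S_n/\sqrt{\Var(\hat T\mid\theta)}\leadsto N(0,1)$ conditionally, with conditional characteristic function converging to $e^{-t^2/2}$ in probability. Writing
$$\frac{\hat T-T}{\sqrt{\E\Var(\hat T\mid\theta)}}=\Bigl(\tfrac{\Var(\hat T\mid\theta)}{\E\Var(\hat T\mid\theta)}\Bigr)^{1/2}\cdot\frac{S_n}{\sqrt{\Var(\hat T\mid\theta)}}+\frac{L_n}{\sqrt{\E\Var(\hat T\mid\theta)}},$$
the first ratio tends to $1$ in probability, the last term is $o_P(1)$, and the middle factor converges to $N(0,1)$ \emph{unconditionally} because its conditional characteristic function converges to $e^{-t^2/2}$ in probability, so by bounded convergence the unconditional one does too; Slutsky's theorem then gives the stated limit. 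The order statement $(\hat T-T)^2\asymp_P a^2/n$ follows since $S_n^2\asymp_P\Var(\hat T\mid\theta)$ (the conditional limit law is $\Var(\hat T\mid\theta)\,\chi^2_1$), while $L_n^2$ and the cross term are of strictly smaller order.

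The main obstacle is the conditional second-moment analysis of the degree-six Gaussian polynomial $\hat T_1$: isolating the linear-in-$\xi$ leading term, bounding the cubic and cross terms via Wick's formula, and proving the leading term concentrates; together with transferring the conditional Lyapunov CLT to an unconditional statement. Showing that $L_n$ is negligible is comparatively routine but is exactly where the assumption $a^4n=o(p)$ enters.
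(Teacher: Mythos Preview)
Your plan is essentially the paper's own proof: split $\hat T-T$ into the sampling term $S_n=\hat T-\E(\hat T\mid\theta)$ and the latent term $L_n=\E(\hat T\mid\theta)-T$, show $L_n^2=O_P(a^6/p)=o_P(a^2/n)$ via the condition $a^4=o(p/n)$, compute $\E\Var(\hat T_i\mid\theta)\asymp a^2$, show the conditional variance concentrates, and finish with a conditional Lyapunov CLT transferred to an unconditional one. The one genuine difference is your $\xi_j=X_{ij}^2-1$ expansion, which yields the clean identity $\hat T_i=-\tfrac{3}{8p}\sum_j\xi_j+\tfrac{1}{8\binom{p}{3}}\sum_{j<l<m}\xi_j\xi_l\xi_m$; the paper instead computes $\E T_i^2$ by brute-force Wick expansion of the six terms defining $T_i^2$ and matching the constants $1,6,12,20,30$ in $\E X_1^2\cdots X_m^2$. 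Your route is tidier and makes the dominant $\|X\|^2-p$ contribution transparent.

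One place needs sharpening. ``Finite moments of every order'' is not enough for Lyapunov: with $\Var(\hat T_i\mid\theta)\asymp a^2$, the fourth-moment Lyapunov ratio is $n\,\E[(\hat T_i-U(\theta))^4\mid\theta]\big/(na^2)^2$, so you need the \emph{quantitative} bound $\E[\hat T_i^4\mid\theta]=O_P(a^2)$ (not merely $O_P(1)$) to make the ratio $O_P(1/(na^2))=o_P(1)$; this is precisely where the hypothesis $|a|\gg n^{-1/2}$ enters. The paper gets this by another Wick computation showing $\E\bigl(\tfrac{1}{\binom{p}{3}}\sum X_j^2X_h^2X_l^2-1\bigr)^4=O(a^2)$ and similarly for the $\binom{p}{2}$ sum; in your language it amounts to bounding the fourth moment of $-\tfrac{3}{8p}\sum_j\xi_j$, which is $O(a^2)$ since $\Var(\sum_j\xi_j\mid\theta)=2\tr(\Sigma^2)\asymp p^2a^2$ and Gaussian-chaos hypercontractivity controls the fourth moment by the square of the variance. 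Once you make this explicit, your argument and the paper's coincide.
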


\begin{proof}[Proofs of Theorem \ref{thm:CLT-cov} and Theorem \ref{thm:power-cov}]
By Lemma \ref{lem:EV-cov}, Lemma \ref{lem:T-cov}, and the expansion (\ref{eq:expdecomp}), $T-\left(\frac{V}{E}\right)^3+\hat{T}-T$ is the dominating term that we need to analyze under the conditions. Lemma \ref{lem:T-cov} implies that
$$\frac{\hat{T}-\left(\frac{\hat{V}}{\hat{E}}\right)^3}{\sqrt{\mathbb{E}\Var(\hat{T}|\theta)}}\leadsto N(\delta,1),$$
where $\delta=\lim_n\frac{\sqrt{n}\left(T-\left(\frac{V}{E}\right)^3\right)}{\sqrt{\frac{9}{32}\left(\frac{1}{k}a^2+\frac{k-1}{k}b^2\right)}}$. Finally, it is sufficient to show $\hat{\sigma}^2/\mathbb{E}\Var(\hat{T}|\theta)=1+o_P(1)$. By Lemma \ref{lem:EV-cov}, $\hat{\sigma}^2=(1+o_P(1))\frac{1}{n-1}\sum_{i=1}^n(T_i-\bar{T})^2$. Since $\mathbb{E}T_i^4=O(a^2)$, $\hat{\sigma}^2/\mathbb{E}\Var(\hat{T}|\theta)=1+o_P(1)$ holds under the condition. The proof of Theorem \ref{thm:power-cov} follows the same argument used in the proof of Theorem \ref{thm:power}.
\end{proof}

\subsection{Proofs of Auxiliary Results}\label{sec:aux}

This section gives proofs of all lemmas in Section \ref{sec:pf-main}. To better organize the proofs, we delay some lengthy calculations in propositions in Section \ref{sec:pf-tech}.

\begin{proof}[Proof of Lemma \ref{lem:EV-order}]
We introduce the notation $\eta_{ij}=a\mathbb{I}\{Z_i=Z_j\}+b\mathbb{I}\{Z_i\neq Z_j\}$. Then, $\theta_{ij}=W_iW_j\eta_{ij}$.

First, note that
$$\mathbb{E}(\hat{E}-E)^2=\mathbb{E}(\hat{E}-\mathbb{E}(\hat{E}|W,Z))^2+\mathbb{E}(\mathbb{E}(\hat{E}|W,Z)-\mathbb{E}(\hat{E}|Z))^2+\mathbb{E}(\mathbb{E}(\hat{E}|Z)-E)^2.$$
Thus, we need to bound the three terms on the right hand side of the above equality respectively. The first term has bound
$$\mathbb{E}(\hat{E}-\mathbb{E}(\hat{E}|W,Z))^2=\mathbb{E}\Var(\hat{E}|W,Z)\leq \frac{1}{{n\choose 2}^2}\sum_{i<j}\mathbb{E}\theta_{ij}=O\left(\frac{a}{n^2}\right).$$
By Proposition \ref{prop:wij}, the second term is
$$\mathbb{E}(\mathbb{E}(\hat{E}|W,Z)-\mathbb{E}(\hat{E}|Z))^2=\mathbb{E}\left(\frac{1}{{n\choose 2}}\sum_{i<j}\eta_{ij}\left(W_iW_j-(\mathbb{E}W)^2\right)\right)^2=O\left(\frac{a^2}{n}\right).$$
The last term has bound
\begin{eqnarray*}
\mathbb{E}(\mathbb{E}(\hat{E}|Z)-E)^2 &=& (\mathbb{E}W)^4\mathbb{E}\left(\frac{1}{{n\choose 2}}\sum_{i<j}(\eta_{ij}-\mathbb{E}\eta_{ij})\right)^2 \\
&\leq& 2(\mathbb{E}W)^4(a^2+b^2)\mathbb{E}\left(\frac{1}{{n\choose 2}}\sum_{i<j}(\mathbb{I}\{Z_i=Z_j\}-\mathbb{P}(Z_i=Z_j))\right)^2 \\
&=& O\left(\frac{a^2}{n}\right),
\end{eqnarray*}
by Proposition \ref{prop:z-foundation}.
Under the condition $n^{-1}\ll a$, we get $\mathbb{E}(\hat{E}-E)^2=O\left(\frac{a^2}{n}\right)$.

Next, we study $\mathbb{E}(\hat{V}-V)^2$. Again, it can be decomposed into the sum of three terms.
$$\mathbb{E}(\hat{V}-V)^2=\mathbb{E}(\hat{V}-\mathbb{E}(\hat{V}|W,Z))^2+\mathbb{E}(\mathbb{E}(\hat{V}|W,Z)-\mathbb{E}(\hat{V}|Z))^2+\mathbb{E}(\mathbb{E}(\hat{V}|Z)-V)^2.$$
To study the first term, note that
\begin{eqnarray*}
&& \hat{V}-\mathbb{E}(\hat{V}|W,Z) \\
&=& \frac{1}{{n\choose 3}}\sum_{i<j<l}\frac{(A_{ij}A_{jl}-\theta_{ij}\theta_{jl}) + (A_{ji}A_{il}-\theta_{ij}\theta_{il}) + (A_{il}A_{kj}-\theta_{il}\theta_{jl})}{3} \\
&=& \frac{1}{{n\choose 3}}\sum_{i<j<l}\frac{(A_{ij}-\theta_{ij})(A_{jl}-\theta_{jl})+(A_{ij}-\theta_{ij})(A_{il}-\theta_{il})+(A_{il}-\theta_{il})(A_{jl}-\theta_{jl})}{3} \\
&& + \frac{1}{{n\choose 2}}\sum_{i<j}\left[\frac{1}{n-2}\sum_{l\notin\{i,j\}}(\theta_{jl}+\theta_{il})\right](A_{ij}-\theta_{ij}).
\end{eqnarray*}
Write
$$L_{ijl}=\frac{(A_{ij}-\theta_{ij})(A_{jl}-\theta_{jl})+(A_{ij}-\theta_{ij})(A_{il}-\theta_{il})+(A_{il}-\theta_{il})(A_{jl}-\theta_{jl})}{3}.$$
It is not hard to see that $L_{ijl}$ and $L_{i'j'l'}$ are uncorrelated if the sets $\{i,j,l\}$ and $\{i',j',l'\}$ are different. Thus,
$$\mathbb{E}\left(\frac{1}{{n\choose 3}}\sum_{i<j<l}L_{ijl}\right)^2=\frac{1}{{n\choose 3}^2}\sum_{i<j<l}\mathbb{E}L_{ijl}^2=O\left(\frac{a^2}{n^3}\right).$$
Moreover,
$$\mathbb{E}\left(\frac{1}{{n\choose 2}}\sum_{i<j}\left[\frac{1}{n-2}\sum_{l\notin\{i,j\}}(\theta_{jl}+\theta_{il})\right](A_{ij}-\theta_{ij})\right)^2=O\left(\frac{a^3}{n^2}\right).$$
Thus,
$$\mathbb{E}(\hat{V}-\mathbb{E}(\hat{V}|W,Z))^2=O\left(\frac{a^3}{n^2}\right),$$
under the condition $n^{-1}\ll a$. The second term is
\begin{eqnarray*}
&& \mathbb{E}(\mathbb{E}(\hat{V}|W,Z)-\mathbb{E}(\hat{V}|Z))^2 \\
&=& \mathbb{E}\left(\frac{1}{{n\choose 3}}\sum_{i<j<l}\frac{(W_i^2W_jW_l-\mathbb{E}W^2(\mathbb{E}W)^2)\eta_{ij}\eta_{il}}{3}\right.\\
&& \left. + \frac{(W_j^2W_iW_l-\mathbb{E}W^2(\mathbb{E}W)^2)\eta_{ij}\eta_{jl}+(W_l^2W_jW_i-\mathbb{E}W^2(\mathbb{E}W)^2)\eta_{jl}\eta_{il}}{3}\right)^2. \\
&=& O\left(\frac{a^4}{n}\right),
\end{eqnarray*}
by Proposition \ref{prop:meixiangdao}.
The third term is
$$(\mathbb{E}W^2)(\mathbb{E}W)^2\Var\left(\frac{1}{{n\choose 3}}\sum_{i<j<l}\frac{\eta_{ij}\eta_{il}+\eta_{ij}\eta_{jl}+\eta_{il}\eta_{jl}}{3}\right)=O\left(\frac{a^4}{n}\right),$$
by Proposition \ref{prop:z-turbo}.
Under the condition $n^{-1}\ll a$, we have $\mathbb{E}(\hat{V}-V)^2=O\left(\frac{a^4}{n}\right)$.
\end{proof}

\begin{proof}[Proof of Lemma \ref{lem:T-order}]
Recall the notation $\eta_{ij}=a\mathbb{I}\{Z_i=Z_j\}+b\mathbb{I}\{Z_i\neq Z_j\}$ that we have used in Lemma \ref{lem:EV-order}. It is helpful to state the decomposition
$$\hat{T}-T=\hat{T}-\mathbb{E}(\hat{T}|W,Z)+\mathbb{E}(\hat{T}|W,Z)-\mathbb{E}(\hat{T}|Z)+\mathbb{E}(\hat{T}|Z)-T.$$
We are going to argue that $\hat{T}-\mathbb{E}(\hat{T}|W,Z)$ is the dominating term. We first give bounds on the order of $\mathbb{E}(\hat{T}|W,Z)-\mathbb{E}(\hat{T}|Z)$ and $\mathbb{E}(\hat{T}|Z)-T$. First, by Proposition \ref{prop:wijk},
$$\mathbb{E}\left(\mathbb{E}(\hat{T}|W,Z)-\mathbb{E}(\hat{T}|Z)\right)^2=\mathbb{E}\left(\frac{1}{{n\choose 3}}\sum_{i<j<l}\eta_{ij}\eta_{jl}\eta_{il}(W_i^2W_j^2W_l^2-(\mathbb{E}W^2)^3)\right)^2=O\left(\frac{a^6}{n}\right).$$
Then, by Proposition \ref{prop:z-v8},
$$\mathbb{E}\left(\mathbb{E}(\hat{T}|Z)-T\right)^2=(\mathbb{E}W^2)^3\Var\left(\frac{1}{{n\choose 3}}\sum_{i<j<l}\eta_{ij}\eta_{jl}\eta_{il}\right)=O\left(\frac{a^6}{n}\right).$$

Now we study $\hat{T}-\mathbb{E}(\hat{T}|W,Z)$. It has the following expansion.
\begin{eqnarray*}
\hat{T}-\mathbb{E}(\hat{T}|W,Z) &=& \frac{1}{{n\choose 3}}\sum_{i<j<l}(A_{ij}A_{jl}A_{il}-\theta_{ij}\theta_{jl}\theta_{il}) \\
&=& \frac{1}{{n\choose 3}}\sum_{i<j<l}(A_{ij}-\theta_{ij})(A_{jl}-\theta_{jl})(A_{il}-\theta_{il}) \\
&& + \frac{1}{{n\choose 3}}\sum_{i<j<l} \left[\theta_{ij}(A_{jl}-\theta_{jl})(A_{il}-\theta_{il})+\theta_{jl}(A_{ij}-\theta_{ij})(A_{il}-\theta_{il})\right. \\
&& \left.+ \theta_{il}(A_{ij}-\theta_{ij})(A_{jl}-\theta_{jl})\right] \\
&& + \frac{3}{{n\choose 2}}\sum_{i<j}\left[\frac{1}{n-2}\sum_{l\notin\{i,j\}}\theta_{il}\theta_{jl}\right](A_{ij}-\theta_{ij}).
\end{eqnarray*}
Write
$$H_{ijl}=\theta_{ij}(A_{jl}-\theta_{jl})(A_{il}-\theta_{il})+\theta_{jl}(A_{ij}-\theta_{ij})(A_{il}-\theta_{il})+ \theta_{il}(A_{ij}-\theta_{ij})(A_{jl}-\theta_{jl}).$$
It is not hard to see that $H_{ijl}$ and $H_{i'j'l'}$ are uncorrelated if the sets $\{i,j,l\}$ and $\{i',j',l'\}$ are different. Thus,
$$\mathbb{E}\left(\frac{1}{{n\choose 3}}\sum_{i<j<l}H_{ijl}\right)^2=\frac{1}{{n\choose 3}^2}\sum_{i<j<l}\mathbb{E}H_{ijl}^2=O\left(\frac{a^4}{n^3}\right).$$
Moreover,
$$\mathbb{E}\left(\frac{3}{{n\choose 2}}\sum_{i<j}\left[\frac{1}{n-2}\sum_{l\notin\{i,j\}}\theta_{il}\theta_{jl}\right](A_{ij}-\theta_{ij})\right)^2=O\left(\frac{a^5}{n^2}\right).$$
Write
$$G_{ijl}=(A_{ij}-\theta_{ij})(A_{jl}-\theta_{jl})(A_{il}-\theta_{il}).$$
It is not hard to see that $G_{ijl}$ and $G_{i'j'l'}$ are uncorrelated if the sets $\{i,j,l\}$ and $\{i',j',l'\}$ are different. Therefore,
$$\mathbb{E}\left(\frac{1}{{n\choose 3}}\sum_{i<j<l}G_{ijl}\right)^2=\frac{1}{{n\choose 3}^2}\sum_{i<j<l}\mathbb{E}\theta_{ij}\theta_{jl}\theta_{il}(1-\theta_{ij})(1-\theta_{jl})(1-\theta_{il})=(1+o(1))\frac{T}{{n\choose 3}}.$$
Since $T/{n\choose 3}\asymp\frac{a^3}{n^3}$, $\frac{1}{{n\choose 3}}\sum_{i<j<l}G_{ijl}$ is the dominating term of $\hat{T}-T$ under the condition $n^{-1}\ll a\asymp b\ll n^{-2/3}$. Thus, the asymptotic distribution of $\hat{T}-T$ is the same as that of $\frac{1}{{n\choose 3}}\sum_{i<j<l}G_{ijl}$.

Define the set
$$\tilde{W}_n=\left\{\left|\frac{1}{n}\sum_{i=1}^nW_i^2-\mathbb{E}W^2\right|\leq n^{-1/3}\right\}.$$
Under the condition $\mathbb{E}W^4=O(1)$, we obtain $\mathbb{P}(\tilde{W}_n^c)=O(n^{-1/3})$ by Chebyshev's ienquality. Moreover, for any $(W_1,...,W_n)\in \tilde{W}_n$, we have $\frac{1}{n}\sum_{i=1}^nW_i^2=(1+o(1))\mathbb{E}W^2=1+o(1)$. In order to show $\frac{1}{{n\choose 3}}\sum_{i<j<l}G_{ijl}$ converges to a Gaussian distribution, we construct a martingale conditioning on $Z$ and $W$, and then apply the martingale central limit theorem in \cite{hall2014martingale}. Define
$$S_{n,m}=\frac{\sum_{i=1}^{m-2}\sum_{j=i+1}^{m-1}\sum_{l=j+1}^m(A_{ij}A_{jl}A_{il}-\theta_{ij}\theta_{jl}\theta_{il})}{\sqrt{\sum_{1\leq i<j<l\leq n}\theta_{ij}\theta_{jl}\theta_{il}(1-\theta_{ij})(1-\theta_{il})(1-\theta_{jl})}}.$$
Then, $\{S_{n,m}\}$ is a conditional martingale with respect to the filtration $\sigma(\{A_{ij}\}_{1\leq i<j\leq m}|W,Z)$. When $\theta_{ij}$ is a constant for all $(i,j)$, this martingale was analyzed in \cite{gao2017testing}. In addition, the same analysis in \cite{gao2017testing} can also be applied to the setting here almost without any change. The only difference we need to check here is a lower bound for $\sum_{1\leq i<j<l\leq n}\theta_{ij}\theta_{jl}\theta_{il}(1-\theta_{ij})(1-\theta_{il})(1-\theta_{jl})$. Since $a\asymp b=o(1)$, it is of the same order as $\sum_{1\leq i<j<l\leq n}\theta_{ij}\theta_{jl}\theta_{il}$, which is lower bounded by ${n\choose 3}(a\wedge b)^3\asymp n^3a^3$. Hence, we can apply the same argument in \cite{gao2017testing} and obtain
$$\frac{\sum_{i<j<l}G_{ijl}}{\sqrt{\sum_{1\leq i<j<l\leq n}\theta_{ij}\theta_{jl}\theta_{il}(1-\theta_{ij})(1-\theta_{il})(1-\theta_{jl})}}\Big| (W,Z)\leadsto N(0,1).$$
This asymptotic result holds for all $w\in \tilde{W}_n$ and $Z$. Since $\mathbb{P}(\tilde{W}_n^c)=O(n^{-1/3})$, it also holds without conditioning on $W$ and $Z$. Moreover, since
$$\frac{1}{{n\choose 3}}\sum_{1\leq i<j<l\leq n}\theta_{ij}\theta_{jl}\theta_{il}(1-\theta_{ij})(1-\theta_{il})(1-\theta_{jl})=(1+o_P(1))\frac{1}{{n\choose 3}}\sum_{1\leq i<j<l\leq n}\theta_{ij}\theta_{jl}\theta_{il},$$
which is exactly $\mathbb{E}(\hat{T}|W,Z)$. We have just obtained that $\mathbb{E}(\hat{T}-\mathbb{E}(\hat{T}|W,Z))^2=O\left(\frac{a^3}{n^3}\right)=o(T^2)$. Thus,
$$\frac{\frac{1}{\sqrt{{n\choose 3}}}\sum_{i<j<l}G_{ijl}}{\sqrt{{T}}}\leadsto N(0,1),$$
by Slutsky Theorem. Since $\frac{1}{{n\choose 3}}\sum_{i<j<l}G_{ijl}$ is the dominating term of $\hat{T}-T$, the desired result is obtained.
\end{proof}

\begin{proof}[Proof of Lemma \ref{lem:EV-NG}]
We first introduce some notations. Define
\begin{eqnarray*}
\tilde{E} &=& \frac{1}{{n\choose 2}\left(\frac{rp}{k}\right)^2}\sum_{1\leq i<j\leq n}A_{ij}A_{0i}A_{0j}, \\
\tilde{V} &=& \frac{1}{3{n\choose 3}\left(\frac{rp}{k}\right)^3}\sum_{1\leq i<j<l\leq n}(A_{ij}A_{il}+A_{ij}A_{jl}+A_{il}A_{jl})A_{0i}A_{0j}A_{0l}.
\end{eqnarray*}
By the definition of each $A_{0i}$, it can be written as $A_{0i}=\mathbb{I}\{Z_i\in\mathcal{R}\}S_i$, where $S_i\sim\text{Bernoulli}(p)$. We use the notations
$$\tilde{A}_{ij}=A_{ij}\mathbb{I}\{Z_i\in\mathcal{R}\}\mathbb{I}\{Z_j\in\mathcal{R}\}\quad\text{and}\quad\tilde{\theta}_{ij}=\theta_{ij}\mathbb{I}\{Z_i\in\mathcal{R}\}\mathbb{I}\{Z_j\in\mathcal{R}\}.$$
It is easy to see that $\tilde{A}_{ij}|(W,Z)\sim\text{Bernoulli}(\tilde{\theta}_{ij})$. We also define $\tilde{\eta}_{ij}=\eta_{ij}\mathbb{I}\{Z_i\in\mathcal{R}\}\mathbb{I}\{Z_j\in\mathcal{R}\}$, so that $\tilde{\theta}_{ij}=W_iW_j\tilde{\eta}_{ij}$.

We first bound $(\hat{E}-\tilde{E})^2$ and $(\hat{V}-\tilde{V})^2$. Recall that $m=\sum_{i=1}^nA_{0i}$. Since $A_{0i}\sim\text{Bernoulli}\left(\frac{rp}{k}\right)$, by Proposition \ref{prop:m-performance}, we have
$$\left({n\choose 2}\left(\frac{rp}{k}\right)^2-\sum_{i<j}A_{0i}A_{0j}\right)^2=\left(\sum_{i<j}\left(A_{0i}A_{0j}-\left(\frac{rp}{k}\right)^2\right)\right)^2=O_P\left(\left(\frac{nrp}{k}\right)^3\right),$$
and
$$\left({n\choose 3}\left(\frac{rp}{k}\right)^3-\sum_{i<j<l}A_{0i}A_{0j}A_{0l}\right)^2=\left(\sum_{i<j<l}\left(A_{0i}A_{0j}A_{0l}-\left(\frac{rp}{k}\right)^3\right)\right)^2=O_P\left(\left(\frac{nrp}{k}\right)^5\right),$$
under the condition $\frac{nrp}{k}\rightarrow\infty$. This leads to
$$(\hat{E}-\tilde{E})^2=\left(\sum_{1\leq i<j\leq n}A_{ij}A_{0i}A_{0j}\right)^2\left|\frac{1}{\sum_{i<j}A_{0i}A_{0j}}-\frac{1}{{n\choose 2}\left(\frac{rp}{k}\right)^2}\right|^2=O_P\left(\frac{a^2k}{nrp}\right).$$
A similar argument leads to
$$(\hat{V}-\tilde{V})^2=O_P\left(\frac{a^4k}{nrp}\right).$$

Next, we give bounds for $(\tilde{E}-E)^2$ and $(\tilde{V}-V)^2$. We use $A$ to represent $\{A_{ij}\}_{1\leq i<j\leq n}$.
We have the decomposition
\begin{eqnarray*}
\mathbb{E}(\tilde{E}-E)^2 &=& \mathbb{E}(\tilde{E}-\mathbb{E}(\tilde{E}|A,W,Z))^2 + \mathbb{E}(\mathbb{E}(\tilde{E}|A,W,Z)-\mathbb{E}(\tilde{E}|W,Z))^2 \\
&& +\mathbb{E}(\mathbb{E}(\tilde{E}|W,Z)-\mathbb{E}(\tilde{E}|Z))^2+\mathbb{E}(\mathbb{E}(\tilde{E}|Z)-E)^2,
\end{eqnarray*}
Note that
$$\mathbb{E}(\tilde{E}|A,W,Z)=\left(\frac{k}{r}\right)^2\frac{1}{{n\choose 2}}\sum_{1\leq i<j\leq n}\tilde{A}_{ij}.$$
This gives,
$$\tilde{E}-\mathbb{E}(\tilde{E}|A,W,Z)=\left(\frac{k}{rp}\right)^2\frac{1}{{n\choose 2}}\sum_{1\leq i<j\leq n}\tilde{A}_{ij}\left(S_iS_j-p^2\right),$$
where $\{S_i\}$ are independent of $\{\tilde{A}_{ij}\}$. Thus, by Proposition \ref{prop:Sij}, under the condition $\frac{nrpa}{k}\rightarrow\infty$,
$$\mathbb{E}\left(\tilde{E}-\mathbb{E}(\tilde{E}|A,W,Z)\right)^2=O\left(\frac{a^2k}{prn}\right).$$
For the second term,
$$\mathbb{E}(\tilde{E}|A,W,Z)-\mathbb{E}(\tilde{E}|W,Z)=\left(\frac{k}{r}\right)^2\frac{1}{{n\choose 2}}\sum_{1\leq i<j\leq n}(\tilde{A}_{ij}-\tilde{\theta}_{ij}).$$
Therefore,
$$\mathbb{E}\left(\mathbb{E}(\tilde{E}|A,W,Z)-\mathbb{E}(\tilde{E}|W,Z)\right)^2\leq \left(\frac{k}{r}\right)^4\frac{1}{{n\choose 2}^2}\sum_{1\leq i<j\leq n}\mathbb{E}\tilde{\theta}_{ij}=O\left(a\left(\frac{k}{rn}\right)^2\right).$$
For the third term,
$$\mathbb{E}(\tilde{E}|W,Z)-\mathbb{E}(\tilde{E}|Z)=\left(\frac{k}{r}\right)^2\frac{1}{{n\choose 2}}\sum_{1\leq i<j\leq n}\tilde{\eta}_{ij}\left(W_iW_j-(\mathbb{E}W)^2\right).$$
By Proposition \ref{prop:volvo},
$$\mathbb{E}\left(\mathbb{E}(\tilde{E}|W,Z)-\mathbb{E}(\tilde{E}|Z)\right)^2=O\left(a^2\left(\frac{k}{rn}\right)\right).$$
For the last term,
$$\mathbb{E}(\tilde{E}|Z)-E=\left(\frac{k}{r}\right)^2\frac{1}{{n\choose 2}}\sum_{1\leq i<j\leq n}(\tilde{\eta}_{ij}-\mathbb{E}\tilde{\eta}_{ij})(\mathbb{E}W)^2.$$
By Proposition \ref{prop:civic-si},
$$\mathbb{E}\left(\mathbb{E}(\tilde{E}|Z)-E\right)^2=O\left(a^2\left(\frac{k}{rn}\right)\right).$$
Combining above bounds, we get $\mathbb{E}(\tilde{E}-E)^2=O\left(a^2\left(\frac{k}{prn}\right)\right)$. This leads to $(\hat{E}-E)^2=O_P\left(a^2\left(\frac{k}{prn}\right)\right)$.

We analyze $(\tilde{V}-V)^2$ in a similar way. First, we have the decomposition
\begin{eqnarray*}
\mathbb{E}(\tilde{V}-V)^2 &=& \mathbb{E}(\tilde{V}-\mathbb{E}(\tilde{V}|A,W,Z))^2 + \mathbb{E}(\mathbb{E}(\tilde{V}|A,W,Z)-\mathbb{E}(\tilde{V}|W,Z))^2 \\
&& +\mathbb{E}(\mathbb{E}(\tilde{V}|W,Z)-\mathbb{E}(\tilde{V}|Z))^2+\mathbb{E}(\mathbb{E}(\tilde{V}|Z)-V)^2.
\end{eqnarray*}
Note that
$$\mathbb{E}(\tilde{V}|A,W,Z)=\left(\frac{k}{r}\right)^3\frac{1}{3{n\choose 3}}\sum_{1\leq i<j<l\leq n}(\tilde{A}_{ij}\tilde{A}_{il}+\tilde{A}_{ij}\tilde{A}_{jl}+\tilde{A}_{il}\tilde{A}_{jl}).$$
We can also obtain bounds for $\mathbb{E}(\tilde{V}-\mathbb{E}(\tilde{V}|A,W,Z))^2$, $\mathbb{E}(\mathbb{E}(\tilde{V}|A,W,Z)-\mathbb{E}(\tilde{V}|W,Z))^2$,
$\mathbb{E}(\mathbb{E}(\tilde{V}|W,Z)-\mathbb{E}(\tilde{V}|Z))^2$, and $\mathbb{E}(\mathbb{E}(\tilde{V}|Z)-V)^2$. Under the condition $\frac{nrpa}{k}\rightarrow\infty$, we have
$$\mathbb{E}(\tilde{V}-\mathbb{E}(\tilde{V}|A,W,Z))^2=O\left(\frac{a^4k}{prn}\right),$$
$$\mathbb{E}(\mathbb{E}(\tilde{V}|A,W,Z)-\mathbb{E}(\tilde{V}|W,Z))^2=O\left(a^3\left(\frac{k}{rn}\right)^2\right),$$
$$\mathbb{E}(\mathbb{E}(\tilde{V}|W,Z)-\mathbb{E}(\tilde{V}|Z))^2=O\left(a^4\left(\frac{k}{rn}\right)\right),$$
and
$$\mathbb{E}(\mathbb{E}(\tilde{V}|Z)-V)^2=O\left(a^4\left(\frac{k}{rn}\right)\right).$$
See Proposition \ref{prop:polestar}, Proposition \ref{prop:v-tec} and Proposition \ref{prop:civic-type-R} for derivations.
Thus, $(\tilde{V}-V)^2=O_P\left(\frac{a^4k}{prn}\right)$, which leads to $(\hat{V}-V)^2=O_P\left(\frac{a^4k}{prn}\right)$.
\end{proof}

\begin{proof}[Proof of Lemma \ref{lem:T-NG}]
Similar to the proof of Lemma \ref{lem:EV-NG}, we define
$$\tilde{T}=\frac{1}{{n\choose 3}\left(\frac{rp}{k}\right)^3}\sum_{1\leq i<j<l\leq n}A_{ij}A_{il}A_{jl}A_{0i}A_{0j}A_{0l}.$$
By Proposition \ref{prop:m-performance}, we have
$$\left({n\choose 3}\left(\frac{rp}{k}\right)^3-\sum_{i<j<l}A_{0i}A_{0j}A_{0l}\right)^2=\left(\sum_{i<j<l}\left(A_{0i}A_{0j}A_{0l}-\left(\frac{rp}{k}\right)^3\right)\right)^2=O_P\left(\left(\frac{nrp}{k}\right)^5\right),$$
under the condition $\frac{nrp}{k}\rightarrow\infty$. Thus,
\begin{eqnarray*}
(\hat{T}-\tilde{T})^2 &=& \left(\sum_{1\leq i<j<l\leq n}A_{ij}A_{il}A_{jl}A_{0i}A_{0j}A_{0l}\right)^2\left|\frac{1}{{n\choose 3}\left(\frac{rp}{k}\right)^3}-\frac{1}{\sum_{i<j<l}A_{0i}A_{0j}A_{0l}}\right|^2 \\
&=& O\left(\frac{a^6k}{nrp}\right).
\end{eqnarray*}
Recall the definition of $\{S_i\}$, $\tilde{A}_{ij}$, $\tilde{\theta}_{ij}$ and $\tilde{\eta}_{ij}$ in the proof of Lemma \ref{lem:EV-NG}. Note that
$$\mathbb{E}(\tilde{T}|A,W,Z)=\left(\frac{k}{r}\right)^3\frac{1}{{n\choose 2}}\sum_{1\leq i<j<l\leq n}\tilde{A}_{ij}\tilde{A}_{il}\tilde{A}_{jl}.$$
Then,
$$\tilde{T}-\mathbb{E}(\tilde{T}|A,W,Z)=\frac{1}{{n\choose 3}\left(\frac{rp}{k}\right)^3}\sum_{1\leq i<j<l\leq n}\tilde{A}_{ij}\tilde{A}_{il}\tilde{A}_{jl}(S_iS_jS_l-p^3).$$
By Proposition \ref{prop:cls63}, we have
$$\mathbb{E}(\tilde{T}-\mathbb{E}(\tilde{T}|A,W,Z))^2=O\left(\frac{a^6k}{rnp}\right),$$
under the condition $\frac{nrpa}{k}\rightarrow\infty$.

Now we study $\mathbb{E}(\tilde{T}|A,W,Z)-T$. It has the following expansion,
\begin{eqnarray*}
\mathbb{E}(\tilde{T}|A,W,Z)-T &=& \left(\frac{k}{r}\right)^3\frac{1}{{n\choose 3}}\sum_{i<j<l}(\tilde{A}_{ij}-\tilde{\theta}_{ij})(\tilde{A}_{jl}-\tilde{\theta}_{jl})(\tilde{A}_{il}-\tilde{\theta}_{il}) \\
&& + \left(\frac{k}{r}\right)^3\frac{1}{{n\choose 3}}\sum_{i<j<l} \left[\tilde{\theta}_{ij}(\tilde{A}_{jl}-\tilde{\theta}_{jl})(\tilde{A}_{il}-\tilde{\theta}_{il})+\tilde{\theta}_{jl}(\tilde{A}_{ij}-\tilde{\theta}_{ij})(\tilde{A}_{il}-\tilde{\theta}_{il})\right. \\
&& \left.+ \tilde{\theta}_{il}(\tilde{A}_{ij}-\tilde{\theta}_{ij})(\tilde{A}_{jl}-\tilde{\theta}_{jl})\right] \\
&& + \left(\frac{k}{r}\right)^3\frac{3}{{n\choose 2}}\sum_{i<j}\left[\frac{1}{n-2}\sum_{l\notin\{i,j\}}\tilde{\theta}_{il}\tilde{\theta}_{jl}\right](\tilde{A}_{ij}-\tilde{\theta}_{ij}) \\
&& + \mathbb{E}(\tilde{T}|W,Z)-T.
\end{eqnarray*}
Among the four terms in the expansion, we will argue that the first term is the dominating term. The variance of the first term is
\begin{eqnarray*}
&& \mathbb{E}\left(\left(\frac{k}{r}\right)^3\frac{1}{{n\choose 3}}\sum_{i<j<l}(\tilde{A}_{ij}-\tilde{\theta}_{ij})(\tilde{A}_{jl}-\tilde{\theta}_{jl})(\tilde{A}_{il}-\tilde{\theta}_{il})\right)^2 \\
&=& \left(\frac{k}{r}\right)^6\frac{1}{{n\choose 3}^2}\sum_{i<j<l}\mathbb{E}\tilde{\theta}_{ij}\tilde{\theta}_{jl}\tilde{\theta}_{il}(1-\tilde{\theta}_{ij})(1-\tilde{\theta}_{jl})(1-\tilde{\theta}_{il}) \\
&=& (1+o(1))\left(\frac{k}{r}\right)^3\frac{1}{{n\choose 3}^2}\sum_{i<j<l}\mathbb{E}\theta_{ij}\theta_{jl}\theta_{il} \\
&\asymp& \frac{a^3k^3}{n^3r^3}.
\end{eqnarray*}
Using similar calculation, the variances of the second and the third terms can be bounded by $O\left(\frac{a^4k^3}{n^3r^3}\right)$ and $O\left(\frac{a^5k^2}{n^2r^2}\right)$, respectively. For the fourth term, we have
\begin{eqnarray*}
&& \mathbb{E}\left(\mathbb{E}(\tilde{T}|W,Z)-T\right)^2 \\
&=& \mathbb{E}\left(\mathbb{E}(\tilde{T}|W,Z)-\mathbb{E}(\tilde{T}|Z)\right)^2 + \mathbb{E}(\mathbb{E}(\tilde{T}|Z)-T)^2 \\
&=& \mathbb{E}\left(\left(\frac{k}{r}\right)^3\frac{1}{{n\choose 3}}\sum_{i<j<l}\tilde{\eta}_{ij}\tilde{\eta}_{jl}\tilde{\eta}_{il}(W_i^2W_j^2W_l^2-(\mathbb{E}W^2)^3)\right)^2 \\
&& + (\mathbb{E}W^2)^3\Var\left(\left(\frac{k}{r}\right)^3\frac{1}{{n\choose 3}}\sum_{i<j<l}\tilde{\eta}_{ij}\tilde{\eta}_{jl}\tilde{\eta}_{il}\right) \\
&=& O\left(\frac{a^6k}{nr}\right),
\end{eqnarray*}
by Proposition \ref{prop:glorymu} and Proposition \ref{prop:E63}.

Combining all the bounds above, under the condition $a^6=o\left(\left(\frac{k}{nr}\right)^3\wedge p^2\left(\frac{k}{nr}\right)^4\right)$, the dominating term of $\hat{T}-T$ is 
$$\left(\frac{k}{r}\right)^3\frac{1}{{n\choose 3}}\sum_{i<j<l}(\tilde{A}_{ij}-\tilde{\theta}_{ij})(\tilde{A}_{jl}-\tilde{\theta}_{jl})(\tilde{A}_{il}-\tilde{\theta}_{il}),$$
and we will find its asymptotic distribution. By the same martingale argument that we have used in the proof of Lemma \ref{lem:EV-NG}, we obtain
$$\sqrt{\left(\frac{k}{r}\right)^3\frac{1}{{n\choose 3}}}\sum_{i<j<l}(\tilde{A}_{ij}-\tilde{\theta}_{ij})(\tilde{A}_{jl}-\tilde{\theta}_{jl})(\tilde{A}_{il}-\tilde{\theta}_{il})/\sqrt{T}\leadsto N(0,1).$$
This completes the proof.
\end{proof}

\begin{proof}[Proof of Lemma \ref{lem:EV-cov}]
Note that
$$\mathbb{E}(\hat{E}-E)^2=\mathbb{E}(\hat{E}-\mathbb{E}(\hat{E}|\theta))^2+\mathbb{E}(\mathbb{E}(\hat{E}|\theta)-E)^2.$$
The second term $\mathbb{E}(\mathbb{E}(\hat{E}|\theta)-E)^2$ has the same expression as the one in the network setting, which has already been bounded in the proof of Lemma \ref{lem:EV-order}. Therefore, $\mathbb{E}(\mathbb{E}(\hat{E}|\theta)-E)^2=O(\frac{a^2}{p})$ under the condition $p^{-1}\ll |a|$. The first term is
$$\mathbb{E}(\hat{E}-\mathbb{E}(\hat{E}|\theta))^2=\frac{1}{n^2}\sum_{i=1}^n\mathbb{E}(E_i-\mathbb{E}(E_i|\theta))^2.$$
For each $i$,
$$\mathbb{E}(E_i-\mathbb{E}(E_i|\theta))^2\leq \mathbb{E}E_i^2=\frac{1}{{p\choose 2}^2}\sum_{j<l}\sum_{j'<l'}\mathbb{E}X_jX_lX_{j'}X_{l'}.$$
There are three situations for calculating $\mathbb{E}X_jX_lX_{j'}X_{l'}$. When the sets $\{j,l\}$ and $\{j',l'\}$ do not have intersection,
$$\mathbb{E}X_jX_lX_{j'}X_{l'}=3(\mathbb{E}W)^4\left(\frac{1}{k}a+\frac{k-1}{k}b\right)^2=O(a^2).$$
When the sets $\{j,l\}$ and $\{j',l'\}$ are intersected by one element, we have
$$\mathbb{E}X_jX_lX_{j'}X_{l'}=2(\mathbb{E}W)^2\left(\frac{1}{k}a+\frac{k-1}{k}b\right)^2+(\mathbb{E}W)^2\left(\frac{1}{k}a+\frac{k-1}{k}b\right)=O(|a|).$$
When the sets $\{j,l\}$ and $\{j',l'\}$ are identical,
$$\mathbb{E}X_jX_lX_{j'}X_{l'}=1+2\left(\frac{1}{k}a^2+\frac{k-1}{k}b^2\right)=O(1).$$
All of the above moments calculation can be done through Wick's formula. Under the condition $p^{-1}\ll |a|$, we have $\mathbb{E}(E_i-\mathbb{E}(E_i|\theta))^2=O(a^{2})$, and thus $\mathbb{E}(\hat{E}-\mathbb{E}(\hat{E}|\theta))^2=O\left(\frac{a^2}{n}\right)$.

We study $(\hat{V}-V)^2$ in the same way. First, we have
$$\mathbb{E}(\hat{V}-V)^2=\mathbb{E}(\hat{V}-\mathbb{E}(\hat{V}|\theta))^2+\mathbb{E}(\mathbb{E}(\hat{V}|\theta)-V)^2.$$
By the proof of Lemma \ref{lem:EV-order}, we have $\mathbb{E}(\mathbb{E}(\hat{V}|\theta)-V)^2=O(\frac{a^4}{p})$ under the condition $p^{-1}\ll |a|$. For the first term, we have
$$\mathbb{E}(\hat{V}-\mathbb{E}(\hat{V}|\theta))^2=\frac{1}{n^2}\sum_{i=1}^n\mathbb{E}(V_i-\mathbb{E}(V_i|\theta))^2.$$
For each $i$, we have $\mathbb{E}(V_i-\mathbb{E}(V_i|\theta))^2\leq \mathbb{E}V_i^2$. By the definition of $V_i$, it is sufficient to bound $\mathbb{E}\left(\frac{1}{{p\choose 3}}\sum_{j<h<l}X_{ij}X_{ih}X_{il}(X_{ij}+X_{ih}+X_{il})\right)^2$. By its definition, it can be written as
$$\frac{1}{{p\choose 3}^2}\sum_{j<h<l}\sum_{j'<h'<l'}\mathbb{E}X_{ij}X_{ih}X_{il}(X_{ij}+X_{ih}+X_{il})X_{ij'}X_{ih'}X_{il'}(X_{ij'}+X_{ih'}+X_{il'}).$$
To make the presentation concise, we will omit some details in the application of Wick's formula when calculating various moments. When the sets $\{j,h,l\}$ and $\{j',h',l'\}$ do not intersect, we have
$$\mathbb{E}X_{ij}X_{ih}X_{il}(X_{ij}+X_{ih}+X_{il})X_{ij'}X_{ih'}X_{il'}(X_{ij'}+X_{ih'}+X_{il'})=O(a^2).$$
Therefore, using the similar argument in bounding $\mathbb{E}E_i^2$, we have $\mathbb{E}V_i^2=O(a^2)$ under the condition $p^{-1}\ll |a|$. This completes the proof. 
\end{proof}

\begin{proof}[Proof of Lemma \ref{lem:T-cov}]
First, observe that
$$\hat{T}-T=\hat{T}-\mathbb{E}(\hat{T}|\theta)+\mathbb{E}(\hat{T}|\theta)-T.$$
According to the proof of Lemma \ref{lem:T-order}, $(\mathbb{E}(\hat{T}|\theta)-T)^2=O_P\left(\frac{a^6}{p}\right)$ under the condition $p^{-1}\ll |a|$. Next, we calculate the expected variance of $\hat{T}-\mathbb{E}(\hat{T}|\theta)$. Note that
$$\mathbb{E}\Var(T_i|\theta)=\mathbb{E}T_i^2-(\mathbb{E}T_i)^2-\mathbb{E}(\mathbb{E}(\hat{T}|\theta)-T)^2.$$
The first term is
\begin{eqnarray*}
\mathbb{E}T_i^2 &=& \frac{1}{64{p\choose 3}^2}\sum_{j<h<l}\sum_{j'<h'<l'}\mathbb{E}X_{ij}^2X_{ih}^2X_{il}^2X_{ij'}^2X_{ih'}^2X_{il'}^2 \\
&& + \frac{9}{64{p\choose 2}^2}\sum_{j<h}\sum_{j'<h'}\mathbb{E}X_{ij}^2X_{ih}^2X_{ij'}^2X_{ih'}^2 \\
&& - \frac{6}{64{p\choose 2}{p\choose 3}}\sum_{j<h<l}\sum_{j'<h'}\mathbb{E}X_{ij}^2X_{ih}^2X_{il}^2X_{ij'}^2X_{ih'}^2 \\
&& + \frac{1}{16{p\choose 3}}\sum_{j<h<l}\mathbb{E}X_{ij}^2X_{ih}^2X_{il}^2 - \frac{3}{16{p\choose 2}}\sum_{j<h}\mathbb{E}X_{ij}^2X_{ih}^2 + \frac{1}{16}
\end{eqnarray*}
The following calculations by Wick's formula are helpful
\begin{eqnarray*}
\mathbb{E}X_1^2X_2^2 &=& 1 + 2\mathbb{E}\theta_{12}^2, \\
\mathbb{E}X_1^2X_2^2X_3^2 &=& 1 + 6\mathbb{E}\theta_{12}^2 + o(a^2), \\
\mathbb{E}X_1^2X_2^2X_3^2X_4^2 &=& 1 + 12\mathbb{E}\theta_{12}^2 + o(a^2), \\
\mathbb{E}X_1^2X_2^2X_3^2X_4^2X_5^2 &=& 1 + 20\mathbb{E}\theta_{12}^2 + o(a^2) \\
\mathbb{E}X_1^2X_2^2X_3^2X_4^2X_5^2X_6^2 &=& 1 + 30\mathbb{E}\theta_{12}^2 + o(a^2).
\end{eqnarray*}
Therefore, under the condition $p^{-1}\ll |a|$, we have $\mathbb{E}T_i^2=(1+o(1))\frac{9}{32}\mathbb{E}\theta_{12}^2$. Since $(\mathbb{E}T_i)^2=O(a^6)$ and $\mathbb{E}(\mathbb{E}(\hat{T}|\theta)-T)^2=O(a^6/p)$, we have $\mathbb{E}\Var(T_i|\theta)=(1+o(1))\frac{9}{32}\mathbb{E}\theta_{12}^2\asymp a^2$. This leads to
$$\mathbb{E}(\hat{T}-\mathbb{E}(\hat{T}|\theta))^2=\frac{1}{n^2}\sum_{i=1}^n\mathbb{E}\Var(T_i|\theta)\asymp \frac{a^2}{n}.$$
Under the condition $a^4=o(p/n)$, $\hat{T}-\mathbb{E}(\hat{T}|\theta)$ is the dominating term of $\hat{T}-T$. Thus, the asymptotic distribution of $\hat{T}-T$ is determined by that of $\hat{T}-\mathbb{E}(\hat{T}|\theta)$.

To prove a central limit theorem for $\hat{T}-\mathbb{E}(\hat{T}|\theta)$, we need to establish the Lyapunov's condition,
\begin{equation}
\frac{\sum_{i=1}^n\mathbb{E}\left[(T_i-\mathbb{E}(\hat{T}|\theta))^4|\theta\right]}{\left(\sum_{i=1}^n\Var(T_i|\theta)\right)^2}=o_P(1).\label{eq:lyapunov}
\end{equation}
Note that
\begin{eqnarray*}
&& \left|\Var(T_i|\theta)-\mathbb{E}\Var(T_i|\theta)\right|^2 \\
&\leq& 3(\mathbb{E}(T_i^2|\theta)-\mathbb{E}T_i^2)^2 + 3\left((\mathbb{E}(T_i|\theta))^2-(\mathbb{E}T_i)^2\right)^2 + 3\left(\mathbb{E}(\mathbb{E}(\hat{T}|\theta)-T)^2\right)^2.
\end{eqnarray*}
The first term in the above bound can be bounded by $O_P(a^4/p)$ by using similar analysis as in the proof of previous lemmas. The second and the third term can be bounded by $O_P(a^{12}/p)$ and $O_P(a^{12}/p^2)$ by $\mathbb{E}(\mathbb{E}(\hat{T}|\theta)-T)^2=O(a^6/p)$. Since $\mathbb{E}\Var(T_i|\theta)\asymp a^2$, we get $\Var(T_i|\theta)=(1+o_P(1))\mathbb{E}\Var(T_i|\theta)$. As a result $\left(\sum_{i=1}^n\Var(T_i|\theta)\right)^2\asymp_P n^2a^4$, which serves as a lower bound for the denominator of (\ref{eq:lyapunov}).

To bound the numerator of (\ref{eq:lyapunov}), it is sufficient to give a bound for $\mathbb{E}T_i^4$. Since
$$\mathbb{E}\left(\frac{1}{{p\choose 3}}\sum_{j<h<l}X_{ij}^2X_{ih}^2X_{il}^2-1\right)^4=O(a^2),$$
and
$$\mathbb{E}\left(\frac{1}{{p\choose 2}}\sum_{j<h}X_{ij}^2X_{ih}^2-1\right)^4=O(a^2),$$
we have $\mathbb{E}T_i^4=O(a^2)$. This implies that Lyapunov's condition holds if $n^{-1}\ll a^2$. Therefore, $\frac{\hat{T}-\mathbb{E}(\hat{T}|\theta)}{\sqrt{\Var(\hat{T}|\theta)}}\leadsto N(0,1)$, which leads to the desired result.
\end{proof}

\subsection{Proofs of Technical Results}\label{sec:pf-tech}

\begin{proposition}\label{prop:wij}
Under the same setting of Lemma \ref{lem:EV-order},
$$\mathbb{E}\left(\frac{1}{{n\choose 2}}\sum_{i<j}\eta_{ij}\left(W_iW_j-(\mathbb{E}W)^2\right)\right)^2=O\left(\frac{a^2}{n}\right).$$
\end{proposition}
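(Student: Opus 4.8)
The plan is to expand the square and to separate the roles of the community labels and the degree weights. Writing $\mu=\mathbb{E}W$ and $U_{ij}=W_iW_j-\mu^2$, the left-hand side equals
$$\frac{1}{{n\choose 2}^2}\sum_{i<j}\sum_{i'<j'}\mathbb{E}\big[\eta_{ij}\eta_{i'j'}U_{ij}U_{i'j'}\big].$$
Since $(Z_1,\dots,Z_n)$ is independent of $(W_1,\dots,W_n)$, and $\eta_{ij}$ depends only on the former while $U_{ij}$ depends only on the latter, each summand factors as $\mathbb{E}[\eta_{ij}\eta_{i'j'}]\cdot\mathbb{E}[U_{ij}U_{i'j'}]$. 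The first factor is $O(a^2)$ uniformly in the indices, because $\eta_{ij}\in\{a,b\}$ and $a\asymp b$. Hence it remains to bound $\sum_{i<j}\sum_{i'<j'}\mathbb{E}[U_{ij}U_{i'j'}]$.

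The crucial observation is that $\mathbb{E}[U_{ij}]=0$ for $i\neq j$, since $\mathbb{E}[W_iW_j]=\mu^2$ by independence. Therefore, whenever the index sets $\{i,j\}$ and $\{i',j'\}$ are disjoint, $U_{ij}$ and $U_{i'j'}$ are independent mean-zero random variables and $\mathbb{E}[U_{ij}U_{i'j'}]=0$; only pairs of edges sharing at least one endpoint contribute. For such pairs---either $\{i,j\}=\{i',j'\}$, or the two sets overlap in exactly one index---a direct expansion of $U_{ij}U_{i'j'}$ using $\mathbb{E}W^2=1$ and $\mathbb{E}W^4=O(1)$ yields $\mathbb{E}[U_{ij}U_{i'j'}]=O(1)$; for instance $\mathbb{E}[U_{ij}^2]=1-\mu^4$ and $\mathbb{E}[U_{ij}U_{jl}]=\mu^2(1-\mu^2)$ for distinct $i,j,l$. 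Since there are $O(n^2)$ coincident edge pairs and $O(n^3)$ pairs meeting in exactly one vertex, the surviving sum has $O(n^3)$ terms, each $O(1)$. Combining,
$$\frac{1}{{n\choose 2}^2}\cdot O(a^2)\cdot O(n^3)=O\!\left(\frac{a^2}{n}\right),$$
which is the assertion.

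The one nontrivial step---and the reason the bound is $O(a^2/n)$ rather than the trivial $O(a^2)$---is the mean-zero cancellation that annihilates all disjoint-edge terms; the remaining moment computations and the index counting are routine. I note also that the hypothesis $n^{-1}\ll a$ is not used in this proposition; it enters only in Lemma~\ref{lem:EV-order}, when this term is compared against the other contributions to $\mathbb{E}(\hat E-E)^2$.
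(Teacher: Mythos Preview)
Your proof is correct. The key observation---that $U_{ij}=W_iW_j-\mu^2$ has mean zero and is independent of $U_{i'j'}$ whenever $\{i,j\}\cap\{i',j'\}=\varnothing$---kills all but $O(n^3)$ terms in the expanded square, and your moment computations for the surviving diagonal and off-diagonal terms are accurate. One small point you leave implicit: pulling the uniform bound $\mathbb{E}[\eta_{ij}\eta_{i'j'}]=O(a^2)$ outside the sum is justified because the surviving covariances $\mathbb{E}[U_{ij}U_{i'j'}]$ are all nonnegative (equal to $1-\mu^4$ or $\mu^2(1-\mu^2)$), so no cancellation is lost.

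The paper proceeds differently: it applies the Hoeffding-type decomposition
\[
W_iW_j-\mu^2=\mu(W_i-\mu)+\mu(W_j-\mu)+(W_i-\mu)(W_j-\mu),
\]
notes that the linear and quadratic parts are uncorrelated, and bounds each piece separately. Your direct covariance-counting argument is more elementary and gets to the answer in fewer steps for this particular statement; the paper's decomposition is more structured and generalizes more transparently to the higher-order analogues (Propositions~\ref{prop:meixiangdao} and~\ref{prop:wijk}), where separating contributions by ``degree'' in the centered variables $W_i-\mu$ organizes the bookkeeping. Both routes rest on the same two facts---independence of $\{Z_i\}$ from $\{W_i\}$ and $\mathbb{E}U_{ij}=0$---so the difference is purely organizational. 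Your closing remark that $n^{-1}\ll a$ is not used here is also correct; the paper's proof does not invoke it either.
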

\begin{proof}
Note that
\begin{equation}
W_iW_j-(\mathbb{E}W)^2=(\mathbb{E}W)(W_i-\mathbb{E}W+W_j-\mathbb{E}W)+(W_i-\mathbb{E}W)(W_j-\mathbb{E}W).\label{eq:2-decomp}
\end{equation}
We observe that the two terms on the right hand side of the above display are uncorrelated. Thus,
\begin{eqnarray*}
&& \mathbb{E}\left(\frac{1}{{n\choose 2}}\sum_{i<j}\eta_{ij}\left(W_iW_j-(\mathbb{E}W)^2\right)\right)^2 \\
&=& \mathbb{E}\left(\frac{1}{{n\choose 2}}\sum_{i<j}\eta_{ij}(\mathbb{E}W)(W_i-\mathbb{E}W+W_j-\mathbb{E}W)\right)^2 \\
&& + \mathbb{E}\left(\frac{1}{{n\choose 2}}\sum_{i<j}\eta_{ij}(W_i-\mathbb{E}W)(W_j-\mathbb{E}W)\right)^2.
\end{eqnarray*}
Since $\eta_{ij}=O(a)$ and $\{\eta_{ij}\}$ are independent of $\{W_i\}$, we have
\begin{eqnarray*}
&& \mathbb{E}\left(\frac{1}{{n\choose 2}}\sum_{i<j}\eta_{ij}(\mathbb{E}W)(W_i-\mathbb{E}W+W_j-\mathbb{E}W)\right)^2 \\
&\leq& \mathbb{E}\left(\frac{1}{{n\choose 2}}\sum_{i\neq j}\eta_{ij}(\mathbb{E}W)(W_i-\mathbb{E}W)\right)^2 \\
&=& \frac{1}{{n\choose 2}^2}\sum_{i=1}^n\mathbb{E}\left(\sum_{\{j:j\neq i\}}\eta_{ij}(\mathbb{E}W)\right)^2\Var(W_i) \\
&=&O\left(\frac{a^2}{n}\right).
\end{eqnarray*}
For the second term, observe that $(W_i-\mathbb{E}W)(W_j-\mathbb{E}W)$ and $(W_{i'}-\mathbb{E}W)(W_{j'}-\mathbb{E}W)$ are uncorrelated if the sets $\{i,j\}$ and $\{i',j'\}$ do not intersect. Thus,
\begin{eqnarray*}
&& \mathbb{E}\left(\frac{1}{{n\choose 2}}\sum_{i<j}\eta_{ij}(W_i-\mathbb{E}W)(W_j-\mathbb{E}W)\right)^2 \\
&=& \frac{1}{{n\choose 2}^2}\sum_{i<j}\mathbb{E}\eta_{ij}^2\Var(W_i)\Var(W_j) \\
&=&O\left(\frac{a^2}{n^2}\right),
\end{eqnarray*}
which leads to the desired result.
\end{proof}

\begin{proposition}\label{prop:z-foundation}
Under the same setting of Lemma \ref{lem:EV-order},
$$\mathbb{E}\left(\frac{1}{{n\choose 2}}\sum_{i<j}(\mathbb{I}\{Z_i=Z_j\}-\mathbb{P}(Z_i=Z_j))\right)^2=O\left(\frac{1}{n}\right).$$
\end{proposition}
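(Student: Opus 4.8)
The plan is a direct second-moment computation. Write $p_k := \mathbb{P}(Z_i = Z_j) = 1/k$, which holds because the $Z_i$ are i.i.d.\ uniform on $[k]$, and set $Y_{ij} := \mathbb{I}\{Z_i = Z_j\} - p_k$, so $\mathbb{E}Y_{ij} = 0$ and the quantity in question is $\binom{n}{2}^{-2}\,\mathbb{E}\bigl(\sum_{i<j} Y_{ij}\bigr)^2 = \binom{n}{2}^{-2}\sum_{i<j}\sum_{i'<j'}\mathbb{E}(Y_{ij}Y_{i'j'})$. I would split the double sum according to the overlap size $|\{i,j\}\cap\{i',j'\}| \in \{0,1,2\}$.

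For the disjoint case, $\mathbb{I}\{Z_i=Z_j\}$ and $\mathbb{I}\{Z_{i'}=Z_{j'}\}$ are functions of disjoint blocks of the i.i.d.\ sequence, hence independent, so $\mathbb{E}(Y_{ij}Y_{i'j'}) = 0$. For a single shared index, say $j = j'$ with $i \neq i'$, conditioning on $Z_j$ makes $\mathbb{I}\{Z_i = Z_j\}$ and $\mathbb{I}\{Z_{i'} = Z_j\}$ conditionally independent with common conditional mean $p_k$, so $\mathbb{E}(\mathbb{I}\{Z_i=Z_j\}\mathbb{I}\{Z_{i'}=Z_j\}) = p_k^2$ and a one-line expansion gives $\mathbb{E}(Y_{ij}Y_{i'j'}) = 0$ here too. (If one wishes to skip this observation, the crude bound $|\mathbb{E}(Y_{ij}Y_{i'j'})|\le 1$ together with the $O(n^3)$ count of single-overlap index pairs already yields the stated $O(1/n)$ rate.) Finally, the diagonal terms contribute $\mathbb{E}Y_{ij}^2 = \Var(\mathbb{I}\{Z_i=Z_j\}) = p_k(1-p_k) \le 1/4$, and there are exactly $\binom{n}{2}$ of them.

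Combining, $\mathbb{E}\bigl(\sum_{i<j} Y_{ij}\bigr)^2 \le \tfrac14\binom{n}{2}$, and dividing by $\binom{n}{2}^2$ bounds the left-hand side by $\bigl(4\binom{n}{2}\bigr)^{-1} = O(1/n^2) \subseteq O(1/n)$, which is more than enough. There is no genuine obstacle: the only point needing a moment's care is the single-overlap case, and even that is handled by brute counting if one settles for the weaker but sufficient $O(1/n)$ bound rather than the sharp $O(1/n^2)$. This is the same exchangeable/$U$-statistic variance bookkeeping that recurs throughout Section~\ref{sec:aux} (compare Propositions~\ref{prop:z-turbo} and~\ref{prop:z-v8}), specialized to the kernel $\mathbb{I}\{Z_i=Z_j\}$.
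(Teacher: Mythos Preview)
Your argument is correct and in fact yields the sharper bound $O(1/n^2)$: the disjoint and single-overlap covariances vanish exactly, leaving only the $\binom{n}{2}$ diagonal terms with $\Var(\mathbb{I}\{Z_i=Z_j\}) = k^{-1}(1-k^{-1})$.

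The paper takes a different and more elaborate route. It introduces the one-hot variables $z_{il} = \mathbb{I}\{Z_i = l\}$, writes $\mathbb{I}\{Z_i=Z_j\} = \sum_{l=1}^k z_{il}z_{jl}$, applies Cauchy--Schwarz over $l$ (incurring a factor of $k$), and then performs a Hoeffding-type expansion of $z_{il}z_{jl} - k^{-2}$ into linear terms $k^{-1}(z_{il}-k^{-1})$ and a quadratic term $(z_{il}-k^{-1})(z_{jl}-k^{-1})$. This yields $O(1/(nk))$, weaker than your $O(1/n^2)$ but still sufficient. Your direct covariance bookkeeping is cleaner for this particular kernel; the paper's detour through the $z_{ih}$ representation pays off later, since the same decomposition machinery is recycled essentially verbatim in Propositions~\ref{prop:z-turbo} and~\ref{prop:z-v8} (and their $\tilde z_{ih}$ analogues for neighborhood graphs), where the kernels $\eta_{ij}\eta_{il}$ and $\eta_{ij}\eta_{jl}\eta_{il}$ mix several indicator events and a direct overlap calculation would be considerably messier.
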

\begin{proof}
Define $z_{il}=\mathbb{I}\{Z_i=l\}$. It is easy to see that $z_{il}\sim\text{Bernoulli}(k^{-1})$. Moreover, $z_{il}$ and $z_{jl}$ are independent if $i\neq j$, which implies $\mathbb{E}(z_{il}z_{jl})=(\mathbb{E}z_{il})(\mathbb{E}z_{jl})$. Then, $\mathbb{I}\{Z_i=Z_j\}=\sum_{l=1}^kz_{il}z_{jl}$. Observe the inequality
\begin{eqnarray*}
&& \mathbb{E}\left(\frac{1}{{n\choose 2}}\sum_{i<j}(\mathbb{I}\{Z_i=Z_j\}-\mathbb{P}(Z_i=Z_j))\right)^2 \\
&\leq& \frac{1}{{n\choose 2}^2}k\sum_{l=1}^k\left(\sum_{i<j}z_{il}z_{jl}-(\mathbb{E}z_{il})(\mathbb{E}z_{jl})\right)^2.
\end{eqnarray*}
Similar to (\ref{eq:2-decomp}), we have
\begin{eqnarray*}
&& z_{il}z_{jl}-(\mathbb{E}z_{il})(\mathbb{E}z_{jl}) \\
&=& (\mathbb{E}z_{jl})(z_{il}-\mathbb{E}z_{il})+(\mathbb{E}z_{il})(z_{jl}-(\mathbb{E}z_{jl}))+(z_{il}-\mathbb{E}z_{il})(z_{jl}-(\mathbb{E}z_{jl})) \\
&=& k^{-1}(z_{il}-\mathbb{E}z_{il})+k^{-1}(z_{jl}-(\mathbb{E}z_{jl}))+(z_{il}-\mathbb{E}z_{il})(z_{jl}-(\mathbb{E}z_{jl})).
\end{eqnarray*}
Therefore,
\begin{eqnarray*}
&& \mathbb{E}\left(\sum_{i<j}z_{il}z_{jl}-(\mathbb{E}z_{il})(\mathbb{E}z_{jl})\right)^2 \\
&\leq& \frac{3}{4}\mathbb{E}\left(\sum_{i\neq j}k^{-1}(z_{il}-\mathbb{E}z_{il})\right)^2 + \frac{3}{4}\mathbb{E}\left(\sum_{i\neq j}k^{-1}(z_{jl}-\mathbb{E}z_{jl})\right)^2 \\
&& +\frac{3}{4}\mathbb{E}\left(\sum_{i\neq j}(z_{il}-\mathbb{E}z_{il})(z_{jl}-(\mathbb{E}z_{jl})\right)^2 \\
&=& O\left(\frac{n^3}{k^3}+\frac{n^2}{k^2}\right).
\end{eqnarray*}
Hence,
$$\mathbb{E}\left(\frac{1}{{n\choose 2}}\sum_{i<j}(\mathbb{I}\{Z_i=Z_j\}-\mathbb{P}(Z_i=Z_j))\right)^2=O\left(\frac{1}{nk}\right)=O\left(\frac{1}{n}\right).$$
\end{proof}

\begin{proposition}\label{prop:meixiangdao}
Under the same setting of Lemma \ref{lem:EV-order},
\begin{eqnarray*}
&& \mathbb{E}\left(\frac{1}{{n\choose 3}}\sum_{i<j<l}\frac{(W_i^2W_jW_l-\mathbb{E}W^2(\mathbb{E}W)^2)\eta_{ij}\eta_{il}}{3}\right.\\
&& \left. + \frac{(W_j^2W_iW_l-\mathbb{E}W^2(\mathbb{E}W)^2)\eta_{ij}\eta_{jl}+(W_l^2W_jW_i-\mathbb{E}W^2(\mathbb{E}W)^2)\eta_{jl}\eta_{il}}{3}\right)^2=O\left(\frac{a^4}{n}\right). 
\end{eqnarray*}
\end{proposition}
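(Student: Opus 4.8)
The plan is to expose the ``vee'' structure of the sum and then control the second moment by a routine overlap count. Write $\mu=\mathbb{E}W$; since $\mathbb{E}W^2=1$ we have $\mathbb{E}W^2(\mathbb{E}W)^2=\mu^2$. For a triple $\{i,j,l\}$ the three summands are precisely the contributions of the three choices of central vertex of a vee, so summing over all triples and all centers the quantity in the statement equals $\frac{1}{3\binom{n}{3}}\sum_{c=1}^n U_c$, where $U_c=\sum_{\{x,y\}\subseteq[n]\setminus\{c\}}(W_c^2W_xW_y-\mu^2)\,\eta_{cx}\eta_{cy}$ and $\{x,y\}$ ranges over $2$-element subsets. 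Since $\binom{n}{3}\asymp n^3$, it then suffices to prove $\mathbb{E}\bigl(\sum_c U_c\bigr)^2=O(a^4n^5)$.

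Next I would expand this square into a double sum over pairs of vees $(c;\{x,y\})$ and $(c';\{x',y'\})$. Using that $\{W_i\}$ is independent of $\{Z_i\}$, hence of $\{\eta_{ij}\}$, each term factors as $\mathbb{E}\bigl[(W_c^2W_xW_y-\mu^2)(W_{c'}^2W_{x'}W_{y'}-\mu^2)\bigr]\cdot\mathbb{E}[\eta_{cx}\eta_{cy}\eta_{c'x'}\eta_{c'y'}]$, and the second factor is $O(a^4)$ uniformly because $\eta_{ij}\le\max(a,b)=O(a)$. The crucial point is that if the vertex sets $\{c,x,y\}$ and $\{c',x',y'\}$ are disjoint, then the two centered factors are independent and each has mean zero (the three indices in each are distinct, so $\mathbb{E}[W_c^2W_xW_y]=\mathbb{E}W^2(\mathbb{E}W)^2=\mu^2$), and the term vanishes. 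When the vertex sets do intersect, I would expand the product of centered factors into four monomials in the $W_i$; in every monomial no variable appears with exponent exceeding $4$ (exponent $4$ arising only when a common vertex is the center of both vees), so the expectation factors into moments $\mathbb{E}W^k$ with $k\le 4$, all $O(1)$ under $\mathbb{E}W^4=O(1)$; hence the first factor is $O(1)$ and the whole term is $O(a^4)$. Finally, counting: the number of pairs of vees whose vertex sets meet in $1$, $2$, or $3$ vertices is $O(n^5)$, $O(n^4)$, and $O(n^3)$ respectively, for a total of $O(n^5)$. This gives $\mathbb{E}\bigl(\sum_c U_c\bigr)^2=O(a^4n^5)$, and dividing by $\bigl(3\binom{n}{3}\bigr)^2\asymp n^6$ yields the claimed $O(a^4/n)$.

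I do not expect any single hard estimate here; the real work is bookkeeping. The step that makes the bound hold, and that must be done carefully, is verifying that pairs of vees with disjoint vertex sets contribute \emph{exactly} zero --- this is what cuts the naive $O(n^6)$ count of vee-pairs down to the $O(n^5)$ that produces the factor $1/n$. The other place to be attentive is the intersecting case, where one must check uniformly over all intersection patterns that no $W_i$ ever occurs with exponent larger than $4$, so that the hypothesis $\mathbb{E}W^4=O(1)$ (rather than a higher moment) is enough; once this is checked, the remaining counting is elementary.
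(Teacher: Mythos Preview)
Your argument is correct. The paper takes a slightly different route: it first reduces by symmetry to a single summand (the one with center $i$), and then applies a Hoeffding-type expansion of $W_i^2W_jW_l-\mu^2$ into centered pieces of degree $1$, $2$, $3$ in $(W_i-\mathbb{E}W,\,W_i^2-\mathbb{E}W^2)$; orthogonality of these pieces gives variances of order $n^5a^4$, $n^4a^4$, $n^3a^4$ respectively. You instead keep all three centers together, expand the square directly, and use the purely combinatorial observation that vee-pairs on disjoint vertex sets have mean-zero independent factors, so only the $O(n^5)$ intersecting pairs survive. Both arguments are exploiting exactly the same cancellation; the paper packages it via orthogonal projections, you via an overlap count. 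Your version is a touch more elementary and avoids the seven-term expansion, while the paper's makes the dominant (degree-$1$) contribution explicit, which would matter if one wanted sharper constants or the next-order term. Your check that no $W_i$ ever acquires exponent larger than $4$ is the right way to confirm that $\mathbb{E}W^4=O(1)$ suffices.
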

\begin{proof}
It is sufficient to bound
\begin{equation}
\mathbb{E}\left(\frac{1}{{n\choose 3}}\sum_{i<j<l}(W_i^2W_jW_l-\mathbb{E}W^2(\mathbb{E}W)^2)\eta_{ij}\eta_{il}\right)^2.\label{eq:911turbo}
\end{equation}
Observe the following decomposition
\begin{eqnarray*}
&& W_i^2W_jW_l-\mathbb{E}W^2(\mathbb{E}W)^2 \\
&=& (\mathbb{E}W)^2(W_i^2-\mathbb{E}W^2) + \mathbb{E}W^2(\mathbb{E}W)(W_j-\mathbb{E}W) + \mathbb{E}W^2(\mathbb{E}W)(W_l-\mathbb{E}W) \\
&& + (\mathbb{E}W^2)(W_j-\mathbb{E}W)(W_l-\mathbb{E}W) + (\mathbb{E}W)(W_i^2-\mathbb{E}W^2)(W_j-\mathbb{E}W) \\
&& + (\mathbb{E}W)(W_i^2-\mathbb{E}W^2)(W_l-\mathbb{E}W) + (W_i^2-\mathbb{E}W^2)(W_i^2-\mathbb{E}W^2)(W_l-\mathbb{E}W).
\end{eqnarray*}
Let $\xi_{ijl}$ be an arbitrary array such that $0\leq \xi_{ijl}\leq a^2$ for all $i,j,l$. Then,
$$\mathbb{E}\left(\sum_{i=1}^n\left(\sum_{\{(j,l):j,l\neq i\}}(\mathbb{E}W)^2\xi_{ijl}\right)(W_i^2-\mathbb{E}W^2)\right)^2=O\left(n^5a^4\right),$$
$$\mathbb{E}\left(\sum_{i=1}^n\left(\sum_{\{(j,l):j,l\neq i\}}(\mathbb{E}W)(\mathbb{E}W^2)\xi_{ijl}\right)(W_i-\mathbb{E}W)\right)^2=O\left(n^5a^4\right),$$
$$\mathbb{E}\left(\sum_{i\neq j}\left(\sum_{\{l:l\neq i,j\}}(\mathbb{E}W^2)\xi_{ijl}\right)(W_j-\mathbb{E}W)(W_i-\mathbb{E}W)\right)^2=O\left(n^4a^4\right),$$
$$\mathbb{E}\left(\sum_{i\neq j}\left(\sum_{\{l:l\neq i,j\}}(\mathbb{E}W)\xi_{ijl}\right)(W_j-\mathbb{E}W)(W_i^2-\mathbb{E}W^2)\right)^2=O\left(n^4a^4\right),$$
$$\mathbb{E}\left(\sum_{i<j<l}\xi_{ijl}(W_i^2-\mathbb{E}W^2)(W_i^2-\mathbb{E}W^2)(W_l-\mathbb{E}W)\right)=O(n^3a^4).$$
Therefore, (\ref{eq:911turbo}) is bounded by $O\left(\frac{a^4}{n}\right)$.
\end{proof}

\begin{proposition}\label{prop:z-turbo}
Under the same setting of Lemma \ref{lem:EV-order},
$$\Var\left(\frac{1}{{n\choose 3}}\sum_{i<j<l}\frac{\eta_{ij}\eta_{il}+\eta_{ij}\eta_{jl}+\eta_{il}\eta_{jl}}{3}\right)=O\left(\frac{a^4}{n}\right).$$
\end{proposition}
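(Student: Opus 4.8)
The plan is to recognize
$$U \;:=\; \frac{1}{\binom{n}{3}}\sum_{i<j<l}\frac{\eta_{ij}\eta_{il}+\eta_{ij}\eta_{jl}+\eta_{il}\eta_{jl}}{3}$$
as a U-statistic of degree three in the i.i.d.\ labels $Z_1,\dots,Z_n$, with symmetric kernel $h(Z_i,Z_j,Z_l)=\tfrac13(\eta_{ij}\eta_{il}+\eta_{ij}\eta_{jl}+\eta_{il}\eta_{jl})$, and to control $\Var(U)$ by its Hoeffding/ANOVA (H\'ajek) decomposition. The starting observation is a crude pointwise bound: since $a\asymp b$ (and both are positive under $n^{-1}\ll a\asymp b=o(1)$), each $\eta_{ij}\in\{a,b\}\subseteq[0,O(a)]$, so $0\le h\le O(a^2)$ and, likewise, every H\'ajek projection $h_c(z_1,\dots,z_c):=\mathbb{E}\,h(z_1,\dots,z_c,Z_{c+1},\dots,Z_3)$ takes values in an interval of length $O(a^2)$. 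Hence $\zeta_c:=\Var\bigl(h_c(Z_1,\dots,Z_c)\bigr)=O(a^4)$ for $c=1,2,3$, with no cancellation needed.

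Next I would expand $\Var(U)=\binom{n}{3}^{-2}\sum_{S,S'}\Cov(h_S,h_{S'})$ over ordered pairs of triples $S,S'\subseteq[n]$. Since the $Z_i$ are independent, $\Cov(h_S,h_{S'})=0$ whenever $S\cap S'=\emptyset$. The surviving pairs fall into three classes according to whether $|S\cap S'|=1,2,3$, and a short conditioning argument shows that for such a pair the covariance equals exactly $\zeta_1,\zeta_2,\zeta_3$ respectively (e.g.\ $\mathbb{E}[h(Z_1,Z_2,Z_3)h(Z_1,Z_4,Z_5)\mid Z_1]=h_1(Z_1)^2$, so the one-overlap covariance is $\Var(h_1(Z_1))=\zeta_1$). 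The numbers of such pairs are $O(n^5)$, $O(n^4)$, and $O(n^3)$ respectively, so
$$\Var(U)\;=\;\binom{n}{3}^{-2}\Bigl(O(n^5)+O(n^4)+O(n^3)\Bigr)\cdot O(a^4)\;=\;\frac{O(n^5a^4)}{O(n^6)}\;=\;O\!\left(\frac{a^4}{n}\right),$$
which is the assertion. Equivalently, one may simply invoke the standard identity $\Var(U_n)=\binom{n}{3}^{-1}\sum_{c=1}^{3}\binom{3}{c}\binom{n-3}{3-c}\zeta_c$ and substitute $\zeta_c=O(a^4)$.

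There is no genuine analytic obstacle here; the ``hard part'' is merely the bookkeeping of which pairs of triples have nonzero covariance, and the conceptual point that the bound is driven purely by counting the $O(n^5)$ singly-overlapping pairs against the normalization $\binom{n}{3}^2\asymp n^6$, together with the trivial range bound $h=O(a^2)$. (Note that the hypothesis $\mathbb{E}W^4=O(1)$ plays no role in this proposition, since the summand depends only on $Z$.)
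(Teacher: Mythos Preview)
Your argument is correct. The kernel $h(Z_i,Z_j,Z_l)$ is indeed a bounded symmetric function of three i.i.d.\ variables with range $O(a^2)$, so each $\zeta_c=O(a^4)$, and the standard U-statistic variance identity (or the direct pair-counting you give) immediately yields $O(a^4/n)$. The parenthetical remark that the moment condition on $W$ is irrelevant here is also accurate.

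The paper takes a different and more laborious route. Rather than invoking the U-statistic machinery, it expands each indicator $\mathbb{I}\{Z_i=Z_j=Z_l\}$, $\mathbb{I}\{Z_j=Z_i\neq Z_l\}$, etc.\ as a sum $\sum_{(h_1,h_2,h_3)\in H}z_{ih_1}z_{jh_2}z_{lh_3}$ with $z_{ih}=\mathbb{I}\{Z_i=h\}\sim\mathrm{Bernoulli}(k^{-1})$, applies Cauchy--Schwarz over $H$, and then performs a Hoeffding-type decomposition on each $z_{ih_1}z_{jh_2}z_{lh_3}-k^{-3}$ separately. This extra work buys two things: it actually delivers the sharper bound $O(a^4/(nk))$ (which the paper notes but does not need), and it establishes the general estimate \eqref{eq:amg-gtr} that is recycled verbatim in later propositions (e.g.\ Proposition~\ref{prop:z-v8} and the neighborhood-graph analogues), where the summands are no longer symmetric in $(i,j,l)$ or involve additional factors like $R_i=\mathbb{I}\{Z_i\in\mathcal{R}\}$. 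Your U-statistic argument is cleaner and entirely sufficient for the proposition as stated, but it would need modification to cover those downstream uses.
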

\begin{proof}
We first give a very general result. Define $z_{ih}=\mathbb{I}\{Z_i=h\}\sim\text{Bernoulli}(k^{-1})$. Let $H$ be a subset of $[k]^3$. Then,
\begin{eqnarray}
\label{eq:amg-c43} && \mathbb{E}\left(\sum_{i<j<l}\sum_{(h_1,h_2,h_3)\in H}(z_{ih_1}z_{jh_2}z_{lh_3}-\mathbb{E}z_{ih_1}\mathbb{E}z_{jh_2}\mathbb{E}z_{lh_3})\right)^2 \\
\label{eq:amg-c63s} &\leq& |H|\sum_{(h_1,h_2,h_3)\in H}\mathbb{E}\left(\sum_{i<j<l}(z_{ih_1}z_{jh_2}z_{lh_3}-\mathbb{E}z_{ih_1}\mathbb{E}z_{jh_2}\mathbb{E}z_{lh_3})\right)^2.
\end{eqnarray}
We observe the decomposition,
\begin{eqnarray*}
&& z_{ih_1}z_{jh_2}z_{lh_3}-\mathbb{E}z_{ih_1}\mathbb{E}z_{jh_2}\mathbb{E}z_{lh_3} \\
&=& k^{-2}(z_{ih_1}-k^{-1}) + k^{-2}(z_{jh_2}-k^{-1}) + k^{-2}(z_{lh_3}-k^{-1}) \\
&& + k^{-1}(z_{ih_1}-k^{-1})(z_{jh_2}-k^{-1}) + k^{-1}(z_{ih_1}-k^{-1})(z_{lh_3}-k^{-1}) + k^{-1}(z_{jh_3}-k^{-1})(z_{lh_2}-k^{-1}) \\
&& + (z_{ih_1}-k^{-1})(z_{jh_2}-k^{-1})(z_{lh_3}-k^{-1}).
\end{eqnarray*}
It is not hard to see
$$\mathbb{E}\left(\sum_{i<j<l}k^{-2}(z_{ih_1}-k^{-1})\right)^2=O\left(\frac{n^5}{k^5}\right),$$
$$\mathbb{E}\left(\sum_{i<j<l}k^{-1}(z_{ih_1}-k^{-1})(z_{jh_2}-k^{-1})\right)^2=O\left(\frac{n^4}{k^4}\right),$$
$$\mathbb{E}\left(\sum_{i<j<l}(z_{ih_1}-k^{-1})(z_{jh_2}-k^{-1})(z_{lh_3}-k^{-1})\right)^2=O\left(\frac{n^3}{k^3}\right).$$
By the inequality (\ref{eq:amg-c63s}),
\begin{equation}
\mathbb{E}\left(\sum_{i<j<l}\sum_{(h_1,h_2,h_3)\in H}(z_{ih_1}z_{jh_2}z_{lh_3}-\mathbb{E}z_{ih_1}\mathbb{E}z_{jh_2}\mathbb{E}z_{lh_3})\right)^2=O\left(|H|^2\frac{n^5}{k^5}\right).\label{eq:amg-gtr}
\end{equation}

Now we derive a bound for
$$\Var\left(\frac{1}{{n\choose 3}}\sum_{i<j<l}\eta_{ij}\eta_{il}\right).$$
By the definition of $\eta_{ij}$,
\begin{eqnarray*}
\Var\left(\frac{1}{{n\choose 3}}\sum_{i<j<l}\eta_{ij}\eta_{il}\right) &\leq& O(a^4)\Var\left(\frac{1}{{n\choose 3}}\sum_{i<j<l}\mathbb{I}\{Z_i=Z_j=Z_l\}\right) \\
&& + O(a^2b^2)\Var\left(\frac{1}{{n\choose 3}}\sum_{i<j<l}\mathbb{I}\{Z_j=Z_i\neq Z_l\}\right) \\
&& + O(a^2b^2)\Var\left(\frac{1}{{n\choose 3}}\sum_{i<j<l}\mathbb{I}\{Z_j\neq Z_i=Z_l\}\right) \\
&& + O(b^4)\Var\left(\frac{1}{{n\choose 3}}\sum_{i<j<l}\mathbb{I}\{Z_j\neq Z_i\neq Z_l\}\right),
\end{eqnarray*}
The four terms above are all special cases of (\ref{eq:amg-c43}). For example, $\mathbb{I}\{Z_j=Z_i\neq Z_l\}$ can be written as
\begin{equation}
\sum_{(h_1,h_2,h_3)\in H}z_{ih_1}z_{jh_2}z_{lh_3},\label{eq:general-H}
\end{equation}
where $H=\{(h_1,h_2,h_3)\in[k]^3: h_1=h_2=h_3\}$ and $|H|=k$. The indictor $\mathbb{I}\{Z_j=Z_i\neq Z_l\}$ can be written as (\ref{eq:general-H}) with an $H$ such that $|H|=O(k^2)$. The indicator $\mathbb{I}\{Z_j\neq Z_i=Z_l\}$  can also be written as (\ref{eq:general-H}) with an $H$ such that $|H|=O(k^2)$. Finally, note that
\begin{eqnarray*}
\mathbb{I}\{Z_j\neq Z_i\neq Z_l\} &=& 1-\mathbb{I}\{Z_i=Z_j=Z_l\} - \mathbb{I}\{Z_i\neq Z_j=Z_l\} \\
&& - \mathbb{I}\{Z_j=Z_i\neq Z_l\}.
\end{eqnarray*}
Thus, the variance of $\frac{1}{{n\choose 3}}\sum_{i<j<l}\mathbb{I}\{Z_j\neq Z_i\neq Z_l\}$ can be further decomposed according to the above equality. Each term in the decomposition can be represented by (\ref{eq:general-H}) with an $H$ such that $|H|=O(k^2)$. Hence, we obtain
$$\Var\left(\frac{1}{{n\choose 3}}\sum_{i<j<l}\eta_{ij}\eta_{il}\right)=O\left(\frac{a^4}{nk}\right)=O\left(\frac{a^4}{n}\right).$$
This completes the proof.
\end{proof}

\begin{proposition}\label{prop:wijk}
Under the same setting of Lemma \ref{lem:T-order},
$$\mathbb{E}\left(\frac{1}{{n\choose 3}}\sum_{i<j<l}\eta_{ij}\eta_{jl}\eta_{il}(W_i^2W_j^2W_l^2-(\mathbb{E}W^2)^3)\right)^2=O\left(\frac{a^6}{n}\right).$$
\end{proposition}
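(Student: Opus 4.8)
The plan is to follow the template of Propositions~\ref{prop:wij} and~\ref{prop:meixiangdao}: expand the degree factor into a sum of products of centered variables, and control each piece by counting the index coincidences that the centering forces. First I would set $\bar U_i\defeq W_i^2-\mathbb{E}W^2=W_i^2-1$, so the $\bar U_i$ are i.i.d., centered, independent of $\{Z_i\}$ (hence of $\{\eta_{ij}\}$), with $\mathbb{E}\bar U_i^2=\Var(W^2)=\mathbb{E}W^4-1=O(1)$. Writing $W_i^2W_j^2W_l^2=(1+\bar U_i)(1+\bar U_j)(1+\bar U_l)$ and expanding gives
\begin{equation*}
W_i^2W_j^2W_l^2-(\mathbb{E}W^2)^3=\sum_{m\in\{i,j,l\}}\bar U_m+\sum_{\{m,m'\}\subset\{i,j,l\}}\bar U_m\bar U_{m'}+\bar U_i\bar U_j\bar U_l .
\end{equation*}
Substituting this and using $(x+y+z)^2\le 3(x^2+y^2+z^2)$ together with the symmetry of $\eta_{ij}\eta_{jl}\eta_{il}$ in $(i,j,l)$, it suffices to bound three representative terms: the \emph{single-centered} term, which after regrouping equals $\frac{1}{{n\choose 3}}\sum_{m=1}^n\bar U_m c_m$ with $c_m\defeq\sum_{\{j,l\}\subset[n]\setminus\{m\}}\eta_{mj}\eta_{ml}\eta_{jl}$; the \emph{double-centered} term involving $\bar U_i\bar U_j$; and the \emph{triple-centered} term involving $\bar U_i\bar U_j\bar U_l$.

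For each term I would expand the square, condition on $Z$, and use that since the $\bar U$'s are i.i.d.\ centered and independent of $Z$, the conditional expectation of a product of $\bar U$'s vanishes whenever some index occurs exactly once. In the single-centered case this forces $m=m'$ in the double sum, so
\begin{equation*}
\mathbb{E}\Bigl(\tfrac{1}{{n\choose 3}}\textstyle\sum_m\bar U_m c_m\Bigr)^2=\frac{\Var(W^2)}{{n\choose 3}^2}\,\mathbb{E}\sum_m c_m^2 ;
\end{equation*}
since $\eta_{ij}=O(a)$ deterministically (here $a\asymp b$), one has $c_m=O(n^2a^3)$, hence $\sum_m c_m^2=O(n^5a^6)$ and the term is $O(n^{-6})\cdot n^5 a^6=O(a^6/n)$. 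In the double-centered case the vanishing property forces the unordered pair of centered indices of the two triples to coincide (at most two distinct indices among the four slots, and $i\neq j$ pins them down), leaving two free indices, so $O(n^4)$ configurations survive and the term is $O(n^{-6})\cdot n^4a^6=O(a^6/n^2)$. In the triple-centered case it forces $\{i,j,l\}=\{i',j',l'\}$, so only $O(n^3)$ configurations survive and the term is $O(a^6/n^3)$. Summing the three bounds (cross terms are handled by Cauchy--Schwarz and are of smaller order), the single-centered term dominates and the claimed bound $O(a^6/n)$ follows.

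The only genuine bookkeeping obstacle is making the index counting in the double- and triple-centered cases precise: one must verify that, for i.i.d.\ centered $\bar U$'s, a product over the six indices $i,j,l,i',j',l'$ has nonzero expectation only when every distinct index occurs at least twice, which (given $i<j<l$ and $i'<j'<l'$) forces exactly the coincidence patterns above and thus removes one factor of $n$ from the combinatorial count for each centered coordinate. Everything else is a routine moment estimate using $\mathbb{E}W^4=O(1)$ and the crude bound $\eta_{ij}=O(a)$; no finer structure of $\eta$ is needed.
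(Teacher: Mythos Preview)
Your proposal is correct and follows essentially the same approach as the paper: both decompose $W_i^2W_j^2W_l^2-(\mathbb{E}W^2)^3$ into a Hoeffding-type expansion in the centered variables $W_i^2-\mathbb{E}W^2$, use the deterministic bound $\eta_{ij}\eta_{jl}\eta_{il}=O(a^3)$ and independence of $\{W_i\}$ from $\{Z_i\}$, and then count that the first-, second-, and third-order pieces contribute $O(n^5a^6)$, $O(n^4a^6)$, $O(n^3a^6)$ respectively before dividing by $\binom{n}{3}^2$. The only cosmetic difference is that you group the seven terms by order and invoke symmetry, whereas the paper lists them individually.
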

\begin{proof}
Observe the decomposition
\begin{eqnarray*}
&& W_i^2W_j^2W_l^2-(\mathbb{E}W^2)^3 \\
&=& (\mathbb{E}W^2)^2(W_i^2-\mathbb{E}W^2) + (\mathbb{E}W^2)^2(W_j^2-\mathbb{E}W^2) + (\mathbb{E}W^2)^2(W_l^2-\mathbb{E}W^2) \\
&& + (\mathbb{E}W^2)(W_i^2-\mathbb{E}W^2)(W_j^2-\mathbb{E}W^2) + (\mathbb{E}W^2)(W_i^2-\mathbb{E}W^2)(W_l^2-\mathbb{E}W^2) \\
&&+ (\mathbb{E}W^2)(W_l^2-\mathbb{E}W^2)(W_j^2-\mathbb{E}W^2)+ (W_i^2-\mathbb{E}W^2)(W_j^2-\mathbb{E}W^2)(W_l^2-\mathbb{E}W^2).
\end{eqnarray*}
Let $\xi_{ijl}$ be an arbitrary array such that $0\leq \xi_{ijl}\leq a^3$ for all $i,j,l$. Then,
$$\mathbb{E}\left(\sum_{i=1}^n\left(\sum_{\{(j,l):j,l\neq i\}}(\mathbb{E}W^2)^2\xi_{ijl}\right)(W_i^2-\mathbb{E}W^2)\right)^2=O\left(n^5a^6\right),$$
$$\mathbb{E}\left(\sum_{i\neq j}\left(\sum_{\{l:l\neq i,j\}}(\mathbb{E}W^2)\xi_{ijl}\right)(W_j^2-\mathbb{E}W^2)(W_i^2-\mathbb{E}W^2)\right)^2=O\left(n^4a^6\right),$$
$$\mathbb{E}\left(\sum_{i<j<l}\xi_{ijl}(W_i^2-\mathbb{E}W^2)(W_j^2-\mathbb{E}W^2)(W_l^2-\mathbb{E}W^2)\right)^2=O(n^3a^6).$$
Therefore, we obtain the bound $O\left(\frac{a^6}{n}\right)$.
\end{proof}

\begin{proposition}\label{prop:z-v8}
Under the same setting of Lemma \ref{lem:T-order},
$$\Var\left(\frac{1}{{n\choose 3}}\sum_{i<j<l}\eta_{ij}\eta_{jl}\eta_{il}\right)=O\left(\frac{a^6}{n}\right).$$
\end{proposition}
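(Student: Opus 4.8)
The plan is to mirror the proof of Proposition~\ref{prop:z-turbo}, the only change being that the product of two edge weights $\eta_{ij}\eta_{il}$ is replaced by the product over a triangle, $\eta_{ij}\eta_{jl}\eta_{il}$.

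First I would decompose $\eta_{ij}\eta_{jl}\eta_{il}$ according to the equality pattern of the triple $(Z_i,Z_j,Z_l)$. Recalling $z_{ih}=\mathbb{I}\{Z_i=h\}$, the three labels realize exactly one of five patterns, and on each pattern the triangle product is constant:
\begin{eqnarray*}
\eta_{ij}\eta_{jl}\eta_{il}
&=& a^3\,\mathbb{I}\{Z_i=Z_j=Z_l\}
+ b^3\,\mathbb{I}\{Z_i,Z_j,Z_l\text{ distinct}\} \\
&& +\, ab^2\bigl(\mathbb{I}\{Z_i=Z_j\neq Z_l\}+\mathbb{I}\{Z_j=Z_l\neq Z_i\}+\mathbb{I}\{Z_i=Z_l\neq Z_j\}\bigr).
\end{eqnarray*}
Since $a\asymp b$, every coefficient above is of order $a^3$. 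Applying the elementary bound $\Var\bigl(\sum_{p=1}^{5}X_p\bigr)\leq 5\sum_{p=1}^{5}\Var(X_p)$ to the five terms, it then suffices to show
$$\Var\!\left(\frac{1}{{n\choose 3}}\sum_{i<j<l}\mathbb{I}_p(Z_i,Z_j,Z_l)\right)=O\!\left(\frac1n\right)$$
for each of the five indicator patterns $\mathbb{I}_p$, since then the variance of interest is $O(a^6)\cdot O(1/n)=O(a^6/n)$.

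Each of the first four patterns can be written in the form $\sum_{(h_1,h_2,h_3)\in H}z_{ih_1}z_{jh_2}z_{lh_3}$ exactly as in \eqref{eq:general-H}, with a set $H\subset[k]^3$ of size $|H|\leq k(k-1)=O(k^2)$. The general moment estimate \eqref{eq:amg-gtr} established in the proof of Proposition~\ref{prop:z-turbo} then gives
$$\Var\!\left(\frac{1}{{n\choose 3}}\sum_{i<j<l}\mathbb{I}_p\right)=\frac{1}{{n\choose 3}^2}\,O\!\left(|H|^2\frac{n^5}{k^5}\right)=O\!\left(\frac{k^4 n^5}{n^6 k^5}\right)=O\!\left(\frac{1}{nk}\right)=O\!\left(\frac1n\right).$$
The one point requiring care is the ``all distinct'' pattern, whose direct representation uses $|H|=k(k-1)(k-2)=\Theta(k^3)$ and would be too lossy. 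As in the proof of Proposition~\ref{prop:z-turbo}, I would instead use the identity
$$\mathbb{I}\{Z_i,Z_j,Z_l\text{ distinct}\}=1-\mathbb{I}\{Z_i=Z_j=Z_l\}-\mathbb{I}\{Z_i=Z_j\neq Z_l\}-\mathbb{I}\{Z_j=Z_l\neq Z_i\}-\mathbb{I}\{Z_i=Z_l\neq Z_j\},$$
so that, the constant $1$ contributing zero variance, the variance of its normalized sum is again controlled by the four $O(k^2)$-sized patterns. Combining the five $O(1/n)$ bounds with the $O(a^6)$ prefactor yields the claim. The only genuine work is the bookkeeping of the five weights and of the sizes $|H|$; the analytic core is quoted from Proposition~\ref{prop:z-turbo}, so I anticipate no real obstacle.
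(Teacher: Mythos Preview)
Your proposal is correct and follows essentially the same route as the paper: decompose $\eta_{ij}\eta_{jl}\eta_{il}$ by the equality pattern of $(Z_i,Z_j,Z_l)$, bound each pattern using the general estimate \eqref{eq:amg-gtr} with $|H|=O(k^2)$, and handle the ``all distinct'' case by the complement identity. Your five-term decomposition is in fact slightly cleaner than the paper's formal eight-term expansion (three of its terms are identically zero because, e.g., $Z_i=Z_j$ and $Z_i=Z_l$ force $Z_j=Z_l$), but the analytic content is identical.
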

\begin{proof}
Since
\begin{eqnarray*}
\eta_{ij}\eta_{il}\eta_{jl} &=& \left(a\mathbb{I}\{Z_i=Z_j\}+b\mathbb{I}\{Z_i\neq Z_j\}\right) \\
&& \times \left(a\mathbb{I}\{Z_i=Z_l\}+b\mathbb{I}\{Z_i\neq Z_l\}\right) \\
&& \times \left(a\mathbb{I}\{Z_l=Z_j\}+b\mathbb{I}\{Z_l\neq Z_j\}\right),
\end{eqnarray*}
it can be expanded as
$$\eta_{ij}\eta_{il}\eta_{jl}=\sum_{t=1}^8\xi_t\sum_{(h_1,h_2,h_3)\in H_t}z_{ih_1}z_{jh_2}z_{lh_3},$$
where each $\xi_t$ takes value in $\{a^3,a^2b,ab^2,b^3\}$. For each $t\in[7]$, $|H_t|=O(k^2)$, and thus
\begin{equation}
\Var\left(\sum_{i<j<l}\sum_{(h_1,h_2,h_3)\in H_t}z_{ih_1}z_{jh_2}z_{lh_3}\right)=O\left(\frac{n^5}{k}\right),\label{eq:Ht1-7}
\end{equation}
by (\ref{eq:amg-gtr}). For $H_8$, we have
\begin{eqnarray*}
\sum_{(h_1,h_2,h_3)\in H_8}z_{ih_1}z_{jh_2}z_{lh_3} &=& \mathbb{I}\{Z_i\neq Z_j\neq Z_l\neq Z_i\} \\
&=& 1 - \mathbb{I}\{Z_i=Z_j=Z_l\} - \mathbb{I}\{Z_i\neq Z_j=Z_l\} \\
&& - \mathbb{I}\{Z_j=Z_i\neq Z_l\} - \mathbb{I}\{Z_i=Z_j\neq Z_l\}.
\end{eqnarray*}
Thus, the variance of $\frac{1}{{n\choose 3}}\sum_{i<j<l}\mathbb{I}\{Z_j\neq Z_i\neq Z_l\neq Z_j\}$ can be further decomposed according to the above equality. Each term in the decomposition can be represented by (\ref{eq:general-H}) with an $H$ such that $|H|=O(k^2)$. Therefore, (\ref{eq:Ht1-7}) also holds for $t=8$. Finally, we have the result
$$\Var\left(\frac{1}{{n\choose 3}}\sum_{i<j<l}\eta_{ij}\eta_{il}\eta_{jl}\right)=O\left(\frac{a^6}{nk}\right)=O\left(\frac{a^6}{n}\right).$$
This completes the proof.
\end{proof}

\begin{proposition}\label{prop:m-performance}
Under the same settings of Lemma \ref{lem:EV-NG} and Lemma \ref{lem:T-NG},
$$\mathbb{E}\left(\sum_{i<j}\left(A_{0i}A_{0j}-\left(\frac{rp}{k}\right)^2\right)\right)^2=O\left(\left(\frac{nrp}{k}\right)^3\right),$$
$$\mathbb{E}\left(\sum_{i<j<l}\left(A_{0i}A_{0j}A_{0l}-\left(\frac{rp}{k}\right)^3\right)\right)^2=O\left(\left(\frac{nrp}{k}\right)^5\right).$$
\end{proposition}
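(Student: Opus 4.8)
The plan is to recognize both left-hand sides as variances of $U$-statistics in the i.i.d.\ Bernoulli variables $A_{0i}$, and to control them via the standard covariance decomposition, tracking which terms dominate under the hypothesis $nrp/k\to\infty$.

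First I would record the elementary facts about $A_{0i}$. Since $A_{0i}=\mathbb{I}\{Z_i\in\mathcal{R}\}S_i$ with $\mathbb{P}(Z_i\in\mathcal{R})=r/k$, $S_i\sim\text{Bernoulli}(p)$, and $\{Z_i,S_i\}_i$ mutually independent, the $A_{0i}$ are i.i.d.\ $\text{Bernoulli}(q)$ with $q=rp/k\le 1$; moreover $A_{0i}^2=A_{0i}$, and the standing assumptions (which force $a=o(1)$, hence $nrp/k\ge nrpa/k\to\infty$) give $nq\to\infty$. Then $\sum_{i<j}\bigl(A_{0i}A_{0j}-q^2\bigr)=U_2-\mathbb{E}U_2$ for $U_2=\sum_{i<j}A_{0i}A_{0j}$, and $\sum_{i<j<l}\bigl(A_{0i}A_{0j}A_{0l}-q^3\bigr)=U_3-\mathbb{E}U_3$ for $U_3=\sum_{i<j<l}A_{0i}A_{0j}A_{0l}$, so the two claims are exactly $\Var(U_2)=O((nq)^3)$ and $\Var(U_3)=O((nq)^5)$.

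Next I would expand $\Var(U_2)=\sum_{i<j}\sum_{i'<j'}\Cov(A_{0i}A_{0j},A_{0i'}A_{0j'})$ and classify the summands by $c=|\{i,j\}\cap\{i',j'\}|$. For $c=0$ the two factors are independent, so the covariance vanishes. For $c=2$ (identical pair) the covariance is $q^2-q^4=O(q^2)$ and there are $O(n^2)$ such terms, contributing $O((nq)^2)$. For $c=1$, collapsing the repeated index via $A_{0i}^2=A_{0i}$ gives covariance $q^3-q^4=O(q^3)$ with $O(n^3)$ terms, contributing $O((nq)^3)$; since $nq\to\infty$ this term dominates and $\Var(U_2)=O((nq)^3)$. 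The degree-three bound is identical in spirit: classifying pairs $(\{i,j,l\},\{i',j',l'\})$ by $c\in\{0,1,2,3\}$, the covariance is $O(q^{6-c})$ (again because repeated indices collapse) and there are $O(n^{6-c})$ such terms, so each class contributes $O((nq)^{6-c})$ --- the $c=1$ class yielding the dominant $O((nq)^5)$, $c\in\{2,3\}$ yielding $O((nq)^4)$ and $O((nq)^3)$, and $c=0$ contributing nothing.

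This is essentially a routine second-moment computation; the only point requiring care is that the naive bound $\Var(U_d)\le\mathbb{E}U_d^2=O((nq)^{2d})$ is too weak, so one must genuinely exploit the vanishing of the disjoint-index covariances, together with $nq\to\infty$, to reach the rate $(nq)^{2d-1}$.
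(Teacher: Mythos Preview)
Your proof is correct. Both you and the paper exploit that the $A_{0i}$ are i.i.d.\ $\text{Bernoulli}(q)$ with $q=rp/k$ and that $nq\to\infty$, and both arrive at the same rates; the mechanical route differs slightly. You expand $\Var(U_d)$ directly as a double sum of covariances and classify by the overlap size $c$ of the two index sets, obtaining a contribution $O((nq)^{2d-c})$ from each class. The paper instead applies the Hoeffding-type decomposition of the kernel, writing for instance
\[
A_{0i}A_{0j}-q^2 = q\bigl(A_{0i}-q\bigr)+q\bigl(A_{0j}-q\bigr)+\bigl(A_{0i}-q\bigr)\bigl(A_{0j}-q\bigr),
\]
and bounds each resulting sum separately (in parallel with the decompositions used for the $W$-terms in Propositions~\ref{prop:wij} and~\ref{prop:wijk}). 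Your approach is arguably more direct for this isolated statement; the paper's decomposition has the advantage of mirroring the structure used throughout Section~\ref{sec:pf-tech}, where the same pattern is applied to quantities with additional random weights. The two computations are equivalent and yield identical intermediate orders.
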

\begin{proof}
The proof is similar to those of Proposition \ref{prop:wij} and Proposition \ref{prop:wijk}. Similar to the argument in the proof of Proposition \ref{prop:wij}, to bound $\mathbb{E}\left(\sum_{i<j}\left(A_{0i}A_{0j}-\left(\frac{rp}{k}\right)^2\right)\right)^2$, it is sufficient to bound
$$\mathbb{E}\left(\sum_{i\neq j}\frac{rp}{k}\left(A_{0i}-\frac{rp}{k}\right)\right)^2,$$
and
$$\mathbb{E}\left(\sum_{i\neq j}\left(A_{0i}-\frac{rp}{k}\right)\left(A_{0j}-\frac{rp}{k}\right)\right)^2.$$
By the fact that $A_{0i}\sim\text{Bernoulli}\left(\frac{rp}{k}\right)$, the above two terms can be bounded by $O\left(\left(\frac{nrp}{k}\right)^3\right)$ and $O\left(\left(\frac{nrp}{k}\right)^2\right)$, respectively. Therefore,
$$\mathbb{E}\left(\sum_{i<j}\left(A_{0i}A_{0j}-\left(\frac{rp}{k}\right)^2\right)\right)^2=O\left(\left(\frac{nrp}{k}\right)^3\right),$$
if $\frac{nrp}{k}>1$.

Similar to the argument in the proof of Proposition \ref{prop:wijk}, to bound $\mathbb{E}\left(\sum_{i<j<l}\left(A_{0i}A_{0j}A_{0l}-\left(\frac{rp}{k}\right)^3\right)\right)^2$, it is sufficient to bound
$$\mathbb{E}\left(\sum_{i=1}^n\left(\sum_{\{(j,l):j,l\neq i\}}\left(\frac{rp}{k}\right)^2\right)\left(A_{0i}-\frac{rp}{k}\right)\right)^2,$$
$$\mathbb{E}\left(\sum_{i\neq j}\left(\sum_{\{l:l\neq i,j\}}\frac{rp}{k}\right)\left(A_{0i}-\frac{rp}{k}\right)\left(A_{0j}-\frac{rp}{k}\right)\right)^2,$$
$$\mathbb{E}\left(\sum_{i<j<l}\left(A_{0i}-\frac{rp}{k}\right)\left(A_{0j}-\frac{rp}{k}\right)\left(A_{0l}-\frac{rp}{k}\right)\right).$$
The three terms above can be bounded by $O\left(\left(\frac{nrp}{k}\right)^5\right)$, $O\left(\left(\frac{nrp}{k}\right)^4\right)$ and $O\left(\left(\frac{nrp}{k}\right)^3\right)$, respectively. Under the condition $\frac{nrp}{k}>1$, we have
$$\mathbb{E}\left(\sum_{i<j<l}\left(A_{0i}A_{0j}A_{0l}-\left(\frac{rp}{k}\right)^3\right)\right)^2=O\left(\left(\frac{nrp}{k}\right)^5\right).$$
\end{proof}

\begin{proposition}\label{prop:volvo}
Under the same setting of Lemma \ref{lem:EV-NG},
$$\mathbb{E}\left(\sum_{1\leq i<j\leq n}\tilde{\eta}_{ij}\left(W_iW_j-(\mathbb{E}W)^2\right)\right)^2=O\left(a^2r^3n^3/k^3\right).$$
\end{proposition}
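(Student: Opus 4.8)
The plan is to follow the proof of Proposition~\ref{prop:wij}, keeping careful track of the factors of $r/k$. Write $\bar W_i = W_i - \mathbb{E}W$, and recall that $\tilde\eta_{ij}=\eta_{ij}\mathbb{I}\{Z_i\in\mathcal{R}\}\mathbb{I}\{Z_j\in\mathcal{R}\}$ is symmetric, $\sigma(Z)$-measurable, satisfies $0\le\tilde\eta_{ij}\le a$, and is independent of $\{W_i\}$; moreover $\mathbb{E}W=O(1)$ and $\Var(W)=O(1)$ by the constraint $\mathbb{E}W^2=1$. Using the decomposition \eqref{eq:2-decomp}, $W_iW_j-(\mathbb{E}W)^2 = (\mathbb{E}W)(\bar W_i+\bar W_j) + \bar W_i\bar W_j$; since $Z\perp W$ and every cross term in the expansion of the square has vanishing $W$-expectation (it contains three centred $W$-factors, at least one with an unpaired index), the second moment splits as
\begin{equation*}
\mathbb{E}\Bigl(\sum_{i<j}\tilde\eta_{ij}(W_iW_j-(\mathbb{E}W)^2)\Bigr)^2 = \underbrace{(\mathbb{E}W)^2\,\mathbb{E}\Bigl(\sum_{i<j}\tilde\eta_{ij}(\bar W_i+\bar W_j)\Bigr)^2}_{(\mathrm{I})} + \underbrace{\mathbb{E}\Bigl(\sum_{i<j}\tilde\eta_{ij}\bar W_i\bar W_j\Bigr)^2}_{(\mathrm{II})}.
\end{equation*}

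For $(\mathrm{II})$, the summands indexed by $\{i,j\}$ and $\{i',j'\}$ are uncorrelated unless $\{i,j\}=\{i',j'\}$ (otherwise some $\bar W$ appears to an odd power), so, using $\mathbb{P}(Z_i\in\mathcal{R},\,Z_j\in\mathcal{R})=(r/k)^2$,
\begin{equation*}
(\mathrm{II}) = \Var(W)^2\sum_{i<j}\mathbb{E}\tilde\eta_{ij}^2 \,\le\, a^2\Var(W)^2\binom{n}{2}\Bigl(\frac{r}{k}\Bigr)^2 = O\!\left(\frac{a^2n^2r^2}{k^2}\right) = o\!\left(\frac{a^2n^3r^3}{k^3}\right),
\end{equation*}
the last step because the hypotheses of Lemma~\ref{lem:EV-NG} ($nrp/k\to\infty$ and $p=o(1)$) force $nr/k\to\infty$. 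For $(\mathrm{I})$, collecting terms by vertex gives $\sum_{i<j}\tilde\eta_{ij}(\bar W_i+\bar W_j)=\sum_{\ell=1}^n \bar W_\ell\, D_\ell$ with $D_\ell=\sum_{m\neq\ell}\tilde\eta_{\ell m}$, which is $\sigma(Z)$-measurable; factoring $W$ from $Z$ and using $\mathbb{E}[\bar W_\ell\bar W_{\ell'}]=\Var(W)\,\mathbb{I}\{\ell=\ell'\}$ yields $(\mathrm{I})=(\mathbb{E}W)^2\Var(W)\sum_\ell\mathbb{E}D_\ell^2$.

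It remains to estimate $\mathbb{E}D_\ell^2$. Note $D_\ell=0$ unless $Z_\ell\in\mathcal{R}$, and conditionally on $Z_\ell=z\in\mathcal{R}$ one has $D_\ell = aN_1+bN_2$, where $N_1=|\{m\neq\ell:Z_m=z\}|$ and $N_2=|\{m\neq\ell:Z_m\in\mathcal{R}\setminus\{z\}\}|$ are jointly multinomial with $N_1\sim\mathrm{Bin}(n-1,1/k)$ and $N_2\sim\mathrm{Bin}(n-1,(r-1)/k)$. Hence $\mathbb{E}[D_\ell\mid Z_\ell=z]=O(anr/k)$ (using $a\asymp b$ and $r\geq1$), and, the multinomial counts being negatively correlated, $\Var(D_\ell\mid Z_\ell=z)\le a^2\Var(N_1)+b^2\Var(N_2)=O(a^2nr/k)$, so that $\mathbb{E}[D_\ell^2\mid Z_\ell=z]=O(a^2n^2r^2/k^2+a^2nr/k)$. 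Multiplying by $\mathbb{P}(Z_\ell\in\mathcal{R})=r/k$ and summing over the $n$ choices of $\ell$,
\begin{equation*}
(\mathrm{I}) = O\!\left(\frac{a^2n^3r^3}{k^3}+\frac{a^2n^2r^2}{k^2}\right) = O\!\left(\frac{a^2n^3r^3}{k^3}\right),
\end{equation*}
again since $nr/k\gtrsim1$. Combining the estimates for $(\mathrm{I})$ and $(\mathrm{II})$ gives the claimed bound.

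The only delicate point is the estimate of $\mathbb{E}D_\ell^2$: one must avoid the crude bound $D_\ell\le an$ and instead exploit the two independent sources of thinning---$D_\ell$ vanishes unless $Z_\ell\in\mathcal{R}$ (one factor $r/k$), and even then only the $\approx(r/k)$-fraction of neighbours $m$ with $Z_m\in\mathcal{R}$ actually contribute (a second factor $r/k$, which appears squared once the conditional mean of $D_\ell$ is squared). Verifying that $nr/k\to\infty$, forced by $nrp/k\to\infty$ and $p=o(1)$, is what makes the squared-conditional-mean term dominate both the conditional variance of $D_\ell$ and the term $(\mathrm{II})$; the rest is the same bookkeeping as in Proposition~\ref{prop:wij}.
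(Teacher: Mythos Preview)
Your proof is correct, but it takes a genuinely different and more direct route than the paper's. The paper first expands $\tilde\eta_{ij}$ according to the events $\{Z_i=Z_j\}$ and $\{Z_i\neq Z_j\}$, rewrites the resulting indicators via $\tilde z_{ih}=\mathbb{I}\{Z_i=h,\,Z_i\in\mathcal{R}\}$, applies Cauchy--Schwarz over the label $h$, and then carries out a combinatorial expansion of the product $(\tilde z_{ih}\tilde z_{jh}-k^{-2})(W_iW_j-(\mathbb{E}W)^2)$ into many cross terms. You instead keep $\tilde\eta_{ij}$ intact, use only the decomposition of $W_iW_j-(\mathbb{E}W)^2$, collect the linear part into the per-vertex ``degree'' $D_\ell=\sum_{m\neq\ell}\tilde\eta_{\ell m}$, and compute $\mathbb{E}D_\ell^2$ exactly by conditioning on $Z_\ell$ and recognising the multinomial law of $(N_1,N_2)$. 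Your argument is shorter and more transparent here, and avoids the lossy Cauchy--Schwarz step; the paper's heavier machinery (the $\tilde z_{ih}$ representation and the general index-set bound) is the price paid for a template that is reused verbatim in the subsequent triple-sum propositions.
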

\begin{proof}
We use the notation $R_i=\mathbb{I}\{Z_i\in\mathcal{R}\}\sim\text{Bernoulli}\left(\frac{r}{k}\right)$. Then, $\tilde{\eta}_{ij}=\eta_{ij}R_iR_j$. It is sufficient to bound the following two terms,
\begin{equation}
a^2\mathbb{E}\left(\sum_{i<j}\mathbb{I}\{Z_i=Z_j\}R_iR_j\left(W_iW_j-(\mathbb{E}W)^2\right)\right)^2,\label{eq:surp1}
\end{equation}
and
\begin{equation}
b^2\mathbb{E}\left(\sum_{i<j}\mathbb{I}\{Z_i\neq Z_j\}R_iR_j\left(W_iW_j-(\mathbb{E}W)^2\right)\right)^2,\label{eq:surp2}
\end{equation}
Note that $\mathbb{I}\{Z_i\neq Z_j\}=1-\mathbb{I}\{Z_i= Z_j\}$, (\ref{eq:surp2}) can be further bounded by the sum of the following two terms,
\begin{equation}
2b^2\mathbb{E}\left(\sum_{i<j}R_iR_j\left(W_iW_j-(\mathbb{E}W)^2\right)\right)^2,\label{eq:surp2.1}
\end{equation}
\begin{equation}
2b^2\mathbb{E}\left(\sum_{i<j}\mathbb{I}\{Z_i=Z_j\}R_iR_j\left(W_iW_j-(\mathbb{E}W)^2\right)\right)^2.\label{eq:surp2.2}
\end{equation}
Define $\tilde{z}_{ih}=\mathbb{I}\{Z_i=h,Z_i\in\mathcal{R}\}=z_{ih}R_i\sim\text{Bernoulli}(\mathbb{E}\tilde{z}_{ih})$, with $\mathbb{E}\tilde{z}_{ih}=k^{-1}$ if $h\in\mathcal{R}=[r]$, and $\mathbb{E}\tilde{z}_{ih}=0$, otherwise. To bound (\ref{eq:surp1}), we have
\begin{eqnarray*}
&& a^2\mathbb{E}\left(\sum_{i<j}\mathbb{I}\{Z_i=Z_j\}R_iR_j\left(W_iW_j-(\mathbb{E}W)^2\right)\right)^2 \\
&=& a^2\mathbb{E}\left(\sum_{i<j}\sum_{h=1}^r\tilde{z}_{ih}\tilde{z}_{jh}\left(W_iW_j-(\mathbb{E}W)^2\right)\right)^2 \\
&\leq& a^2r\sum_{h=1}^r\mathbb{E}\left(\sum_{i<j}\tilde{z}_{ih}\tilde{z}_{jh}\left(W_iW_j-(\mathbb{E}W)^2\right)\right)^2.
\end{eqnarray*}
Similarly (\ref{eq:surp2.2}) can be bounded by
$$2b^2r\sum_{h=1}^r\mathbb{E}\left(\sum_{i<j}\tilde{z}_{ih}\tilde{z}_{jh}\left(W_iW_j-(\mathbb{E}W)^2\right)\right)^2.$$
Therefore, it is essential to bound $\mathbb{E}\left(\sum_{i<j}\tilde{z}_{ih}\tilde{z}_{jh}\left(W_iW_j-(\mathbb{E}W)^2\right)\right)^2$ for $h\in[r]$. Observe the decomposition
\begin{eqnarray*}
&& \mathbb{E}\left(\sum_{i<j}\tilde{z}_{ih}\tilde{z}_{jh}\left(W_iW_j-(\mathbb{E}W)^2\right)\right)^2 \\
&=& \mathbb{E}\left(\sum_{i<j}(\tilde{z}_{ih}\tilde{z}_{jh}-k^{-2})\left(W_iW_j-(\mathbb{E}W)^2\right)\right)^2 + \mathbb{E}\left(\sum_{i<j}k^{-2}\left(W_iW_j-(\mathbb{E}W)^2\right)\right)^2,
\end{eqnarray*}
where the second term in the above decomposition can be bounded as $O\left(n^3k^{-4}\right)$. For the first term, $W_iW_j-(\mathbb{E}W)^2$ can be decomposed as what we have done in the proof of Proposition \ref{prop:wij}. We have a similar decomposition for $\tilde{z}_{ih}\tilde{z}_{jh}-k^{-2}$. As a result, the product $(\tilde{z}_{ih}\tilde{z}_{jh}-k^{-2})\left(W_iW_j-(\mathbb{E}W)^2\right)$ has the decomposition
\begin{eqnarray*}
&& (\tilde{z}_{ih}\tilde{z}_{jh}-k^{-2})\left(W_iW_j-(\mathbb{E}W)^2\right) \\
&=& \left((\mathbb{E}W)(W_i-\mathbb{E}W+W_j-\mathbb{E}W)+(W_i-\mathbb{E}W)(W_j-\mathbb{E}W)\right) \\
&& \times \left(k^{-1}(\tilde{z}_{il}-\mathbb{E}\tilde{z}_{il})+k^{-1}(\tilde{z}_{jl}-(\mathbb{E}\tilde{z}_{jl}))+(\tilde{z}_{il}-\mathbb{E}\tilde{z}_{il})(\tilde{z}_{jl}-(\mathbb{E}\tilde{z}_{jl}))\right) \\
&=& (\mathbb{E}W)k^{-1}(W_i-\mathbb{E}W)(\tilde{z}_{il}-\mathbb{E}\tilde{z}_{il}) + (\mathbb{E}W)k^{-1}(W_j-\mathbb{E}W)(\tilde{z}_{jl}-\mathbb{E}\tilde{z}_{jl}) + \cdots.
\end{eqnarray*}
We highlight the first two terms in the above expansion, and we only analyze the first term. The effects of other terms are negligible. We have
$$ \mathbb{E}\left(\sum_{i<j}(\mathbb{E}W)k^{-1}(W_i-\mathbb{E}W)(\tilde{z}_{il}-\mathbb{E}\tilde{z}_{il})\right)^2 = O\left(\left(\frac{n}{k}\right)^3\right).
$$
Thus, $\mathbb{E}\left(\sum_{i<j}\tilde{z}_{ih}\tilde{z}_{jh}\left(W_iW_j-(\mathbb{E}W)^2\right)\right)^2=O\left(\left(\frac{n}{k}\right)^3\right)$, and both (\ref{eq:surp1}) and (\ref{eq:surp2.2}) can be bounded by $O\left(a^2r^2\left(\frac{n}{k}\right)^3\right)$.

Finally, we still need to bound (\ref{eq:surp2.1}).  Using a similar argument that replaces $\tilde{z}_{ih}$ by $R_i$, we can bound this term by $O\left(a^2r^3n^3/k^3\right)$.
\end{proof}

\begin{proposition}\label{prop:polestar}
Under the same setting of Lemma \ref{lem:EV-NG},
\begin{eqnarray*}
&& \mathbb{E}\left(\sum_{i<j<l}\frac{(W_i^2W_jW_l-\mathbb{E}W^2(\mathbb{E}W)^2)\tilde{\eta}_{ij}\tilde{\eta}_{il}}{3}\right.\\
&& \left. + \frac{(W_j^2W_iW_l-\mathbb{E}W^2(\mathbb{E}W)^2)\tilde{\eta}_{ij}\tilde{\eta}_{jl}+(W_l^2W_jW_i-\mathbb{E}W^2(\mathbb{E}W)^2)\tilde{\eta}_{jl}\tilde{\eta}_{il}}{3}\right)^2=O(a^4r^5n^5k^{-5}). 
\end{eqnarray*}
\end{proposition}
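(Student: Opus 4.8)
The plan is to follow the proof of Proposition~\ref{prop:meixiangdao}, carrying along the extra indicator factors produced by the restriction to $\mathcal{R}$. Write $R_i=\mathbb{I}\{Z_i\in\mathcal{R}\}$, so the $\{R_i\}$ are independent $\mathrm{Bernoulli}(r/k)$ and, being functions of $Z$ alone, are independent of $\{W_i\}$; moreover $\tilde{\eta}_{ij}\tilde{\eta}_{il}=\eta_{ij}\eta_{il}R_iR_jR_l\le Ca^2R_iR_jR_l$ deterministically for a constant $C$ (using $a\asymp b$). By $(x+y+z)^2\le 3(x^2+y^2+z^2)$ it suffices to bound the second moment of each of the three summands, and all three are handled in exactly the same way; so I would treat $\mathbb{E}\big(\sum_{i<j<l}(W_i^2W_jW_l-\mathbb{E}W^2(\mathbb{E}W)^2)\tilde{\eta}_{ij}\tilde{\eta}_{il}\big)^2$.

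Next I would expand $W_i^2W_jW_l-\mathbb{E}W^2(\mathbb{E}W)^2$ into the same seven centered pieces as in Proposition~\ref{prop:meixiangdao}: three ``first-order'' pieces $(\mathbb{E}W)^2(W_i^2-\mathbb{E}W^2)$, $\mathbb{E}W^2\mathbb{E}W\,(W_j-\mathbb{E}W)$, $\mathbb{E}W^2\mathbb{E}W\,(W_l-\mathbb{E}W)$; three ``second-order'' pieces $\mathbb{E}W^2(W_j-\mathbb{E}W)(W_l-\mathbb{E}W)$, $\mathbb{E}W\,(W_i^2-\mathbb{E}W^2)(W_j-\mathbb{E}W)$, $\mathbb{E}W\,(W_i^2-\mathbb{E}W^2)(W_l-\mathbb{E}W)$; and the ``third-order'' piece $(W_i^2-\mathbb{E}W^2)(W_j-\mathbb{E}W)(W_l-\mathbb{E}W)$. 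By Cauchy--Schwarz it is enough to bound $\mathbb{E}\big(\sum_{i<j<l}(\text{piece})\,\tilde{\eta}_{ij}\tilde{\eta}_{il}\big)^2$ for each piece. Expanding this square into a double sum over triples $(i,j,l)$ and $(i',j',l')$, and using independence of $\{W_i\}$ from $\{Z_i\}$ (hence from the $\tilde{\eta}$'s and $R$'s), each term factors into a $W$-moment times an $(\eta,R)$-moment; the $W$-moment vanishes unless the centered $W$-variables pair up, which forces coincidences among the six indices. Using $\tilde{\eta}_{ij}\tilde{\eta}_{il}\tilde{\eta}_{i'j'}\tilde{\eta}_{i'l'}\le C^4a^4\prod_{v\in S}R_v$ with $S$ the set of distinct vertices among $\{i,j,l,i',j',l'\}$, the identity $\mathbb{E}\prod_{v\in S}R_v=(r/k)^{|S|}$, and the elementary count that the number of admissible index configurations with exactly $s$ distinct vertices is $O(n^s)$, each such second moment is $O\big(a^4\sum_{s}n^s(r/k)^s\big)=O\big(a^4(nr/k)^{s_{\max}}\big)$, where $s_{\max}$ is the largest number of free vertices compatible with the forced coincidences; here I use $nr/k\to\infty$ (which follows from $nrpa/k\to\infty$ and $a,p=o(1)$) to let the top term dominate.

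I would then read off $s_{\max}$ for each piece. For a first-order piece the $W$-moment is nonzero only when the two copies of the single centered index coincide (e.g.\ $i=i'$ for $(\mathbb{E}W)^2(W_i^2-\mathbb{E}W^2)$, or $j=j'$ for the $W_j$-piece), leaving five free vertices, so $s_{\max}=5$ and the bound is $O\big(a^4(nr/k)^5\big)=O(a^4r^5n^5k^{-5})$. For a second-order piece two pairings are forced, leaving four free vertices, giving $O\big(a^4(nr/k)^4\big)$; for the third-order piece three pairings are forced, leaving three free vertices, giving $O\big(a^4(nr/k)^3\big)$. Since $nr/k\to\infty$ both of these are of strictly smaller order than $a^4r^5n^5k^{-5}$, so summing the seven bounds (and absorbing the factor $3$ from the three summands) gives the claim.

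The routine part is the algebraic expansion into centered pieces; the delicate part, which I expect to be the main obstacle, is the bookkeeping: for each piece, correctly determining which centered $W$-variables must coincide for a nonzero $W$-moment, and then counting the exact number of surviving distinct vertices so as to read off the right powers of $r/k$ and $n$ (the order constraints $i<j<l$ and $i'<j'<l'$ help rule out the ``crossed'' pairings). As in Proposition~\ref{prop:meixiangdao}, the configuration with all indices distinct apart from the forced coincidences dominates, and the first-order pieces yield the stated rate.
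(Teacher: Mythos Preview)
Your argument is correct, and it takes a genuinely different route from the paper's. The paper does not bound $\tilde{\eta}_{ij}\tilde{\eta}_{il}$ crudely by $Ca^2R_iR_jR_l$; instead it expands $\eta_{ij}\eta_{il}$ into indicator events and represents $\tilde{\eta}_{ij}\tilde{\eta}_{il}$ through the variables $\tilde{z}_{ih}=\mathbb{I}\{Z_i=h,\,Z_i\in\mathcal{R}\}$, reducing the problem to two building blocks of the form
\[
O(a^4)|H|\sum_{(h_1,h_2,h_3)\in H}\mathbb{E}\Big(\sum_{i<j<l}\tilde{z}_{ih_1}\tilde{z}_{jh_2}\tilde{z}_{lh_3}\big(W_i^2W_jW_l-(\mathbb{E}W^2)(\mathbb{E}W)^2\big)\Big)^2
\]
and
\[
O(a^4)\,\mathbb{E}\Big(\sum_{i<j<l}R_iR_jR_l\big(W_i^2W_jW_l-(\mathbb{E}W^2)(\mathbb{E}W)^2\big)\Big)^2,
\]
with $|H|\le r^2$. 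It then centers $\tilde{z}_{ih_1}\tilde{z}_{jh_2}\tilde{z}_{lh_3}$ around $k^{-3}$ and studies the cross-expansion with the seven centered $W$-pieces term by term. This yields $O(a^4r^4n^5k^{-5})$ for the $\tilde z$-block and $O(a^4r^5n^5k^{-5})$ for the $R$-block, the latter being dominant.

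Your approach replaces all of this $\tilde z$-machinery by the single pointwise bound $\eta_{ij}\eta_{il}\le Ca^2$ and the observation that $\mathbb{E}\prod_{v\in S}R_v=(r/k)^{|S|}$, together with the vertex-count $|S|\le s_{\max}$ forced by the centered $W$-factors. This is shorter and more transparent; the nonnegativity of the surviving $W$-moments (which you need in order to replace $\mathbb{E}[\tilde\eta\cdots]$ by its upper bound term-by-term) indeed holds, and the ordering $i<j<l$ does rule out crossed pairings, as you note. What you lose is the sharper $k$-dependence the paper extracts for the $\tilde z$-piece, and the reusable template: the paper's $\tilde z$-decomposition is set up once and then recycled verbatim in several later propositions (e.g., for the $S_iS_jS_l-p^3$ terms and for the $\hat T$ analysis), whereas your shortcut is specific to this bound.
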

\begin{proof}
It is sufficient to bound $\mathbb{E}\left(\sum_{i<j<l}(W_i^2W_jW_l-\mathbb{E}W^2(\mathbb{E}W)^2)\tilde{\eta}_{ij}\tilde{\eta}_{il}\right)^2$.
Recall the notations $R_i$, $z_{ih}$ and $\tilde{z}_{ih}$ in previous proofs. First, we realize that $\eta_{ij}\eta_{jl}$ can be decomposed into four terms. For example, the first term is $a^2\mathbb{I}\{Z_i=Z_j=Z_l\}$. Then, $\mathbb{I}\{Z_i=Z_j=Z_l\}R_iR_jR_l$ can be written as
\begin{equation}
\sum_{(h_1,h_2,h_3)\in H}\tilde{z}_{ih_1}\tilde{z}_{jh_2}\tilde{z}_{lh_3},\label{eq:zzzt}
\end{equation}
for some $H$. This is also the case for $\mathbb{I}\{Z_i\neq Z_j=Z_l\}R_iR_jR_l$ and $\mathbb{I}\{Z_i=Z_j\neq Z_l\}R_iR_jR_l$. Each of the three terms can be represented as (\ref{eq:zzzt}) with some $H$ such that $|H|\leq r^2$. The last term $\mathbb{I}\{Z_i\neq Z_j\neq Z_l\}$ can be analyzed by the relation $1-\mathbb{I}\{Z_i=Z_j=Z_l\}-\mathbb{I}\{Z_i=Z_j\neq Z_l\}-\mathbb{I}\{Z_i\neq Z_j=Z_l\}$. Therefore, the following two terms determine the order of the bound,
\begin{equation}
O(a^4)|H|\sum_{(h_1,h_2,h_3)\in H}\mathbb{E}\left(\sum_{i<j<l}\tilde{z}_{ih_1}\tilde{z}_{jh_2}\tilde{z}_{lh_3}\left(W_i^2W_jW_l-(\mathbb{E}W^2)(\mathbb{E}W)^2\right)\right)^2,\label{eq:golf-gti}
\end{equation}
\begin{equation}
O(a^4)\mathbb{E}\left(\sum_{i<j<l}R_iR_jR_l\left(W_i^2W_jW_l-(\mathbb{E}W^2)(\mathbb{E}W)^2\right)\right)^2.\label{eq:golf-r}
\end{equation}

We analyze $\mathbb{E}\left(\sum_{i<j<l}\tilde{z}_{ih_1}\tilde{z}_{jh_2}\tilde{z}_{lh_3}\left(W_i^2W_jW_l-(\mathbb{E}W^2)(\mathbb{E}W)^2\right)\right)^2$. It can be further decomposed into the following two terms,
$$\mathbb{E}\left(\sum_{i<j<l}k^{-3}\left(W_i^2W_jW_l-(\mathbb{E}W^2)(\mathbb{E}W)^2\right)\right)^2,$$
$$\mathbb{E}\left(\sum_{i<j<l}(\tilde{z}_{ih_1}\tilde{z}_{jh_2}\tilde{z}_{lh_3}-k^{-3})\left(W_i^2W_jW_l-(\mathbb{E}W^2)(\mathbb{E}W)^2\right)\right)^2,$$
where the first term can be bounded using the same argument in the proof of Proposition \ref{prop:meixiangdao}, which leads to the order $O\left(n^5k^{-6}\right)$. The detailed analysis of the second term is lengthy. The idea is to study the expansion
\begin{eqnarray*}
&& (\tilde{z}_{ih_1}\tilde{z}_{jh_2}\tilde{z}_{lh_3}-k^{-3})\left(W_i^2W_jW_l-(\mathbb{E}W^2)(\mathbb{E}W)^2\right) \\
&=& k^{-2}(\mathbb{E}W)^2(\tilde{z}_{ih_1}-k^{-1})(W_i^2-(\mathbb{E}W^2)) + \cdots.
\end{eqnarray*}
We only highlight the first term in the expansion. Its contribution is through
$$\mathbb{E}\left(\sum_{i<j<l}k^{-2}(\mathbb{E}W)^2(\tilde{z}_{ih_1}-k^{-1})(W_i^2-(\mathbb{E}W^2))\right)^2=O(n^5k^{-5}).$$
One can similarly analyze each term in the expansion, and the overall bound is of order $O(n^5k^{-5})$. The same analysis also applies to $\mathbb{E}\left(\sum_{i<j<l}R_iR_jR_l\left(W_i^2W_jW_l-(\mathbb{E}W^2)(\mathbb{E}W)^2\right)\right)^2$. The only difference is that $R_i\sim\text{Bernoulli}(r/k)$ compared with $\tilde{z}_{ih}\sim\text{Bernoulli}(k^{-1})$. Therefore, we can obtain a bound of order $O\left(n^5(r/k)^5\right)$. Finally, these bounds imply that (\ref{eq:golf-gti}) and (\ref{eq:golf-r}) have bounds $O(a^4r^4n^5k^{-5})$ and $O(a^4r^5n^5k^{-5})$, respectively. The proof is complete by realizing that $O(a^4r^5n^5k^{-5})$ is the dominating order.
\end{proof}

\begin{proposition}\label{prop:glorymu}
Under the same setting of Lemma \ref{lem:T-NG},
$$\mathbb{E}\left(\sum_{i<j<l}\tilde{\eta}_{ij}\tilde{\eta}_{jl}\tilde{\eta}_{il}(W_i^2W_j^2W_l^2-(\mathbb{E}W^2)^3)\right)^2=O(a^6r^5n^5k^{-5}).$$
\end{proposition}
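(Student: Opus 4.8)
The plan is to follow the proof of Proposition~\ref{prop:polestar}, combined with the seven–term expansion of $W_i^2W_j^2W_l^2-(\mathbb{E}W^2)^3$ used in the proof of Proposition~\ref{prop:wijk}. Write $R_i=\mathbb{I}\{Z_i\in\mathcal{R}\}\sim\text{Bernoulli}(r/k)$ and $\tilde z_{ih}=z_{ih}R_i$, so that $\tilde\eta_{ij}=\eta_{ij}R_iR_j$ with $\eta_{ij}\in\{a,b\}$. First I would expand $\tilde\eta_{ij}\tilde\eta_{jl}\tilde\eta_{il}=\sum_t\xi_t\,\mathbb{I}\{\text{pattern}_t\}\,R_iR_jR_l$ into its equality–pattern terms, each coefficient $\xi_t$ lying in $\{a^3,a^2b,ab^2,b^3\}$, hence $|\xi_t|=O(a^3)$ since $a\asymp b$. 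For every pattern except the one asserting that $Z_i,Z_j,Z_l$ are pairwise distinct, $\mathbb{I}\{\text{pattern}_t\}\,R_iR_jR_l$ equals $\sum_{(h_1,h_2,h_3)\in H_t}\tilde z_{ih_1}\tilde z_{jh_2}\tilde z_{lh_3}$ for some $H_t\subset[r]^3$ with $|H_t|=O(r^2)$, since any equality in the pattern fixes one coordinate of $(h_1,h_2,h_3)$ in terms of another; the pairwise–distinct pattern is reduced by inclusion–exclusion to $R_iR_jR_l$ minus finitely many $\tilde z$–patterns with $|H|=O(r^2)$.

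Next I would substitute the seven–term decomposition of $W_i^2W_j^2W_l^2-(\mathbb{E}W^2)^3$ — three terms of the form $(\mathbb{E}W^2)^2(W_i^2-\mathbb{E}W^2)$, three of the form $(\mathbb{E}W^2)(W_i^2-\mathbb{E}W^2)(W_j^2-\mathbb{E}W^2)$, and the cubic term $(W_i^2-\mathbb{E}W^2)(W_j^2-\mathbb{E}W^2)(W_l^2-\mathbb{E}W^2)$ — and bound the second moment of the contribution of each (pattern,$W$–term) pair. Since the $W$–factors are independent of the $Z$–factors, each such contribution is a sum over $i<j<l$ of an $O(a^3)$ coefficient times a $Z$–factor times a centered product of $W^2$'s, and its second moment is computed by conditioning on $\{R_i\}$ (resp.\ on $Z$) and using the orthogonality of centered $W^2$–products over disjoint index sets together with $\mathbb{E}W^4=O(1)$. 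The dominant configuration is $R_iR_jR_l$ paired with a singly–centered factor $(W_v^2-\mathbb{E}W^2)$: symmetrizing over the centered vertex $v$, its second moment equals $O(a^6)\,\Var(W^2)(\mathbb{E}W^2)^4\sum_v\mathbb{E}R_v\,\mathbb{E}\bigl(\sum_{\{j,l\}\not\ni v}R_jR_l\bigr)^2=O(a^6)\cdot O(n\,r/k)\cdot O(n^4(r/k)^4)=O(a^6r^5n^5k^{-5})$, matching the claimed bound. The $\tilde z$–pattern pieces paired with the same $W$–factor contribute at most $O(a^6)\,|H_t|^2\max_h\mathbb{E}\bigl(\sum_{i<j<l}\tilde z_{ih_1}\tilde z_{jh_2}\tilde z_{lh_3}(W_i^2-\mathbb{E}W^2)\bigr)^2=O(a^6r^4n^5k^{-5})$, a factor $r$ smaller, just as in Proposition~\ref{prop:polestar}; and pairing with a doubly– or triply–centered $W$–term reduces the number of free vertices and so loses a factor $nr/k\to\infty$.

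The remaining work is the same collection of elementary $L^2$ estimates as in the proofs of Propositions~\ref{prop:wij}, \ref{prop:wijk}, and \ref{prop:polestar}: for a centered factor indexed by a single vertex, the attached coefficient — a partial sum over the other two vertices of products of $R$'s or $\tilde z$'s — has $L^2$–size $O(n^2(r/k)^2)$, its square summed over the free vertex is $O(n^5(r/k)^5)$, and one verifies case by case that no other grouping beats $O(a^6r^5n^5k^{-5})$. I expect the only real obstacle to be organizational — tracking the pattern $\times$ $W$–term bookkeeping — rather than any new estimate; in particular one should check that replacing the factor $W_i^2W_jW_l$ of Proposition~\ref{prop:polestar} by $W_i^2W_j^2W_l^2$ changes only the power of $a$ (from $a^4$ to $a^6$) and not the $r^5n^5k^{-5}$ scaling.
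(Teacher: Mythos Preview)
Your proposal is correct and follows essentially the same route as the paper: reduce $\tilde\eta_{ij}\tilde\eta_{jl}\tilde\eta_{il}$ to $\tilde z$--pattern sums with $|H|=O(r^2)$ together with an $R_iR_jR_l$ term via inclusion--exclusion, then pair each with the seven--term expansion of $W_i^2W_j^2W_l^2-(\mathbb{E}W^2)^3$ and identify $R_iR_jR_l\cdot(W_v^2-\mathbb{E}W^2)$ as the dominant piece giving $O(a^6r^5n^5k^{-5})$, exactly as in Proposition~\ref{prop:polestar} with $a^4$ replaced by $a^6$. If anything you spell out more of the bookkeeping than the paper does (and you correctly write $W_i^2W_j^2W_l^2-(\mathbb{E}W^2)^3$ where the paper's displayed proof has a harmless typo).
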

\begin{proof}
This proof is very similar to that of Proposition \ref{prop:polestar}. Similar to the arguments used there, we also need to analyze two terms that are analogous to (\ref{eq:golf-gti}) and (\ref{eq:golf-r}). These two corresponding terms are
$$O(a^6)|H|\sum_{(h_1,h_2,h_3)\in H}\mathbb{E}\left(\sum_{i<j<l}\tilde{z}_{ih_1}\tilde{z}_{jh_2}\tilde{z}_{lh_3}\left(W_iW_jW_l-(\mathbb{E}W)^3\right)\right)^2,$$
$$O(a^6)\mathbb{E}\left(\sum_{i<j<l}R_iR_jR_l\left(W_iW_jW_l-(\mathbb{E}W)^3\right)\right)^2.$$
These two terms can be analyzed in the exactly same way as those for (\ref{eq:golf-gti}) and (\ref{eq:golf-r}).
They can be bounded by $O(a^6r^4n^5k^{-5})$ and $O(a^6r^5n^5k^{-5})$, respectively. Therefore, $O(a^6r^5n^5k^{-5})$ is the overall bound.
\end{proof}

\begin{proposition}\label{prop:Sij}
Under the same condition of Lemma \ref{lem:EV-NG},
$$\mathbb{E}\left(\sum_{i<j}\tilde{A}_{ij}(S_iS_j-p^2)\right)^2=O\left(a^2\left(\frac{npr}{k}\right)^3\right).$$
\end{proposition}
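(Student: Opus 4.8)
The plan is to argue exactly as in the proofs of Proposition~\ref{prop:wij} and Proposition~\ref{prop:wijk}, using that $\{S_i\}$ is independent of $\{\tilde A_{ij}\}$ and that the $S_i$ are i.i.d. $\text{Bernoulli}(p)$. In analogy with \eqref{eq:2-decomp}, write
$$S_iS_j-p^2 = p\bigl((S_i-p)+(S_j-p)\bigr)+(S_i-p)(S_j-p),$$
so that, using $\tilde A_{ij}=\tilde A_{ji}$ and writing $d_i=\sum_{j\neq i}\tilde A_{ij}$,
$$\sum_{i<j}\tilde A_{ij}(S_iS_j-p^2)=p\sum_{i=1}^n(S_i-p)d_i+\sum_{i<j}\tilde A_{ij}(S_i-p)(S_j-p).$$
Conditioning on $\{\tilde A_{ij}\}$, the cross term between these two sums vanishes, because in its expansion some $S$-index always appears exactly once while the remaining centered factors are independent of it. Hence it is enough to bound the second moment of each of the two sums separately.

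For the first sum, conditioning on $\{\tilde A_{ij}\}$ the variables $(S_i-p)$ are independent and mean zero with variance $p(1-p)\le p$, so $\mathbb{E}\bigl(\sum_i(S_i-p)d_i\bigr)^2\le p\,\mathbb{E}\sum_i d_i^2$, and the contribution of the first term is at most $p^3\,\mathbb{E}\sum_i d_i^2$. For the second sum, independence of $\{S_i\}$ and $\{\tilde A_{ij}\}$ together with $\mathbb{E}\bigl[(S_i-p)(S_j-p)(S_{i'}-p)(S_{j'}-p)\bigr]=0$ unless $\{i,j\}=\{i',j'\}$ kills all off-diagonal terms, leaving
$$\mathbb{E}\Bigl(\sum_{i<j}\tilde A_{ij}(S_i-p)(S_j-p)\Bigr)^2=(p(1-p))^2\sum_{i<j}\mathbb{E}\tilde A_{ij}\le p^2\sum_{i<j}\mathbb{E}\tilde A_{ij}.$$
Thus the proposition reduces to estimating $\mathbb{E}\tilde A_{ij}$ and $\mathbb{E}\sum_i d_i^2$.

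By conditional independence of the edges given $(W,Z)$ and independence of $W$ and $Z$, $\mathbb{E}\tilde A_{ij}=\mathbb{E}\tilde\theta_{ij}=(\mathbb{E}W)^2\,\mathbb{E}\bigl[\eta_{ij}\mathbb{I}\{Z_i,Z_j\in[r]\}\bigr]=O\bigl(a(r/k)^2\bigr)$, and for distinct $i,j,l$, using $\mathbb{E}W^2=1$, $\mathbb{E}[\tilde A_{ij}\tilde A_{il}]=\mathbb{E}[\tilde\theta_{ij}\tilde\theta_{il}]=(\mathbb{E}W)^2\,\mathbb{E}\bigl[\eta_{ij}\eta_{il}\mathbb{I}\{Z_i,Z_j,Z_l\in[r]\}\bigr]=O\bigl(a^2(r/k)^3\bigr)$. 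Splitting $\mathbb{E}\sum_i d_i^2=\sum_i\sum_{j\neq i}\mathbb{E}\tilde A_{ij}+\sum_i\sum_{\substack{j\neq l\\ j,l\neq i}}\mathbb{E}[\tilde A_{ij}\tilde A_{il}]$ then gives $\mathbb{E}\sum_i d_i^2=O\bigl(n^2a(r/k)^2+n^3a^2(r/k)^3\bigr)$ and $\sum_{i<j}\mathbb{E}\tilde A_{ij}=O\bigl(n^2a(r/k)^2\bigr)$. Substituting, the first term is $O\bigl(p^3n^2a(r/k)^2+p^3n^3a^2(r/k)^3\bigr)$ and the second is $O\bigl(p^2n^2a(r/k)^2\bigr)$; since $\frac{nrpa}{k}\to\infty$ forces both $\frac{nra}{k}\ge1$ and $\frac{nrpa}{k}\ge1$ for $n$ large, each of the lower-order pieces is dominated by $a^2(npr/k)^3$, which is the asserted bound. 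The only point requiring care is the bookkeeping in these moment computations — in particular getting the factor $(r/k)^3$ in $\mathbb{E}[\tilde A_{ij}\tilde A_{il}]$, which is what produces the dominant term $a^2(npr/k)^3$ — and checking that the cross term vanishes; the rest is a routine second-moment estimate.
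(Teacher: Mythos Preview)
Your argument is correct and in fact cleaner than the paper's. Both proofs start from the same identity $S_iS_j-p^2=p\bigl((S_i-p)+(S_j-p)\bigr)+(S_i-p)(S_j-p)$, but you apply it directly to the full sum $\sum_{i<j}\tilde A_{ij}(S_iS_j-p^2)$ and condition on $\{\tilde A_{ij}\}$, reducing everything to the two moment estimates $\mathbb{E}\tilde A_{ij}=O\bigl(a(r/k)^2\bigr)$ and $\mathbb{E}[\tilde A_{ij}\tilde A_{il}]=O\bigl(a^2(r/k)^3\bigr)$. The paper instead first splits $\tilde A_{ij}=(\tilde A_{ij}-\tilde\theta_{ij})+\tilde\theta_{ij}$: the centered piece is handled by conditioning on $(W,Z,S)$ and using edge independence, while the $\tilde\theta_{ij}$ piece is further unpacked via $\tilde\theta_{ij}=\eta_{ij}W_iW_jR_iR_j$ and reduced (through the substitutions $\bar z_{ih}=\tilde z_{ih}W_i$, $\bar R_i=R_iW_i$) to the calculation already done in Proposition~\ref{prop:volvo}. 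Your route avoids this detour entirely; the only structural information about $\tilde A_{ij}$ you need is the two moment bounds above, and the orthogonality of the linear and quadratic parts in $\{S_i-p\}$ does the rest. The paper's approach has the advantage of reusing existing machinery verbatim, which is convenient in the sequence of analogous propositions that follow, but for this particular statement your direct second-moment computation is shorter and more transparent.
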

\begin{proof}
We decompose $\tilde{A}_{ij}$ as the sum of $\tilde{A}_{ij}-\tilde{\theta}_{ij}$ and $\tilde{\theta}_{ij}$. Then,
\begin{eqnarray*}
&& \mathbb{E}\left(\sum_{i<j}(\tilde{A}_{ij}-\tilde{\theta}_{ij})(S_iS_j-p^2)\right)^2 \\
&\leq& \sum_{i<j}\mathbb{E}[(S_iS_j-p^2)^2\tilde{\theta}_{ij}] \\
&=& O\left(a\left(\frac{npr}{k}\right)^2\right).
\end{eqnarray*}
Next, we study $\mathbb{E}\left(\sum_{i<j}\tilde{\theta}_{ij}(S_iS_j-p^2)\right)^2$, where $\tilde{\theta}_{ij}=\eta_{ij}W_iW_jR_iR_j$. With the same argument used in the proof of Proposition \ref{prop:volvo}, it is sufficient to bound the following two terms,
\begin{equation}
O(a^2)r\sum_{h=1}^r\mathbb{E}\left(\sum_{i<j}\tilde{z}_{ih}\tilde{z}_{jh}W_iW_j(S_iS_j-p^2)\right)^2,\label{eq:lukaku}
\end{equation}
\begin{equation}
O(a^2)\mathbb{E}\left(\sum_{i<j}R_iR_jW_iW_j(S_iS_j-p^2)\right)^2.\label{eq:pogba}
\end{equation}
We use the notation $\tilde{z}_{ih}W_i=\bar{z}_{ih}$ and $R_iW_i=\bar{R}_i$. Then, $\mathbb{E}\tilde{z}_{ih}=O(k^{-1})$, $\Var(\tilde{z}_{ih})=O(k^{-1})$, $\mathbb{E}\bar{R}_i=O(r/k)$ and $\Var(\bar{R}_i)=O(r/k)$. Then,  (\ref{eq:lukaku}) and (\ref{eq:pogba}) are of the same forms that we have already analyzed in the proof of Proposition \ref{prop:volvo}. Here, we have $\bar{z}_{ih}, S_i, p, \bar{R}_i$ instead of $\tilde{z}_{ih}, W_i,\mathbb{E}W, R_i$ in the proof of Proposition \ref{prop:volvo}. Using the same argument there, both (\ref{eq:lukaku}) and (\ref{eq:pogba}) can be bounded by $O\left(a^2\left(\frac{npr}{k}\right)^3\right)$, when $npr/k>1$. Finally, when $npra/k>1$, we have $\mathbb{E}\left(\sum_{i<j}\tilde{A}_{ij}(S_iS_j-p^2)\right)^2=O\left(a^2\left(\frac{npr}{k}\right)^3\right)$.
\end{proof}

\begin{proposition}\label{prop:v-tec}
Under the same setting of Lemma \ref{lem:EV-NG},
$$\mathbb{E}\left(\sum_{i<j<l}(\tilde{A}_{ij}\tilde{A}_{il}+\tilde{A}_{ij}\tilde{A}_{jl}+\tilde{A}_{jl}\tilde{A}_{il})(S_iS_jS_l-p^3)\right)^2=O\left(a^4\left(\frac{npr}{k}\right)^5\right).$$
\end{proposition}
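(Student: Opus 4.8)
The plan is to expand the second moment directly and exploit that the Bernoulli variables $\{S_i\}_{i\in[n]}$ are independent of $(\{A_{ij}\}_{1\le i<j\le n},W,Z)$, so that in every cross term the ``$S$-part'' factors off and can be bounded purely by counting the vertices involved; this parallels the proof of Proposition~\ref{prop:Sij}, but with one extra vertex and one extra $\tilde{A}$-factor. First I would reduce by symmetry: write the integrand $\sum_{i<j<l}(\tilde{A}_{ij}\tilde{A}_{il}+\tilde{A}_{ij}\tilde{A}_{jl}+\tilde{A}_{jl}\tilde{A}_{il})(S_iS_jS_l-p^3)=X+Y+Z$ according to which of $i,j,l$ is the common endpoint of the two $\tilde{A}$'s; by $\mathbb{E}(X+Y+Z)^2\le 3(\mathbb{E}X^2+\mathbb{E}Y^2+\mathbb{E}Z^2)$ it suffices to bound $\mathbb{E}X^2$ for $X=\sum_{i<j<l}\tilde{A}_{ij}\tilde{A}_{il}(S_iS_jS_l-p^3)$, the bounds for $\mathbb{E}Y^2,\mathbb{E}Z^2$ following by the identical argument. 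Using the independence of $\{S_i\}$,
$$\mathbb{E}X^2=\sum_{i<j<l}\sum_{i'<j'<l'}\mathbb{E}\!\left[\tilde{A}_{ij}\tilde{A}_{il}\tilde{A}_{i'j'}\tilde{A}_{i'l'}\right]\mathbb{E}\!\left[(S_iS_jS_l-p^3)(S_{i'}S_{j'}S_{l'}-p^3)\right].$$

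Next I would record the two elementary ingredients. For the $S$-part, set $V=\{i,j,l\}\cup\{i',j',l'\}$; since $S_v^2=S_v$ one has $\mathbb{E}[(S_iS_jS_l-p^3)(S_{i'}S_{j'}S_{l'}-p^3)]=p^{|V|}-p^6$, which vanishes when the two triples are disjoint ($|V|=6$) and otherwise lies in $[0,p^{|V|}]$. For the $\tilde{A}$-part, condition on $(W,Z)$: because $\tilde{A}_e^2=\tilde{A}_e$ and the $\tilde{A}_e$ over distinct edges are conditionally independent $\mathrm{Bernoulli}(\tilde{\theta}_e)$, the conditional expectation equals $\prod_{e\in\mathcal{E}}\tilde{\theta}_e$, where $\mathcal{E}$ is the set of \emph{distinct} edges among $(i,j),(i,l),(i',j'),(i',l')$ and $d=|\mathcal{E}|$. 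For an edge $e=\{u,v\}$ we have $\tilde{\theta}_e=W_uW_v\eta_e\mathbb{I}\{Z_u\in\mathcal{R}\}\mathbb{I}\{Z_v\in\mathcal{R}\}\le Ca\,W_uW_v\mathbb{I}\{Z_u\in\mathcal{R}\}\mathbb{I}\{Z_v\in\mathcal{R}\}$ (here $\eta_e=O(a)$ deterministically since $a\asymp b$), hence $\prod_{e\in\mathcal{E}}\tilde{\theta}_e\le Ca^{d}\prod_{v\in V}W_v^{d_v}\mathbb{I}\{Z_v\in\mathcal{R}\}$, where the degree $d_v\le 4$ because there are only four edges. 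As $\mathbb{E}W^4=O(1)$ gives $\mathbb{E}W^{d_v}=O(1)$ for $d_v\le 4$ (Cauchy--Schwarz for the odd powers), and the $\mathbb{I}\{Z_v\in\mathcal{R}\}$ are independent $\mathrm{Bernoulli}(r/k)$ and independent of $\{W_v\}$, taking expectations yields $\mathbb{E}[\tilde{A}_{ij}\tilde{A}_{il}\tilde{A}_{i'j'}\tilde{A}_{i'l'}]\le Ca^{d}(r/k)^{|V|}$.

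Then I would organize the double sum by the overlap $s=|\{i,j,l\}\cap\{i',j',l'\}|$, so $|V|=6-s$. The case $s=0$ contributes nothing, as the $S$-factor vanishes. For $s=1$ the two vees share no edge, so $d=4$ and $|V|=5$, and there are $O(n^5)$ such pairs, contributing $O\!\left(a^4(r/k)^5p^5 n^5\right)=O\!\left(a^4(npr/k)^5\right)$. For $s=2$ one necessarily has $d\ge 3$ (if the two two-edge sets coincided the triples would coincide), $|V|=4$, and $O(n^4)$ pairs, contributing $O\!\left(a^3(npr/k)^4\right)$. For $s=3$ the triples coincide, $d=2$, $|V|=3$, and $O(n^3)$ pairs, contributing $O\!\left(a^2(npr/k)^3\right)$. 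Since $\frac{nrpa}{k}\to\infty$, the $s=2$ and $s=3$ bounds are $o\!\left(a^4(npr/k)^5\right)$, whence $\mathbb{E}X^2=O\!\left(a^4(npr/k)^5\right)$, and summing the three symmetric contributions gives the claim.

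The argument is almost entirely bookkeeping; the only points that need a little care are the two observations that make the crude bounds sharp. The first is that every vertex has degree at most four in the four-edge multigraph $(i,j),(i,l),(i',j'),(i',l')$, which is exactly what keeps all the $W$-moments that appear bounded under the hypothesis $\mathbb{E}W^4=O(1)$. The second is matching the distinct-edge count $d$ to the vertex-overlap $s$ in the case analysis, where one uses $a^{d}\le a^{d_{\min}(s)}$ (valid since $a=o(1)$) to turn the per-configuration estimate into the stated global rate. Both are handled in the same spirit as in the proof of Proposition~\ref{prop:Sij}.
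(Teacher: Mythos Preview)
Your argument is correct and reaches the stated bound, but it proceeds quite differently from the paper's proof. After the common first step of reducing to $\mathbb{E}X^2$ with $X=\sum_{i<j<l}\tilde{A}_{ij}\tilde{A}_{il}(S_iS_jS_l-p^3)$, the paper splits $\tilde{A}_{ij}\tilde{A}_{il}=(\tilde{A}_{ij}\tilde{A}_{il}-\tilde{\theta}_{ij}\tilde{\theta}_{il})+\tilde{\theta}_{ij}\tilde{\theta}_{il}$: the fluctuation part is handled by a further Hoeffding-type expansion of $\tilde{A}_{ij}\tilde{A}_{il}-\tilde{\theta}_{ij}\tilde{\theta}_{il}$ and conditional orthogonality, while the mean part $\tilde{\theta}_{ij}\tilde{\theta}_{il}=\eta_{ij}\eta_{il}W_i^2W_jW_lR_iR_jR_l$ is pushed back to the machinery of Proposition~\ref{prop:polestar} (case analysis over the indicators $\tilde z_{ih}$ and $R_i$). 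You instead expand $\mathbb{E}X^2$ directly, factor the $S$-covariance as $p^{|V|}-p^6$, bound the $\tilde A$-moment by $Ca^{d}(r/k)^{|V|}$ via $\tilde\theta_e\le Ca\,W_uW_v\mathbb{I}\{Z_u\in\mathcal R\}\mathbb{I}\{Z_v\in\mathcal R\}$, and then do a short overlap count in $s$. This is more elementary and self-contained, and it avoids invoking the earlier propositions; the paper's route, on the other hand, reuses existing infrastructure and makes the parallel with Propositions~\ref{prop:Sij} and~\ref{prop:cls63} explicit. Both approaches hinge on the same two facts you isolate at the end: the degree bound $d_v\le 4$ (so only $\mathbb{E}W^4=O(1)$ is needed) and the condition $npra/k\to\infty$ that makes the $s=2,3$ contributions lower order.
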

\begin{proof}
We analyze $\mathbb{E}\left(\sum_{i<j<l}\tilde{A}_{ij}\tilde{A}_{il}(S_iS_jS_l-p^3)\right)^2$. Decompose $\tilde{A}_{ij}\tilde{A}_{il}$ as the sum of $\tilde{A}_{ij}\tilde{A}_{il}-\tilde{\theta}_{ij}\tilde{\theta}_{il}$ and $\tilde{\theta}_{ij}\tilde{\theta}_{il}$, and we first analyze $\mathbb{E}\left(\sum_{i<j<l}(\tilde{A}_{ij}\tilde{A}_{il}-\tilde{\theta}_{ij}\tilde{\theta}_{il})(S_iS_jS_l-p^3)\right)^2$. Use the decomposition
$$\tilde{A}_{ij}\tilde{A}_{il}-\tilde{\theta}_{ij}\tilde{\theta}_{il}=\tilde{\theta}_{il}(\tilde{A}_{ij}-\tilde{\theta}_{ij})+\tilde{\theta}_{ij}(\tilde{A}_{il}-\tilde{\theta}_{il})+(\tilde{A}_{ij}-\tilde{\theta}_{ij})(\tilde{A}_{il}-\tilde{\theta}_{il}).$$
Then, we have
\begin{eqnarray*}
&& \mathbb{E}\left(\sum_{i<j, l\neq i,j}\tilde{\theta}_{il}(\tilde{A}_{ij}-\tilde{\theta}_{ij})(S_iS_jS_l-p^3)\right)^2 \\
&=& \mathbb{E}\left(\sum_{i<j}\left(\sum_{\{l:l\neq i,j\}}\tilde{\theta}_{il}(S_iS_jS_l-p^3)\right)(\tilde{A}_{ij}-\tilde{\theta}_{ij})\right)^2 \\
&\leq& \sum_{i<j}\mathbb{E}\left(\sum_{\{l:l\neq i,j\}}\tilde{\theta}_{il}(S_iS_jS_l-p^3)\right)^2\tilde{\theta}_{ij} \\
&=& \sum_{i<j}\mathbb{E}\left(\sum_{\{l:l\neq i,j\}}\tilde{\theta}_{il}(S_iS_jS_l-S_iS_jp)\right)^2\tilde{\theta}_{ij} + \sum_{i<j}\mathbb{E}\left(\sum_{\{l:l\neq i,j\}}\tilde{\theta}_{il}(S_iS_jp-p^3)\right)^2\tilde{\theta}_{ij} \\
&\leq& \sum_{i<j}\sum_{\{l:l\neq i,j\}}\mathbb{E}\tilde{\theta}_{il}^2\tilde{\theta}_{ij}S_iS_jp + \sum_{i<j}n^2p^2\mathbb{E}\tilde{\theta}_{il}^2\tilde{\theta}_{ij}(S_iS_j-p^2)^2 \\
&=& O\left(\left(\frac{anpr}{k}\right)^3\right),
\end{eqnarray*}
and
\begin{eqnarray*}
&& \mathbb{E}\left(\sum_{i<j<l}(\tilde{A}_{ij}-\tilde{\theta}_{ij})(\tilde{A}_{il}-\tilde{\theta}_{il})(S_iS_jS_l-p^3)\right)^2 \\
&\leq& \sum_{i<j<l}\mathbb{E}(S_iS_jS_l-p^3)^2\tilde{\theta}_{ij}\tilde{\theta}_{il} \\
&=& O\left(a^2\left(\frac{npr}{k}\right)^3\right).
\end{eqnarray*}
Therefore,
$$\mathbb{E}\left(\sum_{i<j<l}(\tilde{A}_{ij}\tilde{A}_{il}-\tilde{\theta}_{ij}\tilde{\theta}_{il})(S_iS_jS_l-p^3)\right)^2=O\left(a^2\left(\frac{npr}{k}\right)^3\right).$$

Next, we study $\mathbb{E}\left(\sum_{i<j<l}\tilde{\theta}_{ij}\tilde{\theta}_{il}(S_iS_jS_l-p^3)\right)^2$. Note that $\tilde{\theta}_{ij}\tilde{\theta}_{il}=\eta_{ij}\eta_{il}W_i^2W_jW_lR_iR_jR_l$. With the same argument used in the proof of Proposition \ref{prop:polestar}, it is sufficient to bound the following two terms,
\begin{equation}
O(a^4)|H|\sum_{(h_1,h_2,h_3)\in H}\mathbb{E}\left(\sum_{i<j<l}\tilde{z}_{ih_1}\tilde{z}_{jh_2}W_i^2W_jW_l\tilde{z}_{lh_3}\left(S_iS_jS_l-p^3\right)\right)^2,\label{eq:matic}
\end{equation}
\begin{equation}
O(a^4)\mathbb{E}\left(\sum_{i<j<l}R_iR_jR_lW_i^2W_jW_l\left(S_iS_jS_l-p^3\right)\right)^2.\label{eq:mata}
\end{equation}
We use the notation $\tilde{z}_{ih}W_i=\bar{z}_{ih}$ and $R_iW_i=\bar{R}_i$,  (\ref{eq:matic}) and (\ref{eq:mata}) are of the same forms that we have already analyzed in the proof of Proposition \ref{prop:polestar}. Here, we have $\bar{z}_{ih}, S_i, p, \bar{R}_i$ instead of $\tilde{z}_{ih}, W_i,\mathbb{E}W, R_i$ in the proof of Proposition \ref{prop:polestar}. Using the same argument there, both (\ref{eq:matic}) and (\ref{eq:mata}) can be bounded by $O\left(a^4\left(\frac{npr}{k}\right)^5\right)$, when $npr/k>1$. Finally, when $npra/k>1$, we obtain the overall bound $O\left(a^4\left(\frac{npr}{k}\right)^5\right)$.
\end{proof}

\begin{proposition}\label{prop:cls63}
Under the same setting of Lemma \ref{lem:T-NG},
$$\mathbb{E}\left(\sum_{i<j<l}\tilde{A}_{ij}\tilde{A}_{il}\tilde{A}_{jl}(S_iS_jS_l-p^3)\right)^2=O\left(a^6\left(\frac{npr}{k}\right)^5\right).$$
\end{proposition}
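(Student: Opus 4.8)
\emph{Proof proposal.} The plan is to imitate the argument of Propositions~\ref{prop:Sij} and \ref{prop:v-tec}, now with three $\tilde{A}$-factors. First I would split
$$\tilde{A}_{ij}\tilde{A}_{il}\tilde{A}_{jl}=\Bigl(\tilde{A}_{ij}\tilde{A}_{il}\tilde{A}_{jl}-\tilde{\theta}_{ij}\tilde{\theta}_{il}\tilde{\theta}_{jl}\Bigr)+\tilde{\theta}_{ij}\tilde{\theta}_{il}\tilde{\theta}_{jl},$$
so that, after $(x+y)^2\le 2x^2+2y^2$, it suffices to bound separately the second moments of $\sum_{i<j<l}(\tilde{A}_{ij}\tilde{A}_{il}\tilde{A}_{jl}-\tilde{\theta}_{ij}\tilde{\theta}_{il}\tilde{\theta}_{jl})(S_iS_jS_l-p^3)$ and of $\sum_{i<j<l}\tilde{\theta}_{ij}\tilde{\theta}_{il}\tilde{\theta}_{jl}(S_iS_jS_l-p^3)$.

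For the fluctuation sum I would use the same three-way expansion as in the proofs of Lemma~\ref{lem:T-order} and Lemma~\ref{lem:T-NG}, writing $\tilde{A}_{ij}\tilde{A}_{il}\tilde{A}_{jl}-\tilde{\theta}_{ij}\tilde{\theta}_{il}\tilde{\theta}_{jl}$ as the fully centered product $(\tilde{A}_{ij}-\tilde{\theta}_{ij})(\tilde{A}_{il}-\tilde{\theta}_{il})(\tilde{A}_{jl}-\tilde{\theta}_{jl})$ plus the terms carrying one or two factors of $\tilde{\theta}$. For each resulting sum the key points are that $\{S_i\}$ is independent of $\{\tilde{A}_{ij}\}$ and that, conditionally on $(W,Z,S)$, summands indexed by distinct triples (respectively distinct pairs) are uncorrelated because at least one free edge carries a centered Bernoulli factor of zero mean; hence each second moment collapses to a diagonal sum of terms of the type $\mathbb{E}\bigl[\tilde{A}_{ij}\tilde{A}_{il}\tilde{A}_{jl}\bigr]\,\mathbb{E}\bigl[(S_iS_jS_l-p^3)^2\bigr]$, together with lower-order analogues from pairs of triangles sharing one or two vertices (which carry $\mathbb{E}[(S_iS_jS_l-p^3)(S_iS_jS_{l'}-p^3)]=O(p^4)$ or $O(p^5)$ in place of $O(p^3)$). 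Bounding each piece with $\tilde{\theta}_{ij}=O(a)$, $\mathbb{E}[R_iR_jR_l]=(r/k)^3$ and $\mathbb{E}W^4=O(1)$ gives the fluctuation contribution; the delicate point here — and the one that invokes the sparsity hypothesis $a^6\asymp b^6=o\bigl((k/(nr))^3\wedge p^2(k/(nr))^4\bigr)$ — is to check that none of these orders exceeds the target $O\bigl(a^6(npr/k)^5\bigr)$.

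The bulk of the work is the main sum, where $\tilde{\theta}_{ij}\tilde{\theta}_{il}\tilde{\theta}_{jl}=W_i^2W_j^2W_l^2\,\eta_{ij}\eta_{il}\eta_{jl}\,R_iR_jR_l$. Here I would repeat the reduction from the proofs of Propositions~\ref{prop:z-v8} and \ref{prop:glorymu}: expand $\eta_{ij}\eta_{il}\eta_{jl}R_iR_jR_l$ into a finite sum of terms $\sum_{(h_1,h_2,h_3)\in H}\tilde{z}_{ih_1}\tilde{z}_{jh_2}\tilde{z}_{lh_3}$ with $|H|\le r^2$, handling the all-distinct-labels case through $1-\mathbb{I}\{Z_i=Z_j=Z_l\}-\cdots$ exactly as there, so that the second moment is controlled by two representative quantities
$$O(a^6)\,|H|\!\!\sum_{(h_1,h_2,h_3)\in H}\!\!\mathbb{E}\Bigl(\sum_{i<j<l}\tilde{z}_{ih_1}\tilde{z}_{jh_2}\tilde{z}_{lh_3}W_i^2W_j^2W_l^2(S_iS_jS_l-p^3)\Bigr)^2,\qquad O(a^6)\,\mathbb{E}\Bigl(\sum_{i<j<l}R_iR_jR_lW_i^2W_j^2W_l^2(S_iS_jS_l-p^3)\Bigr)^2.$$
Setting $\bar{z}_{ih}=\tilde{z}_{ih}W_i^2$ and $\bar{R}_i=R_iW_i^2$ — which, using $\mathbb{E}W^4=O(1)$, have means of order $k^{-1}$ and $r/k$ and variances of the same orders — these become $O(a^6)|H|\sum_{H}\mathbb{E}(\sum\bar{z}\bar{z}\bar{z}(S^3-p^3))^2$ and $O(a^6)\mathbb{E}(\sum\bar{R}\bar{R}\bar{R}(S^3-p^3))^2$, which are precisely the forms treated in the proofs of Propositions~\ref{prop:polestar} and \ref{prop:v-tec} with $(\bar{z}_{ih},\bar{R}_i,S_i,p)$ playing the role of $(\tilde{z}_{ih},R_i,W_i,\mathbb{E}W)$; the same decomposition of $\bar{z}_{ih}\bar{z}_{jh}\bar{z}_{lh}-k^{-3}$ and of $S_iS_jS_l-p^3$ into centered pieces yields the bounds $O(a^6r^4n^5p^5/k^5)$ and $O(a^6r^5n^5p^5/k^5)$, so the main sum is $O(a^6(npr/k)^5)$. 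Combining the two parts and using $npra/k\to\infty$ (together with the sparsity hypothesis) to absorb the lower-order terms gives the stated bound.

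The step I expect to be the main obstacle is the bookkeeping: tracking, in the joint multi-way expansion of both the $\tilde{A}$-fluctuations and the $S$-centering, exactly which of the many cross-terms dominates, and confirming that the extra powers of $p$ supplied by the $\mathrm{Bernoulli}(p)$ variances are just enough to keep every contribution within $O(a^6(npr/k)^5)$ — which is where the two sparsity conditions on $a$ must be used.
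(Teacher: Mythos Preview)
Your proposal is correct and follows essentially the same route as the paper: split off the fluctuation $\tilde{A}_{ij}\tilde{A}_{il}\tilde{A}_{jl}-\tilde{\theta}_{ij}\tilde{\theta}_{il}\tilde{\theta}_{jl}$ via the seven-term centered expansion, bound each piece by conditional orthogonality (the paper gets $O\bigl((anpr/k)^3\bigr)$ for this part using only $npra/k>1$, not the full sparsity hypothesis), and then handle the main term $\tilde{\theta}_{ij}\tilde{\theta}_{il}\tilde{\theta}_{jl}(S_iS_jS_l-p^3)$ by the $H$-decomposition and the substitution $\bar z_{ih}=\tilde z_{ih}W_i^2$, $\bar R_i=R_iW_i^2$. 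The only cosmetic differences are that the paper cites Proposition~\ref{prop:glorymu} (rather than Propositions~\ref{prop:polestar}/\ref{prop:v-tec}) for the reduction of the main term, and the paper's text actually writes $W_iW_jW_l$ where your $W_i^2W_j^2W_l^2$ is the correct expression for $\tilde\theta_{ij}\tilde\theta_{il}\tilde\theta_{jl}$.
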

\begin{proof}
We decompose $\tilde{A}_{ij}\tilde{A}_{il}\tilde{A}_{jl}$ as the sum of $\tilde{A}_{ij}\tilde{A}_{il}\tilde{A}_{jl}-\tilde{\theta}_{ij}\tilde{\theta}_{il}\tilde{\theta}_{jl}$ and $\tilde{\theta}_{ij}\tilde{\theta}_{il}\tilde{\theta}_{jl}$. We first bound $\mathbb{E}\left(\sum_{i<j<l}(\tilde{A}_{ij}\tilde{A}_{il}\tilde{A}_{jl}-\tilde{\theta}_{ij}\tilde{\theta}_{il}\tilde{\theta}_{jl})(S_iS_jS_l-p^3)\right)^2$. Use the decomposition
\begin{eqnarray*}
&& \tilde{A}_{ij}\tilde{A}_{il}\tilde{A}_{jl}-\tilde{\theta}_{ij}\tilde{\theta}_{il}\tilde{\theta}_{jl} \\
&=& \tilde{\theta}_{il}\tilde{\theta}_{jl}(\tilde{A}_{ij}-\tilde{\theta}_{ij}) + \tilde{\theta}_{ij}\tilde{\theta}_{jl}(\tilde{A}_{il}-\tilde{\theta}_{il}) + \tilde{\theta}_{ij}\tilde{\theta}_{il}(\tilde{A}_{jl}-\tilde{\theta}_{jl}) \\
&& + \tilde{\theta}_{jl}(\tilde{A}_{ij}-\tilde{\theta}_{ij})(\tilde{A}_{il}-\tilde{\theta}_{il}) + \tilde{\theta}_{il}(\tilde{A}_{ij}-\tilde{\theta}_{ij})(\tilde{A}_{jl}-\tilde{\theta}_{jl}) + \tilde{\theta}_{ij}(\tilde{A}_{il}-\tilde{\theta}_{il})(\tilde{A}_{jl}-\tilde{\theta}_{jl}) \\
&& + (\tilde{A}_{jl}-\tilde{\theta}_{jl})(\tilde{A}_{ij}-\tilde{\theta}_{ij})(\tilde{A}_{il}-\tilde{\theta}_{il}).
\end{eqnarray*}
Then, we have
\begin{eqnarray*}
&& \mathbb{E}\left(\sum_{i<j, l\neq i,j}\tilde{\theta}_{il}\tilde{\theta}_{jl}(\tilde{A}_{ij}-\tilde{\theta}_{ij})(S_iS_jS_l-p^3)\right)^2 \\
&=& \mathbb{E}\left(\sum_{i<j}\left(\sum_{\{l:l\neq i,j\}}\tilde{\theta}_{il}\tilde{\theta}_{jl}(S_iS_jS_l-p^3)\right)(\tilde{A}_{ij}-\tilde{\theta}_{ij})\right)^2 \\
&\leq& \sum_{i<j}\mathbb{E}\left(\sum_{\{l:l\neq i,j\}}\tilde{\theta}_{il}\tilde{\theta}_{jl}(S_iS_jS_l-p^3)\right)^2\tilde{\theta}_{ij} \\
&=& \sum_{i<j}\mathbb{E}\left(\sum_{\{l:l\neq i,j\}}\tilde{\theta}_{il}\tilde{\theta}_{jl}(S_iS_jS_l-S_iS_jp)\right)^2\tilde{\theta}_{ij} + \sum_{i<j}\mathbb{E}\left(\sum_{\{l:l\neq i,j\}}\tilde{\theta}_{il}\tilde{\theta}_{jl}(S_iS_jp-p^3)\right)^2\tilde{\theta}_{ij} \\
&\leq& \sum_{i<j}\sum_{\{l:l\neq i,j\}}\mathbb{E}\tilde{\theta}_{il}^2\tilde{\theta}_{jl}^2\tilde{\theta}_{ij}S_iS_jp + \sum_{i<j}n^2p^2\mathbb{E}\tilde{\theta}_{il}^2\tilde{\theta}_{jl}^2\tilde{\theta}_{ij}(S_iS_j-p^2)^2 \\
&=& O\left(a^5\left(\frac{npr}{k}\right)^3\right),
\end{eqnarray*}
\begin{eqnarray*}
&& \mathbb{E}\left(\sum_{i<j<l}\tilde{\theta}_{jl}(\tilde{A}_{ij}-\tilde{\theta}_{ij})(\tilde{A}_{il}-\tilde{\theta}_{il})(S_iS_jS_l-p^3)\right)^2 \\
&\leq& \sum_{i<j<l}\mathbb{E}\tilde{\theta}_{jl}^2(S_iS_jS_l-p^3)^2\tilde{\theta}_{ij}\tilde{\theta}_{il} \\
&=& O\left(a^4\left(\frac{npr}{k}\right)^3\right),
\end{eqnarray*}
and
\begin{eqnarray*}
&& \mathbb{E}\left(\sum_{i<j<l}(\tilde{A}_{jl}-\tilde{\theta}_{jl})(\tilde{A}_{ij}-\tilde{\theta}_{ij})(\tilde{A}_{il}-\tilde{\theta}_{il})(S_iS_jS_l-p^3)\right)^2 \\
&\leq& \sum_{i<j<l}\mathbb{E}(S_iS_jS_l-p^3)^2\tilde{\theta}_{ij}\tilde{\theta}_{il}\tilde{\theta}_{jl} \\
&=& O\left(\left(\frac{anpr}{k}\right)^3\right).
\end{eqnarray*}
Therefore,
$$\mathbb{E}\left(\sum_{i<j<l}(\tilde{A}_{ij}\tilde{A}_{il}\tilde{A}_{jl}-\tilde{\theta}_{ij}\tilde{\theta}_{il}\tilde{\theta}_{jl})(S_iS_jS_l-p^3)\right)^2=O\left(\left(\frac{anpr}{k}\right)^3\right).$$

Next, we study $\mathbb{E}\left(\sum_{i<j<l}\tilde{\theta}_{ij}\tilde{\theta}_{il}\tilde{\theta}_{jl}(S_iS_jS_l-p^3)\right)^2$. Note that $\tilde{\theta}_{ij}\tilde{\theta}_{il}\tilde{\theta}_{jl}=\eta_{ij}\eta_{jl}\eta_{il}W_iW_jW_lR_iR_jR_l$. With the same argument used in the proof of Proposition \ref{prop:glorymu}, it is sufficient to bound the following two terms
\begin{equation}
O(a^6)|H|\sum_{(h_1,h_2,h_3)\in H}\mathbb{E}\left(\sum_{i<j<l}\tilde{z}_{ih_1}\tilde{z}_{lh_3}\tilde{z}_{jh_2}W_iW_jW_l\left(S_iS_jS_l-p^3\right)\right)^2,\label{eq:marshall}
\end{equation}
\begin{equation}
O(a^6)\mathbb{E}\left(\sum_{i<j<l}R_iR_jR_lW_iW_jW_l\left(S_iS_jS_l-p^3\right)\right)^2.\label{eq:rashford}
\end{equation}
We use the notation $\tilde{z}_{ih}W_i=\bar{z}_{ih}$ and $R_iW_i=\bar{R}_i$,  (\ref{eq:marshall}) and (\ref{eq:rashford}) are of the same forms that we have already analyzed in the proof of Proposition \ref{prop:glorymu}. Here, we have $\bar{z}_{ih}, S_i, p, \bar{R}_i$ instead of $\tilde{z}_{ih}, W_i,\mathbb{E}W, R_i$ in the proof of Proposition \ref{prop:glorymu}. Using the same argument there,  both (\ref{eq:marshall}) and (\ref{eq:rashford}) can be bounded by $O\left(a^6\left(\frac{npr}{k}\right)^5\right)$, when $npr/k>1$. Finally, when $npra/k>1$, we obtain the overall bound $O\left(a^6\left(\frac{npr}{k}\right)^5\right)$.
\end{proof}

\begin{proposition}\label{prop:civic-si}
Under the same setting of Lemma \ref{lem:EV-NG},
$$\mathbb{E}\left(\sum_{1\leq i<j\leq n}(\tilde{\eta}_{ij}-\mathbb{E}\tilde{\eta}_{ij})\right)^2=O\left(a^2\left(\frac{nr}{k}\right)^3\right).$$
\end{proposition}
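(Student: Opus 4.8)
The plan is to follow exactly the strategy used in Propositions~\ref{prop:z-foundation} and \ref{prop:volvo}: rewrite the centered summand $\tilde{\eta}_{ij}-\mathbb{E}\tilde{\eta}_{ij}$ in terms of the single-index Bernoulli variables $\tilde{z}_{ih}=\mathbb{I}\{Z_i=h,\,Z_i\in\mathcal{R}\}$ for $h\in[r]$ and $R_i=\mathbb{I}\{Z_i\in\mathcal{R}\}$, and then expand the square using a Hoeffding-type decomposition so that products over disjoint index sets are uncorrelated. Concretely, since $R_iR_j\mathbb{I}\{Z_i=Z_j\}=\sum_{h=1}^r\tilde{z}_{ih}\tilde{z}_{jh}$ and $R_iR_j\mathbb{I}\{Z_i\neq Z_j\}=R_iR_j-\sum_{h=1}^r\tilde{z}_{ih}\tilde{z}_{jh}$, we have $\tilde{\eta}_{ij}=(a-b)\sum_{h=1}^r\tilde{z}_{ih}\tilde{z}_{jh}+bR_iR_j$, so that $\tilde{\eta}_{ij}-\mathbb{E}\tilde{\eta}_{ij}=(a-b)\sum_{h=1}^r(\tilde{z}_{ih}\tilde{z}_{jh}-k^{-2})+b(R_iR_j-(r/k)^2)$. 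Using $(x+y)^2\le 2x^2+2y^2$ together with $a\asymp b$ (so $(a-b)^2=O(a^2)$ and $b^2=O(a^2)$), it then suffices to bound the two quantities $a^2\,\mathbb{E}\bigl(\sum_{i<j}\sum_{h=1}^r(\tilde{z}_{ih}\tilde{z}_{jh}-k^{-2})\bigr)^2$ and $a^2\,\mathbb{E}\bigl(\sum_{i<j}(R_iR_j-(r/k)^2)\bigr)^2$.

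For the second quantity I would use the decomposition $R_iR_j-(r/k)^2=(r/k)(R_i-r/k)+(r/k)(R_j-r/k)+(R_i-r/k)(R_j-r/k)$, exactly as in the proof of Proposition~\ref{prop:z-foundation}. Summing over $i<j$, the two linear pieces collapse to $(r/k)(n-1)\sum_i(R_i-r/k)$, whose second moment is $O\bigl((r/k)^2 n^2\cdot n\cdot\Var(R_i)\bigr)=O\bigl(n^3(r/k)^3\bigr)$ since $\Var(R_i)\le r/k$; the bilinear piece has second moment $\sum_{i<j}\Var(R_i)\Var(R_j)=O\bigl(n^2(r/k)^2\bigr)$, because terms whose index pairs are disjoint or share a single index have zero expectation by the independence of the $Z_i$ across $i$. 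Hence this contribution is $O\bigl(a^2 n^3(r/k)^3\bigr)=O\bigl(a^2(nr/k)^3\bigr)$.

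For the first quantity I would first apply Cauchy--Schwarz over $h$ to get $\mathbb{E}\bigl(\sum_{h=1}^r\sum_{i<j}(\tilde{z}_{ih}\tilde{z}_{jh}-k^{-2})\bigr)^2\le r\sum_{h=1}^r\mathbb{E}\bigl(\sum_{i<j}(\tilde{z}_{ih}\tilde{z}_{jh}-k^{-2})\bigr)^2$, and then for each fixed $h$ repeat the Hoeffding decomposition $\tilde{z}_{ih}\tilde{z}_{jh}-k^{-2}=k^{-1}(\tilde{z}_{ih}-k^{-1})+k^{-1}(\tilde{z}_{jh}-k^{-1})+(\tilde{z}_{ih}-k^{-1})(\tilde{z}_{jh}-k^{-1})$ verbatim from Proposition~\ref{prop:z-foundation}, since $\mathbb{E}\tilde{z}_{ih}=k^{-1}$ and $\Var(\tilde{z}_{ih})\le k^{-1}$ for $h\in[r]$. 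The linear part contributes $O(k^{-2}n^2\cdot n\cdot k^{-1})=O(n^3k^{-3})$ and the bilinear part $O(n^2k^{-2})$, so each summand over $h$ is $O(n^3k^{-3})$ (the $n^2/k^2$ term is of lower order since $k\le nr$ in the regime considered). Multiplying by the factor $r$ from Cauchy--Schwarz, the sum of $r$ terms, and $a^2$ gives $O(a^2 r^2 n^3k^{-3})=O\bigl(a^2(nr/k)^3\bigr)$ because $r\ge 1$. Adding the two contributions yields the claimed bound.

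The computation is entirely routine; the only points requiring care are the algebraic bookkeeping that converts the $\mathbb{I}\{Z_i\neq Z_j\}$ piece into the clean form $R_iR_j-\sum_h\tilde{z}_{ih}\tilde{z}_{jh}$, and tracking the correct power of $r$ through the Cauchy--Schwarz step (a sloppy bound would give $r$ or $r^3$ at the intermediate stage rather than $r^2$, which is exactly what is needed to land at $(nr/k)^3$). No substantive obstacle is expected beyond checking that all off-diagonal moments vanish, which is immediate from the independence of $\{Z_i\}$.
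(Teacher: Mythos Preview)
Your proposal is correct and follows essentially the same approach as the paper: the paper also rewrites $\tilde{\eta}_{ij}=(a-b)\sum_{h=1}^r\tilde{z}_{ih}\tilde{z}_{jh}+bR_iR_j$, applies Cauchy--Schwarz over $h$ to get the factor $r\sum_{h=1}^r$, and then invokes the Hoeffding-type decomposition from Proposition~\ref{prop:z-foundation} (the paper cites Proposition~\ref{prop:wij}, but the mechanism is identical) to obtain the bounds $O\bigl(a^2r^2(n/k)^3\bigr)$ and $O\bigl(a^2(nr/k)^3\bigr)$ for the two pieces. Your more explicit bookkeeping of the linear and bilinear parts is exactly what the paper's reference to the earlier propositions is shorthand for.
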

\begin{proof}
Since
$$\tilde{\eta}_{ij}=a\mathbb{I}\{Z_i=Z_j\}-b\mathbb{I}\{Z_i=Z_j\}+bR_iR_j,$$
we use a similar argument in the proof of Proposition \ref{prop:z-foundation}, and
it is sufficient to bound
$$O(a^2)r\sum_{h=1}^r\mathbb{E}\left(\sum_{i<j}\tilde{z}_{ih}\tilde{z}_{jh}-(\mathbb{E}\tilde{z}_{ih})(\mathbb{E}\tilde{z}_{jh})\right)^2,$$
and
$$O(a^2)\mathbb{E}\left(\sum_{i<j}R_iR_j-(r/k)^2\right)^2.$$
By the argument in the proof of Proposition \ref{prop:wij}, these two terms can be bounded by $O\left(a^2r^2\left(\frac{n}{k}\right)^3\right)$ and $O\left(a^2\left(\frac{nr}{k}\right)^3\right)$, respectively.
\end{proof}

\begin{proposition}\label{prop:civic-type-R}
Under the same setting of Lemma \ref{lem:EV-NG},
$$\Var\left(\sum_{i<j<l}\frac{\tilde{\eta}_{ij}\tilde{\eta}_{il}+\tilde{\eta}_{ij}\tilde{\eta}_{jl}+\tilde{\eta}_{il}\tilde{\eta}_{jl}}{3}\right)=O\left(a^4\left(\frac{nr}{k}\right)^5\right).$$
\end{proposition}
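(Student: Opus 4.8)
The plan is to mirror the proof of Proposition~\ref{prop:z-turbo}, the only new ingredient being the Bernoulli masks $R_i=\mathbb{I}\{Z_i\in\mathcal{R}\}\sim\text{Bernoulli}(r/k)$ that enter through $\tilde{\eta}_{ij}=\eta_{ij}R_iR_j$. By relabeling summation indices, the three sums $\sum_{i<j<l}\tilde{\eta}_{ij}\tilde{\eta}_{il}$, $\sum_{i<j<l}\tilde{\eta}_{ij}\tilde{\eta}_{jl}$ and $\sum_{i<j<l}\tilde{\eta}_{il}\tilde{\eta}_{jl}$ all have the same variance, so it suffices to bound $\Var\bigl(\sum_{i<j<l}\tilde{\eta}_{ij}\tilde{\eta}_{il}\bigr)$ and then use $\Var(X+Y+Z)\leq 3(\Var X+\Var Y+\Var Z)$.

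First I would write $\tilde{\eta}_{ij}\tilde{\eta}_{il}=\eta_{ij}\eta_{il}R_iR_jR_l$ (using $R_i^2=R_i$) and expand $\eta_{ij}\eta_{il}$ into the four pieces $a^2\mathbb{I}\{Z_i=Z_j=Z_l\}$, $ab\,\mathbb{I}\{Z_i=Z_j\neq Z_l\}$, $ab\,\mathbb{I}\{Z_i\neq Z_j,\,Z_i=Z_l\}$ and $b^2\mathbb{I}\{Z_i\neq Z_j,\,Z_i\neq Z_l\}$, exactly as in the proof of Proposition~\ref{prop:z-turbo}. Recalling from the proof of Proposition~\ref{prop:volvo} that $\tilde{z}_{ih}\defeq\mathbb{I}\{Z_i=h\}R_i\sim\text{Bernoulli}(k^{-1})$ for $h\in[r]$, with $\tilde{z}_{ih},\tilde{z}_{jh}$ independent when $i\neq j$, I would multiply each indicator by $R_iR_jR_l$ and apply $\mathbb{I}\{Z_i\neq Z_j\}\mathbb{I}\{Z_i\neq Z_l\}=1-\mathbb{I}\{Z_i=Z_j\}-\mathbb{I}\{Z_i=Z_l\}+\mathbb{I}\{Z_i=Z_j=Z_l\}$. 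Every resulting term is then either of the form $\sum_{(h_1,h_2,h_3)\in H}\tilde{z}_{ih_1}\tilde{z}_{jh_2}\tilde{z}_{lh_3}$ for some $H\subseteq[r]^3$ with $|H|\leq r^2$, carrying a scalar coefficient of order $a^2$ (the coefficient lies in $\{a^2,ab,b^2\}$ up to sign, and $a\asymp b$), or it is the pure-mask term $b^2R_iR_jR_l$ arising from the constant $1$ in the displayed identity.

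Next, for the $\tilde{z}$-type terms I would invoke the computation behind \eqref{eq:amg-gtr}: since $\tilde{z}_{ih}\sim\text{Bernoulli}(k^{-1})$ for $h\in[r]$, the centering-and-decomposition argument used there applies verbatim and gives $\Var\bigl(\sum_{i<j<l}\sum_{(h_1,h_2,h_3)\in H}\tilde{z}_{ih_1}\tilde{z}_{jh_2}\tilde{z}_{lh_3}\bigr)=O(|H|^2 n^5/k^5)=O(r^4 n^5/k^5)$, so each such term contributes $O(a^4 r^4 n^5/k^5)$ to the variance. For the pure-mask term, I would decompose $R_iR_jR_l-(r/k)^3$ into its degree-one, degree-two and degree-three parts in the centered variables $R_i-r/k$, exactly as in the proofs of Proposition~\ref{prop:wijk} and Proposition~\ref{prop:m-performance}; since $R_i\sim\text{Bernoulli}(r/k)$ this yields $\Var\bigl(\sum_{i<j<l}R_iR_jR_l\bigr)=O(n^5(r/k)^5)$, hence a contribution $O(b^4 n^5(r/k)^5)=O(a^4(nr/k)^5)$. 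Summing the finitely many contributions and using $r\geq 1$, which makes $a^4 r^4 n^5/k^5=O(a^4(nr/k)^5)$, gives the claimed bound.

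The hard part will be purely organizational: verifying that in the inclusion--exclusion expansion the only term \emph{without} a cluster-label constraint, namely the pure-mask term $b^2R_iR_jR_l$, is the one that controls the final rate, and that every remaining term can be written with an index set $H$ of size $O(r^2)$ and therefore has a strictly smaller variance. There is no genuinely new analytic difficulty beyond what already appears in Propositions~\ref{prop:z-turbo}, \ref{prop:wij}, \ref{prop:wijk} and \ref{prop:m-performance}.
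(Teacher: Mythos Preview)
Your proposal is correct and follows essentially the same approach as the paper's proof: reduce to bounding the two types of terms $O(a^4)|H|\sum_{(h_1,h_2,h_3)\in H}\Var\bigl(\sum_{i<j<l}\tilde{z}_{ih_1}\tilde{z}_{jh_2}\tilde{z}_{lh_3}\bigr)$ with $|H|=O(r^2)$ and the pure-mask term $O(a^4)\Var\bigl(\sum_{i<j<l}R_iR_jR_l\bigr)$, then bound these by $O(a^4 r^4 (n/k)^5)$ and $O(a^4(nr/k)^5)$ respectively via the arguments already established in Propositions~\ref{prop:z-turbo} and~\ref{prop:m-performance}. Your write-up is in fact more explicit than the paper's about how the inclusion--exclusion isolates the pure-mask term as the dominant contribution.
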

\begin{proof}
We use a similar argument in the proof of Proposition \ref{prop:z-turbo}, and it is sufficient to bound
$$O(a^4)|H|\sum_{(h_1,h_2,h_3)\in H}\mathbb{E}\left(\sum_{i<j<l}\tilde{z}_{ih_1}\tilde{z}_{jh_2}\tilde{z}_{lh_3}-\mathbb{E}(\tilde{z}_{ih_1}\tilde{z}_{jh_2}\tilde{z}_{lh_3})\right)^2,$$
and
$$O(a^4)\mathbb{E}\left(\sum_{i<j<l}R_iR_jR_l-(r/k)^3\right)^2.$$
With the same argument in the proof of Proposition \ref{prop:z-turbo}, these two terms can be bounded by $O\left(a^4r^4\left(\frac{n}{k}\right)^5\right)$ and $O\left(a^4\left(\frac{nr}{k}\right)^5\right)$.
\end{proof}

\begin{proposition}\label{prop:E63}
Under the same setting of Lemma \ref{lem:T-NG},
$$\Var\left(\sum_{i<j<l}\tilde{\eta}_{ij}\tilde{\eta}_{jl}\tilde{\eta}_{il}\right)=O\left(a^6\left(\frac{nr}{k}\right)^5\right).$$
\end{proposition}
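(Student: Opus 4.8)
The plan is to follow the proof of Proposition~\ref{prop:z-v8}, incorporating the restriction to $\mathcal{R}$ exactly as in Propositions~\ref{prop:glorymu} and~\ref{prop:civic-type-R}. Write $R_i=\mathbb{I}\{Z_i\in\mathcal{R}\}\sim\text{Bernoulli}(r/k)$, so that $\tilde{\eta}_{ij}=\eta_{ij}R_iR_j$ and, since $R_i^2=R_i$, $\tilde{\eta}_{ij}\tilde{\eta}_{jl}\tilde{\eta}_{il}=\eta_{ij}\eta_{jl}\eta_{il}R_iR_jR_l$. Recall $\tilde{z}_{ih}=\mathbb{I}\{Z_i=h,\,Z_i\in\mathcal{R}\}\sim\text{Bernoulli}(k^{-1})$ for $h\in[r]$, which are independent across $i$ for fixed $h$. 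I would expand $\eta_{ij}\eta_{jl}\eta_{il}$ into its ``all labels equal'', ``exactly two labels equal'', and ``all labels distinct'' parts and express the resulting indicators through the $\tilde z_{ih}$'s, using for instance $\mathbb{I}\{Z_i=Z_j\}R_iR_jR_l=\sum_{h_1,h_3\in[r]}\tilde z_{ih_1}\tilde z_{jh_1}\tilde z_{lh_3}$ (so $|H|=r^2$), and for the all-distinct part the inclusion--exclusion
$$\mathbb{I}\{Z_i,Z_j,Z_l\text{ distinct}\}\,R_iR_jR_l=R_iR_jR_l-\bigl(\mathbb{I}\{Z_i=Z_j\}+\mathbb{I}\{Z_j=Z_l\}+\mathbb{I}\{Z_i=Z_l\}\bigr)R_iR_jR_l+2\,\mathbb{I}\{Z_i=Z_j=Z_l\}\,R_iR_jR_l.$$
This yields a representation
$$\tilde{\eta}_{ij}\tilde{\eta}_{jl}\tilde{\eta}_{il}=b^3\,R_iR_jR_l+\sum_t\xi_t\sum_{(h_1,h_2,h_3)\in H_t}\tilde z_{ih_1}\tilde z_{jh_2}\tilde z_{lh_3},$$
where the outer sum has a bounded number of terms, each $\xi_t\in\{a^3,a^2b,ab^2,b^3\}$, and each index set $H_t\subset[r]^3$ satisfies $|H_t|=O(r^2)$.

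Then I would bound the variance of $\sum_{i<j<l}$ applied to each block and combine them by Minkowski's inequality in $L^2$. For the $\tilde z$-blocks, the centering-and-decomposition argument in the proof of Proposition~\ref{prop:z-turbo} goes through verbatim for the $\tilde z_{ih}$ (they have the same mean $k^{-1}$, the same variance order, and the same independence across $i$ for fixed $h$), so the analogue of~\eqref{eq:amg-gtr},
$$\mathbb{E}\Bigl(\sum_{i<j<l}\sum_{(h_1,h_2,h_3)\in H}\bigl(\tilde z_{ih_1}\tilde z_{jh_2}\tilde z_{lh_3}-\mathbb{E}(\tilde z_{ih_1}\tilde z_{jh_2}\tilde z_{lh_3})\bigr)\Bigr)^2=O\!\left(|H|^2\frac{n^5}{k^5}\right),$$
holds; with $|H_t|=O(r^2)$ this contributes $O(a^6 r^4 n^5 k^{-5})$ per block. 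For the remaining block, $R_1,\dots,R_n$ are i.i.d.\ $\text{Bernoulli}(r/k)$ with $\mathbb{E}(R_iR_jR_l)=(r/k)^3$, so the symmetric decomposition used in the proofs of Proposition~\ref{prop:wijk} and Proposition~\ref{prop:m-performance} (the latter with $p=1$) gives $\Var\bigl(\sum_{i<j<l}R_iR_jR_l\bigr)=O((nr/k)^5)$, hence a contribution of order $O(a^6(nr/k)^5)$.

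Combining, and noting that $r\geq1$ forces $r^4 n^5 k^{-5}\le (nr/k)^5$, the $R_iR_jR_l$ block dominates and the total is $O(a^6(nr/k)^5)$, as claimed. The only delicate point is the combinatorial bookkeeping in the first step --- in particular, checking that after the $R_i^2=R_i$ reduction every surviving piece is either a constant multiple of $R_iR_jR_l$ or a $\tilde z$-product whose index set has size $O(r^2)$, so that the ``all-distinct'' contribution (whose naive $\tilde z$-representation would use an index set of cubic size) enters the estimate only through the genuinely random $R_iR_jR_l$ piece. This mirrors the bookkeeping already carried out in Propositions~\ref{prop:z-v8}, \ref{prop:glorymu}, and~\ref{prop:civic-type-R}, and no new estimates are needed.
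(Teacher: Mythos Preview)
Your proposal is correct and follows essentially the same route as the paper's proof: both reduce the problem to bounding the $\tilde z$-blocks with $|H_t|=O(r^2)$ (yielding $O(a^6 r^4 n^5 k^{-5})$ via the analogue of~\eqref{eq:amg-gtr}) and the residual $R_iR_jR_l$ block (yielding $O(a^6(nr/k)^5)$), then observe the latter dominates. Your write-up is somewhat more explicit about the inclusion--exclusion bookkeeping that underlies the reduction, but the argument is the same.
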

\begin{proof}
We use a similar argument in the proof of Proposition \ref{prop:z-v8}, and it is sufficient to bound
$$O(a^6)|H|\sum_{(h_1,h_2,h_3)\in H}\mathbb{E}\left(\sum_{i<j<l}\tilde{z}_{ih_1}\tilde{z}_{jh_2}\tilde{z}_{lh_3}-\mathbb{E}(\tilde{z}_{ih_1}\tilde{z}_{jh_2}\tilde{z}_{lh_3})\right)^2,$$
and
$$O(a^6)\mathbb{E}\left(\sum_{i<j<l}R_iR_jR_l-(r/k)^3\right)^2.$$
With the same argument in the proof of Proposition \ref{prop:z-v8}, these two terms can be bounded by $O\left(a^6r^4\left(\frac{n}{k}\right)^5\right)$ and $O\left(a^6\left(\frac{nr}{k}\right)^5\right)$.
\end{proof}

\end{document}